\documentclass[a4paper,USenglish,cleveref,autoref,thm-restate]{lipics-v2021}

\pdfoutput=1 
\hideLIPIcs  

\title{The Algebra of Iterative Constructions} 

\author{Kevin {Batz}\footnote{This is a team effort. Authors are listed alphabetically.}}{Cornell University, Ithaca, USA}{ksb239@cornell.edu}{https://orcid.org/0000-0001-8705-2564}{ERC grant Autoprobe (no.\ 101002697)}

\author{Benjamin Lucien {Kaminski}}{Saarland University, Saarland Informatics Campus, Saarbrücken, Germany \and University College London, United Kingdom \and \url{https://quave.cs.uni-saarland.de/benjamin-kaminski/}}{kaminski@cs.uni-saarland.de}{https://orcid.org/0000-0001-5185-2324}{}

\author{Lucas {Kehrer}}{Saarland University, Saarland Informatics Campus, Saarbrücken, Germany \and \url{https://quave.cs.uni-saarland.de/members/lucas-kehrer/}}{kehrer@cs.uni-saarland.de}{https://orcid.org/0009-0006-9293-8742}{Deutsche Forschungsgemeinschaft (DFG, German Research Foundation) -- project number 563038229}

\author{Gerwin {Klein}}{Proofcraft, Sydney, Australia \and University of New South Wales, Sydney, Australia\and \url{https://doclsf.de/}}{gerwin.klein@proofcraft.systems}{https://orcid.org/0000-0001-8883-0559}{}

\author{Todd {Schmid}}{Bucknell University, Lewisburg, Pennsylvania, USA\and \url{https://www.eg.bucknell.edu/~tjws001/toddwayneschmid/index.php}}{t.schmid@bucknell.edu}{https://orcid.org/0000-0002-3265-7168}{}

\author{Henning {Urbat}}{Friedrich-Alexander-Universität Erlangen-Nürnberg, Germany\and \url{https://www8.cs.fau.de/people/henning-urbat/}}{henning.urbat@fau.de}{https://orcid.org/0000-0002-3265-7168}{Deutsche Forschungsgemeinschaft (DFG, German Research Foundation) -- project number 569130867}

\authorrunning{K. Batz, B. L. Kaminski, L. Kehrer, G. Klein, T. Schmid, H. Urbat} 

\Copyright{Kevin Batz, Benjamin L. Kaminski, Lucas Kehrer, Gerwin Klein, Todd Schmid, Henning Urbat} 

\ccsdesc[500]{Theory of computation}
\ccsdesc[300]{Theory of computation~Equational logic and rewriting}
\ccsdesc[100]{Theory of computation~Logic and verification}
\ccsdesc[500]{Mathematics of computing}

\keywords{fixed point theorems, fixed point iteration, algebra, equational logic} 

\EventEditors{John Q. Open and Joan R. Access}
\EventNoEds{2}
\EventLongTitle{42nd Conference on Very Important Topics (CVIT 2016)}
\EventShortTitle{CVIT 2016}
\EventAcronym{CVIT}
\EventYear{2016}
\EventDate{December 24--27, 2016}
\EventLocation{Little Whinging, United Kingdom}
\EventLogo{}
\SeriesVolume{42}
\ArticleNo{23}

\usepackage{booktabs}   
\usepackage{adjustbox}
\usepackage{amsmath}
\usepackage{pbox}
\usepackage{mathtools}
\usepackage{csquotes}
\usepackage{stmaryrd}
\usepackage{xspace}
\usepackage{tikz-cd}
\tikzstyle{seqelem}=[fill=none, draw=none, shape=circle, font=\Large]
\tikzstyle{blue edge}=[thick,fill=none, draw={rgb,255: red,55; green,114; blue,199}, ->]
\tikzstyle{blue double edge}=[thick,fill=none, draw={rgb,255: red,55; green,114; blue,199}, <->]
\tikzstyle{orange edge}=[thick,draw={rgb,255: red,239; green,134; blue,51}, ->]
\tikzstyle{aelem}=[fill=none, draw=none, shape=circle, font={\LARGE}, xshift=-0.125,yshift=-0.25]
\tikzstyle{majelem}=[fill=white, font={\Huge}]
\tikzstyle{minelem}=[fill=white, font={\Huge}]
\usepackage{wrapfig}
\usepackage{mathpartir}
\usepackage{mdframed}
\usepackage{tensor}
\usepackage{tikz}
\usetikzlibrary{patterns,decorations.pathreplacing,arrows,arrows.meta,decorations.pathmorphing,positioning,fit,trees,shapes,shadows,automata,calc}
\usepackage{pgfplots}
\usepackage[textwidth=13mm]{todonotes}      
\usepackage{fancybox}

\definecolor{DodgerBlue3}{rgb}{0.28, 0.51, 1.0}
\definecolor{DodgerBlue}{rgb}{0.12, 0.56, 1.0}
\definecolor{DodgerBlue4}{rgb}{0.06, 0.30, 0.54}  

\newmdenv[linecolor=brown, backgroundcolor=brown!6, skipabove=5pt, skipbelow=5pt, leftmargin=0pt, rightmargin=0pt]{axbox}
\newmdenv[linecolor=webgreen, backgroundcolor=webgreen!6, skipabove=5pt, skipbelow=5pt, leftmargin=0pt, rightmargin=0pt]{infbox} 
\newmdenv[linecolor=black, backgroundcolor=white, skipabove=5pt, skipbelow=5pt, leftmargin=0pt, rightmargin=0pt]{whitebox}

\theoremstyle{claimstyle}
\newtheorem{notation}[definition]{Notation}

\renewcommand{\arraystretch}{1.25}

\newcommand{\setfont}[1]{\ensuremath{\mathsf{#1}}}

\newcommand{\appref}[1]{\hyperref[#1]{Appendix~\ref{#1}}}

\newcommand{\Alg}{{\mathcal A}}
\newcommand{\AlgB}{{\mathcal B}}
\newcommand{\AlgTerm}{{\mathcal T}}
\newcommand{\AIC}{{\setfont{AIC}}}
\newcommand{\Mon}{{\setfont{MonoMaps}}}
\newcommand{\interpretsem}[1]{{^\interpret\llbracket #1 \rrbracket}}

\newcommand{\normalrules}{%
\normalsize%
\def \TirNameStyle ##1{\small \textsc{##1}}%
}
\newcommand{\smallrules}{%
\small%
\def \TirNameStyle ##1{\footnotesize \textsc{##1}}%
}
\newcommand{\footnotesizerules}{%
\footnotesize%
\def \TirNameStyle ##1{\scriptsize \textsc{##1}}%
}
\newcommand{\scriptsizerules}{%
\scriptsize%
\def \TirNameStyle ##1{\tiny \textsc{##1}}%
}

\newcommand{\wf}{\mathsf{wf}} 

\newcommand{\sequence}[4]{#1 ~ #2 ~ #3 ~ #4 \ldots{}}
\newcommand{\mmorespacesequence}[4]{#1 ~{}~{}~{} #2 ~{}~{}~{} #3 ~{}~{}~{} #4 \ldots{}}
\newcommand{\morespacesequence}[4]{#1 ~{}~{} #2 ~{}~{} #3 ~{}~{} #4 \ldots{}}
\newcommand{\constseq}[1]{\sequence{#1}{#1}{#1}{#1}}

\newcommand{\leftsuperscript}[2]{\tensor*[^{#1}]{#2}{}}

\newcommand{\Term}{\setfont{Terms}\xspace}

\newcommand{\Vars}{\setfont{Vars}}
\newcommand{\Funcs}{\setfont{Funcs}}

\newcommand{\fcta}{\ensuremath{F}}
\newcommand{\fctb}{\ensuremath{G}}
\newcommand{\fctc}{\ensuremath{H}}
\newcommand{\funca}{\fcta}
\newcommand{\funcb}{\fctb}
\newcommand{\funcc}{\fctc}
\newcommand{\funcaAlg}{\fcta^{\semigray{\Alg}}}

\newcommand{\funckind}{\funcb}

\DeclareFontFamily{U}{MnSymbolC}{}

\DeclareFontShape{U}{MnSymbolC}{m}{n}{
  <-6>   MnSymbolC5
  <6-7>  MnSymbolC6
  <7-8>  MnSymbolC7
  <8-9>  MnSymbolC8
  <9-10> MnSymbolC9
  <10-12> MnSymbolC10
  <12->  MnSymbolC12}{}

\DeclareFontShape{U}{MnSymbolC}{b}{n}{
  <-6>   MnSymbolC-Bold5
  <6-7>  MnSymbolC-Bold6
  <7-8>  MnSymbolC-Bold7
  <8-9>  MnSymbolC-Bold8
  <9-10> MnSymbolC-Bold9
  <10-12> MnSymbolC-Bold10
  <12->  MnSymbolC-Bold12}{}

\DeclareSymbolFont{MnSyC}{U}{MnSymbolC}{m}{n}
\SetSymbolFont{MnSyC}{bold}{U}{MnSymbolC}{b}{n}

\DeclareMathSymbol{\meddiamond}{\mathbin}{MnSyC}{110}
\DeclareMathSymbol{\Diamond}{\mathbin}{MnSyC}{110}
\newcommand{\Supsymbol}{{\raisebox{-0.00em}{$\Diamond$}}}

\newcommand{\Sup}{\Supsymbol\,}

\newcommand{\Infsymbol}{{{\raisebox{-0.09em}{$\square$}}}}

\newcommand{\Inf}{\Infsymbol\,}

\DeclareSymbolFont{txsymbolsC}{U}{txsyc}{m}{n}
\SetSymbolFont{txsymbolsC}{bold}{U}{txsyc}{bx}{n}
\DeclareFontSubstitution{U}{txsyc}{m}{n}
\DeclareMathSymbol{\medcirc}{\mathbin}{txsymbolsC}{7}
\newcommand{\Nextsymbol}{{\medcirc}}
\newcommand{\Next}{\Nextsymbol\,}
\newcommand{\Firstsymbol}{{\circleddash}}
\newcommand{\First}{\Firstsymbol\,}

\newcommand{\lmin}{\curlywedge}
\newcommand{\llmin}{\morespace{\lmin}}
\newcommand{\lmax}{\curlyvee}
\newcommand{\llmax}{\morespace{\lmax}}

\newcommand{\Fa}{\funca\,}

\newcommand{\Fkinditersymbol}[1]{\funckind^{#1}_{\forb}}
\newcommand{\Fkinditer}[1]{\Fkinditersymbol{#1}\,}
\newcommand{\Fas}{\fcta^*}

\newcommand{\F}{\fcta\,}
\newcommand{\Fs}{\fcta^*}

\newcommand{\botAlg}{\bot\!{}^{\semigray{\Alg}}}
\newcommand{\topAlg}{\top^{\semigray{\Alg}}}
\newcommand{\botL}{\bot\!{}^{\semigray{L}}}
\newcommand{\topL}{\top^{\semigray{L}}}
\newcommand{\lmaxAlg}{\mathrel{{\lmax}^{\semigray{\Alg}}\!}}
\newcommand{\lminAlg}{\mathrel{{\lmin}\!{}^{\semigray{\Alg}}\!}}
\newcommand{\SupAlgsymbol}{{\Supsymbol}^{\semigray{\Alg}}}
\newcommand{\SupAlg}{\SupAlgsymbol\,}
\newcommand{\InfAlgsymbol}{\Infsymbol^{\semigray{\Alg}}}

\newcommand{\NextAlgsymbol}{\Nextsymbol^{\semigray{\Alg}}}
\newcommand{\NextAlg}{\NextAlgsymbol\,}
\newcommand{\FirstAlgsymbol}{\Firstsymbol^{\semigray{\Alg}}}

\newcommand{\FsAlg}{{\fcta^*}^{\semigray{\Alg}}}

\newcommand{\botAlgB}{\bot\!{}^{\semigray{\AlgB}}}
\newcommand{\topAlgB}{\top^{\semigray{\AlgB}}}
\newcommand{\lmaxAlgB}{\mathrel{{\lmax}^{\semigray{\AlgB}}\!}}

\newcommand{\SupAlgBsymbol}{{\Supsymbol}^{\semigray{\AlgB}}}
\newcommand{\SupAlgB}{\SupAlgBsymbol\,}

\newcommand{\ladomain}{\ensuremath{L}}
\newcommand{\laelema}{\ensuremath{\ell}}
\newcommand{\laelemb}{\ensuremath{m}}
\newcommand{\laelem}{\laelema}

\newcommand{\laseqs}{\ensuremath{L^\omega}}
\newcommand{\lasubset}{\ensuremath{S}}
\newcommand{\laord}{\sqsubseteq}
\newcommand{\larevord}{\sqsupseteq}
\newcommand{\llaord}{\morespace{\laord}}

\newcommand{\lameet}{\sqcap}

\newcommand{\seqa}{\ensuremath{\varphi}}
\newcommand{\seqb}{\ensuremath{\psi}}
\newcommand{\seqc}{\ensuremath{\vartheta}}

\newcommand{\vara}{\ensuremath{x}}

\newcommand{\fora}{\ensuremath{a}}
\newcommand{\forb}{\ensuremath{b}}
\newcommand{\forc}{\ensuremath{c}}
\newcommand{\ford}{\ensuremath{d}}
\newcommand{\fore}{\ensuremath{e}}

\newcommand{\terma}{\ensuremath{s}}
\newcommand{\termb}{\ensuremath{t}}
\newcommand{\termc}{\ensuremath{u}}
\newcommand{\termd}{\ensuremath{w}}
\newcommand{\terme}{\ensuremath{e}}

\newcommand{\metaassume}[1]{\textnormal{\TirNameStyle{\semilightgray{$\lbag$#1$\rbag$}}}}

\newcommand{\interpret}{\ensuremath{\mathfrak{I}}}

\newcommand{\turnstilesymbol}{\hqmorespace{\preceq}}

\newcommand{\tss}{\turnstilesymbol}

\newcommand{\rulenamespace}{\hspace*{.5em}}
\newcommand{\assumptionname}{asm.}
\newcommand{\asmcolor}[1]{\textcolor{lightgray!50!DodgerBlue}{#1}}
\newcommand{\textasm}{\textcolor{lightgray!50!DodgerBlue}{\TirNameStyle{\assumptionname}}}
\newcommand{\iassume}[1]{%
	{%
		\inferrule*{\textnormal{\small \raisebox{-.5ex}{\textcolor{lightgray!50!DodgerBlue}{\TirNameStyle{\assumptionname}}}}}{#1}
	}%
}

\newcommand{\iIH}[1]{%
	{%
		\inferrule*{\textnormal{\small \raisebox{-.5ex}{\textcolor{lightgray!50!DodgerBlue}{\TirNameStyle{i.h.}}}}}{#1}
	}%
}

\newcommand{\sassume}[2]{\nsaxiom{\textcolor{lightgray!50!DodgerBlue}{\textsc{\assumptionname}}}{#1}{#2}}
\newcommand{\seassume}[2]{\nequivaxiom{\textcolor{lightgray!50!DodgerBlue}{\textsc{\assumptionname}}}{#1}{#2}}
\newcommand{\irule}[2]{\inferrule*{#1}{#2}}
\newcommand{\nirule}[3]{\inferrule*[right={\rulenamespace{}\semigray{#1}}]{#2}{#3}}
\newcommand{\nlirule}[3]{\inferrule*[left={\semigray{#1}\rulenamespace}]{#2}{#3}}
\newcommand{\nirulet}[3]{\inferrule*[Right={\rulenamespace{}\semigray{#1}}]{#2}{#3}}
\newcommand{\nlirulet}[3]{\inferrule*[Left={\semigray{#1}\rulenamespace}]{#2}{#3}}

\newcommand{\nequalrule}[5]{\nirule{#1}{#2 \eeq #3}{#4 \eeq #5}}

\newcommand{\nsaxiom}[3]{\irule{\textnormal{\TirNameStyle{\raisebox{-.5ex}{\semigray{#1}}}}}{#2 \tss #3}}
\newcommand{\nequivaxiom}[3]{\irule{\textnormal{\TirNameStyle{\raisebox{-.5ex}{\semigray{#1}}}}}{#2 \hqeq #3}}
\newcommand{\nequalaxiom}[3]{\irule{\textnormal{\TirNameStyle{\raisebox{-.5ex}{\semigray{#1}}}}}{#2 \hqeq #3}}

\newcommand{\sugrule}[3]{%
	{
		\mprset{fraction={{\cdot}{\cdots}{\cdot}}}%
		\inferrule*[right={\rulenamespace{}\semigray{#1}}]{%
		\mprset{defaultfraction}
		#2 } { #3 }
	}
}
\newcommand{\sugrulet}[3]{%
	{
		\mprset{fraction={{\cdot}{\cdots}{\cdot}}}%
		\inferrule*[Right={\rulenamespace{}\semigray{#1}}]{%
		\mprset{defaultfraction}
		#2 } { #3 }
	}
}
\newcommand{\lsugrule}[3]{%
	{
		\mprset{fraction={{\cdot}{\cdots}{\cdot}}}%
		\inferrule*[left={\rulenamespace{}\semigray{#1}}]{%
		\mprset{defaultfraction}
		#2 } { #3 }
	}
}
\newcommand{\lsugrulet}[3]{%
	{
		\mprset{fraction={{\cdot}{\cdots}{\cdot}}}%
		\inferrule*[Left={\rulenamespace{}\semigray{#1}}]{%
		\mprset{defaultfraction}
		#2 } { #3 }
	}
}

\newcommand{\leequivname}{sugar}
\newcommand{\leequivrule}[2]{\sugrule{\leequivname}{#1}{#2}}

\newcommand{\leequivrulet}[2]{\sugrulet{\leequivname}{#1}{#2}}
\newcommand{\lleequivrulet}[2]{\lsugrulet{\leequivname}{#1}{#2}}

\newcommand{\equivlename}{desugar}
\newcommand{\equivlerule}[2]{\sugrule{\equivlename}{#1}{#2}}
\newcommand{\lequivlerule}[2]{\lsugrule{\equivlename}{#1}{#2}}
\newcommand{\equivlerulet}[2]{\sugrulet{\equivlename}{#1}{#2}}
\newcommand{\lequivlerulet}[2]{\lsugrulet{\equivlename}{#1}{#2}}

\newcommand{\eqreflexname}{eq-reflex}
\newcommand{\axeqreflex}[1]{\nequalaxiom{\eqreflexname}{#1}{#1}}

\newcommand{\eqsymmname}{symm}

\newcommand{\eqsymmrulet}[2]{\nirulet{\eqsymmname}{#1}{#2}}
\newcommand{\leqsymmrule}[2]{\nlirule{\eqsymmname}{#1}{#2}}
\newcommand{\leqsymmrulet}[2]{\nlirulet{\eqsymmname}{#1}{#2}}

\newcommand{\eqtransname}{eq-trans}
\newcommand{\eqtransrule}[3]{\nirule{\eqtransname}{#1 \\ #2}{#3}}
\newcommand{\eqtransrulet}[3]{\nirulet{\eqtransname}{#1 \\ #2}{#3}}
\newcommand{\leqtransrule}[3]{\nlirule{\eqtransname}{#1 \\ #2}{#3}}
\newcommand{\leqtransrulet}[3]{\nlirulet{\eqtransname}{#1 \\ #2}{#3}}

\newcommand{\eqcongname}{cong}
\newcommand{\eqcongrule}[3]{\nirule{\eqcongname}{#1 \\ #2}{#3}}
\newcommand{\eqcongrulet}[3]{\nirulet{\eqcongname}{#1 \\ #2}{#3}}
\newcommand{\leqcongrule}[3]{\nlirule{\eqcongname}{#1 \\ #2}{#3}}
\newcommand{\leqcongrulet}[3]{\nlirulet{\eqcongname}{#1 \\ #2}{#3}}

\newcommand{\reflexname}{reflex}
\newcommand{\axreflex}[1]{\nsaxiom{\reflexname}{#1}{#1}}

\newcommand{\botname}{bot}
\newcommand{\axbot}[1]{\nsaxiom{\botname}{\bot}{#1}}

\newcommand{\topname}{top}
\newcommand{\axtop}[1]{\nsaxiom{\topname}{#1}{\top}}

\newcommand{\nextbotname}{$\Nextsymbol$-of-$\bot$}
\newcommand{\axnextbot}{\nsaxiom{\nextbotname}{\Next \bot}{\bot}}

\newcommand{\nexttopname}{$\Nextsymbol$-of-$\top$}
\newcommand{\axnexttop}{\nsaxiom{\nexttopname}{\top}{\Next \top}}

\newcommand{\cutname}{trans}
\newcommand{\cutrule}[3]{\nirule{\cutname}{#1 \\ #2}{#3}}
\newcommand{\cutrulet}[3]{\nirulet{\cutname}{#1 \\ #2}{#3}}
\newcommand{\lcutrule}[3]{\nlirule{\cutname}{#1 \\ #2}{#3}}
\newcommand{\lcutrulet}[3]{\nlirulet{\cutname}{#1 \\ #2}{#3}}

\newcommand{\equivinequalLname}{weakenR}
\newcommand{\equivinequalLrule}[2]{\nirule{\equivinequalLname}{#1}{#2}}
\newcommand{\equivinequalLrulet}[2]{\nirulet{\equivinequalLname}{#1}{#2}}

\newcommand{\lequivinequalLrulet}[2]{\nlirulet{\equivinequalLname}{#1}{#2}}

\newcommand{\equivinequalRname}{weakenL}
\newcommand{\equivinequalRrule}[2]{\nirule{\equivinequalRname}{#1}{#2}}
\newcommand{\equivinequalRrulet}[2]{\nirulet{\equivinequalRname}{#1}{#2}}

\newcommand{\inequalequivname}{antisymm}
\newcommand{\inequalequivrule}[2]{\nirule{\inequalequivname}{#1}{#2}}
\newcommand{\inequalequivrulet}[2]{\nirulet{\inequalequivname}{#1}{#2}}

\newcommand{\cutsname}{\cutname${}^{*}$}
\newcommand{\cutsrule}[2]{\nirule{\cutsname}{#1}{#2}}
\newcommand{\cutsrulet}[2]{\nirulet{\cutsname}{#1}{#2}}

\newcommand{\antisymmname}{indiscern}
\newcommand{\antisymmrule}[3]{\nirule{\antisymmname}{#1 \\ #2}{#3}}
\newcommand{\antisymmrulet}[3]{\nirulet{\antisymmname}{#1 \\ #2}{#3}}
\newcommand{\lantisymmrule}[3]{\nlirule{\antisymmname}{#1 \\ #2}{#3}}
\newcommand{\lantisymmrulet}[3]{\nlirulet{\antisymmname}{#1 \\ #2}{#3}}

\newcommand{\joincommname}{$\lmax$-comm}
\newcommand{\axjoincomm}[2]{\nequivaxiom{\joincommname}{#1 \lmax #2}{#2 \lmax #1}}
\newcommand{\joincommrule}[2]{\nirule{\joincommname}{#1}{#2}}
\newcommand{\joincommrulet}[2]{\nirulet{\joincommname}{#1}{#2}}

\newcommand{\meetcommname}{$\lmin$-comm}
\newcommand{\axmeetcomm}[2]{\nequivaxiom{\meetcommname}{#1 \lmin #2}{#2 \lmin #1}}

\newcommand{\lmeetcommrulet}[2]{\nlirulet{\meetcommname}{#1}{#2}}

\newcommand{\joinassocname}{$\lmax$-assoc}
\newcommand{\axjoinassoc}[3]{\nequivaxiom{\joinassocname}{(#1 \lmax #2) \llmax #3}{#1 \llmax (#2 \lmax #3)}}

\newcommand{\meetassocname}{$\lmin$-assoc}
\newcommand{\axmeetassoc}[3]{\nequivaxiom{\meetassocname}{(#1 \lmin #2) \llmin #3}{#1 \llmin (#2 \lmin #3)}}

\newcommand{\joinabsorbname}{$\lmax$-absorb}
\newcommand{\axjoinabsorb}[2]{\nequivaxiom{\joinabsorbname}{#1 \llmax (#1 \lmin #2)}{#1}}

\newcommand{\meetabsorbname}{$\lmin$-absorb}
\newcommand{\axmeetabsorb}[2]{\nequivaxiom{\meetabsorbname}{#1 \llmin (#1 \lmax #2)}{#1}}

\newcommand{\nextsupcommname}{$\Nextsymbol\Supsymbol$-comm}
\newcommand{\axnextsupcomm}[1]{\nequivaxiom{\nextsupcommname}{\Next \Sup #1}{\Sup \Next #1}}

\newcommand{\nextsupcommrulet}[2]{\nirulet{\nextsupcommname}{#1}{#2}}

\newcommand{\nextinfcommname}{$\Nextsymbol\Infsymbol$-comm}
\newcommand{\axnextinfcomm}[1]{\nequivaxiom{\nextinfcommname}{\Next \Inf #1}{\Inf \Next #1}}

\newcommand{\nextinfcommrulet}[2]{\nirulet{\nextinfcommname}{#1}{#2}}

\newcommand{\joinintroLname}{$\lmax$-introL}
\newcommand{\joinintroLrule}[3]{\nirule{\joinintroLname}{#1 \\ #2}{#3}}
\newcommand{\joinintroLrulet}[3]{\nirulet{\joinintroLname}{#1 \\ #2}{#3}}
\newcommand{\ljoinintroLrule}[3]{\nlirule{\joinintroLname}{#1 \\ #2}{#3}}
\newcommand{\ljoinintroLrulet}[3]{\nlirulet{\joinintroLname}{#1 \\ #2}{#3}}

\newcommand{\joinintroRname}{$\lmax$-introR}
\newcommand{\joinintroRrule}[2]{\nirule{\joinintroRname}{#1}{#2}}
\newcommand{\joinintroRrulet}[2]{\nirulet{\joinintroRname}{#1}{#2}}
\newcommand{\ljoinintroRrule}[2]{\nlirule{\joinintroRname}{#1}{#2}}
\newcommand{\ljoinintroRrulet}[2]{\nlirulet{\joinintroRname}{#1}{#2}}

\newcommand{\meetintroLname}{$\lmin$-introL}
\newcommand{\meetintroLrule}[2]{\nirule{\meetintroLname}{#1}{#2}}
\newcommand{\meetintroLrulet}[2]{\nirulet{\meetintroLname}{#1}{#2}}

\newcommand{\lmeetintroLrulet}[2]{\nlirulet{\meetintroLname}{#1}{#2}}

\newcommand{\meetintroRname}{$\lmin$-introR}
\newcommand{\meetintroRrule}[3]{\nirule{\meetintroRname}{#1 \\ #2}{#3}}
\newcommand{\meetintroRrulet}[3]{\nirulet{\meetintroRname}{#1 \\ #2}{#3}}

\newcommand{\lmeetintroRrulet}[3]{\nlirulet{\meetintroRname}{#1 \\ #2}{#3}}

\newcommand{\infintroLname}{$\Infsymbol$-introL}
\newcommand{\infintroLrule}[2]{\nirule{\infintroLname}{#1}{#2}}

\newcommand{\joinelimname}{$\lmax$-elim}
\newcommand{\joinelimrule}[2]{\nirule{\joinelimname}{#1}{#2}}

\newcommand{\meetelimname}{$\lmin$-elim}
\newcommand{\meetelimrule}[2]{\nirule{\meetelimname}{#1}{#2}}
\newcommand{\meetelimrulet}[2]{\nirulet{\meetelimname}{#1}{#2}}

\newcommand{\infelimname}{$\Infsymbol$-elim}
\newcommand{\infelimrule}[2]{\nirule{\infelimname}{#1}{#2}}
\newcommand{\infelimrulet}[2]{\nirulet{\infelimname}{#1}{#2}}

\newcommand{\supintroRname}{$\Supsymbol$-introR}
\newcommand{\supintroRrule}[2]{\nirule{\supintroRname}{#1}{#2}}

\newcommand{\lsupintroRrulet}[2]{\nlirulet{\supintroRname}{#1}{#2}}

\newcommand{\supelimname}{$\Supsymbol$-elim}
\newcommand{\supelimrule}[2]{\nirule{\supelimname}{#1}{#2}}

\newcommand{\nextmononame}{$\Nextsymbol$-mono}
\newcommand{\nextmonorule}[2]{\nirule{\nextmononame}{#1}{#2}}
\newcommand{\nextmonorulet}[2]{\nirulet{\nextmononame}{#1}{#2}}

\newcommand{\supinfinfsupmononame}{$\Supsymbol\Infsymbol$\,/\,$\Infsymbol\Supsymbol$-mono}
\newcommand{\supinfinfsupmonorule}[2]{\nirule{\supinfinfsupmononame}{#1}{#2}}

\newcommand{\ascitername}{asc-iter}
\newcommand{\asciterrule}[2]{\nirule{\ascitername}{#1}{#2}}
\newcommand{\asciterrulet}[2]{\nirulet{\ascitername}{#1}{#2}}

\newcommand{\descitername}{desc-iter}
\newcommand{\desciterrule}[2]{\nirule{\descitername}{#1}{#2}}
\newcommand{\desciterrulet}[2]{\nirulet{\descitername}{#1}{#2}}

\newcommand{\orbitascname}{orbit-asc}
\newcommand{\orbitascrule}[2]{\nirule{\orbitascname}{#1}{#2}}
\newcommand{\orbitascrulet}[2]{\nirulet{\orbitascname}{#1}{#2}}

\newcommand{\orbitdescname}{orbit-desc}
\newcommand{\orbitdescrule}[2]{\nirule{\orbitdescname}{#1}{#2}}
\newcommand{\orbitdescrulet}[2]{\nirulet{\orbitdescname}{#1}{#2}}

\newcommand{\supascpointname}{$\Supsymbol$-asc-flat}
\newcommand{\supascpointrule}[2]{\nirule{\supascpointname}{#1}{#2}}

\newcommand{\infdescpointname}{$\Infsymbol$-desc-flat}
\newcommand{\infdescpointrule}[2]{\nirule{\infdescpointname}{#1}{#2}}

\newcommand{\supinflatename}{$\Supsymbol$-inflate}
\newcommand{\axsupinflate}[1]{\nsaxiom{\supinflatename}{#1}{\Sup #1}}

\newcommand{\infdeflatename}{$\Infsymbol$-deflate}
\newcommand{\axinfdeflate}[1]{\nsaxiom{\infdeflatename}{\Inf #1}{#1}}

\newcommand{\supmononame}{$\Supsymbol$-mono}
\newcommand{\supmonorule}[2]{\nirule{\supmononame}{#1}{#2}}
\newcommand{\supmonorulet}[2]{\nirulet{\supmononame}{#1}{#2}}

\newcommand{\lsupmonorulet}[2]{\nlirulet{\supmononame}{#1}{#2}}

\newcommand{\infmononame}{$\Infsymbol$-mono}
\newcommand{\infmonorule}[2]{\nirule{\infmononame}{#1}{#2}}
\newcommand{\infmonorulet}[2]{\nirulet{\infmononame}{#1}{#2}}

\newcommand{\linfmonorulet}[2]{\nlirulet{\infmononame}{#1}{#2}}

\newcommand{\nextindname}{$\Supsymbol$-ind}
\newcommand{\nextindrule}[2]{\nirule{\nextindname}{#1}{#2}}
\newcommand{\nextindrulet}[2]{\nirulet{\nextindname}{#1}{#2}}

\newcommand{\nextcoindname}{$\Infsymbol$-coind}
\newcommand{\nextcoindrule}[2]{\nirule{\nextcoindname}{#1}{#2}}
\newcommand{\nextcoindrulet}[2]{\nirulet{\nextcoindname}{#1}{#2}}

\newcommand{\lnextcoindrulet}[2]{\nlirulet{\nextcoindname}{#1}{#2}}

\newcommand{\suptightname}{$\Supsymbol$-introL}
\newcommand{\suptightrule}[3]{\nirule{\suptightname}{#1 \\ #2}{#3}}
\newcommand{\suptightrulet}[3]{\nirulet{\suptightname}{#1 \\ #2}{#3}}

\newcommand{\inftightname}{$\Infsymbol$-introR}
\newcommand{\inftightrule}[3]{\nirule{\inftightname}{#1 \\ #2}{#3}}
\newcommand{\inftightrulet}[3]{\nirulet{\inftightname}{#1 \\ #2}{#3}}
\newcommand{\linftightrule}[3]{\nlirule{\inftightname}{#1 \\ #2}{#3}}

\newcommand{\supexpandname}{$\Supsymbol$-exp}
\newcommand{\axsupexpand}[1]{\nequivaxiom{\supexpandname}{\Sup #1}{#1 \llmax \Next \Sup #1}}
\newcommand{\supexpandrule}[2]{\nirule{\supexpandname}{#1}{#2}}

\newcommand{\infexpandname}{$\Infsymbol$-exp}
\newcommand{\axinfexpand}[1]{\nequivaxiom{\infexpandname}{\Inf #1}{#1 \llmin \Next \Inf #1}}
\newcommand{\infexpandrule}[2]{\nirule{\infexpandname}{#1}{#2}}

\newcommand{\itername}{iter}
\newcommand{\axiter}[1]{\nequivaxiom{\itername}{\Next \Fas #1}{\Fa \Fas \Next #1}}
\newcommand{\iterrule}[2]{\nirule{\itername}{#1}{#2}}
\newcommand{\iterrulet}[2]{\nirulet{\itername}{#1}{#2}}

\newcommand{\Fmononame}{$\funca$-mono}
\newcommand{\Fmonorule}[2]{\nirule{\Fmononame}{#1}{#2}}
\newcommand{\Fmonorulet}[2]{\nirulet{\Fmononame}{#1}{#2}}

\newcommand{\lFmonorulet}[2]{\nlirulet{\Fmononame}{#1}{#2}}

\newcommand{\Fsmononame}{$\funca^*$-mono}

\newcommand{\Fsmonorulet}[2]{\nirulet{\Fsmononame}{#1}{#2}}

\newcommand{\lFsmonorulet}[2]{\nlirulet{\Fsmononame}{#1}{#2}}

\newcommand{\FFscommname}{$\funca\funca^*$-comm}
\newcommand{\axFFscomm}[1]{\nequivaxiom{\FFscommname}{\Fa \Fas #1}{\Fas \Fa #1}}

\newcommand{\FFscommrulet}[2]{\nirulet{\FFscommname}{#1}{#2}}

\newcommand{\FNextcommname}{$\funca\Next$-comm}
\newcommand{\axFNextcomm}[1]{\nequivaxiom{\FNextcommname}{\Fa \Next #1}{\Next \Fa #1}}

\newcommand{\FNextcommrulet}[2]{\nirulet{\FNextcommname}{#1}{#2}}

\newcommand{\lFNextcommrulet}[2]{\nlirulet{\FNextcommname}{#1}{#2}}

\newcommand{\Findname}{$\funca$-ind}
\newcommand{\Findrule}[2]{\nirule{\Findname}{#1}{#2}}
\newcommand{\Findrulet}[2]{\nirulet{\Findname}{#1}{#2}}

\newcommand{\Fcoindname}{$\funca$-coind}
\newcommand{\Fcoindrule}[2]{\nirule{\Fcoindname}{#1}{#2}}
\newcommand{\Fcoindrulet}[2]{\nirulet{\Fcoindname}{#1}{#2}}

\newcommand{\lFcoindrulet}[2]{\nlirulet{\Fcoindname}{#1}{#2}}

\newcommand{\FsintroLname}{$\funca^*$-introL}
\newcommand{\FsintroLrule}[3]{\nirule{\FsintroLname}{#1 \\ #2}{#3}}

\newcommand{\lFsintroLrulet}[3]{\nlirulet{\FsintroLname}{#1 \\ #2}{#3}}

\newcommand{\FsintroRname}{$\funca^*$-introR}
\newcommand{\FsintroRrule}[3]{\nirule{\FsintroRname}{#1 \\ #2}{#3}}
\newcommand{\FsintroRrulet}[3]{\nirulet{\FsintroRname}{#1 \\ #2}{#3}}

\newcommand{\semicontname}{semi-cont}
\newcommand{\axsemicont}[1]{\nsaxiom{\semicontname}{\Sup \Fa #1}{\Fa \Sup #1}}

\newcommand{\semicocontname}{semi-cocont}
\newcommand{\axsemicocont}[1]{\nsaxiom{\semicocontname}{\Fa \Inf #1}{\Inf \Fa #1}}

\newcommand{\alephzeroshortname}{c}
\newcommand{\alephcontname}{\alephzeroshortname-cont}
\newcommand{\axalephcont}[1]{\nsaxiom{\alephcontname}{\Fa \Sup #1}{\Sup \Fa #1}}

\newcommand{\alephcocontname}{\alephzeroshortname-cocont}
\newcommand{\axalephcocont}[1]{\nsaxiom{\alephcocontname}{\Inf \Fa #1}{\Fa \Inf #1}}

\newcommand{\omegacontname}{$\omega$-cont}
\newcommand{\omegacontrule}[2]{\nirule{\omegacontname}{#1}{#2}}

\newcommand{\omegacocontname}{$\omega$-cocont}
\newcommand{\omegacocontrule}[2]{\nirule{\omegacocontname}{#1}{#2}}
\newcommand{\omegacocontrulet}[2]{\nirulet{\omegacocontname}{#1}{#2}}

\newcommand{\supdescname}{$\Supsymbol$-desc}
\newcommand{\axsupdesc}[1]{\nsaxiom{\supdescname}{\Next \Sup #1}{\Sup #1}}

\newcommand{\infascname}{$\Infsymbol$-asc}
\newcommand{\axinfasc}[1]{\nsaxiom{\infascname}{\Inf #1}{\Next \Inf #1}}

\newcommand{\nextoverjoinname}{$\Nextsymbol$-over-$\lmax$}
\newcommand{\axnextoverjoin}[2]{\nequivaxiom{\nextoverjoinname}{\Next (#1 \lmax #2)}{\Next #1 \llmax \Next #2}}

\newcommand{\nextoverjoinrulet}[2]{\nirulet{\nextoverjoinname}{#1}{#2}}

\newcommand{\nextovermeetname}{$\Nextsymbol$-over-$\lmin$}
\newcommand{\axnextovermeet}[2]{\nequivaxiom{\nextovermeetname}{\Next (#1 \lmin #2)}{\Next #1 \llmin \Next #2}}

\newcommand{\lnextovermeetrulet}[2]{\nlirulet{\nextovermeetname}{#1}{#2}}

\newcommand{\supoverjoinname}{$\Supsymbol$-over-$\lmax$}
\newcommand{\axsupoverjoin}[2]{\nequivaxiom{\supoverjoinname}{\Sup (#1 \lmax #2)}{\Sup #1 \llmax \Sup #2}}

\newcommand{\infovermeetname}{$\Infsymbol$-over-$\lmin$}
\newcommand{\axinfovermeet}[2]{\nequivaxiom{\infovermeetname}{\Inf (#1 \lmin #2)}{\Inf #1 \llmin \Inf #2}}

\newcommand{\supovermeetname}{$\Supsymbol$-over-$\lmin$}
\newcommand{\axsupovermeet}[2]{\nsaxiom{\supovermeetname}{\Sup (#1 \lmin #2)}{\Sup #1 \llmin \Sup #2}}

\newcommand{\infoverjoinname}{$\Infsymbol$-over-$\lmax$}
\newcommand{\axinfoverjoin}[2]{\nsaxiom{\infoverjoinname}{\Inf #1 \llmax \Inf #2}{\Inf (#1 \lmax #2)}}

\newcommand{\infidemname}{$\Infsymbol$-idem}
\newcommand{\axinfidem}[1]{\nequivaxiom{\infidemname}{\Inf \Inf #1}{\Inf #1}}

\newcommand{\infidemrulet}[2]{\nirulet{\infidemname}{#1}{#2}}

\newcommand{\supidemname}{$\Supsymbol$-idem}
\newcommand{\axsupidem}[1]{\nequivaxiom{\supidemname}{\Sup \Sup #1}{\Sup #1}}
\newcommand{\supidemrule}[2]{\nirule{\supidemname}{#1}{#2}}

\newcommand{\joinidemname}{$\lmax$-idem}
\newcommand{\axjoinidem}[1]{\nequivaxiom{\joinidemname}{#1 \lmax #1}{#1}}

\newcommand{\meetidemname}{$\lmin$-idem}
\newcommand{\axmeetidem}[1]{\nequivaxiom{\meetidemname}{#1 \lmin #1}{#1}}

\newcommand{\supquasiprefpname}{$\Supsymbol$-quasi-pre-fp}
\newcommand{\supquasiprefprule}[2]{\nirule{\supquasiprefpname}{#1}{#2}}
\newcommand{\supquasiprefprulet}[2]{\nirulet{\supquasiprefpname}{#1}{#2}}

\newcommand{\supquasipostfpname}{$\Supsymbol$-quasi-post-fp}
\newcommand{\supquasipostfprule}[2]{\nirule{\supquasipostfpname}{#1}{#2}}

\newcommand{\tkpfpname}{tkp-fp}
\newcommand{\tkpfprule}[2]{\nirule{\tkpfpname}{#1}{#2}}

\newcommand{\tkppostfpname}{tkp-post-fp}

\newcommand{\tkpleastname}{tkp-least}
\newcommand{\tkpleastrule}[2]{\nirule{\tkpleastname}{#1}{#2}}

\newcommand{\ltkpleastrulet}[2]{\nlirulet{\tkpleastname}{#1}{#2}}

\newcommand{\tkpabovename}{tkp-above}
\newcommand{\tkpaboverule}[2]{\nirule{\tkpabovename}{#1}{#2}}

\newcommand{\olprefpname}{ol-pre-fp}
\newcommand{\olprefprule}[2]{\nirule{\olprefpname}{#1}{#2}}

\newcommand{\olpostfpname}{ol-post-fp}
\newcommand{\olpostfprule}[2]{\nirule{\olpostfpname}{#1}{#2}}

\newcommand{\olfpname}{ol-fp}
\newcommand{\olfprule}[2]{\nirule{\olfpname}{#1}{#2}}

\newcommand{\infsupquasiprefpname}{$\Infsymbol\Supsymbol$-quasi-pre-fp}
\newcommand{\infsupquasiprefprule}[2]{\nirule{\infsupquasiprefpname}{#1}{#2}}
\newcommand{\infsupquasiprefprulet}[2]{\nirulet{\infsupquasiprefpname}{#1}{#2}}

\newcommand{\infsupquasipostfpname}{$\Infsymbol\Supsymbol$-quasi-post-fp}
\newcommand{\infsupquasipostfprule}[2]{\nirule{\infsupquasipostfpname}{#1}{#2}}

\newcommand{\linfsupquasipostfprulet}[2]{\nlirulet{\infsupquasipostfpname}{#1}{#2}}

\newcommand{\shiftpointname}{$\Nextsymbol$-flat}
\newcommand{\shiftpointrule}[2]{\nirule{\shiftpointname}{#1}{#2}}

\newcommand{\kindname}{$k$-ind}
\newcommand{\kindrule}[3]{\nirule{\kindname}{#1 \\ #2}{#3}}

\newcommand{\axkinddescname}[1]{$\Fkinditersymbol{#1}$-deflate}
\newcommand{\axkinddesc}[3]{\nsaxiom{\axkinddescname{#3}}{#1}{#2}}

\newcommand{\axkinddescbname}[1]{$\Fkinditersymbol{#1}$-below}
\newcommand{\axkinddescb}[3]{\nsaxiom{\axkinddescbname{#3}}{#1}{#2}}

\newcommand{\kindparkrulename}{$k$-ind-park}
\newcommand{\kindparkrule}[2]{\nirule{\kindparkrulename}{#1}{#2}}
\newcommand{\kindparkrulet}[2]{\nirulet{\kindparkrulename}{#1}{#2}}

\newcommand{\kindpreserveascname}[1]{$\Fkinditer{#1}$-asc-pres}
\newcommand{\kindpreserveascrule}[3]{\nirule{\kindpreserveascname{#3}}{#1}{#2}}
\newcommand{\kindpreserveascrulet}[3]{\nirulet{\kindpreserveascname{#3}}{#1}{#2}}

\newcommand{\sfsymbol}[1]{\textsf{\upshape {#1}}}

\newcommand{\conditionalPair}[2]{{\let\oldarraystretch\arraystretch}\renewcommand{\arraystretch}{1}~\holter{~\raisebox{.5ex}{${#1}$}~}{~\raisebox{.125ex}{${#2}$}~}~\renewcommand{\arraystretch}{\oldarraystretch}}

\newcommand{\AVAILLOC}[1]{\PosNats}

\newcommand{\Terms}{\ensuremath{\mathsf{Terms}}\xspace}   

\newcommand{\Nats}{\ensuremath{\mathbb{N}}\xspace}
\newcommand{\PosNats}{\ensuremath{\mathbb{N}_{>0}}\xspace}

\newcommand{\Reals}{\mathbb{R}}

\newcommand{\minfty}{{-}\infty}
\newcommand{\pinfty}{{+}\infty}

\newcommand{\subst}[2]{\left[ {#1} \middle\backslash {#2}\right]}

\newcommand{\nsem}[3]{\ensuremath{\tensor*[^{#2}_{}]{\left\llbracket {#1} \right\rrbracket}{_{{#3}}}}}
\newcommand{\nsemi}[2]{\nsem{#1}{\interpret}{#2}}

\newcommand{\nsemin}[1]{\nsem{#1}{\interpret}{n}}

\newcommand{\sem}[2]{\ensuremath{\llbracket {#1} \rrbracket}_{#2}}
\newcommand{\semantics}[1]{\ensuremath{\llbracket {#1} \rrbracket}}
\newcommand{\semn}[1]{\sem{#1}{n}}

\newcommand{\colmark}[1]{\colorbox{DodgerBlue3!30!white}{$#1$}}

\newcommand{\qqiff}{\qquad\textnormal{iff}\qquad}

\newcommand{\lqqiff}{\textnormal{iff}\qquad}

\newcommand{\qand}{\quad\textnormal{and}\quad}
\newcommand{\qqand}{\qquad\textnormal{and}\qquad}

\newcommand{\qimplies}{\quad\textnormal{implies}\quad}
\newcommand{\qqimplies}{\qquad\textnormal{implies}\qquad}

\newcommand{\lqqimplies}{\textnormal{implies}\qquad}

\newcommand{\morespace}[1]{~{}#1{}~}
\newcommand{\qmorespace}[1]{\quad{}#1{}\quad}
\newcommand{\hqmorespace}[1]{\hspace*{.35em}{}#1{}\hspace*{.35em}}
\newcommand{\qqmorespace}[1]{\qquad{}#1{}\qquad}

\newcommand{\ppreceq}{~{}\preceq{}~}

\newcommand{\eeq}{\morespace{=}}
\newcommand{\hqeq}{\hqmorespace{=}}

\newcommand{\hqequiv}{\hqmorespace{\equiv}}

\newcommand{\ssqcap}{~{}\sqcap{}~}
\newcommand{\ssqcup}{~{}\sqcup{}~}

\newcommand{\qmid}{\quad{\!}\semigray{|}{\!}\quad}

\newcommand{\qeq}{\quad{}={}\quad}

\newcommand{\ssqsubseteq}{~{}\sqsubseteq{}~}
\newcommand{\ssqsupseteq}{\morespace{\sqsupseteq}}

\newcommand{\setcomp}[2]{\left\{\, {#1} ~\middle|~ {#2} \,\right\}}

\definecolor{webgreen}{rgb}{0,.5,0}
\newcommand{\gray}[1]{\textcolor{gray}{#1}}
\newcommand{\semigray}[1]{\textcolor{gray!75!lightgray}{#1}}
\newcommand{\semilightgray}[1]{\textcolor{gray!50!lightgray}{#1}}
\newcommand{\lightgray}[1]{\textcolor{lightgray}{#1}}
\newcommand{\verylightgray}[1]{\textcolor{lightgray!75!white}{#1}}

\newcommand{\green}[1]{\textcolor{webgreen}{#1}}

\newcommand{\blue}[1]{\textcolor{DodgerBlue3!75!DodgerBlue4}{#1}}
\newcommand{\orange}[1]{\textcolor{orange}{#1}}
\newcommand{\purple}[1]{\textcolor{purple!70!red}{#1}}

\newcounter{computationarrowsone}
\newcounter{computationarrowstwo}

\newcounter{sarrow}

\newcommand{\lfp}{\ensuremath{\textnormal{\sfsymbol{lfp}}~}}
\newcommand{\gfp}{\ensuremath{\textnormal{\sfsymbol{gfp}}~}}

\newcommand{\lfpabove}[1]{\ensuremath{\textnormal{\sfsymbol{lfp}}_{{\larevord} #1}~}}

\newcommand{\shift}{\Next}

\newcommand{\mn}{\medskip\noindent}

\renewcommand{\implies}{\Longrightarrow}

\newcommand{\Ax}{\mathsf{Ax}}


\makeatletter
\newcommand{\pushright}[1]{\ifmeasuring@#1\else\omit\hfill$\displaystyle#1$\fi\ignorespaces}
\newcommand{\pushleft}[1]{\ifmeasuring@#1\else\omit$\displaystyle#1$\hfill\fi\ignorespaces}
\makeatother

\tikzstyle{shiftarr}=[
        rounded corners,%
        to path={--([#1]\tikztostart.center)
                     -- ([#1]\tikztotarget.center) \tikztonodes
                     -- (\tikztotarget)},
]

\tikzset{
    commutative diagrams/.cd,
    arrow style=tikz,
    diagrams={>=stealth},
    row sep=large,
    column sep = huge
}

\newcommand{\Deltas}{\Delta^{*}}
\newcommand{\Sigmas}{\Sigma^{*}}

\newcommand{\Var}{\mathop{\mathsf{Var}}}
\newcommand{\V}{\mathcal{V}}
\newcommand{\Nat}{\mathbb{N}}

\newcommand{\seq}{\subseteq}
\newcommand{\ol}{\overline}

\newcommand{\interpretsemol}[1]{{^{\ol{\interpret}}\llbracket #1 \rrbracket}}

\newcommand{\quasieq}{\ensuremath{\textup{\textsc{q}}}\xspace}

\category{extended version} 

\relatedversion{This article has been accepted at \emph{Logic in Computer Science} (LICS) 2026.} 
\relatedversiondetails[linktext={\semigray{[to appear]}},cite=aic-conference-version]{Conference version}{https://lics.siglog.org/lics26/} 

\supplement{Isabelle formalization of AIC and formal proofs of fixed point theorems}
\supplementdetails[linktext={link to isa-afp.org},cite=aic-afp, subcategory={Isabelle/HOL Formalization}]{Software}{https://isa-afp.org/entries/Iteration_Algebra.html} 

\acknowledgements{This work was initiated during a workshop organized by Prakash Panangaden and Alexandra Silva in 2023 at the Bellairs Research Institute of McGill University in Holetown, Barbados.
We thank Bellairs for providing a wonderful research environment.
Benjamin Kaminski would like to thank Jan Reineke for his valuable comments.}

\nolinenumbers 

\begin{document}

\maketitle

\begin{abstract}%
Fixed points are a recurring theme in computer science and are often constructed as limits of suitably seeded fixed point iterations.
We present the \emph{algebra of iterative constructions} (AIC) \mbox{-- a purely} algebraic approach to reasoning about fixed point iterations of continuous endomaps on complete lattices.
AIC allows derivations of constructive fixed point theorems via equational logic and avoids explicit computations with indices.
For example,%
\begin{align*}
	\Fa \Sup \Fs \bot \qeq \Sup \Fs \bot
\end{align*}%
states in AIC that $\sup_n F^n (\bot)$ -- a construction known from the Kleene fixed point theorem -- is a fixed point of~$F$.
We demonstrate the applicability of AIC by providing algebraic proofs of several well- and less-well-known fixed point theorems: 
Among others, we prove the \emph{Tarski-Kantorovich principle} -- a generalization of the \emph{Kleene fixed point theorem} -- as well as a fixed point-theoretic generalization of $k$-induction -- a technique used in software verification. 

We moreover present a novel fixed point theorem. 
It improves a recent generalization of the Tarski-Kantorovich principle due to Olszewski for obtaining pre- and postfixed points from lattice-theoretic limit inferiors and limit superiors through iterating an endomap on an \emph{arbitrary} seed element:
We identify sufficient continuity conditions on the endomaps so that these limits become \emph{proper} fixed points.
We have mechanized our algebra in Isabelle/HOL. 
Isabelle's sledgehammer tool is able to find proofs of the above fixed point theorems fully automatically.

Finally, we investigate the completeness of our axiomatization of AIC.  
We prove that our finite set of finitary axioms is (a) sound but incomplete for standard models of AIC (sequences of elements from a complete lattice) and that (b) a different finite set of \emph{infinitary} axioms is complete.
We also prove that infinitary axioms are \emph{unavoidable}: there exists no complete axiomatization of standard models given by \emph{finitely many finitary} axioms.
\end{abstract}

\section{Introduction and Overview}

%
Fixed points are ubiquitous in computer science and related fields: 
shortest paths~\cite{DBLP:journals/ipl/Misra01}, game equilibria~\cite{doi:10.1073/pnas.36.1.48}, social choice theory~\cite{DBLP:journals/corr/abs-1907-10381}, dynamical systems~\cite{Hirschmonomaps}, or semantics~\cite{abramsky1994domain,gunter1992semantics} and verification~\cite{DBLP:conf/focs/Clarke77} are only a few of many examples.

In verification, for example, a common task is to lower-bound the greatest fixed point~(\textsf{gfp}) of a continuous endomap $\funca$, because this amounts to finding a loop invariant that implies the most general loop invariant.
Let $\fora$ be a \underline{\emph{candidate}} for such a lower bound on $\gfp \funca$.
Then if $\fora \preceq \funca(\fora)$ holds, we have \mbox{-- by} \mbox{coinduction --} verified that the candidate~$\fora$ \emph{is indeed} a lower bound on $\gfp \funca$.
Dually, if $\funca(\fora) \preceq \fora$, then $\fora$ is a verified \emph{upper bound} on $\lfp \funca$, so:
\begin{align*}
	\fora \ppreceq \funca(\fora) \qimplies \fora \ppreceq \gfp \funca 
	\quad\qqand\quad 
	\funca(\fora) \ppreceq \fora \qimplies \lfp \funca \ppreceq \fora
\end{align*}
One justification of the above goes through a certain \emph{fixed point iteration} principle:
Consider the case for $\gfp \funca$.
If $\fora \preceq \funca(\fora)$, then this \enquote{kicks off} an ascending fixed point iteration $\fora \preceq \funca(\fora) \preceq \funca^2(\fora) \preceq {\ldots}$ which (under suitable continuity conditions on $\funca$) in the limit (supremum) converges to a fixed point \enquote{just above}~$a$.
More precisely: $\sup_{k \in \omega} \funca^k(\fora)$ is the least fixed point of~$\funca$ that is greater than or equal to $\fora$.
Since $\fora \preceq \sup_{k \in \omega} \funca^k(\fora)$ and since $\sup_{k \in \omega} \funca^k(\fora)$ is itself a fixed point and is hence necessarily less or equal to the \emph{greatest} fixed point $\gfp \funca$, we get that $\fora \preceq \gfp \funca$.
Dually, if $\funca(\fora) \preceq \fora$, then $\inf_{k \in \omega} \funca^k(\fora)$ is the \emph{greatest} fixed point of~$\funca$ \enquote{just below}~$\fora$ and $\fora$ upper-bounds $\lfp \funca$.
This fixed point iteration principle is called the \emph{Tarski-Kantorovich principle}~\cite{jachymski2000tarski}.
\begin{center}
	\bfseries%
	But what if \underline{neither} $\boldsymbol{\fora \preceq \funca(\fora)}$ \underline{nor} $\boldsymbol{\funca(\fora) \preceq \fora}$ hold?\\[.25em]
	What does iterating $\boldsymbol{\funca}$ on $\boldsymbol{\fora}$ yield?
\end{center}%
Is $\lim_{k \to \omega} \funca^k(\fora)$ even well-defined?
Indeed, both $\liminf_{k \to \omega} \funca^k(\fora)$ and $\limsup_{k \to \omega} \funca^k(\fora)$ are well-defined and (under certain slightly stronger continuity conditions on $\funca$) they are both \emph{fixed points} of~$\funca$.
In other words: From \emph{any} candidate $\fora$, we can construct by suitable iteration \emph{some} fixed point of $\funca$ (possibly even two distinct ones).
This is an improvement on a recent (but unfortunately rather unknown) result due to Olszewski~\cite{Ol:omegaSequences} and to the best of our knowledge a novel fixed point theorem.

Also novel -- and our main contribution -- is the way in which we prove such fixed point iteration theorems.
Let us first take a completely standard definition of a limit superior:
\begin{align*}
	\limsup_{k \to \omega}~ \funca^k(\fora) 
	\qeq 
	\inf_{0 \leq \orange{n}} \: \sup_{\orange{n} \leq \blue{k}}\: \funca^{\blue{k}}(\fora)
\end{align*}%
On the right, there are really three nested \emph{sequences} involved:
The first one is \(\bigl(F^{\blue{k}}(a) \bigr)_{\blue{k} \in \omega}\), obtained by iterating \(F\) on~$\fora$.
The second sequence is \(\bigl(\sup_{\orange{n} \leq \blue{k}} F^{\blue{k}(a)}\bigr)_{\orange{n} \in \omega}\).
By making the index $\blue{k}$ of the first sequence depend on the index~$\orange{n}$ of the second one, we can imagine how \enquote{$\sup_{\orange{n} \leq \blue{k}}$} acts as an operation, transforming the first into the second sequence.
The third sequence is, perhaps surprisingly, \(\bigl(\inf_{\green{m} \le \orange{n}} \sup_{\orange{n} \le \blue{k}} F^{\blue{k}}(a)\bigr)_{\green{m} \in \omega}\), constructed by \enquote{$\inf_{\green{m} \leq \orange{n}}$} from the second one, making $\orange{n}$ depend on~$\green{m}$.
Our desired limit superior is then at index \(\green{m} = 0\).

Reasoning about limit superiors and inferiors naturally involves such interdependent indices.
Nevertheless:
\emph{Our key observation is that the exact indices do not really matter and are in fact distracting.}
We can rather think of \(\sup\), \(\inf\), and \(F^{(-)}\) as operations \mbox{-- almost} modalities -- on sequences:
Concretely, we think of $\inf$ as $\Infsymbol$, of $\sup$ as $\Supsymbol$, and we turn $F^{(-)}$ into a sort of iteration operation~$\funca^{*}$\!.
Altogether, \mbox{we obtain}%
\begin{align*}
	\limsup_{k \to \omega}~ \funca^k(\fora) 
	\qeq
	\inf_{0 \leq n} \: \sup_{n \leq k}\: \funca^k(\fora)
	\,\qmorespace{\textnormal{\enquote{$=$}}}\, \Inf \Sup \Fas \fora~.
\end{align*}
In this paper, we will make the above ideas rigorous and introduce the \emph{algebra of iterative constructions}~(AIC), in which we can reason algebraically about fixed points of endomaps on complete lattices.
For example, we will be able to prove that for continuous $\funca$,%
\begin{align*}
	\Fa \Sup \Fas \bot \hqeq \Sup \Fas \bot~,
\end{align*}%
(cf.\ $\sup_{k} \funca^k(\bot)$), thus proving part of the Kleene fixed point theorem.
We demonstrate the usefulness of AIC by means of several case studies.
Purely within AICs (or, to put it a bit flippantly, purely by pushing boxes and diamonds around), we prove:%
\begin{enumerate}
	\item 
		The Tarski-Kantorovich principle, a generalized version of the well-known Kleene fixed point theorem: If $\funca$ is $\omega$-continuous, then $\Sup \Fas \fora$ is the least fixed point of $\funca$ just above~$\fora$.

	\item 
		A generalization of Park induction ($\fora \preceq F(\fora)$ implies $\lfp F \preceq \fora$), which is a heavily used principle in verification of various types of systems.

	\item 
		A novel fixed point theorem which improves a result due to Olszewski~\cite{Ol:omegaSequences}: 
		If $\funca$ is countably continuous (preserving suprema of \emph{arbitrary countable sets}) and $\omega$-cocontinous, then $\Inf \Sup \Fas \fora$ is a fixed point of~$\funca$.
	\item Latticed $k$-induction~\cite{kind_cav}, a generalization of the $k$-induction principle~\cite{DBLP:conf/fmcad/SheeranSS00}, which is extremely effective in hardware and software verification.
\end{enumerate}%
We have mechanized AIC and the algebraic proofs of the theorems above in Isabelle/HOL~\cite{aic-afp}. 
The mechanized algebraic proofs follow the manual exposition very closely, with intermediate steps solved automatically. 
High-level theorems such as Tarski-Kantorovich or Park induction can even be proved fully automatically.

\subparagraph*{Overview and Organization of this Paper.}

We present the algebraic structures we treat in this paper \mbox{-- in} particular, our standard models, which we call \emph{sequence algebras} -- in \Cref{sec:synsem}.
There, we also show how equational proofs for such structures work.
In \Cref{sec:calc}, we present a finitary base axiomatization of AIC called $\AIC_0$, which is sound for sequence algebras.
We also provide an axiomatization with additional (redundant) axioms called $\AIC_1$, which is more usable for automatic proof search.
We demonstrate applicability of AIC through algebraic proofs of fixed point theorems 
and comment on pen-and-paper-style algebraic proofs 
in \Cref{sec:case_studies}.
We comment on the mechanization of AIC in Isabelle/HOL in \Cref{sec:isabelle}.
We investigate aspects of (in)completeness of AIC axiomatizations in \Cref{sec:completeness}.
We discuss limitations and give an outlook on future directions in \Cref{sec:concl}.

\subparagraph*{Related Work.} 

\emph{Axiomatic approaches} to fixed point theory have a long tradition in theoretical computer science, having appeared in different settings and levels of generality. 
The most prominent examples are Kozen's complete axiomatization of the modal $\mu$-calculus~\cite{kozen82,walukiewicz95}, and the abstract perspective on fixed point equations of Bloom and Esik's iteration theories~\cite{bloom-esik93,amv07}, which extend the categorical account of universal algebra based on Lawvere theories. 
While these works are aimed at axiomatically capturing \emph{properties} of least or greatest fixed points, the focus of AIC lies on reasoning about the iterative \emph{construction} of (not necessarily least or greatest) fixed points and related objects, such as pre- \mbox{and postfixed points}.

Approximating (instead of merely stating the existence of) fixed points was somewhat recently investigated in a series of papers~\cite{DBLP:journals/iandc/BaldanKP24,DBLP:journals/lmcs/BaldanEKP23,DBLP:conf/csl/000124}.
In particular \cite{DBLP:journals/lmcs/BaldanEKP23} tackles the problem of approximating in the directions that are \emph{not} covered by usual (co)induction: \emph{under}approximating \emph{least} and \emph{over}approximating \emph{greatest} fixed points.

Finally, our $\Supsymbol$ operations seem very related to the notion of \emph{least decreasing majorants} from Banach function spaces~\cite{sinnamon2007monotonicity,SANTACRUZHIDALGO2024110490,kiwerski2024arithmeticinterpolationfactorizationamalgams}.
We discuss further related work in \Cref{sec:concl}.%

\section{The Algebra of Iterative Constructions}
\label{sec:synsem}

The \emph{algebra of iterative constructions} (AIC) is an \emph{algebraic} theory, in the sense that it comprises an \emph{algebraic signature} that dictates the available operations and their arities, and a set of equational inference rules that tell us how these operations interact with one another.
We parameterize the theory in a set $\Funcs = \{\funca,\funcb,\funcc,\ldots\}$ of \emph{function symbols}, which abstractly represent monotonic maps for which we will be constructing and reasoning about fixed points.

Let us first deal with the signature.
Given \(\Funcs\), the \emph{signature of AIC} consists of two constants~$\bot$ and $\top$, two binary operations \(\lmax\) and~\(\lmin\), four unary operations $\Supsymbol$, $\Infsymbol$, $\Firstsymbol$, and~$\Nextsymbol$, as well as for each functions symbol \(F \in \Funcs\) two unary operations $F$ and~$F^*$\!. 
Models of AIC will be \emph{(algebraic) structures} with this signature.%
\begin{definition}[AIC-Structures]%
\label{def:algebra}%
	An \emph{AIC-structure}
	is a tuple%
	\begin{align*}
		\Alg 
		\!\eeq\!
		\bigl(
			A,\, 
			\botAlg,\, 
			\topAlg, 
			{\lmaxAlg},\, 
			{\lminAlg},\, 
			\SupAlgsymbol,\, 
			\InfAlgsymbol,\, 
			\FirstAlgsymbol,\, 
			\NextAlgsymbol,\,
			\setcomp{\smash{\funcaAlg}}{\funca \in \Funcs}\!,\, 
			\setcomp{\smash{\FsAlg}}{\funca \in \Funcs}
		\bigr),
	\end{align*}%
	where \(A\) is a set with $\botAlg, \topAlg \in A$ and moreover \({\lmaxAlg}, {\lminAlg} \colon A\times A \to A\) and $\SupAlgsymbol, \InfAlgsymbol, \FirstAlgsymbol, \NextAlgsymbol, \funcaAlg,$ $\FsAlg \colon A \to A$.
	The set \(A\) is called the \emph{carrier} of \(\Alg\) and we also say that $A$ \emph{carries} $\Alg$.
	
	Given two AIC-structures $\Alg$ and $\AlgB$, an \emph{algebra homomorphism} \(h \colon \Alg \to \AlgB\) is a function \(h \colon A \to B\) that preserves the algebraic structure.
	i.e.\ \(h\bigl(\botAlg\bigr) = \botAlgB\) and \(h\bigl(\topAlg\bigr) = \topAlgB\), and for any \(\ell, m \in A\), 
	we have \(h\bigl(\ell \lmaxAlg m \bigr) = h(\ell) \lmaxAlgB h(m)\), and
	\(h\bigl(\SupAlg \ell \bigr) = \SupAlgB h(\ell)\), and so on.%
\end{definition}%
We drop the superscript \(\semigray{{}^\Alg}\) when the $\Alg$ is clear from context
An overview of the meta{\-}variables we use throughout this paper is given in \Cref{tab:metavars}.
%
%
%
\renewcommand{\arraystretch}{1.25}%
\begin{table}[t]%
	\caption{Overview of metavariables.}%
	\label{tab:metavars}%
	\centering%
	\begin{adjustbox}{max width=.999\linewidth}%
		\begin{tabular}{l@{\qquad}l}
			\toprule
			%
			$\fora,\, \forb,\, \forc,\, \ldots{}$		& variables (appearing in terms)				\\
			$\Alg,\, \AlgB,\, \AlgTerm,\, \ldots{}$	& algebraic structures / AIC-structures 			\\
			$\funca,\, \funcb,\, \funcc,\, \ldots{}$	& function symbols representing monotonic maps		\\
			$\interpret$					& interpretation of variables (appearing in terms)		\\
			%
			%
			%
			%
			$\interpret_0$					& interpretation of the function symbols \\
			$\laelem,\, \laelemb,\, \cdots$		& elements from the ambient lattice\\
			%
			%
			$\seqa,\, \seqb,\, \seqc,\, \ldots{}$	& sequences of lattice elements\\
			$\terma,\, \termb,\, \termc,\, \terme,\, \ldots{}$	& iterative construction terms (denoting sequences)	\\
			\bottomrule
		\end{tabular}%
	\end{adjustbox}%
\end{table}%

So far,
nothing in the definition of AIC-structures
dictates what the different operations mean or how they interact.
Since our central aim is to capture iterative constructions of lattice fixed points, our primary object of study are going to be AIC-structures that are carried by sequences of elements from a lattice. 
We will call these sequence algebras.

\subsection{Sequence Algebras}

Let $\omega = \{0,\, 1,\, 2,\, 3,\, \ldots\}$ be the natural numbers.
Given a set \(\ladomain\), a function $\varphi\colon \omega \to \ladomain$ is called a \emph{sequence (in~$\ladomain$)}. 
We let $\laseqs$ be the \emph{set of all sequences in $\ladomain$}.
We usually write \(\varphi = \sequence{\varphi_0}{\varphi_1}{\varphi_2}{\varphi_3}\), where $\varphi_i$ denotes the $i^{\textnormal{th}}$ element of $\varphi$.
We say that a sequence is \emph{flat} if it is a \emph{constant} sequence $\constseq{\ell}$ and denote such a sequence by $\overline{\ell}$.

For the remainder of the paper, \((L,\, {\sqsubseteq})\) is a fixed \emph{countably complete} lattice (i.e.\ a lattice where every countable subset $\lasubset \subseteq \ladomain$, including the empty set, has a supremum and infimum in~$\ladomain$) with bottom and top elements \(\botL\) and \(\topL\), and join and meet denoted by \(\sqcup\) and \(\sqcap\).
We denote by \(\Mon(L)\) the set of monotonic maps \(L \to L\) (monotonic with respect to $\sqsubseteq$). 
An example of a countably-complete lattice is the extended real numbers $\overline{\mathbb{R}} = \bigl(\Reals \cup \{\minfty,\, \pinfty\},\, {\leq}\bigr)$, for which $\Mon(\smash{\overline{\mathbb{R}}})$ consists of the nondecreasing functions.%

We will now define the kind of AIC-structures suitable for reasoning about fixed point iterations of monotonic endomaps in $L$.%
\begin{definition}[Sequence Algebras]
\label{def:seq-algebras}
	Every countably complete lattice $(L,\sqsubseteq)$ together with an \emph{interpretation of the function symbols} $\interpret_0 \colon \Funcs \to \Mon(L)$ induces an AIC-structure carried by~$\ladomain^\omega$ with the following operations:  for any \(\varphi,\vartheta \in L^\omega\) and \(n \in \omega\),
	\allowdisplaybreaks%
	\begin{align}
		\label{def:top bot operations}
		\bot_n &\eeq \botL
		&\top_n &\eeq \topL\\
		\label{def:join meet operations}
		(\varphi \lmax \vartheta)_n &\eeq \varphi_n \sqcup \vartheta_n
		&(\varphi \lmin \vartheta)_n &\eeq \varphi_n \sqcap \vartheta_n
		\\
		\label{def:head shift operations}
		(\First\varphi)_n &\eeq \varphi_{0}
		&(\Next\varphi)_n &\eeq \varphi_{n + 1}
		\\
		\label{def:maj min operations}
		\vphantom{\sup_{k \ge n}^{\infty}}
		(\Sup\varphi)_n &\eeq \sup_{k \ge n}\, \varphi_k
		&(\Inf\varphi)_n &\eeq \inf_{k \ge n} \varphi_k
		\\
		\label{def:function operations}
		(\F\varphi)_n &\eeq \interpret_0(F)(\varphi_n)
		&(\Fs\varphi)_n &\eeq \interpret_0(F)^n(\varphi_n)
	\end{align}
An AIC-structure of this kind is called a \emph{sequence algebra}.%
\end{definition}%
\noindent%
To capture fixed point iterations, members of a sequence algebra are sequences. 
As we will demonstrate in \Cref{sec:case_studies}, the constants and algebraic operations of sequence algebras are ubiquitous in lattice fixed point (iteration) theorems.
Lines~(\ref{def:top bot operations}--\ref{def:join meet operations}) lift the constants and operations from the underlying lattice pointwise to sequences: $\bot$ and~$\top$ are the \emph{flat} sequences~$\constseq{\botL} \in \laseqs$ and $\constseq{\topL} \in \laseqs$.
The operation~$\lmax$ takes two sequences $\varphi$ and $\vartheta$ and yields their element-wise join 
$\morespacesequence{\varphi_0 \sqcup \vartheta_0}{\varphi_1 \sqcup \vartheta_1}{\varphi_2 \sqcup \vartheta_2}$; dually, $\lmin$ is the element-wise meet.

The two operations on Line \eqref{def:head shift operations} are called \emph{head} and \emph{shift}.
The \emph{head} operation $\Firstsymbol$ takes a sequence $\varphi = \sequence{\varphi_0}{\varphi_1}{\varphi_2}{\varphi_3}$ and yields a flat repetition of $\varphi$'s $0^{\textnormal{\scriptsize th}}$ element, i.e.\ $\constseq{\varphi_0}$.
The \emph{shift} operation $\Nextsymbol$ shifts sequences by one position to the left (while deleting the $0^{\textnormal{\scriptsize th}}$ element). 
Together, $\Firstsymbol$ and $\Nextsymbol$ allow extracting single sequence elements in the following sense: 
identifying a lattice element \(\ell \in L\) with its corresponding flat sequence \(\constseq{\ell} \in L^\omega\), 
the \(k^\textnormal{th}\) element of a sequence \(\varphi\) can be expressed as \( \First \Next^k \varphi =  \First \Next \,{}\stackrel{\vphantom{t}\mathclap{\smash{\textnormal{\tiny $k$ times}}}}{\cdots}{}\,{}\, \Next\, \varphi\).

The operations on Line~\eqref{def:maj min operations} are called \emph{majorum} and \emph{minorum}, respectively.
For the majorum~$\Sup \varphi$, the element at index \(n\) is the supremum of all (countably many) elements of $\varphi$ with index~$\geq n$.
It is clear that $\Sup \varphi$ is a \emph{majorant} of $\varphi$, i.e., $\varphi_n \sqsubseteq (\Sup \varphi)_n$ for all \(n\).
Moreover, the sequence $\Sup \varphi$ always \emph{descends}:
For instance, the supremum at index~$0$ can \enquote{see} all elements of $\varphi$, whereas at index~$1$ it can \enquote{see} all but the $0^{\textnormal{\scriptsize th}}$~element.
Hence, the latter can only ever yield a potentially smaller supremum.
In fact, $\Sup \varphi$ is the \emph{least decreasing majorant of~$\varphi$}.
Similar concepts have been studied for Banach function spaces also under the name \emph{least decreasing majorant} or \emph{essential supremum}~\cite{sinnamon2007monotonicity,SANTACRUZHIDALGO2024110490,kiwerski2024arithmeticinterpolationfactorizationamalgams}.
Dually, for the minorum~$\Inf \varphi$, the $n^{\textnormal{\scriptsize th}}$ element is the infimum of all elements with index $\geq n$.
$\Inf \varphi$~is a \emph{minorant} of $\varphi$, i.e.\ $(\Inf \varphi)_n \sqsubseteq \varphi_n$ and it is the \emph{greatest increasing minorant of~$\varphi$}.
\begin{wrapfigure}[8]{r}{.59\textwidth}
	\centering%
	\begin{adjustbox}{max width=.999\linewidth}%
	\begin{tikzpicture}[yscale=.5,xscale=1]
			\draw [gray,opacity=.65,ultra thick,dashed] plot [smooth, tension=0.6] coordinates{(-10, 6) (-8, 4) (-6, 9) (-4, 0) (-2, 11) (0, 1) (2, 9) (4, 3) (6, 8) (8, 5) (10, 7)};
			\draw [blue!, opacity=.3,line width=3pt] plot [smooth, tension=0.6] coordinates{(-10, 11) (-8, 11) (-6, 11) (-4, 11) (-2, 11) (0, 9) (2, 9) (4, 8) (6, 8) (8, 7) (10, 7)};
			\draw [orange, opacity=.3,line width=3pt] plot [smooth, tension=0.6] coordinates{(-10, 0) (-8, 0) (-6, 0) (-4, 0) (-2, 1) (0, 1) (2, 3) (4, 3) (6, 5) (8, 5) (10, 5.1)};
			\node [] (0) at (-11.7, 6) {\Huge\gray{${\varphi\colon}$}};
			\node [] (0) at (-12, 11) {\Huge\blue{${\Sup\varphi\colon}$}};
			\node [] (0) at (-12, 0) {\Huge\orange{${\Inf\varphi\colon}$}};
			\node [] (0) at (-10, -2.5) {\huge\lightgray{$0$}};
			\node [] (0) at (-8, -2.5) {\huge\lightgray{$1$}};
			\node [] (0) at (-6, -2.5) {\huge\lightgray{$2$}};
			\node [] (0) at (-4, -2.5) {\huge\lightgray{$3$}};
			\node [] (0) at (-2, -2.5) {\huge\lightgray{$4$}};
			\node [] (0) at (0, -2.5) {\huge\lightgray{$5$}};
			\node [] (0) at (2, -2.5) {\huge\lightgray{$6$}};
			\node [] (0) at (4, -2.5) {\huge\lightgray{$7$}};
			\node [] (0) at (6, -2.5) {\huge\lightgray{$8$}};
			\node [] (0) at (8, -2.5) {\huge\lightgray{$9$}};
			\node [] (0) at (10, -2.5) {\huge\lightgray{$\vphantom{9}\ldots$}};
			\node [style=majelem] (11) at (-10, 11) {\blue{$\Supsymbol$}};
			\node [style=majelem] (12) at (-8, 11) {\blue{$\Supsymbol$}};
			\node [style=majelem] (13) at (-6, 11) {\blue{$\Supsymbol$}};
			\node [style=majelem] (14) at (-4, 11) {\blue{$\Supsymbol$}};
			\node [style=majelem] (15) at (-2, 11) {\blue{$\Supsymbol$}};
			\node [style=majelem] (16) at (0, 9) {\blue{$\Supsymbol$}};
			\node [style=majelem] (17) at (2, 9) {\blue{$\Supsymbol$}};
			\node [style=majelem] (18) at (4, 8) {\blue{$\Supsymbol$}};
			\node [style=majelem] (19) at (6, 8) {\blue{$\Supsymbol$}};
			\node [style=majelem] (20) at (8, 7) {\blue{$\Supsymbol$}};
			\node [style=majelem] (21) at (10, 7) {\blue{$\Supsymbol$}};
			\node [style=minelem] (22) at (-10, 0) {\orange{$\Infsymbol$}};
			\node [style=minelem] (23) at (-8, 0) {\orange{$\Infsymbol$}};
			\node [style=minelem] (24) at (-6, 0) {\orange{$\Infsymbol$}};
			\node [style=minelem] (25) at (-4, 0) {\orange{$\Infsymbol$}};
			\node [style=minelem] (26) at (-2, 1) {\orange{$\Infsymbol$}};
			\node [style=minelem] (27) at (0, 1) {\orange{$\Infsymbol$}};
			\node [style=minelem] (28) at (2, 3) {\orange{$\Infsymbol$}};
			\node [style=minelem] (29) at (4, 3) {\orange{$\Infsymbol$}};
			\node [style=minelem] (30) at (6, 5) {\orange{$\Infsymbol$}};
			\node [style=minelem] (31) at (8, 5) {\orange{$\Infsymbol$}};
			\node [style=aelem] (0) at (-10, 6) {\gray{$\bullet$}};
			\node [style=aelem] (1) at (-8, 4) {\gray{$\bullet$}};
			\node [style=aelem] (2) at (-6, 9) {\gray{$\bullet$}};
			\node [style=aelem] (3) at (-4, 0) {\gray{$\bullet$}};
			\node [style=aelem] (4) at (-2, 11.1) {\gray{$\bullet$}};
			\node [style=aelem] (5) at (0, 1) {\gray{$\bullet$}};
			\node [style=aelem] (6) at (2, 9.1) {\gray{$\bullet$}};
			\node [style=aelem] (7) at (4, 3) {\gray{$\bullet$}};
			\node [style=aelem] (8) at (6, 8.1) {\gray{$\bullet$}};
			\node [style=aelem] (9) at (8, 5) {\gray{$\bullet$}};
			\node [style=aelem] (10) at (10, 7.1) {\gray{$\bullet$}};
	\end{tikzpicture}%
	\end{adjustbox}%
\end{wrapfigure}%
A depiction of $\Sup \varphi$ and $\Inf \varphi$ is provided on the right.
Gray~\gray{$\bullet$}'s depict elements of ${\varphi}$, blue~\blue{${\Supsymbol}$}'s elements of \blue{${\Sup \varphi}$}, and orange~\orange{${\Infsymbol}$}'s elements of~\orange{${\Inf \varphi}$}.
	Continuous lines connecting elements are for better visualization. 
	All sequences are discrete.
	
Here, we can see visually that $\Sup \varphi$ is a \emph{descending majorant} of~$\varphi$ whereas $\Inf \varphi$ is an \emph{ascending minorant}. 
We can also see how $\Sup \varphi$ and $\Inf \varphi$ together form the \emph{tightest monotonic envelope} around $\varphi$.
Finally, we reiterate that $\Sup \varphi$ and $\Inf \varphi$ are \emph{not} simply the global supremum and infimum of all the elements of $\varphi$.
Global infimum and supremum of $\varphi$ are expressible as $\First \Inf \varphi$ and $\First \Sup \varphi$.

The operations in Line~\eqref{def:function operations} are the \emph{function application} and \emph{orbit} operations.
For simplicity, let us omit \(\interpret_0\) from the notation and treat \(F\) (as opposed to $\interpret_0(\funca)$) as a monotonic endomap on \(L\).
Then $\Fa \varphi$ is the element-wise application of $\funca$ to $\varphi$, i.e.\ it yields the sequence\footnote{More accurately, the sequence $\morespacesequence{\interpret_0(\funca)(\varphi_0)}{\interpret_0(\funca)(\varphi_1)}{\interpret_0(\funca)(\varphi_2)}{\interpret_0(\funca)(\varphi_3)}$~.}%
\begin{align*}
	\Fa \varphi \qeq \morespacesequence{\funca(\varphi_0)}{\funca(\varphi_1)}{\funca(\varphi_2)}{\funca(\varphi_3)}~.
\end{align*}%
The unary operation \(\funca^*\) yields the \emph{iteration} of $\funca$ on a lattice element.
However, the input to $\funca^*$ is already a sequence $\varphi$ and so we need also $\funca^*$ to operate on sequences.
For this, we apply~\(\funca\) zero times to the $0^{\textnormal{\scriptsize th}}$ element of $\varphi$, once to the $1^{\textnormal{\scriptsize st}}$ element, twice to the $2^{\textnormal{\scriptsize nd}}$ element, and so forth, thus yielding the sequence\footnote{%
Again, more accurately, the sequence $\morespacesequence{\varphi_0}{\interpret_0(\funca)(\varphi_1)}{\interpret_0(\funca)^2(\varphi_2)}{\interpret_0(\funca)^3(\varphi_3)}$~.}%
\begin{align*}
	\Fas \varphi \qeq \morespacesequence{\varphi_0}{\funca(\varphi_1)}{\funca^2(\varphi_2)}{\funca^3(\varphi_3)}~.
\end{align*}%
When we apply $\funca^*$ to a flat sequence $\overline{\ell} = \constseq{\ell}$
then $\Fas \overline{\ell}$ is indeed the $\emph{$\funca$-orbit}$ of $\ell \in \ladomain$, i.e.\ the sequence $\Fas \overline{\ell} \eeq \morespacesequence{\ell}{\funca(\ell)}{\funca^2(\ell)}{\funca^3(\ell)}$.

\subsection{Iterative Construction Terms}
\label{sec:syntax-semantics-terms}

Sequence algebras are where iterative constructions of lattice fixed points live.
By combining the operations of a sequence algebra, one can obtain a variety of different recipes for constructing fixed points.
In fact, we will see in \Cref{sec:case_studies} that many fixed point constructions can be carried out using $\Supsymbol$, $\Infsymbol$, and $\funca^*$ alone.
For example, the least fixed point of $\funca$ can (under certain continuity assumptions on $\funca$) be expressed as $\Sup \Fas \bot$, which should look familiar when recalling the construction $\sup_n \funca^n(\bot)$ from the Kleene fixed point theorem.

Our ultimate goal is to be able to algebraically manipulate these different recipes for constructing lattice fixed points and thereby carry out algebraic proofs of fixed point theorems.
This requires a syntax for writing these recipes down, as well as a set of inference rules with which we can prove identities and other algebraic laws.

\begin{definition}[Syntax of Iterative Construction Terms]\label{def:il-syntax}
	For a fixed set \(\Vars = \{\fora,\, \forb,\,{\dots}\}\) of \emph{variables}, the set $\Term$ of \emph{iterative construction terms} (or simply, \emph{terms}), ranging over metavariables $\terma,\, \termb,\, \ldots{}$, adheres to the grammar below, where $\fora \in \Vars$ and $\funca \in \Funcs$:\footnote{%
		Parenthesizing is handled in the usual manner: unary symbols ($\Nextsymbol$, $\Supsymbol$, \dots) bind stronger than binary ones \mbox{($\lmax$, $\lmin$)}, and unary symbols associate to the right, e.g.\ $\Sup \Fa \terma = \Sup (\Fa \terma)$.%
	}%
	\begin{align*}
		\hspace*{-1em}\terma \semigray{\quad\longrightarrow\quad}  \bot   
		&\qmid \top 
		\qmid \fora 
		\qmid \terma \lmax \terma 
		\qmid \terma \lmin \terma  
		\qmid \First \terma  
		\qmid \Next \terma  
		\qmid \Sup \terma  
		\qmid \Inf \terma  
		\qmid \Fa \terma  
		\qmid \Fas \terma 
	\end{align*}
\end{definition}%
\noindent%
Every term $\terma$ represents a recipe for constructing new sequences from known ones. 
The known sequences can well be left arbitrary/uninterpreted and terms hence feature variables $\fora,\, \forb,\, \forc,\, \ldots$ which have to be interpreted to obtain a concrete sequence.%
\begin{definition}[Interpretations]
	\label{def:interpretation}
	Given an AIC-structure \(\Alg\) carried by \(A\), an \emph{interpretation} (in~\(\Alg\)) is a function \(\interpret \colon \Vars \to \Alg\)
	 that maps variables to elements of \(A\).
	We say that \(\interpret\) is a \emph{standard interpretation} if \(\Alg\) is a sequence algebra.%
\end{definition}%
\noindent%
Given a sequence algebra \(\Alg\) carried by \(L^\omega\) and a standard interpretation \(\interpret \colon \Vars \to L^\omega\), each term \(\terma\) denotes the sequence obtained by recursively evaluating its operations after replacing each occurrence of a variable \(\fora\) in \(\terma\) with \(\interpret(\fora)\).
Formally, the set \(\Term\) of iterative construction terms itself carries an AIC-structure called the \emph{term algebra} \(\AlgTerm\), whose operations are given by term formation (e.g.\ $\Nextsymbol^{\semigray{\AlgTerm}} \fora = \Next \fora$), and every interpretation \(\interpret \colon \Vars \to \Alg\)  extends to a unique algebra homomorphism \(\interpretsem{-} \colon \AlgTerm \to \Alg\). For instance, $\interpretsem{\Next \fora}  = \interpretsem{\Nextsymbol^{\semigray{\AlgTerm}} \fora} = \NextAlg \interpretsem{\fora} = \NextAlg \interpret(\fora)$.
This unique homomorphism yields the semantics of terms under a given interpretation.%
\begin{definition}[Semantics of Terms]
	\label{def:term-semantics}
	Let $\Alg$ be an AIC-structure, let $\interpret$ be an interpretation in $\Alg$, and let \(\interpretsem{-}\colon\AlgTerm \to \Alg\) be the unique algebra homomorphism satisfying \(\interpretsem{\fora} = \interpret(\fora)\) for every \(\fora \in \Vars\).
	Then for each term \(\terma \in \Terms\), we call \(\interpretsem{\terma}\) the \emph{\(\interpret\)-semantics} of \(\terma\).%
\end{definition}%
\noindent%
%
%
%
\renewcommand{\arraystretch}{1.5}%
\begin{table*}[t]%
	\caption{%
		\label{tab:semantics}%
		Semantics of iterative construction terms under a standard interpretation $\interpret$. 
	}%
	%
	\centering
	\begin{adjustbox}{max width=.999\textwidth}%
		\small%
		\begin{tabular}{r@{\qquad}l@{\quad}l@{\quad}l@{\quad}l}
			\toprule
			\textbf{Term} $\boldsymbol{\termc}$	& \textbf{Semantics} ${\nsemin{\boldsymbol{\termc}}}$	& \textbf{Name}	& \textbf{Remarks} & \textbf{Intuition}\\[.25em]
			\midrule
			$\bot$				& $\bot$									& bottom				& constant sequence of $\botL$'s										& $\sequence{\bot}{\bot}{\bot}{\bot}$\\
			$\top$				& $\top$									& top 				& constant sequence of $\topL$'s										& $\sequence{\top}{\top}{\top}{\top}$\\
			$\fora$		& $\interpret(\fora)_n$								& variable				& the \emph{sequence} $\interpret(\fora)$									& $\sequence{\fora_0}{\fora_1}{\fora_2}{\fora_3}$\\
			$\terma \lmax \termb$		& $\nsemin{\terma} \ssqcup \nsemin{\termb}$			& join				& element-wise join of $\terma$ and $\termb$									& $\mmorespacesequence{\terma_0 {\lmax} \termb_0}{\terma_1 {\lmax} \termb_1}{\terma_2 {\lmax} \termb_2}{\terma_3 {\lmax} \termb_3}$\\
			$\terma \lmin \termb$		& $\nsemin{\terma} \ssqcap \nsemin{\termb}$			& meet				& element-wise meet of $\terma$ and $\termb$								& $\mmorespacesequence{\terma_0 {\lmin} \termb_0}{\terma_1 {\lmin} \termb_1}{\terma_2 {\lmin} \termb_2}{\terma_3 {\lmin} \termb_3}$\\
			$\Sup \terma$			& $\displaystyle \sup_{n \leq k}~ \nsemi{\terma}{k}$	& majorum			&\begin{tabular}{@{}l}$n^{\textnormal{\scriptsize th}}$ element is the supremum over\\[-.75em] 
																								all elements with index $\geq n$\end{tabular}							& $\displaystyle\mmorespacesequence{\sup_{0 \leq k}\, \terma_k}{\sup_{1 \leq k}\, \terma_k}{\sup_{2 \leq k}\, \terma_k}{\sup_{3 \leq k}\, \terma_k}$\\
			$\Inf \terma$			& $\displaystyle \inf_{n \leq k}\, \nsemi{\terma}{k}$	& minorum			& \begin{tabular}{@{}l}$n^{\textnormal{\scriptsize th}}$ element is the infimum over\\[-.75em] 
																								all elements with index $\geq n$\end{tabular} 							& $\displaystyle\mmorespacesequence{\inf_{0 \leq k}\, \terma_k}{\inf_{1 \leq k}\, \terma_k}{\inf_{2 \leq k}\, \terma_k}{\inf_{3 \leq k}\, \terma_k}$\\
			$\First \terma$			& $\nsemi{\terma}{0}$							& head				& repeat $0^{\textnormal{\scriptsize th}}$ element							& $\sequence{\terma_0}{\terma_0}{\terma_0}{\terma_0}$\\
			$\Next \terma$			& $\nsemi{\terma}{n+1}$						& shift				& \begin{tabular}{@{}l}drop $0^{\textnormal{\scriptsize th}}$ element and left-shift all\\[-.75em] 
																								other elements by 1 index \end{tabular} 								& $\sequence{\terma_1}{\terma_2}{\terma_3}{\terma_4}$\\
			$\Fa \terma$			& $\interpret(\fcta)\left(\nsemin{\terma} \right)$		& \begin{tabular}{@{}l}function\\[-.75em]application\end{tabular}	& apply $\interpret(\fcta)$ element-wise to $\terma$							& $\morespacesequence{F(\terma_0)}{F(\terma_1)}{F(\terma_2)}{F(\terma_3)}$ \\
			$\Fas \terma$			& $\interpret(\fcta)^n\left(\nsemin{\terma} \right)$		& orbit				& \begin{tabular}{@{}l}$n^{\textnormal{\scriptsize th}}$ element is the $n$-fold iteration\\[-.75em] 
																								of $\interpret(F)$ on the $n^{\textnormal{\scriptsize th}}$ element of $\terma$ \end{tabular} & $\morespacesequence{\terma_0}{F(\terma_1)}{F^2(\terma_2)}{F^3(\terma_3)}$
			\\[.25em]
			\bottomrule
		\end{tabular}%
	\end{adjustbox}%
\end{table*}
While \Cref{def:interpretation,def:term-semantics} allow interpretations (and hence semantics) in arbitrary AIC-structures, we are primarily trying to capture the sequence algebras.
Hence, we call interpretations in sequence algebras \emph{standard interpretations}.
We record the semantics of iterative construction terms under a standard interpretation in \Cref{tab:semantics}.
As we will see, standard interpretations capture many important aspects of iterative fixed point constructions.%

\begin{remark}[On Interpretations of Variables and Functions]
	We typically bundle the interpretation of the function symbols \(\interpret_0\) in with the standard interpretation, and simply write \(\interpret(F)\) instead of \(\interpret_0(F)\).
	While an abuse of notation, we could have equivalently defined a standard interpretation to be a suitable function \(\interpret \colon \Vars \cup \Funcs \to L^\omega \cup \Mon(L)\).
	\lipicsEnd
\end{remark}

\subsection{Quasiequations}

For two terms $\terma$ and $\termb$, we call a formal expression $\terma = \termb$ an \emph{identity}.
For instance, $\Fa \fora = \fora$ is an identity (expressing that $\fora$ is a fixed point of $\funca$),\footnote{More precisely, (standard) interpreted in a sequence algebra, this states that $\fora_n$ is a fixed point of $\funca$ for all \(n \in \omega\). This is because \(\Fa \fora\) is \(\funca\) applied pointwise to \(\fora\).} even though the terms $\Fa \fora$ and $\fora$ are clearly not (syntactically) equal.
Fixed point theorems usually require certain assumptions, so that $\fora$ is indeed a fixed point. 
We therefore want to be able to derive identities that hold, provided that other identities (the assumptions) hold.
This is captured by \emph{quasiequations} (a.k.a.~\emph{quasi-identities}).%
\begin{definition}[Quasiequations \& Axiom Systems]
\label{def:quasi-equations}
	A \emph{quasiequation} $\quasieq$ is a \mbox{formula of the form}
	\begin{align*}
		\textstyle\left(\bigwedge_{i \in I} \leftsuperscript{i}{\terma}{} \eeq \leftsuperscript{i}{\termb}{} \right) 
		\qmorespace{\implies} 
		\terma \eeq \termb
		\tag{$\quasieq$}
	\end{align*} 
	for some index set \(I\) and iterative construction terms \(\terma,\termb, \leftsuperscript{i}{\terma}{}, \leftsuperscript{i}{\termb}{} \in \Terms\), for each \(i \in I\).
	The identities $\leftsuperscript{i}{\terma}{} = \leftsuperscript{i}{\termb}{}$ are called the \emph{premises} and $\terma = \termb$ is called the \emph{conclusion} of \quasieq.
	The quasiequation~\quasieq is called \emph{finitary} if it has only finitely many premises (i.e.\ if \,$I$ is finite).

	An AIC-structure \(\Alg\) carried by $A$ \emph{satisfies} \quasieq, denoted $\Alg \models \quasieq$, if for any interpretation $\interpret \colon \Vars \to A$ the following holds:
	\begin{align*}
			\textnormal{if}		
			\qquad
			\textnormal{for all }i \in I\colon ~~ \interpretsem{\leftsuperscript{i}{\terma}{}} \eeq \interpretsem{\leftsuperscript{i}{\termb}{}}
			\qquad
			\textnormal{then}	
			\qquad
			\interpretsem{\terma} \eeq \interpretsem{\termb}.
	\end{align*}
	A quasiequation is \emph{valid (for sequence algebras)} if it is satisfied by every \underline{se}q\underline{uence al}g\underline{ebra}.

	An \emph{axiom system} \(\Ax\) is a set of quasiequations and its elements are called the \emph{axioms} (of \(\Ax\)). We say that $\Ax$ is \emph{finite} if it is a finite set, and \emph{finitary} if all axioms are finitary.
	An AIC-structure \(\Alg\) \emph{satisfies} \(\Ax\), written \(\Alg \models \Ax\), if \(\Alg \models \quasieq\) for every \(\quasieq \in \Ax\). 
	In this case, $\Alg$ is a \emph{model} of $\Ax$.
	An axiom system $\Ax$ is \emph{sound (for sequence algebras)} if every axiom is valid for sequence algebras.
\end{definition}%
We prefer to write quasiequations like $\quasieq$ from \Cref{def:quasi-equations} as inference rules of the form%
\begin{align*}
	\nirule{q}{~\leftsuperscript{1}{\terma}{} \eeq \leftsuperscript{1}{\termb}{} \qquad \leftsuperscript{2}{\terma}{} \eeq \leftsuperscript{2}{\termb}{} \qquad {\cdots}~}{\terma \eeq \termb}
	\quad\qquad{\textnormal{\raisebox{0.75em}{or}}}\qquad\quad
	\irule{\textnormal{\TirNameStyle{\raisebox{-.5ex}{\semigray{\quasieq}}}}}{\terma \eeq \termb}
\end{align*}%
where \(I = \{1, 2, \dots\}\) in case of the left inference rule and \(I = \emptyset\) in case of the right one.
For $I = \emptyset$, the notion of quasiequation and identity collapse into a single notion and so we can also speak of satisfaction and validity of identities.

Quasiequations are, in a formal sense~\cite{kozenKA}, \emph{algebraic} laws, which means that satisfaction is preserved by substitution:
Let $\terma, \terme \in \Terms$, $\fora \in \Vars$, and let \(\terma \subst{\fora}{\fore}\) be the term obtained by replacing each occurrence of \(\fora\) in \(\terma\) with \(\terme\).
Then algebraicity means that if \(\Alg \models (\bigwedge_{i \in I} \leftsuperscript{i}{\terma}{} = \leftsuperscript{i}{\termb}{}) \implies \terma = \termb\), then  we also know that \(\Alg \models (\bigwedge_{i \in I} \leftsuperscript{i}{\terma}{}\subst{\fora}{\terme} = \leftsuperscript{i}{\termb}{}\subst{\fora}{\terme}) \implies \terma\subst{\fora}{\terme} = \termb\subst{\fora}{\terme}\).
The latter quasiequation is a \emph{substitution instance} of the former. 
Substitution instances of quasiequations can be obtained from any substitution of variables with terms: 
given a mapping \(\sigma \colon \Vars \to \Terms\) and a term \(\terma\), we write \(\terma[\sigma]\) for the term obtained by simultaneously substituting each variable \(\fora\) in term $\terma$ with term \(\sigma(\fora)\).%
\footnote{
	In other words, \(\terma[\sigma] = {^\sigma}\semantics{s}\); cf.\ \Cref{sec:syntax-semantics-terms}.
}%

\begin{figure}
	\centering%
	\begin{adjustbox}{max width=.9999\textwidth}%
		\begin{minipage}{1.235\linewidth}%
			\begin{align*}
				\nequalaxiom{eq-reflex}{\fora}{\fora}
				\qquad
				\nequalrule{symm}{\fora}{\forb}{\forb}{\fora}
				\qquad
				\nirule{eq-trans}{\fora \eeq \forb \\ \forb \eeq \forc}{\fora \eeq \forc}
				\qquad
				\nirulet{cong}{
					\leftsuperscript{1}{\fora} \eeq \leftsuperscript{1}{\forb} 
					\\ 
					{\cdots} 
					\\ 
					\leftsuperscript{n}{\fora} \eeq \leftsuperscript{n}{\forb}
				}{
					{\boxast}\left(\leftsuperscript{1}{\fora},\, \ldots,\! \leftsuperscript{n}{\fora}\right) 
					\eeq
					{\boxast}\left(\leftsuperscript{1}{\forb},\, \ldots,\! \leftsuperscript{n}{\forb}\right) 
				}
			\end{align*}%
		\end{minipage}
	\end{adjustbox}%
	\caption{
		\label{fig:equational logic}
		The inference rules of equational logic.
		Above, ${\fora, \forb, \leftsuperscript{i}{\fora}, \leftsuperscript{i}{\forb}, \forc \in \Vars}$, ${\boxast}$ ranges over the operations of arity ${n>0}$ appearing in the signature of the algebra under consideration.
		In our case ${\boxast \in \{{\lmax},\, {\lmin},\, \Firstsymbol,\, \Nextsymbol,\, \Supsymbol,\, \ldots\}}$, so ${n \in \{1,\, 2\}}$.
	}
\end{figure}%
Quasiequational logic lets us derive new quasiequations syntactically from known ones using \emph{derivations} -- a form of inference tree in the following sense:
a \emph{(finite) inference tree} is a (finite) tree whose nodes are labelled with identities (i.e.\ expressions of the form $\terma = \termb$ for two terms $\terma$ and $\termb$).
In an inference tree, branching points are interpreted as inference rules:%

\vspace*{-.5\baselineskip}%
\footnotesize%
\belowdisplayskip=0pt%
\begin{align*}
	\begin{gathered}
		\irule{\leftsuperscript{1}{\terma}{} \eeq \leftsuperscript{1}{\termb}{} \\{\!\!}\cdots{\!\!}\\ \leftsuperscript{n}{\terma}{} \eeq \leftsuperscript{n}{\termb}{}}{\terma = \termb}
	\end{gathered}
	\quad
	\qquad\textnormal{\enquote{$=$}}
	\qquad
	\begin{gathered}
		\begin{tikzpicture}[yscale=0.8]
			\node (root) at (0,0) {\(\terma = \termb\)};
			\node (l) at (-1.5,1) {\(\leftsuperscript{1}{\terma}{} \eeq \leftsuperscript{1}{\termb}{}\)};
			\node[lightgray!50!gray] (dots) at (0,1) {\(\vphantom{b^1}{}\cdots{}\)};
			\node (r) at (1.5,1) {\(\leftsuperscript{n}{\terma}{} \eeq \leftsuperscript{n}{\termb}{}\)};
			\draw (root) -- (l);
			\draw (root) -- (r);
			\draw[lightgray!50!gray] (root) -- (-.15, 0.6);
			\draw[lightgray!50!gray] (root) -- (.15, 0.6);
			\draw[lightgray!50!gray] (root) -- (-.5, 0.6);
			\draw[lightgray!50!gray] (root) -- (.5, 0.6);
		\end{tikzpicture}
	\end{gathered}
\end{align*}%
\normalsize%
\begin{definition}[Derivations \& Provability]
	\label{def:finite provability}
	Let $\Ax$ be an axiom system, and let \(\quasieq\) be a finitary quasiequation. A \emph{derivation of $\quasieq$ from $\Ax$} is a finite inference tree such that the root is labelled with the conclusion of \(\quasieq\), and moreover every node is either (i) a leaf labelled with some premise of $\quasieq$, or (ii) a point of branching of the form 
	\begin{equation*}
		\label{eq:substitution}
		\nirule{}{
				\leftsuperscript{1}{\terma}{}[\sigma] \eeq \leftsuperscript{1}{\termb}{}[\sigma] 
				\\
				{\cdots}
				\\
				\leftsuperscript{n}{\terma}{}[\sigma] \eeq \leftsuperscript{n}{\termb}{}[\sigma]
			}{	\terma[\sigma] \eeq \termb[\sigma]	}
	\end{equation*}
for some quasiequation
\((\bigwedge_{i=1}^n \leftsuperscript{i}{\terma}{} = \leftsuperscript{i}{\termb}{}) \implies \terma = \termb \) in $\Ax$ or in equational logic (see \Cref{fig:equational logic}), and some substitution \(\sigma \colon \Vars \to \Terms\).
	We say that 
	\quasieq 
	is \emph{derivable (\emph{or} provable)} from \(\Ax\) and write 
	$\Ax \vdash \quasieq$ 
	if \quasieq has a derivation from \(\Ax\).%
\end{definition}%
\begin{remark}
	\begin{enumerate}
		\item 
			We stress that \Cref{def:finite provability} (ii) may apply to leaves (where $n=0$); for example any instance of using $\TirName{eq-reflex}$. 

		\item 
			An axiom system $\Ax$ is sound for sequence algebras iff all quasiequations derivable from $\Ax$ are valid for sequence algebras. 
			This is because our proof system is standard quasiequational logic, which cannot derive non-valid quasiequations from valid ones.

		\item 
			If $\Ax\vdash \quasieq$, we may add \quasieq to $\Ax$ without increasing the expressiveness of $\Ax$, i.e.\ if both $\Ax \vdash \quasieq$ and $\Ax\cup \{\quasieq\} \vdash \quasieq'$, 
			then already $\Ax\vdash \quasieq'$. 
			Indeed, whenever we would use \quasieq within a derivation of $\quasieq'$ from $\Ax\cup \{\quasieq\}$, we might as well \enquote{inline} a derivation of a suitable substitution instance of \quasieq from $\Ax$ instead. 
			It \emph{can} make a difference for automatic proof search, however, 
			whether $\quasieq'$ is present in the axiom system or not, cf.\ \Cref{sec:isabelle}.
		\lipicsEnd
	\end{enumerate}
\end{remark}

\subsection{Quasi-Inequalities}

Lattice-theoretic fixed point theorems often refer to the underlying partial order $\sqsubseteq$ of the lattice.
For instance, \emph{Park induction} states that $\funca(\ell) \sqsubseteq \ell$ implies $\lfp \funca \sqsubseteq \ell$.
Since $\sqsubseteq$ is so prevalent in lattice-theoretic fixed point theorems, we want to be able to make judgements directly about $\sqsubseteq$, thus rendering theorem statements and proofs more readable and potentially more automatable.
Reasoning about~$\sqsubseteq$ is not more general than reasoning about $=$, because in any lattice 
we have 
that \(\ell \sqsubseteq m\) if and only if \(\ell \sqcup m = m \) (or equivalently $\ell = m \sqcap \ell$). 

As every sequence algebra is a lattice with meet \(\lmin\) and join~\(\lmax\)\!, also here we can recover the underlying partial order via the meet and join operations: 
\(\varphi_n \sqsubseteq \vartheta_n\) for all \(n \in \omega\) if and only if \(\varphi \lmax \vartheta = \vartheta\).
This allows us to treat point-wise inequalities between sequences from \(L\), as equations and thus appeal to the correctness of equational logic reasoning.

While the reasoning under the hood is equational, our aim is still to reason about inequalities. 
We thus introduce the shorthand notation \(\terma \preceq \termb\) for the identity \(\terma \lmax \termb = \termb\).%
\begin{definition}[Quasi-Inequalities]
	\label{def:comparant}
	An \emph{inequality} is an identity of the form $\terma \lmax \termb = \termb$, denoted \(\terma \preceq \termb\), for \(\terma, \termb \in \Terms\).
	A quasiequation \(\bigl(\bigwedge_{i \in I} \leftsuperscript{i}{\terma} \preceq \leftsuperscript{i}{\termb}\bigr) \implies \terma \preceq \termb\) is called a \emph{quasi-inequality}.%
\end{definition}%

\begin{wrapfigure}[9]{l}{.666\linewidth}%
\vspace*{-.75\intextsep}%
\begin{minipage}{.999\linewidth}%
\begin{whitebox}%
\begin{adjustbox}{max width=.999\linewidth}%
\renewcommand{\arraystretch}{1.125}%
\begin{tabular}{r@{\hspace{1.5em}}l}
	$\fora$ is an ascending chain						& $\fora \tss \Next \fora$						\\
	$\fora$ is a descending chain						& $\Next \fora \tss \fora$						\\[.25em]
	$\fora$ is flat									& $\Sup \fora \tss \Inf \fora$					\\
	$\fora$ converges								& $\Inf \Sup \fora \tss \Sup \Inf \fora$					\\[.25em]
	all elements of $\fora$ are prefixed points of $\funca$	& $\Fa \fora \tss \fora$							\\
	all elements of $\fora$ are postfixed points of $\funca$	& $\fora \tss \Fa \fora$							\\[.25em]
\end{tabular}%
\renewcommand{\arraystretch}{1.5}%
\end{adjustbox}%
\end{whitebox}%
\end{minipage}%
\end{wrapfigure}%
\noindent%
An inequality \(\terma \preceq \termb\) is valid for sequence algebras iff for any standard interpretation \(\interpret\) and any \(n \in \omega\), we have \(\interpretsem{\terma}_n \sqsubseteq \interpretsem{\termb}_n\).
It is direct from the definition of lattice that \(\terma = \termb\) is valid if and only if both \(\terma \preceq \termb\) and \(\termb \preceq \terma\) are valid.
A number of interesting properties of sequences can be expressed as inequalities, see the box above left.%
\begin{wrapfigure}[5]{r}{.6\textwidth}%
\vspace*{-1.25\intextsep}%
\begin{minipage}{.599\textwidth}%
\begin{adjustbox}{max width=.999\linewidth}%
\begin{tikzpicture}
		\node [style=seqelem] (0) at (-7, 4.5) {$a_0$};
		\node [style=seqelem] (1) at (-5, 4.5) {$a_1$};
		\node [style=seqelem] (2) at (-3, 4.5) {$a_2$};
		\node [style=seqelem] (3) at (-1, 4.5) {$a_3$};
		\node [style=seqelem] (13) at (1, 4.5) {$\ldots$};
		%
		\node [style=seqelem] (5) at (-7, 6) {$a_1$};
		\node [style=seqelem] (6) at (-5, 6) {$a_2$};
		\node [style=seqelem] (7) at (-3, 6) {$a_3$};
		\node [style=seqelem] (8) at (-1, 6) {$\ldots$};
		
		\node [style=seqelem] (11) at (-7.8, 4.5) {$\boldsymbol{\fora\colon}$};
		\node [style=seqelem] (12) at (-8, 6) {$\boldsymbol{\Next \fora\colon}$};
		\draw [style=blue edge] (0) to (5);
		\draw [style=blue edge] (1) to (6);
		\draw [style=blue edge] (2) to (7);
		\draw [style=blue edge, dashed] (3) to (8);
		\draw [style=blue double edge] (5) to (1);
		\draw [style=blue double edge] (6) to (2);
		\draw [style=blue double edge] (7) to (3);
		\draw [style=orange edge] (0) to (1);
		\draw [style=orange edge] (1) to (2);
		\draw [style=orange edge] (2) to (3);
		\draw [style=orange edge, dashed] (3) to (13);
\end{tikzpicture}%
\end{adjustbox}%
\end{minipage}%
\end{wrapfigure}%

For ascending chains, the situation is depicted on the right:
An arrow $a_n \rightarrow a_k$ indicates that $\sem{a}{n} \laord \sem{a}{k}$.
Vertical blue arrows are induced by $\fora \preceq \Next \fora$, diagonal blue double arrows by equality.
Horizontal orange arrows are induced by transitivity of the blue arrows and show that $\fora$ is ascending.
%
%
%
%

\section{Axioms for Iterative Constructions}
\label{sec:calc}

We now present two axiom systems, $\AIC_0$ and $\AIC_1$, for the algebra of iterative constructions.
$\AIC_0$ is a \emph{basic} set of axioms that is sound for sequence algebras.
It is intended to be small, so that it is easy to prove whether or not some class of structures forms models of AIC.

%
%
%
\renewcommand{\arraystretch}{3}%
\begin{figure*}%
	%
	\centering%
	\begin{adjustbox}{max width=.9999\textwidth, max totalheight=.9\textheight}%
		\normalsize%
		\centering%
		\begin{minipage}{1.0\linewidth}
		\centering%
			\begin{tabular}{rlcrl}
				\multicolumn{2}{c}{\hspace{0em}\textbf{\large\textsf{Bounded Lattice}}} 
				&&
				\multicolumn{2}{c}{\hspace{0em}\textbf{\large\textsf{Majora \& Minora}}} 
				\\[-.75em]
				$\axbot{\fora}$
				&
				$\axtop{\fora}$	
				&\quad&
				$\axsupinflate{\fora}$
				&
				$\axinfdeflate{\fora}$
				\\
				$\axjoincomm{\fora}{\forb}$
				&
				$\axmeetcomm{\fora}{\forb}$
				&&
				$\axsupidem{\fora}$
				&
				$\axinfidem{\fora}$
				\\
				$\axjoinabsorb{\fora}{\forb}$
				&
				$\axmeetabsorb{\fora}{\forb}$
				&&
				$\supmonorulet{\fora \tss \forb}{\Sup \fora \tss \Sup \forb}$\hspace*{2.5em}
				&
				\hspace*{2.5em}$\infmonorulet{\fora \tss \forb}{\Inf \fora \tss \Inf \forb}$
				\\
				\multicolumn{2}{c}{\hspace*{0em}$\axjoinassoc{\fora}{\forb}{\forc}$}
				&&
				$\axnextsupcomm{\fora}$
				&
				$\axnextinfcomm{\fora}$
				\\
				\multicolumn{2}{c}{\hspace*{0em}$\axmeetassoc{\fora}{\forb}{\forc}$}
				&&
				${\mprset{fraction={===}}%
				\nextindrulet{\Next \fora \tss \fora}{\Sup \fora \tss \fora}}$\hspace*{2em}
				&
				${\mprset{fraction={===}}%
				\hspace*{2em}\nextcoindrulet{\fora \tss \Next \fora}{\fora \tss \Inf \fora}}$
				\\[.5em]
				\multicolumn{2}{c}{\hspace{0em}\textbf{\large\textsf{Shifts}}}
				&&
				\multicolumn{2}{c}{\hspace{0em}\textbf{\large\textsf{Function Applications \& Orbits}}}  
				\\[-.75em]
				\multicolumn{2}{c}{\hspace{0em}$\nextmonorulet{\fora \tss \forb}{\Next \fora \tss \Next \forb}$}%
				&&
				$\Fmonorulet{\fora \tss \forb}{\Fa \fora \tss \Fa \forb}$%
				\hspace*{2.5em}&\hspace*{2.5em}
				$\Fsmonorulet{\fora \tss \forb}{\Fas \fora \tss \Fas \forb}$
				\\
				$\axnextbot{}$
				&
				$\axnexttop{}$
				&&
				$\axFNextcomm{\fora}$
				&
				$\axFFscomm{\fora}$
				\\
				\multicolumn{2}{c}{\hspace{0em}$\axnextoverjoin{\fora}{\forb}$}
				&&
				\multicolumn{2}{c}{\hspace{0em}$\axiter{\fora}$}
				\\
				\multicolumn{2}{c}{\hspace{0em}$\axnextovermeet{\fora}{\forb}$}
				&&
				$\Findrulet{\Fa \fora \tss \fora}{\Fas \fora \tss \fora}$\hspace*{2em}
				&
				\hspace*{2em}$\Fcoindrulet{\fora \tss \Fa \fora}{\fora \tss \Fas \fora}$
			\end{tabular}
				\lightgray{\hrule}
		\end{minipage}
	\end{adjustbox}%
	\caption{%
		\label{fig:axioms-aic0}%
		The basic axiom system ${\AIC_0}$.
		Above, ${\fora, \forb, \forc}$ are fixed variables, and ${\fcta}$ ranges over all function symbols.}%
		\vspace*{-.1cm}%
\end{figure*}%
\renewcommand{\arraystretch}{1.5}%
For more usable, i.e.\ more readable \emph{and more automatable}, algebraic proofs of lattice-theoretic fixed point theorems, we introduce the extended axiom system $\AIC_1$.
All additional axioms in $\AIC_1 \setminus \AIC_0$ are derivable from $\AIC_0$ and hence every model of $\AIC_0$ is also a model of $\AIC_1$.
The extended system $\AIC_1$ is indeed more usable:
Isabell's sledgehammer tool finds proofs of all fixed point theorems in \Cref{sec:case_studies} \emph{automatically} in $\AIC_1$ but not in $\AIC_0$.

We stress that $\AIC_0$ is \emph{\underline{not} complete} for sequence algebras (thus neither is $\AIC_1$), nor can any finite system of finitary axioms achieve completeness, cf.\ \Cref{sec:incompleteness}.
There is hence no point in attempting to provide a \emph{minimal} finitary axiomatization.
A complete but \emph{infinitary} axiomatization of sequence algebras is given in \Cref{sec:infinitary-ax}.%

\subsection{The Basic Axiom System $\boldsymbol{\AIC_0}$}
\label{sec:aic0}

The basic axiom system $\AIC_0$ is shown in \Cref{fig:axioms-aic0}. 
Since we want to capture sequence algebras, we provide intuitions for the setting where variables represent \emph{sequences}.
However, it should be noted that our axioms do \emph{not} explicitly axiomatize that we are operating on sequences.
There could well be other models.

\paragraph*{Bounded Lattice and Shifts}

These axioms state that models of $\AIC_0$ form a \emph{bounded lattice} with join $\lmax$, meet $\lmin$, least and greatest element $\bot$ and $\top$, and that $\Nextsymbol$ is a monotonic operation which distributes over $\lmax$ and~$\lmin$.
The axioms $\textnormal{\TirName{\nextbotname}}$, $\textnormal{\TirName{\nexttopname}}$,$ \TirName{\botname}$, and $\TirName{\topname}$ together ensure that $\bot$ and $\top$ are flat.

We postulate nothing on $\Firstsymbol$ because our axiomatization is bound to be incomplete anyway and we will not be needing $\Firstsymbol$ for the fixed point theorems of \Cref{sec:case_studies}.
We \emph{do} state axioms on $\Firstsymbol$ for the \emph{complete infinitary} axiomatization of sequence algebras in \Cref{sec:infinitary-ax}.%

\paragraph*{Majora and Minora}

\mbox{\TirName{\supinflatename}} and \TirName{\infdeflatename} state that $\Supsymbol$~can only enlarge a sequence whereas $\Infsymbol$ can only make it smaller.
By \TirName{\supidemname} and \TirName{\infidemname}, $\Supsymbol$ and $\Infsymbol$ are idempotent.
Intuitively, $\Sup \fora$ and $\Inf \fora$ together form the \emph{tightest} monotonic envelope around~$\fora$.
Putting a tightest envelope around the already tightest envelope cannot yield an even tighter one.
By \TirName{\supmononame} and \TirName{\infmononame}, $\Supsymbol$ and $\Infsymbol$ are monotonic.
These six axioms together describe that $\Supsymbol$ and $\Infsymbol$ form so-called \emph{closure} or \emph{hull} operators.
By \TirName{\nextsupcommname} and \TirName{\nextinfcommname}, $\Nextsymbol$ commutes with $\Supsymbol$ and $\Infsymbol$.

Finally, we postulate the induction rule \TirName{\nextindname} for $\Supsymbol$:
If $\fora$ is descending, i.e.\ $\Next \fora \preceq \fora$, then $\Sup \fora \preceq \fora$. 
In fact, then \mbox{even $\Sup \fora = \fora$} (since $\fora \preceq \Sup \fora$ already holds by \TirName{\supinflatename}), so taking majora of descending chains has no effect.
The converse is also true (indicated by the double inference bar): 
if $\Sup \fora \preceq \fora$, i.e.\ taking majorum of~$\fora$ has no effect, then $\fora$ is descending.
The coinduction principle \TirName{\nextcoindname} for ascending chains is entirely dual.

\paragraph*{Function Applications and Orbits}

In sequence algebras, functions are applied \emph{element-wise} to sequences, see \Cref{tab:semantics}.
Naturally, $\Nextsymbol$ and $\funca$ must then commute, as postulated by \TirName{\FNextcommname}.

All functions in AIC are \emph{monotonic}, postulated by an axiom \TirName{\Fmononame} for every function \mbox{symbol $\funca \in \Funcs$}.
It is not hard to prove that $\funca$ is monotonic 
(in the standard sense on single lattice-elements) 
\emph{if and only if} \TirName{\Fmononame} is valid for sequence algebras, see Appendix~\ref{sec:proof:FMono}.
Hence, \mbox{\TirName{\Fmononame}} captures \emph{precisely} standard monotonicity.
Inherited from monotonicity of $\funca$, applying $\funca$ \emph{multiple times} is also a monotonic operation.
Consequently, \TirName{\Fsmononame} postulates that $\funca^{*}$ is monotonic.
Contrary to our finitary axiomatization, monotonicity of~$\funca^{*}$ is provable from monotonicity of $\funca$ in the complete infinitary axiomatization.

$\TirName{\FFscommname}$ postulates that it does not matter whether we first apply $\funca$ once and then arbitrarily many times or first arbitrarily many times and then once more.
In other words, $\funca$ and $\funca^{*}$ commute.

\begin{wrapfigure}[3]{r}{11em}
\vspace*{-1.625\intextsep}%
\begin{minipage}{1\linewidth}
\begin{axbox}%
	\vspace*{-0.25\abovedisplayskip}%
	\smallrules%
	\begin{center}%
		$\axiter{\fora}$%
	\end{center}%
	\normalrules%
\end{axbox}%
\end{minipage}%
\end{wrapfigure}
Next, we investigate shifts of orbits.
Intuitively, we want $\Fas \fora$ to be the $\funca$-orbit of a lattice element.
If we left-shift such an orbit, we might expect that this results in $\funca$ being applied once more, i.e.\ $\Next \Fas \fora = \Fa \Fas \fora$.
However, in AIC $\fora$ is not a single element but a \emph{sequence} and hence $\funca^{*}$ has a more intricate semantics, see \Cref{tab:semantics}.
To be valid under this semantics, the identity \TirName{\itername} for shifting orbits is also more involved, see above right.

\begin{wrapfigure}[6]{l}{.45\linewidth}
\vspace*{-.75\intextsep}%
\begin{minipage}{1\linewidth}
\footnotesize%
\belowdisplayskip=0pt%
\begin{align*}
	\hspace*{-2em}\begin{array}{r@{\quad}r@{~{}~}r@{~{}~}r@{~{}~}r}
		\boldsymbol{\Next \Fas \fora\colon}		& \gray{\funca(\orange{\fora_1})}	& \gray{\funca^2 (\blue{\fora_2})}	& \gray{\funca^3 (\purple{\fora_3})}	& \gray{\funca^4 (\textcolor{violet}{\fora_4})}	\\[-.25em]
		\boldsymbol{\Fas \fora\colon}			& \green{\fora_0}\phantom{)}		& \gray{\funca(\orange{\fora_1})}	& \gray{\funca^2 (\blue{\fora_2})}	& \gray{\funca^3 (\purple{\fora_3})}			\\[-.25em]
		\boldsymbol{\fora\colon}				& \green{\fora_0}\phantom{)}		& \orange{\fora_1}\phantom{)}		& \blue{\fora_2}\phantom{)}		& \purple{\fora_3}\phantom{)}				\\[-.25em]
		\boldsymbol{\Next \fora\colon}			& \orange{\fora_1}\phantom{)}		& \blue{\fora_2}\phantom{)}		& \purple{\fora_3}\phantom{)}		& \textcolor{violet}{\fora_4}\phantom{)}		\\[-.25em]
		\boldsymbol{\Fas \Next \fora\colon}		& \orange{\fora_1}\phantom{)}		& \gray{\funca (\blue{\fora_2})}		& \gray{\funca^2 (\purple{\fora_3})}	& \gray{\funca^3 (\textcolor{violet}{\fora_4})}	\\[-.25em]
		\boldsymbol{\Fa \Fas \Next \fora\colon}	& \gray{\funca(\orange{\fora_1})}	& \gray{\funca^2 (\blue{\fora_2})}	& \gray{\funca^3 (\purple{\fora_3})}	& \gray{\funca^4 (\textcolor{violet}{\fora_4})}
	\end{array}
\end{align*}%
\normalsize%
\end{minipage}%
\end{wrapfigure}
\noindent%
The listing on the left may provide some graphical intuition on $\TirName{\itername}$ and it's \enquote{naive} version.

Finally, we postulate (co)induction principles \TirName{\Findname} and \TirName{\Fcoindname}:
If applying $\funca$ once takes us in a certain direction (with respect to~$\preceq$), then $\funca$'s orbit takes us in the same direction.

\paragraph*{Soundness of $\boldsymbol{\AIC_0}$}

We have the following theorem whose proof has been formalized in Isabelle/HOL~\cite{aic-afp}:%
\begin{restatable}[Soundness of $\AIC_0$]{theorem}{thmAICZeroSoundness}
\label{thm:soundness}
	$\AIC_0$ is sound for sequence algebras.
\end{restatable}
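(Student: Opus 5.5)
Soundness is checked axiom by axiom. By Remark~(2) after \Cref{def:finite provability}, it suffices to show that every quasiequation in \Cref{fig:axioms-aic0} is satisfied by every sequence algebra. Fix a countably complete lattice $L$, an interpretation $\interpret_0$ of the function symbols by monotonic maps, and an interpretation of the variables $\fora,\forb,\forc$ by sequences.

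The first tool is a reduction to pointwise statements. Since $\lmax$ is pointwise join, $\terma \preceq \termb$ holds in $L^\omega$ iff $\interpretsem{\terma}_n \sqsubseteq \interpretsem{\termb}_n$ for all $n$. Likewise, an identity holds iff it holds at every index $n$. So every axiom becomes a statement about lattice elements, proved for arbitrary $n$ by unfolding \Cref{tab:semantics}. I would go through the groups in the order of the figure.

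\emph{Bounded lattice and shifts.} These hold because $L^\omega$ is the product lattice and the operations are pointwise. For the shifts:
\begin{itemize}
\item $(\Next\varphi)_n = \varphi_{n+1}$, so $\Nextsymbol$ is monotone and distributes over $\lmax$ and $\lmin$.
\item $\Next\bot = \bot$ and $\Next\top = \top$ are immediate.
\end{itemize}

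\emph{Majora and minora.} Each axiom is a short index computation:
\begin{itemize}
\item Inflation: $\varphi_n \sqsubseteq \sup_{k\ge n}\varphi_k$.
\item Monotonicity of $\Supsymbol$ is clear.
\item Idempotence: $\sup_{k\ge n}\sup_{j\ge k}\varphi_j = \sup_{j\ge n}\varphi_j$, because the inner suprema are descending in $k$ and the value at $k=n$ dominates.
\item Commutation with $\Nextsymbol$: $(\Next\Sup\varphi)_n = \sup_{k\ge n+1}\varphi_k = (\Sup\Next\varphi)_n$.
\end{itemize}

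For \TirName{\nextindname}, both directions are needed, since it is a double-bar rule.
\begin{itemize}
\item If $\varphi_{n+1}\sqsubseteq\varphi_n$ for all $n$, then by induction on $k$ we get $\varphi_k\sqsubseteq\varphi_n$ for all $k\ge n$, so $\sup_{k\ge n}\varphi_k\sqsubseteq\varphi_n$.
\item Conversely, $\varphi_{n+1}\sqsubseteq\sup_{k\ge n}\varphi_k\sqsubseteq\varphi_n$.
\end{itemize}
The $\Infsymbol$ axioms follow by order duality. Only countable suprema and infima occur, so countable completeness suffices throughout.

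\emph{Function applications and orbits.} Write $F$ for $\interpret_0(F)$.
\begin{itemize}
\item \TirName{\Fmononame} and \TirName{\Fsmononame}: use monotonicity of $F$ and of $F^n$; the latter follows by induction on $n$.
\item \TirName{\FNextcommname}: both sides equal $F(\varphi_{n+1})$ at index $n$.
\item \TirName{\FFscommname}: $F(F^n(\varphi_n)) = F^n(F(\varphi_n))$.
\item \TirName{\itername}: $(\Next\Fas\varphi)_n = F^{n+1}(\varphi_{n+1})$, and $(\Fa\Fas\Next\varphi)_n = F(F^n(\varphi_{n+1}))$, which is the same element.
\end{itemize}

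The only mildly nontrivial case is \TirName{\Findname}. Assume $F(\varphi_n)\sqsubseteq\varphi_n$ for all $n$; we must show $F^n(\varphi_n)\sqsubseteq\varphi_n$. I would prove the stronger claim $F^m(\varphi_n)\sqsubseteq\varphi_n$ for all $m$, by induction on $m$ with $n$ fixed: $F^{m+1}(\varphi_n) = F(F^m(\varphi_n)) \sqsubseteq F(\varphi_n) \sqsubseteq \varphi_n$, using monotonicity of $F$ and the hypothesis. Then take $m=n$. \TirName{\Fcoindname} is dual.

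I expect no real obstacle. The main care points are using monotonicity of $F$ in \TirName{\Findname} and the idempotence computation for $\Supsymbol$. Even these are routine index manipulations, which matches the fact that the result was mechanized in Isabelle/HOL.
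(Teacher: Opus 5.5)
Your proposal is correct and follows the paper's proof essentially step for step. Both check each axiom pointwise against the semantics in \Cref{tab:semantics}, and the only real inductions are the three you identify: the top-to-bottom direction of the majorum induction rule, monotonicity of $F^n$ for the orbit-monotonicity axiom, and $F^m(\varphi_n) \sqsubseteq \varphi_n$ for $F$-induction, instantiated at $m = n$.
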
%
\begin{proof}
See Appendix \ref{sec:app-soundness}.
\end{proof}

\subsection{The Usable Axiom System $\boldsymbol{\AIC_1}$}
\label{sec:aic1}

%
%
\renewcommand{\arraystretch}{3}%
\begin{figure*}%
	%
	\centering%
	\begin{adjustbox}{max width=.9999\textwidth, max totalheight=.9\textheight}%
		\normalsize%
		\centering%
		\begin{minipage}{1\linewidth} 
		\centering%
			\begin{tabular}{c}
					\textbf{\large\textsf{Additional Partial Order \& Lattice Axioms}}
				%
				%
				\\[-.75em]
				\begin{tabular}{rcl}%
					$\axreflex{\fora}$
					&
					$\cutrule{
						\fora \tss \forb
					}{
						\forb \tss \forc
					}{
						\fora \tss \forc
					}$
					&
					$\inequalequivrulet{
						\fora \tss \forb
						\\
						\forb \tss \fora
					}{
						\fora \hqeq \forb
					}$
				\end{tabular}%
				\\
				\begin{tabular}{rcl}%
					$\antisymmrule{
						\terma\subst{\fora}{\termc} \tss \termb\subst{\fora}{\termc}
					}{
						\termc \hqeq \termd
					}{
						\terma\subst{\fora}{\termd} \tss \termb\subst{\fora}{\termd}
					}$
					&
					$\equivinequalLrule{
						\fora \hqeq \forb
					}{
						\fora \tss \forb
					}$
					&
					$\equivinequalRrulet{
						\fora \hqeq \forb
					}{
						\forb \tss \fora
					}$
				\end{tabular}%
				\\
				\begin{tabular}{rccl}%
					$\axjoinidem{\fora}$
					&
					$\axmeetidem{\fora}$
					&
					$\joinintroLrule{
						\fora \tss \forc
					}{
						\forb \tss \forc
					}{
						\fora \lmax \forb \tss \forc
					}$
					&
					$\meetintroRrulet{
						\fora \tss \forb
					}{
						\fora \tss \forc
					}{
						\fora \tss \forb \lmin \forc
					}$
				\end{tabular}%
				\\
				\begin{tabular}{rccl}%
					$\meetintroLrule{
						\forb \tss \forc
					}{
						\fora \lmin \forb \tss \forc
					}$
					&
					$\joinintroRrule{
						\fora \tss \forb
					}{
						\fora \tss \forb \lmax \forc
					}$
					&
					$\joinelimrule{\fora \lmax \forb \tss \forc}{\forb \tss \forc}$
					&
					$\meetelimrulet{\fora \tss \forb  \lmin \forc}{\fora \tss \forb}$
				\end{tabular}%
			\end{tabular}
			
			\begin{tabular}{c}
					\textbf{\large\textsf{Additional Majora \& Minora Axioms}}
				%
				%
				\\[-.75em]
				\begin{tabular}{rccl}%
					$\supintroRrule{\fora \tss \forb}{\fora \tss \Sup \forb}$
					&
					$\infintroLrule{\fora \tss \forb}{\Inf \fora  \tss \forb} $
					&
					$\supelimrule{\Sup \fora \tss \forb}{\fora \tss \forb}$
					&
					$\infelimrulet{\fora \tss \Inf \forb}{\fora \tss \forb}$
				\end{tabular}%
				\\
				\begin{tabular}{rl}%
					$\suptightrule{\fora \tss \forb}{\Next \forb \tss \forb }{\Sup \fora \tss \forb}$
					&
					$\inftightrulet{\fora \tss \Next \fora }{\fora \tss \forb}{ \fora \tss \Inf \forb}$
				\end{tabular}%
				\\
				\begin{tabular}{rccl}%
					$\axsupexpand{\fora}$
					&
					$\axinfexpand{\fora}$
					&
					$\axsupdesc{\fora}$
					&
					$\axinfasc{\fora}$
				\end{tabular}%
			\end{tabular}
			
			\begin{tabular}{c}
					\textbf{\large\textsf{Additional Function Application \& Orbit Axioms}}
				%
				%
				\\[-.75em]
				\begin{tabular}{rccl}%
					$\axsemicont{\fora}$
					&
					$\axsemicocont{\fora}$
					&
					$\asciterrule{
						\fora \tss \Next \fora
					}{
						\Fa \Fas \fora 
						\tss  
						\Next \Fas \fora
					}$
					&
					$\desciterrulet{
						\Next \fora \tss \fora
					}{
						\Next \Fas \fora
						\tss  
						\Fa \Fas \fora 
					}$
				\end{tabular}%
				\\
				\begin{tabular}{rl}%
					$\orbitascrule{
						\fora \tss \Fa \fora
						\\
						\fora \tss \Next \fora
					}{
						\Fas \fora 
						\tss  
					\Next \Fas \fora
					}$
					&
					$\orbitdescrulet{
						\Next \fora \tss \fora
						\\
						\Fa \fora \tss \fora
					}{
						\Next \Fas \fora 
						\tss  
						\Fas \fora
					}$
				\end{tabular}%
				\\
				\begin{tabular}{rl}%
					$\FsintroLrule{\fora \tss \forb}{\Fa \forb \tss \forb }{\Fas \fora \tss \forb}$
					&
					$\FsintroRrulet{\fora \tss \Fa \fora }{\fora \tss \forb}{ \fora \tss \Fas \forb}$
				\end{tabular}%
			\end{tabular}
				\lightgray{\hrule}
		\end{minipage}
	\end{adjustbox}%
	\caption{%
		\label{fig:axioms-aic1}%
		Additional axioms for the automation-friendly (and also more human-friendly) axiom system $\AIC_1$.
		The axiom system $\AIC_1$ contains $\AIC_0$ (see \Cref{fig:axioms-aic0})  plus the additional axioms listed above.
		All additional axioms listed here are \emph{finitely} derivable from $\AIC_0$.
	}%
\end{figure*}%
\renewcommand{\arraystretch}{1.5}
The extended axiom system $\AIC_1$ is given by $\AIC_0$ plus all axioms shown in \Cref{fig:axioms-aic1}.
Let us go over some of \mbox{these additional axioms}.

\paragraph*{Additional Partial Order and Lattice Axioms}

$\AIC_0$ axiomatizes algebraically the underlying bounded lattice.
$\AIC_1$, through \TirName{\reflexname}, \TirName{\cutname}, and \TirName{\inequalequivname} axiomatizes directly the induced partial order~$\preceq$ on the lattice.
We will sometimes write%
\vspace*{-1.75\baselineskip}%
\begin{center}%
\begin{adjustbox}{max width=.999\linewidth}%
\begin{minipage}{1.14\linewidth}%
\scriptsizerules%
\begin{align*}
	\cutsrule{\fora_1 \tss \fora_2 \\ \fora_2 \tss \fora_3 \\ \ldots \\ \fora_{k-1} \tss \fora_k}{\fora_1 \tss \fora_k} 
	\qquad\textnormal{{\normalsize \raisebox{.55em}{\large instead of}}}\qquad
	\cutrulet{
		\fora_1 \tss \fora_2 
	}{
		\cutrulet{
			\fora_2 \tss \fora_3
		}{
			\nirulet{\cutname}{\vdots}{\fora_3 \tss \fora_k}
		}{
		\fora_2 \tss \fora_k
		}
	}{
	\fora_1 \tss \fora_k
	}
\end{align*}%
\normalrules%
\normalsize%
\end{minipage}%
\end{adjustbox}%
\end{center}%
as a shorthand for several intermediate applications of \TirName{\cutname}.

\begin{wrapfigure}[4]{l}{.49\columnwidth}
\vspace*{-.625\intextsep}%
\begin{minipage}{.999\linewidth}
\begin{infbox}%
	\smallrules%
	\begin{center}%
		$\antisymmrule{
			\terma\subst{\fora}{\termc} \tss \termb\subst{\fora}{\termc}
		}{
			\termc \hqeq \termd
		}{	
			\terma\subst{\fora}{\termd} \tss \termb\subst{\fora}{\termd}
		}$
	\end{center}%
\end{infbox}%
\end{minipage}%
\end{wrapfigure}%
\noindent%
The \TirName{indiscern} axiom is morally the congruence rule of equa{\-}tion{\-}al logic, but for inequalities and for arbitrary terms (not just flat ones of the form $\boxast\, (\fora,\, \forb,\ldots)$).
Intuitively, if $\termc = \termd$, then we can freely replace any occurrence of term $\termc$ with term $\termd$ in any inequality.
Formulating this as a rule is slightly more involved:
Let $\terma \preceq \termb$ be the context in which the subterm~$\termc$ potentially occurs.
All to-be-replaced occurrences of~$\termc$ are represented by a placeholder variable $\fora$ occurring instead of $\termc$ in $\terma$ and $\termb$ and hence $\terma\subst{\fora}{\termc} \preceq \termb\subst{\fora}{\termc}$ is the inequality in which occurrences of $\termc$ are to be replaced with~$\termd$.
If now $\termc = \termd$, we can actually make the replacement and conclude that if $\terma\subst{\fora}{\termc} \preceq \termb\subst{\fora}{\termc}$ holds, then \mbox{also $\terma\subst{\fora}{\termd} \preceq \termb\subst{\fora}{\termd}$ holds}.

\begin{wrapfigure}[3]{l}{.33\columnwidth}
\vspace*{-.875\intextsep}%
\begin{minipage}{1\linewidth}
%
\begin{center}%
	$\lantisymmrulet{
		\fora \lmax \forb \tss \forb \lmin \fora
	}{
		\fora \hqeq \Fa \fora
	}{
	\fora \lmax \forb \tss \forb \lmin \colmark{\Fa \fora}
	}$
\end{center}%
\end{minipage}%
\end{wrapfigure}%
\noindent
As an example, the derivation on the left is valid.
Here, term $\terma$ is $\fora \lmax \forb$, term $\termb$ is $\forb \lmin \forc$, and the placeholder is~$\forc$.
The substitution instance $\forb \lmin \Fa \fora$ is obtained by replacing every $\forc$ in $\forb \lmin \forc$ by $\Fa \fora$ and $\forb \lmin \fora$ is obtained by replacing $\forc$ with $\fora$.
Notice that when reading the inference rule from top to bottom, \emph{\underline{not} every occurrence} of $\fora$ was replaced by $\Fa \fora$.
This is because $\terma$ was $\fora \lmax \forb$ 
but placeholder $\forc$ does not occur in $\fora \lmax \forb$.

Just for readability (but with \emph{no} technical necessity), we write $\forb \lmin \colmark{\Fa \fora}$ (i.e.\ with a blue marking) below the inference bar to indicate where $\Fa \fora$ replaced $\fora$.
Intuitively, \TirName{\antisymmname} gives us the flexibility to freely select only \emph{certain} occurrences of $\termc$ for replacement by $\termd$ and we indicate the locations were a replacement happened in \colmark{\textnormal{blue}}.

\begin{wrapfigure}[4]{r}{.43\columnwidth}
	\begin{minipage}{.999\linewidth}
		\smallrules%
		\begin{infbox}%
			\begin{center}
				$\joincommrule{\forc\subst{\vara}{\forb \lmax \fora} \tss \ford\subst{\vara}{\forb \lmax \fora}}{\forc\subst{\vara}{\fora \lmax \forb} \tss \ford\subst{\vara}{\fora \lmax \forb}}$%
			\end{center}%
		\end{infbox}%
	\end{minipage}%
\end{wrapfigure}%
We will often use \textsc{\antisymmname} implicitly (but rigorously) in the 
context of identity axioms:
Whenever we introduce an identity \textsc{abc}, we will implicitly introduce an inference rule which is derivable using \textsc{abc} as the right-hand premise of \textsc{\antisymmname}, \emph{and} we will call this new inference rule also \textsc{abc}.
For example, the identity \textsc{$\lmax$-comm} induces the rule \textsc{$\lmax$-comm} shown in the box above.
Its derivation is shown below left.
\textcolor{lightgray!50!DodgerBlue}{\TirName{\assumptionname}} indicates an assumption (i.e.\ a premise above the bar) of an inference rule / quasiequation that we want to derive.
Below right, we see an example of applying the implicit \mbox{\textsc{$\lmax$-comm}} rule on
the left-hand-side (were we again indicate in \colmark{\textnormal{blue}} the part to which the explicit \textsc{$\lmax$-comm} identity was applied).%

\vspace{-1\baselineskip}%
\smallrules%
\begin{align*}
	\lantisymmrulet{
				\sassume{\forc\subst{\vara}{\forb \lmax \fora}}{\ford\subst{\vara}{\forb \lmax \fora}}
			}{
				\axjoincomm{\fora}{\forb}
			}{
				\forc\subst{\vara}{\fora \lmax \forb} \tss \ford\subst{\vara}{\fora \lmax \forb}
			}	
	\qquad\qquad
	\joincommrule{\forb \lmax \fora \tss \forb \lmin \fora}{\colmark{\fora \lmax \forb} \tss \forb \lmin \fora}
\end{align*}%
\normalrules%

\paragraph*{Additional Majora and Minora Axioms}

By \TirName{\supdescname} and \TirName{\infascname}, majora form descending chains and minora form ascending chains.
In fact, we had mentioned earlier that the majorum $\Sup \fora$ is the \emph{least decreasing majorant} of $\fora$.
So if some sequence $\forb$ is descending~($\Next \forb \preceq \forb$) and $\forb$ is a majorant of $\fora$ ($\fora \preceq \forb$), then $\forb$ is also a majorant of $\Sup \fora$. 
Dually the minorum $\Inf \fora$ is $\fora$'s \emph{greatest increasing minorant}.
The axioms \TirName{\suptightname} and \TirName{\inftightname} capture precisely this.
Finally, much like in LTL, both $\Supsymbol$~and~$\Infsymbol$ each satisfy an \emph{expansion law}.%
\begin{remark}[Majora-Minora Alternation Depth]%
Just like in LTL \cite[Figure 5.7, absorption laws]{DBLP:books/daglib/0020348}, multiple consecutive $\Supsymbol$'s or $\Infsymbol$'s can be contracted to a single one (via idempotence), and the depth of alternating $\Supsymbol$'s and~$\Infsymbol$'s collapses at level 2, leaving only the two innermost operators relevant.
For instance, $\Sup\Inf\Inf\Sup\Sup \fora = \Inf \Sup \fora$. See \hyperref[sec:app-collapse]{Appendix \ref{sec:app-collapse}} for a more formal statement with proof.
\lipicsEnd
\end{remark}%

\paragraph*{Additional Function Application and Orbit Axioms}

\begin{wrapfigure}[5]{r}{.525\textwidth}%
\vspace*{-2.5\intextsep}%
\begin{minipage}{.999\linewidth}%
	\smallrules%
	\vspace*{-1\abovedisplayskip}%
	\begin{gather*}
		\hspace{-1.5em}\suptightrulet{
			\Fmonorule{
				\axsupinflate{\fora}
			}{
				\Fa \fora \tss \Fa \Sup \fora
			}
		}{
			\FNextcommrulet{
				\Fmonorulet{
					\axsupdesc{\fora}
				}{
					\Fa \Next \Sup \fora \tss \Fa \Sup \fora
				}
			}{
				\colmark{\Next \Fa \Sup \fora} 
				\tss 
				\Fa \Sup \fora
			}
		}{
			\Sup \Fa \fora \tss \Fa \Sup \fora
		}
	\end{gather*}%
	\normalrules%
\end{minipage}%
\end{wrapfigure}%
Any monotonic function is \emph{semi-continu{\-}ous}.
In AIC, this reads $\Sup \Fa \fora \preceq \Fa \Sup \fora$, stated by \TirName{\semicontname}.
Dually, \TirName{\semicocontname} states 
$\Fa \Inf \fora \preceq \Inf \Fa \fora$.
To get a first feel for how to conduct actual AIC-style algebraic proofs, the proof tree above right shows the derivation of \TirName{\semicontname} from other axioms.

\begin{wrapfigure}[7]{l}{.36\columnwidth}
\vspace*{-.5\intextsep}%
\begin{minipage}{.999\linewidth}
\begin{infbox}%
	\smallrules%
	\begin{center}
		$\asciterrule{
			\fora \tss \Next \fora
		}{
			\Fa \Fas \fora 
			\tss  
			\Next \Fas \fora
		}$
		\\[.75em]
		$\desciterrule{
			\Next \fora \tss \fora
		}{
			\Next \Fas \fora
			\tss  
			\Fa \Fas \fora 
		}$
	\end{center}%
\end{infbox}%
\end{minipage}%
\end{wrapfigure}%
\noindent%
As for additional orbit axioms, we can recover the \enquote{naive} version of \TirName{\itername}, namely $\Next \Fas \fora = \Fa \Fas \fora$, for flat sequences:
For ascending chains~$\fora$, the two rule \TirName{\ascitername} yields $\Fa \Fas \fora \preceq \Next \Fas \fora$, and for descending chains the rule \TirName{\descitername} yields $\Next \Fas \fora \preceq \Fa \Fas \fora$.
If $\fora$ is now flat ($\fora \preceq \Next \fora \preceq \fora$), we obtain $\Next \Fas \fora = \Fa \Fas \fora$.

\noindent%
The axioms \TirName{\orbitascname} and \TirName{\orbitdescname} assert that certain orbits form chains.
This is of interest since chains always converge in our setting and we often want orbits to constitute fixed point iterations that converge to a fixed point.
If $\fora \preceq \Fa \fora$ and $\fora$ is ascending, for instance, then the orbit $\Fas \fora$ is provably also ascending.
Dually, if $\Fa \fora \preceq \fora$ and $\fora$ is descending, $\Fas \fora$ is also descending.

Finally, we can derive from \textsc{\Findname} and \textsc{\Fcoindname} rules that are similar to \textsc{\suptightname} and \textsc{\inftightname}, namely \TirName{\FsintroRname} and \TirName{\FsintroLname}.%

\paragraph*{Soundness of $\boldsymbol{\AIC_1}$}

We have the following theorem whose proof has been formalized in Isabelle/HOL:%
\begin{restatable}[Soundness of $\AIC_1$]{theorem}{thmAICOneSoundness}
\label{thm:soundness-aic1}
	$\AIC_1$ is sound for sequence algebras.
\end{restatable}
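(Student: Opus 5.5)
Soundness of $\AIC_1$ follows from \Cref{thm:soundness} and the remark that derivable quasiequations may be added to an axiom system without affecting validity. Quasiequational logic preserves validity, so every quasiequation derivable from the sound system $\AIC_0$ is valid for sequence algebras. It therefore suffices to show that each axiom in \Cref{fig:axioms-aic1} is derivable from $\AIC_0$. Where a derivation gets awkward, I would instead check validity semantically: fix a sequence algebra over $L$, a standard interpretation, and an index $n$, and verify the inequality pointwise. Either route yields soundness.

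I would proceed in three groups, in this order.

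\textbf{Order and lattice rules.} \TirName{\reflexname} and \TirName{\joinidemname}/\TirName{\meetidemname} come from absorption in the usual way. \TirName{\cutname} unfolds $\fora \lmax \forb = \forb$ and $\forb \lmax \forc = \forc$ and uses associativity. \TirName{\inequalequivname} uses commutativity. The intro and elim rules for $\lmax,\lmin$ are standard lattice derivations. \TirName{\antisymmname} is the one rule not about a flat context; I would prove it by induction on the structure of $\terma$ and $\termb$ using \TirName{cong}, or simply observe that it is valid because semantics is compositional.

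\textbf{Majora and minora.} \TirName{\supdescname} follows from $\Next\Sup\fora = \Sup\Next\fora \preceq \Sup\Sup\fora = \Sup\fora$, once we know $\Next \fora \preceq \Sup \fora$. Semantically this is immediate: $\sup_{k\ge n+1} \preceq \sup_{k\ge n}$. \TirName{\suptightname} is derived as $\Sup\fora \preceq \Sup\forb \preceq \forb$, using \TirName{\supmononame} and \TirName{\nextindname}. \TirName{\supintroRname} and \TirName{\supelimname} follow via \TirName{\supinflatename} and transitivity. The expansion law \TirName{\supexpandname} is clear semantically ($\sup_{k\ge n} = \varphi_n \sqcup \sup_{k\ge n+1}$). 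Algebraically, the direction $\preceq$ uses \TirName{\suptightname}, since $\fora \lmax \Next\Sup\fora$ is a descending majorant. The dual rules are symmetric.

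\textbf{Function and orbit rules.} \TirName{\semicontname} is derived exactly as in the displayed proof tree. \TirName{\FsintroLname} is obtained as $\Fas\fora \preceq \Fas\forb \preceq \forb$. For \TirName{\ascitername}, I would use \TirName{\itername} to get $\Next\Fas\fora = \Fa\Fas\Next\fora \succeq \Fa\Fas\fora$, by \TirName{\Fsmononame} and \TirName{\Fmononame} applied to $\fora\preceq\Next\fora$. \TirName{\orbitascname} then combines this with \TirName{\FFscommname}: since $\fora \preceq \Fa\fora$, we get $\Fas\fora \preceq \Fas\Fa\fora = \Fa\Fas\fora \preceq \Next\Fas\fora$.

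I expect the main obstacle to be the purely algebraic derivations, where the semantic arguments are trivial. The hardest are likely \TirName{\supexpandname}, \TirName{\supdescname}, and the fully general \TirName{\antisymmname}. For the expansion law, the $\succeq$ direction must produce $\Next\Sup\fora \preceq \Sup\fora$ without index reasoning. I would attempt this with \TirName{\nextindname} applied to $\Sup\fora$ itself, i.e.\ by showing $\Sup\Sup\fora \preceq \Sup\fora$ and reading the double bar upward to obtain descent. Should any such derivation resist, I would fall back on direct pointwise semantic verification, which is what the Isabelle formalization ultimately checks.
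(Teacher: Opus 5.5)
Your proposal is correct and follows the paper's proof. The paper also argues via soundness of $\AIC_0$ together with finite derivations from $\AIC_0$ of each rule in $\AIC_1 \setminus \AIC_0$, and your derivations are essentially the ones used there: $\Supsymbol$-desc by reading $\Supsymbol$-ind upward on $\Supsymbol$-idem, $\Supsymbol$-introL via $\Supsymbol$-mono and $\Supsymbol$-ind, the expansion law via $\Supsymbol$-introL, and asc-iter via iter with the two monotonicity rules. Your semantic fallback would also suffice for the stated soundness claim. The paper's purely algebraic derivations give more, namely that every model of $\AIC_0$ is a model of $\AIC_1$.
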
%
\begin{proof}
	$\AIC_1$ contains all axioms of $\AIC_0$ which are all valid for sequence algebras (see \Cref{thm:soundness}).
	All additional axioms in $\AIC_1 \setminus \AIC_0$ are finitely derivable from $\AIC_0$ and hence also valid for any model of $\AIC_0$.
	For those derivations, see Appendix \ref{sec:app-soundness-AIC1}.
\end{proof}

\paragraph*{Continuous Functions}
\label{sec:continuity-axioms}

\begin{wrapfigure}[6]{r}{.23\columnwidth}
\vspace*{0\intextsep}\vspace*{-1\baselineskip}%
\begin{minipage}{.999\linewidth}
\begin{axbox}%
	\vspace*{-.25\abovedisplayskip}%
	\smallrules%
	\begin{center}
		$\axalephcont{\fora}$
		\\[.5em]
		$\axalephcocont{\fora}$
	\end{center}%
\end{axbox}%
\end{minipage}%
\end{wrapfigure}%
From monotonicity, all functions are necessarily semi-con{\-}tinuous: $\Sup \Fa \fora \preceq \Fa \Sup \fora$.
If the other direction $\Fa \Sup \fora \preceq \Sup \Fa \fora$ (and thus the identity $\Fa \Sup \fora = \Sup \Fa \fora$) holds, the function is \enquote{fully} continuous.
We can state that certain functions $\funca$ are continuous by additionally postulating  \emph{non-derivable} continuity axioms for those functions.

More specifically, if we can pull out majora from~$\funca$ without discharging any further premises, this essentially means that one can pull the supremum of any \emph{unordered \underline{non-em}p\underline{t}y countable set $S$} out from~$\funca$, i.e.\ $\funca(\sup S) = \sup \funca(S)$. 
We would then postulate for such functions the axiom \TirName{\alephcontname} shown above right.
Dually, we would postulate \TirName{\alephcocontname} for functions that preserve countable infima.

\begin{wrapfigure}[7]{l}{.39\columnwidth}
\vspace*{-.5\intextsep}%
\begin{minipage}{.999\linewidth}
\begin{axbox}%
	\smallrules%
	\begin{center}
		$\omegacontrule{\fora \tss \Next \fora}{\F \Sup \fora \tss \Sup \F \fora}$
		\\[.75em]
		$\omegacocontrule{\Next \fora \tss \fora}{\Inf \F \fora \tss \F \Inf \fora}$
	\end{center}%
\end{axbox}%
\end{minipage}%
\end{wrapfigure}%
\noindent%
A more commonly used notion of continuity, called \mbox{\emph{$\omega$-continuity}}~\cite{abramsky1994domain}, is when one need not be able to pull out suprema of any unordered set but only of ascending chains.
In AIC, this corresponds to being able to pull out majora of ascending chains and we would capture $\omega$-continuity of~$\funca$ by postulating \TirName{\omegacontname}.
As with monotonicity, one can prove that $\funca$ is \alephzeroshortname- or $\omega$-(co)continuous 
if and only if the appropriate quasi-inequalities are valid for sequence algebras, see \Cref{sec:proof:omegaCont,sec:proof:alephCont}.
Notions and axioms for minora and cocontinuity are entirely dual.

\section{Algebraic Proofs of Fixed Point Theorems}
\label{sec:case_studies}

To demonstrate the usefulness of AIC, we will now present purely algebraic proofs of well-known, less-well-known, and novel fixed point theorems.

\subsection{Kleene, Tarski-Kantorovich, and Park}
\label{sec:case_studies:kleene}

\subsubsection{The Tarski-Kantorovich Principle}

Before we prove in AIC the well-known \emph{Kleene fixed point theorem}\footnote{\label{foot:restricted}Though for our restricted setting where the lattice is countably complete.}~\cite{DBLP:journals/ipl/LassezNS82}, we consider a slightly more general result -- the \emph{Tarski-Kan{\-}toro{\-}vich principle} (TKP)~\cite{jachymski2000tarski}.
It says: for $\omega$-continuous~$\funca$, if $\laelem \laord \funca(\laelem)$ then iterating $\funca$ on $\laelem$ will converge to the least fixed point of $F$ just above
$\laelem$, which we denote by $\lfpabove{\laelem} \funca$:%
\begin{theorem}[Tarski-Kantorovich Principle]
	\label{thm:tkp}
	Let $\funca\colon \ladomain \to \ladomain$ be an $\omega$-continuous endomap on a complete lattice~$\ladomain$, and $\laelem\in\ladomain$. 
	Then
	\begin{align*}
		\laelem \llaord \funca(\laelem) \qqimplies \lfpabove{\laelem} \funca \eeq \sup_{n \in \omega}\, \funca^n(\laelem). \tag*{\textsc{(TKP)}}
	\end{align*}
	%
	
\end{theorem}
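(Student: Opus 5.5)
The plan is to prove \textsc{(TKP)} inside AIC by working with the flat sequence $\fora$ interpreted as $\overline{\laelem}$. Flatness is expressed by $\fora \preceq \Next \fora$ and $\Next \fora \preceq \fora$. The hypothesis $\laelem \sqsubseteq \funca(\laelem)$ becomes $\fora \preceq \Fa \fora$, and $\omega$-continuity is available as \TirName{\omegacontname}. The candidate is $\Sup \Fas \fora$, whose head $\First \Sup \Fas \fora$ is $\sup_n \funca^n(\laelem)$; in fact $\Sup \Fas \fora$ will turn out to be flat. We must then establish three facts: it is a fixed point of $\funca$, it lies above $\fora$, and it lies below every fixed point above $\fora$. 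The last fact is encoded as a quasi-inequality with an extra variable $\forb$ satisfying $\Fa \forb \preceq \forb$ and $\fora \preceq \forb$; it suffices to assume $\forb$ is a prefixed point. Reading everything back pointwise (and at index $0$) via soundness (\Cref{thm:soundness-aic1}) then yields the lattice statement.

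The easy parts come first. For \emph{above}, combine $\fora \preceq \Fas \fora$ (from \TirName{\FsintroRname} with $\fora\preceq\Fa\fora$ and \TirName{\reflexname}) with \TirName{\supinflatename} via \TirName{\cutname}. For \emph{least}, use \TirName{\FsintroLname} with the premises $\fora\preceq\forb$ and $\Fa\forb\preceq\forb$ to get $\Fas \fora \preceq \forb$. We then need $\forb$ descending in order to apply \TirName{\suptightname}. The cleanest route is to take $\forb$ flat as well, or to replace $\forb$ by $\Inf \forb$ using \TirName{\infascname}. Simplest is to add the premise $\Next\forb\preceq\forb$, which is harmless since the intended $\forb$ is flat. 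With it, \TirName{\suptightname} gives $\Sup \Fas \fora \preceq \forb$.

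The \emph{fixed point} property is the main step. First, \TirName{\orbitascname} (premises $\fora\preceq\Fa\fora$ and $\fora\preceq\Next\fora$) shows that $\Fas\fora$ is ascending. Second, \TirName{\omegacontname} then gives $\Fa\Sup\Fas\fora \preceq \Sup\Fa\Fas\fora$. Third, since $\fora$ is ascending, \TirName{\ascitername} yields $\Fa\Fas\fora\preceq\Next\Fas\fora$. Applying \TirName{\supmononame} and \TirName{\nextsupcommname} turns this into $\Sup\Fa\Fas\fora \preceq \Next\Sup\Fas\fora$, and \TirName{\supdescname} bounds the latter by $\Sup\Fas\fora$. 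This proves $\Fa\Sup\Fas\fora\preceq\Sup\Fas\fora$.

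For the reverse inequality, use \TirName{\semicontname} to get $\Sup\Fa\Fas\fora\preceq\Fa\Sup\Fas\fora$. It then remains to show $\Sup\Fas\fora \preceq \Sup\Fa\Fas\fora$. Start from $\Fas\fora\preceq\Fa\Fas\fora$, which follows from \TirName{\FFscommname} together with \TirName{\Fsmononame} applied to $\fora\preceq\Fa\fora$. Then apply \TirName{\supmononame}, and conclude with \TirName{\inequalequivname}.

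The obstacle I expect is flatness of $\Sup\Fas\fora$. The TKP talks about a single element, so we also need $\Sup \Fas\fora \preceq \Next\Sup\Fas\fora$; \TirName{\supdescname} gives only the other direction. My plan here is to use the ascending property $\Fas\fora\preceq\Next\Fas\fora$. Via \TirName{\supmononame} and \TirName{\nextsupcommname} this gives $\Sup\Fas\fora\preceq\Next\Sup\Fas\fora$, so $\Sup\Fas\fora$ is flat and its head equals every entry, namely $\sup_n\funca^n(\laelem)$. With flatness in hand, the $\forb$ in the leastness argument can be restricted to flat sequences $\overline{m}$ for fixed points $m\sqsupseteq\laelem$, which exactly matches $\lfpabove{\laelem}\funca$.
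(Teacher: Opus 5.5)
Your proof is correct and is essentially the paper's own AIC proof: the same split into \TirName{\tkpfpname}, \TirName{\tkpabovename}, and \TirName{\tkpleastname} (including the extra premise $\Next \forb \preceq \forb$), the prefixed-point direction via \TirName{\orbitascname}, \TirName{\omegacontname}, \TirName{\ascitername}, \TirName{\supmononame}, and \TirName{\nextsupcommname}, and the postfixed-point direction via \TirName{\Fsmononame}, \TirName{\FFscommname}, \TirName{\supmononame}, and \TirName{\semicontname}. The differences are cosmetic. You close the prefixed-point chain with \TirName{\supdescname} where the paper uses \TirName{\supexpandname}. You assume $\fora$ flat, although, like the paper, you only ever use $\fora \preceq \Next \fora$. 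Your flatness argument for $\Sup \Fas \fora$ is correct but unnecessary, since reading the derived quasi-inequalities at index $0$ under $\fora \mapsto \overline{\laelem}$ and $\forb \mapsto \overline{\laelemb}$ already gives the lattice statement.
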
%
\begin{remark}[On Duals]
TKP and all other fixed point theorems in this section have duals, which emerge by reversing the order (e.g.\ replace least with greatest fixed points, $\omega$-continuity with $\omega$-cocontinuity, etc.), which can be proved in AIC in the same way.%
\lipicsEnd%
\end{remark}%
\paragraph*{AIC Formalization of the Tarski-Kantorovich Principle (\Cref{thm:tkp})}

Before we \emph{algebraically prove} the theorem, we will first describe how to \emph{formalize} it in AIC.
Let us assume some~$\fora$ such that $\fora \preceq \Fa \fora$ and that \textsc{\omegacontname} holds for $\funca$.
While $\fora$ is by default any sequence in AIC, $\fora$ morally takes the role of the single lattice element $\laelem$.
It would thus make sense to require that $\fora$ is flat, so as to emulate a single lattice element.
However, it turns out that this is not necessary, as long as $\fora$ is ascending, i.e.\ $\fora \preceq \Next \fora$.
Note that this makes the theorem we will prove more general. 
To obtain the classical single-element version of \Cref{thm:tkp}, we would instantiate $\fora$ with the flat sequence $\overline{\laelem} = \constseq{\laelem}$ and that sequence is trivially ascending, hence $\overline{\laelem} \preceq \Next \overline{\laelem}$.

\begin{wrapfigure}[9]{r}{.56\columnwidth}
\vspace*{-1.5\intextsep}%
\begin{minipage}{.999\linewidth}
\begin{infbox}%
	\smallrules%
	\begin{center}
		$\tkpfprule
		{
		\metaassume{\Fa~\textsc{\omegacontname}}
		\\
		\fora \tss \Next \fora
		\\
		\fora \tss \Fa \fora
	}
	{\Fa \Sup \Fas \fora 
		\hqeq
	\Sup \Fas \fora 
	}$
	\\[1em]
	$\tkpaboverule{
			\fora
			\tss
			\Fa \fora
		}{
			\fora
			\tss
			\Sup \Fas \fora
		}$
		\\[1em]
		$\tkpleastrule{
			\fora \tss \forb
			\\
			\Fa \forb \tss \forb
			\\
			\Next \forb \tss \forb
		}{
			\Sup \Fas \fora 
			\tss
			\forb
		}$
	\end{center}%
	\normalrules%
\end{infbox}%
\end{minipage}%
\end{wrapfigure}%
\noindent%
Assuming $\fora \preceq \Fa \fora$ and $\fora \preceq \Next \fora$, we formalize that $\Sup \Fas \fora$ (cf.\ $\sup_{n \in \omega} \funca^n(\laelem)$) is the least fixed point of $\Fa$ above~$\fora$. 
This comprises proving that (i) $\Sup \Fas \fora$ is a fixed point, i.e.\ that \TirName{\tkpfpname} shown above right
is derivable.
By $\metaassume{\ldots}$, we \emph{annotate} which continuity axioms we need to postulate on $\funca$ (cf.\ \Cref{sec:calc}, \hyperref[sec:continuity-axioms]{\small\textsf{\textbf{Continuous Functions}}}) to prove the rule derivable.
This is just for convenience and bears no \mbox{further formal meaning}.

We further formalize that $\Sup \Fs \fora$~is the \emph{\underline{least}} fixed point \emph{above}~$\fora$, i.e.\ (ii) $\Sup \Fs \fora$ is itself above $\fora$ and (iii) any fixed point (and in particular any \emph{pre}fixed point)~$\forb$ above $\fora$ is also above $\Sup \Fs \fora$.
We thus derive the two quasi-inequalities \TirName{\tkpabovename} and \TirName{\tkpleastname} above.%

\paragraph*{Algebraic Proof of the Tarski-Kantorovich Principle (\Cref{thm:tkp})}

\begin{wrapfigure}[7]{l}{.46\columnwidth}
\vspace*{-1\intextsep}%
\begin{minipage}{1\linewidth}
\begin{infbox}%
	\smallrules%
	\begin{center}
		$\supquasiprefprule{\fora \tss \Next \fora}{\Sup \Fa \Fas \fora \tss \Sup \Fas \fora}$
		\\[1.5em]
		$\supquasipostfprule{\fora \tss \Fa \fora}{\Sup \Fas \fora \tss \Sup \Fa \Fas \fora}$
	\end{center}%
	\normalrules%
\end{infbox}%
\end{minipage}%
\end{wrapfigure}%
For proving \TirName{\tkpfpname},
it is useful to first prove that $\Fas \fora$ is a \emph{majoral quasi fixed point}, i.e.\ \emph{under the scope of a majorum}, it makes no 
difference whether or not $\funca$ is applied once more to~$\Fas \fora$; formally, 
$\Sup \Fa \Fas \fora \hqeq \Sup \Fas \fora$, which intuitively means that $\Sup \Fas \fora$ is a fixed point of $\funca$ \emph{up to continuity}.
We divide this identity into the two quasi-inequalities shown in the box above right, and will later reuse these lemmas to prove the Olszewski fixed point theorem.
For a proof of \textsc{\small\supquasiprefpname}, consider below the proof tree on the left; for \textsc{\small\supquasipostfpname} the one on the right.

\vspace*{-1\abovedisplayskip}%
\smallrules%
	\begin{align*}
	\supexpandrule{
		\joincommrulet{
			\joinintroRrulet{
				\nextsupcommrulet{
					\supmonorulet{
						\asciterrulet{
							\sassume{\fora}{\Next \fora}
						}{
							\Fa \Fas \fora 
							\tss
							\Next \Fas \fora
						}
					}{
						\Sup \Fa \Fas \fora 
						\tss
						\Sup \Next \Fas \fora
					}
				}{
					\Sup \Fa \Fas \fora 
					\tss
					\colmark{\Next \Sup \Fas \fora}
				}
			}{
				\Sup \Fa \Fas \fora 
				\tss
				\Next \Sup \Fas \fora \llmax \Fas \fora
			}
		}{
			\Sup \Fa \Fas \fora 
			\tss
			\colmark{\Fas \fora \llmax \Next \Sup \Fas \fora}
		}
	}{
		\Sup \Fa \Fas \fora 
		\tss
		\colmark{\Sup \Fas \fora}
	}
	\qquad\qquad\quad
		\supmonorule{
			\FFscommrulet{
				\Fsmonorulet{
					\sassume{\fora}{\Fa \fora}
				}{
					\Fas \fora 
					\tss 
					\Fas \Fa \fora
				}
			}{
				\Fas \fora 
				\tss 
				\colmark{\Fa \Fas \fora}
			}
		}{
			\Sup \Fas \fora 
			\tss 
			\Sup \Fa \Fas \fora
		}
\end{align*}%
\normalrules%
\normalsize%
Before we show that $\Sup \Fas \fora$ is a \emph{proper} fixed point, we would like to point out the seemingly mundane essence of the two proofs shown above:
\emph{Both are about compensating for \underline{one} excess~$\funca$.}
For \textsc{\small\supquasiprefpname} (left tree), the excess $\funca$ is on the left-hand side of the inequality. 
To compensate, we can create an additional $\funca$ on the right by expanding~$\Supsymbol$, crucially: forgetting the irrelevant $\Fas \fora$, creating a $\Nextsymbol$ on the non-forgotten part, and this $\Nextsymbol$ eventually (further up) acts on $\funca^{*}$ to create an additional~$\funca$ (via \textsc{\small\itername} hidden in \textsc{\small\ascitername}).

For \textsc{\small\supquasipostfpname} (right tree), the excess $\funca$ is on the right, and while we \emph{could} expand the left~$\Supsymbol$, we cannot forget joinees ($\curlyvee$) on the left.
Therefore, instead of creating an~$\funca$ on the left, we need that the excess $\funca$ on the right makes that side bigger which is essentially ensured by $\fora \preceq \Fa \fora$.
As a precursor to the Olszewski theorem about $\Inf \Sup \Fas \fora$ we will see later:
$\fora \preceq \Fa \fora$ would  not have been necessary if we had some additional $\Infsymbol$ guarding $\Fas \fora$.
Then we can expand $\Infsymbol$ on the left and forget the unshifted part of that $\Infsymbol$-expansion.

We now show that 
$\Sup \Fas \fora$ is a \emph{proper} fixed point:
For $\Sup \Fas \fora$ being a \emph{post}fixed point (\textsc{\small\tkppostfpname}), we require only semi-continuity (which all monotonic functions satisfy):

\smallrules%
\vspace*{-1\baselineskip}%
\begin{align*}
	\cutrule{
		\supquasipostfprule{
			\sassume{\fora}{\Fa \fora}
		}{
			\Sup \Fas \fora 
			\tss
			\Sup \Fa \Fas \fora 
		}
	}{
		\axsemicont{\Fas \fora}
	}{
		\Sup \Fas \fora 
		\tss
		\Fa \Sup \Fas \fora
	}
\end{align*}%
\normalrules%
\normalsize%
For $\Sup \Fas \fora$ being a \emph{pre}fixed point, we require \enquote{full} continuity. 
For $\omega$-continuous $\funca$, we have:

\smallrules%
\vspace*{-1\baselineskip}%
\begin{align*}
	\cutrule{
		\omegacontrule{
			\orbitascrulet{
				\sassume{\fora}{\Fa \fora}
			}{
				\Fas \fora 
				\tss 
				\Next \Fas \fora
			}
		}{
			\Fa \Sup \Fas \fora 
			\tss 
			\Sup \Fa \Fas \fora
		}
	}{
		\supquasiprefprulet{
			\sassume{\fora}{\Next \fora}
		}{
			\Sup \Fa \Fas \fora 
			\tss
			\Sup \Fas \fora 
		}
	}{
		\Fa \Sup \Fas \fora 
		\tss
		\Sup \Fas \fora
	}
\end{align*}%
\normalrules%
\normalsize%
For \alephzeroshortname-continuous $\funca$, the left branch of the above tree simplifies to:

\smallrules%
\vspace*{-1\baselineskip}%
\begin{align*}
	\axalephcont{\Fas \fora}
\end{align*}%
\normalrules%
\normalsize%
\begin{remark}[Prefixed Points of \alephzeroshortname-continuous Maps]
As a by-product, we have shown for ascending $\fora$ that $\Sup \Fas \fora$ is a prefixed point of any \alephzeroshortname-continuous map $\funca$.
Indeed, \mbox{$\fora \preceq \Fa \fora$} is \emph{not} required for being a prefixed point.
In the classical single-lattice-element setting this reads: 
If $\funca$ is \alephzeroshortname-continuous, then $\sup_{n \in \omega} \funca^n(\laelem)$ is a prefixed point of $\funca$ for any~\mbox{$\laelem \in \ladomain$}.
While a minor result, this is to the best of our knowledge a novel result.%
\lipicsEnd%
\end{remark}%
We can now combine ${\Sup \Fas \fora} \preceq {\Fa \Sup \Fas \fora}$ and ${\Fa \Sup \Fas \fora} \preceq {\Sup \Fas \fora}$ using \textsc{\small\inequalequivname} to obtain the desired quasiequation \TirName{\tkpfpname}.
It is then left to prove \textsc{\small\tkpabovename} and \textsc{\small\tkpleastname}.
For \textsc{\small\tkpabovename}, see the proof tree below left.
For \textsc{\small\tkpleastname}, see the tree below right, thus completing the algebraic proof of \Cref{thm:tkp}.

\vspace*{-.75\baselineskip}%
\smallrules%
{\begin{align*}
	\supintroRrule{
		\Findrulet{
			\sassume{\fora}{\Fa \fora}
		}{
			\fora
			\tss
			\Fas \fora
		}
	}{
		\fora
		\tss
		\Sup \Fas \fora
	}
	\qquad\qquad\qquad\quad
	\suptightrulet{
		\lFsintroLrulet{
			\sassume{\fora}{\forb}
		}{
			\sassume{\Fa \forb}{\forb}
		}{
			\Fas \fora 
			\tss
			\forb
		}
	}{
		\sassume{\Next \forb}{\forb}
	}{
		\Sup \Fas \fora 
		\tss
		\forb
	}
	\tag*{\normalsize\qed}
\end{align*}%
}%
\normalrules%
\noindent%

\subsubsection{The Kleene Fixed Point Theorem}

We obtain the Kleene fixed point theorem as an immediate corollary of the Tarski-Kantorovich principle, namely by instantiating $\fora = \bot$ to obtain the very least fixed point (i.e.\ the least one above~$\bot$), then given by $\Sup \Fas \bot$ (cf. $\sup_n \funca^n(\bot)$).

We can also easily read this corollary off the AIC proof of TKP.
This is because wherever $\fora \preceq \Next \fora$ or $\fora \preceq \Fa \fora$ occur as assumptions in our proofs and we have $\fora = \bot$, we can vacuously discharge these assumptions by applying the axiom \textsc{\small\botname}.

\subsubsection{Park Induction}

Similarly to the case for the Kleene fixed point theorem, the Park induction principle \cite{park} for upper-bounding least fixed points of continuous functions $\funca$, i.e.\ for all $\laelema\in\ladomain$,
\[
	\funca(\laelema) \llaord \laelema 
	\qqimplies \lfp \funca \llaord \laelema
\]
is an immediate instance of \textsc{\small\tkpfpname}, \textsc{\small\tkpabovename}, \textsc{\small\tkpleastname} for $\fora =\bot$ and $\forb = \constseq{\laelema}$. In fact, we can readily read off our algebraic proofs a \emph{generalized} Park induction principle:
\begin{theorem}[Generalized Park Induction]
	\label{thm:generalized_park}
For $\laelema,\laelemb \in \ladomain$ and $\omega$-continuous $\funca\colon \ladomain\to\ladomain$,%
\begin{align*}
	\laelemb \llaord \funca(\laelemb)
	\qand \laelemb \llaord \laelema 
	\qand \funca(\laelema) \llaord \laelema  
	\qqimplies
	\lfpabove{\laelemb} \funca \llaord \laelema~.
\end{align*}
\end{theorem}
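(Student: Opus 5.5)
We read the statement off the AIC proof of \Cref{thm:tkp}, instantiating the sequence variables with flat sequences. Put $\fora := \overline{\laelemb}$ and $\forb := \overline{\laelema}$. Since flat sequences are trivially ascending and descending, we have $\fora \preceq \Next \fora$ and $\Next \forb \preceq \forb$ for free. Because $\funca$ acts elementwise, the lattice hypotheses $\laelemb \sqsubseteq \funca(\laelemb)$, $\laelemb \sqsubseteq \laelema$ and $\funca(\laelema) \sqsubseteq \laelema$ translate directly into the AIC premises $\fora \preceq \Fa \fora$, $\fora \preceq \forb$ and $\Fa \forb \preceq \forb$.

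The argument then has three steps.

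\emph{First}, by \Cref{thm:tkp}, which rests on \TirName{\tkpfpname}, \TirName{\tkpabovename} and \TirName{\tkpleastname} together with $\omega$-continuity of $\funca$, we have $\lfpabove{\laelemb} \funca = \sup_n \funca^n(\laelemb)$. This value is the $0$-th entry of the sequence $\Sup \Fas \overline{\laelemb}$, as one reads off \Cref{tab:semantics}.

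\emph{Second}, we apply \TirName{\tkpleastname} with premises $\fora \preceq \forb$, $\Fa \forb \preceq \forb$ and $\Next \forb \preceq \forb$. This yields $\Sup \Fas \fora \preceq \forb$, derivable in $\AIC_1$ via \TirName{\suptightname} and \TirName{\FsintroLname}.

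\emph{Third}, soundness (\Cref{thm:soundness-aic1}) tells us that this quasi-inequality holds in every sequence algebra. Evaluating it at index $0$ gives $\sup_n \funca^n(\laelemb) \sqsubseteq \laelema$, which is the claim.

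Note that \TirName{\tkpleastname} itself requires no continuity at all. Continuity enters only through the identification of $\lfpabove{\laelemb} \funca$ with the iterate supremum, i.e.\ through \TirName{\tkpfpname}.

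The only point needing care is this translation between lattice-element statements and sequence statements. We must check that the conclusion at index $0$ of a valid inequality about flat seeds really gives the element-level statement, and that $\lfpabove{\laelemb}\funca$ exists and equals index $0$ of $\Sup \Fas \overline{\laelemb}$. The latter is exactly \Cref{thm:tkp}, which we take as given.

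Alternatively, we can avoid \Cref{thm:tkp}'s characterisation altogether. By \TirName{\tkpfpname} and \TirName{\tkpabovename}, the element $(\Sup\Fas\overline{\laelemb})_0$ is a fixed point above $\laelemb$. By \TirName{\tkpleastname}, applied to flat $\forb$ for an arbitrary prefixed point above $\laelemb$, it lies below every such prefixed point. Hence it is the least fixed point above $\laelemb$, and it lies below $\laelema$.
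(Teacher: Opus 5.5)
Your proposal is correct and is essentially the paper's proof. You instantiate \textsc{tkp-fp}, \textsc{tkp-above} and \textsc{tkp-least} with the flat sequences $\fora = \overline{\laelemb}$ and $\forb = \overline{\laelema}$, identify index $0$ of $\Sup \Fas \overline{\laelemb}$ with $\lfpabove{\laelemb} \funca$, and read off $\lfpabove{\laelemb}\funca \sqsubseteq \laelema$ at index $0$ via soundness; your alternative paragraph also spells out that showing $(\Sup\Fas\overline{\laelemb})_0$ is the \emph{least} fixed point above $\laelemb$ needs \textsc{tkp-least} applied to arbitrary flat prefixed points, a step the paper's short proof leaves implicit.
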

\begin{proof}
	 Instantiate \textsc{\small\tkpfpname} and \textsc{\small\tkpabovename} with $\fora = \constseq{\laelemb}$ and $\forb = \constseq{\laelema}$ (both flat). 
	 It follows that $\sem{\Sup \Fs \fora}{0} = \lfpabove{\laelemb} \funca$, and thus $\lfpabove{\laelemb} \funca \laord \laelema$ by $\textsc{\small\tkpleastname}$. 
\end{proof}
%
%

\subsubsection*{*Digression: Pen-and-Paper Algebraic Proofs of Fixed Point Theorems}

\begin{wrapfigure}[10]{r}{.7\linewidth}%
	\vspace*{-1\intextsep}%
	\verylightgray{\shadowbox{\includegraphics[width=.963\linewidth]{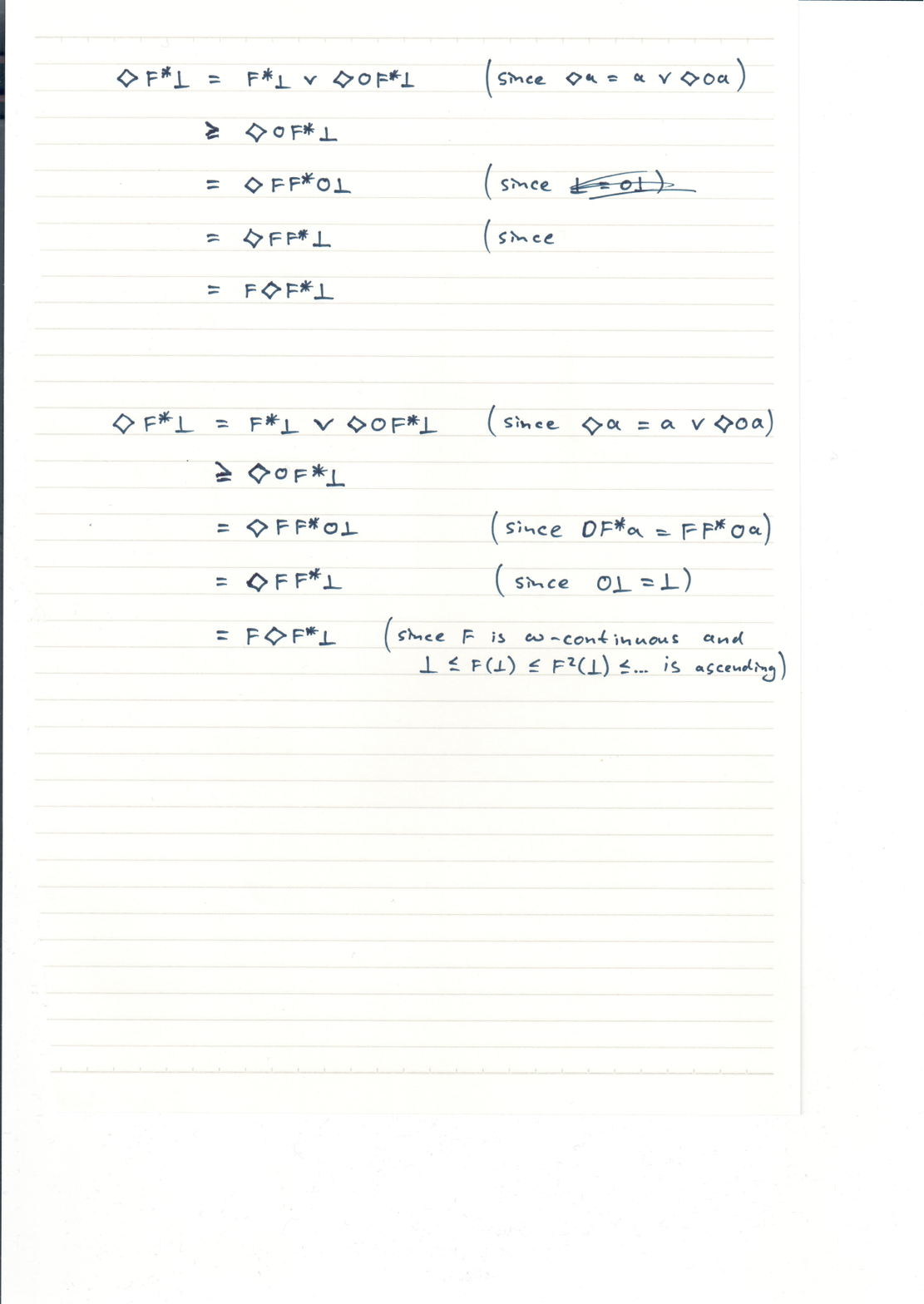}}}%
\end{wrapfigure}%
Although the quasi-equa{\-}tional proof trees above are rigorous and readily automatable, they are not the way a human would typically approach a fixed point theorem.
For AIC to feel natural and genuinely useful, it should also facilitate the kind of compact and intuition-guided pen-and-paper-style arguments by which  humans discover and prove such results.
Inspired by \cite[Figure 1]{DBLP:journals/pacmpl/RosselSKSG26}, the example above illustrates this second mode of reasoning.
It showcases a pen-and-paper proof of the inequation $\Fa \Sup \Fas \bot \preceq \Sup \Fas \bot$, i.e.\ the \enquote{more difficult direction} in showing that~$\sup_n \funca^n(\bot)$ is a fixed point of~$\funca$ as part of the Kleene fixed point theorem.
This is the kind of reasoning we ultimately want AIC to enable:
not only proofs that a theorem prover can check, but proofs that a human can actually read, understand, and internalize as a standard fixed point argument.
\subsection{Olszewski}
\label{sec:case_studies:ol}

By the Tarski-Kantorovich principle (\autoref{thm:tkp}), seeding the iteration of an $\omega$-(co)con{\-}tinuous map $\funca$ with a pre- or postfixed point of $\funca$ converges to a \emph{proper} fixed point. 
But how to find pre- or postfixed points to begin with? 
A recent result due to Olszewski shows that they emerge by iterating on an \emph{arbitrary} seed element.%
\begin{restatable}[\textnormal{Olszewski~\cite{Ol:omegaSequences}}]{theorem}{thmOlszewski}
\label{thm:olszewski}
	Let $\funca\colon \ladomain \to \ladomain$ be $\omega$-cocontinuous and $\laelem\in \ladomain$.
	Then
		$\inf_{n\in \omega}~\sup_{k\geq n}~ \funca^k(\laelem)$ is a p\underline{ost}fixed point of $\funca$.
\end{restatable}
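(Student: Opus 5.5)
The plan is to formalize the statement in AIC and then derive it purely algebraically, in the same style as \TirName{\tkpfpname}. We work with a variable $\fora$ that morally stands for the seed $\laelem$. We only assume that $\fora$ is \emph{descending}, i.e.\ $\Next \fora \preceq \fora$. The flat sequence $\overline{\laelem}$ trivially satisfies this, so the classical statement follows by instantiating $\fora = \overline{\laelem}$ and reading off index $0$. The quasi-inequality to derive is therefore
\begin{align*}
	\Next \fora \preceq \fora \qimplies \Inf \Sup \Fas \fora \preceq \Fa \Inf \Sup \Fas \fora,
\end{align*}
under the annotation that $\funca$ satisfies \TirName{\omegacocontname}. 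Note that, unlike \TirName{\supquasipostfpname}, we do \emph{not} assume $\fora \preceq \Fa \fora$. The extra $\Infsymbol$ guarding $\Sup\Fas\fora$ is what compensates for its absence, as foreshadowed after the proof of \Cref{thm:tkp}.

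\textbf{Step 1 (pulling $\funca$ out of the minorum).} Since $\Sup \Fas \fora$ is descending by \TirName{\supdescname}, \TirName{\omegacocontname} instantiated with $\Sup \Fas \fora$ gives $\Inf \Fa \Sup \Fas \fora \preceq \Fa \Inf \Sup \Fas \fora$. By \TirName{\cutname} it then suffices to show $\Inf \Sup \Fas \fora \preceq \Inf \Fa \Sup \Fas \fora$.

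\textbf{Step 2 (semi-continuity).} \TirName{\semicontname} yields $\Sup \Fa \Fas \fora \preceq \Fa \Sup \Fas \fora$, and \TirName{\infmononame} lifts this to $\Inf \Sup \Fa \Fas \fora \preceq \Inf \Fa \Sup \Fas \fora$. So it remains to prove the \enquote{minoral-majoral quasi-postfixed point} property $\Inf \Sup \Fas \fora \preceq \Inf \Sup \Fa \Fas \fora$.

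\textbf{Step 3 (compensating the excess $\funca$; the main obstacle).} This is where the missing hypothesis $\fora \preceq \Fa \fora$ must be replaced. We generate the excess $\funca$ from a shift, using descent of $\fora$:
\begin{itemize}
	\item \TirName{\descitername} gives $\Next \Fas \fora \preceq \Fa \Fas \fora$.
	\item \TirName{\supmononame} lifts this to $\Sup \Next \Fas \fora \preceq \Sup \Fa \Fas \fora$.
	\item \TirName{\nextsupcommname} rewrites the left side, giving $\Next \Sup \Fas \fora \preceq \Sup \Fa \Fas \fora$.
	\item \TirName{\infmononame} then gives $\Inf \Next \Sup \Fas \fora \preceq \Inf \Sup \Fa \Fas \fora$.
\end{itemize}
On the other side, \TirName{\infascname} yields $\Inf \Sup \Fas \fora \preceq \Next \Inf \Sup \Fas \fora$, and \TirName{\nextinfcommname} rewrites the right side to $\Inf \Next \Sup \Fas \fora$. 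Chaining everything with \TirName{\cutsname} closes the goal.

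The delicate point is exactly this step. Expanding the outer $\Infsymbol$ lets us forget the unshifted component, and the single $\Nextsymbol$ thereby created acts on $\funca^*$ (via \TirName{\itername}, hidden inside \TirName{\descitername}) to produce the needed extra $\funca$. I would double-check that \TirName{\descitername} indeed yields the orientation $\Next\Fas\fora \preceq \Fa\Fas\fora$ needed here. If a seed that is merely ascending were wanted instead, this step would fail, so the descending (in particular flat) assumption on $\fora$ is essential.
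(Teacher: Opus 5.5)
Your proposal is correct and is essentially the paper's own AIC derivation of \TirName{\olpostfpname}. Steps~1 and~2 are exactly the top tree of \Cref{fig:ol-postfp-proof}, and Step~3 is the paper's proof of \TirName{\infsupquasipostfpname}, with one cosmetic difference: you close it with \TirName{\infascname}, i.e.\ $\Inf \Sup \Fas \fora \preceq \Next \Inf \Sup \Fas \fora$, whereas the paper uses \TirName{\infexpandname} followed by \TirName{\meetintroLname} to drop the unshifted component.
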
%
For Olszewski's proof, see \appref{app:proof:olprose}. %
Under stronger continuity assumptions on $\funca$, we can improve on Olszewski's result and obtain \emph{proper} fixed points:
\begin{theorem}[Proper Olszewski Fixed Point]\label{cor}
	Let $F\colon \ladomain \to \ladomain$ be $\omega$-cocontinuous and \alephzeroshortname-continuous and $\laelem\in \ladomain$.
	Then
    $\inf_{n\in \omega}\sup_{k\geq n} F^k(\laelem)$ is a fixed point of $F$.
\end{theorem}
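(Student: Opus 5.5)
The plan is to formalize the statement in AIC for a flat seed and derive the identity $\Fa \Inf \Sup \Fas \fora = \Inf \Sup \Fas \fora$ by two quasi-inequalities, combined with \TirName{\inequalequivname}. We assume \TirName{\alephcontname} and \TirName{\omegacocontname} for $\funca$, together with the premises $\fora \preceq \Next \fora$ and $\Next \fora \preceq \fora$ (i.e.\ $\fora$ is flat). Afterwards we instantiate $\fora = \overline{\laelem}$ and read off index~$0$. By \Cref{tab:semantics}, $(\Inf\Sup\Fas\overline{\laelem})_0 = \inf_{n}\sup_{k\ge n}\funca^k(\laelem)$. Since the derived identity is valid by soundness (\Cref{thm:soundness-aic1}) and holds pointwise, $\funca$ fixes this element.

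\emph{Prefixed point direction.} We chain, via \TirName{\cutsname}, the steps
\[
\Fa\Inf\Sup\Fas\fora \preceq \Inf\Fa\Sup\Fas\fora \preceq \Inf\Sup\Fa\Fas\fora \preceq \Inf\Sup\Fas\fora .
\]
The first step is \TirName{\semicocontname}. The second applies \TirName{\infmononame} to the \TirName{\alephcontname} instance $\Fa\Sup\Fas\fora \preceq \Sup\Fa\Fas\fora$; this is where countable continuity is needed, because $\Fas\fora$ need not be a chain. The third applies \TirName{\infmononame} to \TirName{\supquasiprefpname}, which needs only $\fora\preceq\Next\fora$. This direction should be routine.

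\emph{Postfixed point direction (the Olszewski part).} First, $\Sup\Fas\fora$ is descending by \TirName{\supdescname}. Hence \TirName{\omegacocontname} yields $\Inf\Fa\Sup\Fas\fora \preceq \Fa\Inf\Sup\Fas\fora$, so it suffices to show $\Inf\Sup\Fas\fora \preceq \Inf\Fa\Sup\Fas\fora$. The difficulty is compensating for the excess $\funca$ on the right without assuming $\fora\preceq\Fa\fora$; as the paper foreshadows, we do this by expanding the outer $\Infsymbol$ into a shift. Concretely, we chain
\[
\Inf\Sup\Fas\fora \preceq \Next\Inf\Sup\Fas\fora = \Inf\Next\Sup\Fas\fora = \Inf\Sup\Next\Fas\fora \preceq \Inf\Sup\Fa\Fas\fora \preceq \Inf\Fa\Sup\Fas\fora .
\]
The justifications are, in order: \TirName{\infascname}; then \TirName{\nextinfcommname} and \TirName{\nextsupcommname}; then \TirName{\infmononame} and \TirName{\supmononame} applied to \TirName{\descitername}, which uses $\Next\fora\preceq\fora$ to give $\Next\Fas\fora\preceq\Fa\Fas\fora$; and finally \TirName{\infmononame} applied to \TirName{\semicontname}.

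The main obstacle is this last chain. Specifically, one has to see that the descending half of flatness, through \TirName{\descitername}, is exactly what lets the shift produced by \TirName{\infascname} turn into an extra $\funca$. The remaining bookkeeping consists of rewriting identities under contexts via \TirName{\antisymmname} and translating the AIC identity back to the lattice statement. Both should be straightforward.
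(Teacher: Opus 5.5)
Your proposal is correct and is essentially the paper's proof. The prefixed direction coincides with the bottom tree of \Cref{fig:ol-postfp-proof}, and the postfixed direction coincides with the top tree together with the derivation of \TirName{\infsupquasipostfpname}; the only cosmetic difference is that you get $\Inf\Sup\Fas\fora \preceq \Next\Inf\Sup\Fas\fora$ directly from \TirName{\infascname}, whereas the paper expands with \TirName{\infexpandname} and discards the unshifted meetand via \TirName{\meetintroLname}. When transferring the derived identity back to $L$, remember that besides \Cref{thm:soundness-aic1} you also need \TirName{\alephcontname} and \TirName{\omegacocontname} to be valid for the concrete $F$, which is exactly what \Cref{sec:proof:omegaCont,sec:proof:alephCont} provide.
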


\begin{remark}
There is also a transfinite version of \autoref{thm:olszewski} that only assumes monotonicity of~$F$, see~\cite[Thm.~1]{olszewski21_order}. Similarly to the transfinite version of the Knaster-Tarski Theorem by Cousot \& Cousot~\cite{cousot1979constructive}, this result is currently out of scope of AIC.
\lipicsEnd%
\end{remark}

\begin{figure*}[t]
	\begin{minipage}{.999\linewidth}%
	\scriptsizerules%
	\begin{gather*}
		\cutsrulet{
			\linfsupquasipostfprulet{
				\sassume{\Next \fora}{\fora}
			}{
				\Inf \Sup \Fas \fora
				\tss 
				\Inf \Sup \Fa \Fas \fora
			}
			\\
			\infmonorule{
				\axsemicont{\Fas \fora}
			}{
				\Inf \Sup \Fa \Fas \fora
				\tss 
				\Inf \Fa \Sup \Fas \fora
			}
			\\
			\omegacocontrulet{
				\axsupdesc{\Fas \fora}
			}{
				\Inf \Fa \Sup \Fas \fora
				\tss 
				\Fa \Inf \Sup \Fas \fora
			}
		}{
			\Inf \Sup \Fas \fora
			\tss 
			\Fa \Inf \Sup \Fas \fora
		}
		\\[1em]
		\cutsrulet{
			\axsemicocont{\Sup \Fas \fora}
			\\
			\infmonorule{
				\axalephcont{\Fas \fora}
			}{
				\Inf \Fa \Sup \Fas \fora
				\tss 
				\Inf \Sup \Fa \Fas \fora
			}
			\\
			\infsupquasiprefprulet{
				\sassume{\fora}{\Next \fora}
			}{
				\Inf \Sup \Fa \Fas \fora
				\tss 
				\Inf \Sup \Fas \fora
			}
		}{
			\Fa \Inf \Sup \Fas \fora
			\tss 
			\Inf \Sup \Fas \fora
		}
	\end{gather*}%
	\normalrules%
	\end{minipage}%
	\caption{Proof of \TirName{\olpostfpname} (top tree) and \TirName{\olprefpname} (bottom tree).}
	\label{fig:ol-postfp-proof}
\end{figure*}

\paragraph*{AIC Formalization of Olszewski Fixed Points (\Cref{thm:olszewski,cor})} 

\begin{wrapfigure}[10]{r}{.52\columnwidth}
\vspace*{-1\intextsep}%
\begin{minipage}{.999\linewidth}
\begin{infbox}%
	\smallrules%
	\begin{center}
		$\olpostfprule{
			\metaassume{$\Fa$ \textsc{\omegacocontname}}
			\\
			\Next \fora \tss \fora
		}{
			\Inf \Sup \Fas \fora \tss \Fa \Inf \Sup \Fas \fora
		}$
		\\[1em]
		$\olprefprule{
			\metaassume{$\Fa$ \textsc{\alephcontname}}
			\\
			\fora \tss \Next \fora
		}{
			\Fa \Inf \Sup \Fas \fora \tss \Inf \Sup \Fas \fora
		}$
		\\[1em]
		$\olfprule{
			\metaassume{$\Fa$ \textsc{\omegacocontname}, $\Fa$ \textsc{\alephcontname}}
			\\
			\Next \fora \tss \fora
			\\
			\fora \tss \Next \fora
		}{
			\Inf \Sup \Fas \fora \hqeq \Fa \Inf \Sup \Fas \fora
		}$
	\end{center}%
	\normalrules%
\end{infbox}%
\end{minipage}%
\end{wrapfigure}%
%
%
As in \Cref{sec:case_studies:kleene}, the fixed seed element~$\laelema$ is associated with a variable $\fora$. 
\Cref{thm:olszewski} then reads in AIC as \TirName{\olpostfpname}, 
where -- analogously to the AIC formalization of \Cref{thm:tkp} -- we prove a more general statement and hence have an additional premise $\Next \fora \preceq \fora$. \Cref{thm:olszewski} is obtained from instantiating $\fora$ with the flat sequence $\constseq{\laelema}$, for which this premise holds vacuously.
Similarly, \Cref{cor} is formalized as \TirName{\olfpname}.

\paragraph*{Algebraic Proof of Olszewski Fixed Points (\Cref{thm:olszewski,cor})} 

\begin{wrapfigure}[6]{r}{.52\columnwidth}
\vspace*{-1\intextsep}%
\begin{minipage}{.999\linewidth}
\begin{infbox}%
	\smallrules%
	\begin{center}
		$\infsupquasipostfprule{
			\Next \fora \tss \fora
		}{
			\Inf \Sup \Fas \fora \tss \Inf \Sup \Fa \Fas \fora
		}$
		\\[.75em]
		$\infsupquasiprefprule{
			\fora \tss \Next \fora
		}{
			\Inf \Sup \Fa \Fas \fora \tss \Inf \Sup \Fas \fora
		}$
	\end{center}%
	\normalrules%
\end{infbox}%
\end{minipage}%
\end{wrapfigure}%

Similarly to the proof of \textsc{\small\tkpfpname}, it is useful to first prove that $\Fas \fora$ is a \emph{mino-majoral quasi fixed point of $\Fa$}, i.e.\ \emph{under the scope of a nested minorum-majorum},  it makes no difference whether or not $\funca$ is applied once more to $\Fas \fora$; formally, $\Inf\Sup \Fa \Fas \fora \hqeq \Inf\Sup \Fas \fora$.
Thus, we derive the two rules \TirName{\infsupquasipostfpname} and \TirName{\infsupquasiprefpname} above.

For \textsc{\small\infsupquasipostfpname}, consider below the left proof tree.
Here, we can expand a $\Infsymbol$ on the left, which we did not have available for proving \textsc{\small\supquasipostfpname}.
This is the algebraic essence of why Olszewski will not require $\fora \preceq \Fa \fora$ or alike.

For \textsc{\small\infsupquasiprefpname}, we have the right proof tree below.
Notice that we re-used the $\TirName{\supquasiprefpname}$ rule from the proof of TKP.
Using \textsc{\small\infsupquasipostfpname}, we prove \textsc{\small\olpostfpname} in \Cref{fig:ol-postfp-proof} (top).
Finally, with \textsc{\small\olpostfpname} at hand, the derivation of \textsc{\small\olfpname} follows immediately from \TirName{\olprefpname} which we derive also in \Cref{fig:ol-postfp-proof} (bottom), 
thus concluding our algebraic proofs of \Cref{thm:olszewski}  and \Cref{cor}.

\vspace*{-.75\abovedisplayskip}%
\footnotesizerules%
\begin{gather*}
	\infexpandrule{
		\meetintroLrulet{
			\nextinfcommrulet{
				\infmonorulet{
					\nextsupcommrulet{
						\supmonorulet{
							\desciterrulet{
								\Next \fora \tss \fora
							}{
								\Next \Fas \fora
								\tss  
								\Fa \Fas \fora
							}
						}{
							\Sup \Next \Fas \fora 
							\tss 
							\Sup \Fa \Fas \fora
						}
					}{
						\colmark{\Next \Sup \Fas \fora}
						\tss 
						\Sup \Fa \Fas \fora
					}
				}{
					\Inf \Next \Sup \Fas \fora 
					\tss 
					\Inf \Sup \Fa \Fas \fora
				}
			}{
				\colmark{\Next \Inf \Sup \Fas \fora}
				\tss 
				\Inf \Sup \Fa \Fas \fora
			}
		}{
			\Sup \Fas \fora \llmin \Next \Inf \Sup \Fas \fora 
			\tss 
			\Inf \Sup \Fa \Fas \fora
		}
	}{
		\colmark{\Inf \Sup \Fas \fora}
		\tss 
		\Inf \Sup \Fa \Fas \fora
	}~~~~
	\qquad\qquad
		\infmonorule{
		\supquasiprefprulet{
			\fora
			\tss 
			\Next \fora
		}{
			\Sup \Fa \Fas \fora
			\tss 
			\Sup \Fas \fora
		}
	}{
		\Inf \Sup \Fa \Fas \fora
		\tss 
		\Inf \Sup \Fas \fora
	}\tag*{\normalsize\qed}
	\end{gather*}%
\normalrules%
\normalsize%
\subsection{Latticed $\boldsymbol{k}$-Induction}
\label{sec:case_studies:kinduction}

\noindent%
Latticed $k$-induction \cite{kind_cav} is a fixed point-theoretic generalization of the prominent $k$-induction verification principle \cite{DBLP:conf/fmcad/SheeranSS00} for hard- and software model checking \cite{sofware_k_induction,boosting_k_induction,property_directed_k_induction,k_induction_without_unrolling}. 
Latticed $k$-induction has successfully been employed for the verification of probabilistic programs~\cite{kind_cav,DBLP:journals/corr/abs-2403-17567} and probabilistic pushdown systems \cite{DBLP:conf/tacas/WinklerK23}. 

Fix an $\omega$-continuous $\funca\colon \ladomain \to \ladomain$ and an element $\laelema \in \ladomain$. We call 
%
	\mbox{$\funckind_{\laelema} \colon  \ladomain \to \ladomain$} defined as 
	$\funckind_{\laelema}(\laelemb) = \funca(\laelemb) \lameet \laelema$
%
the \emph{$k$-Induction operator}. 
The principle of $k$-induction is a generalization of Park induction: 
\begin{theorem}[Latticed $k$-Induction~\textnormal{\cite{kind_cav}}]
	\label{thm:kind_prose}
	For every $k\in\Nats$, 
	\begin{align*}
		   \funca\Bigl(\funckind_{\laelema}^k (\laelema)\Bigr) \llaord \laelema 
		  \qqimplies
		  \lfp \funca \llaord \laelema~.
	\end{align*}
\end{theorem}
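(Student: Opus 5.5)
The plan is to reduce latticed $k$-induction to Park induction (\Cref{thm:generalized_park} with $\laelemb = \bot$, i.e.\ the Kleene/Park instance of TKP) applied not to $\laelema$ itself but to a suitably strengthened element. Set $\laelemb_k := \funckind_{\laelema}^k(\laelema)$ and define the descending family $\laelemb_0 = \laelema$, $\laelemb_{i+1} = \funca(\laelemb_i) \lameet \laelema$. The first step is to show by induction on $i$ that $\laelemb_{i+1} \laord \laelemb_i$. The base case $\laelemb_1 = \funca(\laelema)\lameet\laelema \laord \laelema$ is immediate. For the step, assume $\laelemb_{i+1} \laord \laelemb_i$; then monotonicity of $\funca$ gives $\funca(\laelemb_{i+1}) \laord \funca(\laelemb_i)$, and meeting both sides with $\laelema$ yields $\laelemb_{i+2} \laord \laelemb_{i+1}$. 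In AIC terms, this is just \textsc{\Fmononame} followed by the lattice rules \textsc{\meetintroLname}/\textsc{\meetintroRname}, applied to flat sequences.

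The key step is to show that $\laelemb_k$ is a prefixed point of $\funca$ under the hypothesis $\funca(\laelemb_k) \laord \laelema$. From the descending property, $\laelemb_k \laord \laelemb_{k-1}$ for $k \ge 1$, so $\funca(\laelemb_k) \laord \funca(\laelemb_{k-1})$. Combining this with the hypothesis gives $\funca(\laelemb_k) \laord \funca(\laelemb_{k-1}) \lameet \laelema = \laelemb_k$. For $k=0$, the hypothesis is literally $\funca(\laelema)\laord\laelema$. In either case, $\funca(\laelemb_k)\laord\laelemb_k$.

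Park induction, instantiated via \textsc{\tkpfpname}, \textsc{\tkpabovename}, and \textsc{\tkpleastname} with $\fora = \bot$ and $\forb = \constseq{\laelemb_k}$ (which is flat, so $\Next\forb\preceq\forb$ holds), then gives $\lfp\funca \laord \laelemb_k$. Since every $\laelemb_i \laord \laelema$ (each $\laelemb_i$ with $i\ge1$ is a meet with $\laelema$), we conclude $\lfp \funca \laord \laelema$ by transitivity.

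I expect the main obstacle to be carrying this out genuinely inside AIC rather than at the lattice level. The signature has no native $k$-fold iteration of a derived operator $\funckind_{\laelema}$, so one would either fix $k$ and unfold $k$ applications of $\fctb(\cdot)\lmin\fora$ into an explicit term, giving a schema of derivations with the induction on $i$ done at the meta-level, or introduce a fresh function symbol $\fctb$ with an axiom $\Fa x \lmin \fora$-style definition. The latter would require care, since $\fora$ is a variable rather than a constant. I would try the first option: prove the descending lemma and the prefixed-point lemma as meta-level inductions over derivations, and invoke \textsc{\tkpleastname} only at the end.
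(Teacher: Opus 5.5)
Your proposal is correct and is essentially the paper's proof. The paper also shows $\Fkinditer{k+1}\forb \preceq \Fkinditer{k}\forb$ and $\Fkinditer{k}\forb \preceq \forb$ by induction on $k$, and it derives the prefixed-point property $\Fa \Fkinditer{k}\forb \preceq \Fkinditer{k}\forb$ from the hypothesis exactly as in your key step (monotonicity plus meet introduction). It then concludes with \textsc{tkp-least} at $\fora = \bot$, followed by transitivity.

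The one difference is that the paper works with an arbitrary descending sequence $\forb$ rather than your flat $\overline{\laelemb_k}$. It therefore needs one further inductive lemma, $\Next\forb \preceq \forb \Rightarrow \Next\Fkinditer{k}\forb \preceq \Fkinditer{k}\forb$, to discharge the third premise of \textsc{tkp-least}. You would need the same lemma if you carried out your fully-in-AIC plan with a variable $\forb$.
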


\paragraph*{AIC formalization of $\boldsymbol{k}$-Induction (\Cref{thm:kind_prose})}

Analogously to \Cref{sec:case_studies:kleene}, we associate the fixed lattice element $\laelema$ with a fixed $\forb$. 
To formalize the $k$-fold application of the $k$-induction operator, we define for all $k \in\Nats$ the term \mbox{$\Fkinditer{k} \forb$ recursively by}
\[
	\Fkinditer{k} \forb \eeq 
	\begin{cases}
		 \forb &\text{if $k=0$} \\
		 \Fa \Fkinditer{k-1} \forb ~~\lmin~~\forb  &\text{otherwise}~.
	\end{cases}
\]
The above is meant to define \emph{\underline{s}y\underline{ntactic}} equality of terms, so e.g.\ defining that $\Fkinditer{2}\forb$ is a name for the term $\Fa \Fkinditer{1} \forb \lmin\forb$.

\begin{wrapfigure}[3]{r}{.56\columnwidth}
\vspace*{-1\intextsep - .75\baselineskip}%
\begin{minipage}{.999\linewidth}
\begin{infbox}%
	\smallrules%
	\begin{center}
		$\kindrule
	{ \metaassume{\Fa~\textsc{\omegacontname}}
		\\
		\Fa \Fkinditer{k} \forb \tss \forb}
	{\Next \forb \tss \forb}
	{ \Sup \Fas \bot \tss \forb}$
	\end{center}%
	\normalrules%
\end{infbox}%
\end{minipage}%
\end{wrapfigure}%
The algebraic version of \Cref{thm:kind_prose} then reads as a family of  qua\-si-in\-e\-qual\-i\-ties \TirName{\kindname} (shown right, parametrized in $k$).
We require the additional premise $\Next \forb \tss \forb$ for the same reasons as in the formalization of \Cref{thm:tkp}. 
Classical $k$-induction  (\Cref{thm:kind_prose}) is obtained from 
\TirName{\kindname} for $\forb = \constseq{\laelema}$ for which $\Next \forb \tss \forb$ holds vacuously. We prove \TirName{\kindname} enitrely within $\AIC_1$ for every $k\in\Nats$, constructing proofs inductively on $k$~(cf.\ \appref{sec:app-k-ind}).

\section{Mechanization}
\label{sec:isabelle}

We mechanized AIC in Isabelle/HOL to confirm soundness and usability~\cite{aic-afp}.
The mechanization is a shallow embedding of sequence algebras on complete
lattices (cf.\ \Cref{sec:synsem}).
A shallow embedding means we do not
separately formalize the algebra as an object or define its syntax. Instead, we
directly build on Isabelle/HOL's formalization of complete lattices and define
the operators of AIC as functions and constants in Isabelle. We use
functions from $\omega$ to lattice elements as the type of sequences.
With this, the definition of operators is straightforward, and integrates so
well into the existing Isabelle lattice setup that for instance our $\lmin$ and
$\lmax$ coincide with existing Isabelle type class operators $\sqcup$ and
$\sqcap$ \mbox{applied to functions}.

The axioms of AIC can then be derived as lemmas in Isabelle from the definition of the operators. 
This builds up the algebra in
the prover and at the same time constitutes a soundness proof of $\AIC_0$ as
well as any derived rules (e.g.\ the axioms in $\AIC_1 \setminus \AIC_0$).

Using our formalization, we proved all fixed point
theorems of \Cref{sec:case_studies}. 
The proofs follow
the paper presentation, with intermediate steps solved fully
automatically by Isabelle's sledgehammer.
Beyond that, if the extended axiom system $\AIC_1$ is available,
\emph{sledgehammer is even able to prove many high-level theorems such as
Tarski-Kantorovich or Park induction fully automatically}. 
We also found that
Isabelle's counterexample finder is effective in identifying misstated
lemmas. Both show promise for the use of the algebra and further exploration of
fixed point principles directly in the prover. 
Conversely, if we remove
important derivable $\AIC_1$ axioms such as $\TirName{\supexpandname}$ as explicit lemma statements and force the prover to work directly with expanded
definitions and indexed suprema and infima as in traditional fixed point
formalizations, automation tends to break down quickly. 
Sledgehammer is able to
recover some lemmas by reproving them inline, but higher-level theorems remain
out of reach. 
This means that $\AIC_1$ brings with it not only increased clarity, but also \mbox{improves proof search}.

\section{Completeness and Incompleteness}
\label{sec:completeness}

In this section, we investigate aspects of completeness of the axiom systems presented in \Cref{sec:calc}.
Completeness of $\AIC_0$ means that if a finitary quasiequation is valid for sequence algebras, then it is derivable from \(\AIC_0\).
In this section, we show that $\AIC_0$ is \emph{not} complete.
In fact, even if we restrict just to the \(\First\) and \(\Next\) operations, no finite axiom system of finitary quasiequations can be complete.
This situation changes when allowing \emph{infinitary} quasiequations and a slightly infinite notion of derivation, as we will see in \Cref{sec:infinitary-ax}.

\subsection{Incompleteness of Finitary Axiomatizations}
\label{sec:incompleteness}

We now show that the validity of finitary quasiequations that just involve the \( \Firstsymbol \) and \( \Nextsymbol \) operations has no sound and complete finite axiomatization. To this end, it suffices to consider a simplified version of sequence algebras that drops all lattice-theoretic structure.  

\begin{definition}[Discrete Sequence Algebras]
A \emph{discrete sequence algebra} is given by the set~$A^\omega$ of sequences in a set $A$ equipped with the two unary operations $\Firstsymbol,\Nextsymbol \colon A^\omega \to A^\omega$ defined exactly the same as in \Cref{def:seq-algebras} (\ref{def:head shift operations}).
\end{definition}%
All syntactic concepts for sequence algebras (terms, quasiequations, finite derivations) carry over to discrete sequence algebras by restricting all definitions to $\Firstsymbol$ and $\Nextsymbol$. 
We speak of \emph{discrete} terms, quasiequations, etc.\ for distinction. 
For instance, a discrete quasiequation is a quasi{\-}equation whose terms use only $\Firstsymbol$ and $\Nextsymbol$.

A discrete quasiequation is \emph{valid for discrete sequence algebras} if it is satisfied by every discrete sequence algebra. A discrete axiom system \(\Ax\) (i.e.~a set of discrete finitary quasiequations) is \emph{sound} for discrete sequence algebras if every quasiequation from \(\Ax\) is valid for discrete sequence algebras. It is \emph{complete} for discrete sequence algebras if every quasiequation that is valid for discrete sequence algebras has a finite derivation from \(\Ax\).%
\begin{restatable}[Incompleteness]{theorem}{incompleteness}\label{thm:incompleteness}
There is no finite discrete axiom system that is both sound and complete for discrete sequence algebras.
\end{restatable}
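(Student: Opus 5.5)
Our strategy is the classical compactness argument for the non-finite axiomatizability of quasivarieties. We exhibit a family of valid discrete quasiequations $\quasieq_1, \quasieq_2, \ldots$ together with, for each $N$, a ``non-standard'' structure $\Alg_N$. The structure $\Alg_N$ should satisfy every valid discrete quasiequation with at most $N$ variables, or at most $N$ premises, or terms of bounded depth, yet violate $\quasieq_N$. A finite sound axiom system $\Ax$ only uses quasiequations of some bounded size $N_0$. By soundness, each axiom is valid, so $\Alg_{N_0} \models \Ax$. Derivability from $\Ax$ preserves satisfaction, because equational logic is sound for arbitrary structures. Hence $\Ax \nvdash \quasieq_{N_0}$, although $\quasieq_{N_0}$ is valid, and completeness fails.

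For the family, the natural candidate exploits that a sequence is determined by all its heads $\First \Next^k \fora$. A first attempt is the quasiequations
\begin{align*}
	\Bigl(\textstyle\bigwedge_{k<n} \First \Next^k \fora \eeq \First \Next^k \forb\Bigr) \wedge \Next^n \fora \eeq \Next^n \forb \qmorespace{\implies} \fora \eeq \forb,
\end{align*}
which are valid for discrete sequence algebras. However, these are probably derivable uniformly from the $n=1$ instance by induction, so they will not separate. I would therefore look for a family whose validity genuinely depends on ``cyclic'' or periodicity behaviour. An example is
\begin{align*}
	\Next^n \fora \eeq \fora \qmorespace{\implies} \Next \First \fora \eeq \First \fora,
\end{align*}
or a statement like: if $\First \Next^k \fora = \First \fora$ for all $k<p$ and $\Next^p \fora = \fora$, then $\fora$ is flat ($\Next \fora = \fora$). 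For prime periods $p$, these encode information not reducible to smaller instances.

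For the counter-models, I would take $\Alg_p$ to be a structure on sequences indexed by the cyclic group $\mathbb{Z}/p$ instead of $\omega$, that is, periodic or cyclic streams. In $\Alg_p$ the operation $\Next$ is rotation and $\First$ gives the constant at index $0$. The key lemma is that for every $m$ there is $p$ (e.g.\ a prime $p > m$, or $p$ large relative to the term depth) such that every discrete quasiequation of ``size'' $\le m$ that is valid in all discrete sequence algebras also holds in $\Alg_p$. The intended argument is that such a quasiequation only inspects finitely many shifts below $p$. Hence any counterexample assignment in $\Alg_p$ could be unrolled into an $\omega$-indexed sequence algebra (repeat the cycle), giving a counterexample there. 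Unrolling a cyclic stream into its periodic $\omega$-stream is in fact a homomorphism into a genuine discrete sequence algebra. I would need the embedding direction to reflect premises and conclusions, which injectivity gives. This suggests instead using a structure that is \emph{not} embeddable, with a twist that only becomes visible after $p$ shifts.

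The main obstacle is precisely this: finding counter-models that are not subalgebras of sequence algebras, since quasiequations are preserved under subalgebras and products. I would try ``twisted'' streams: elements are pairs $(\varphi, i)$ where after $p$ shifts a nontrivial permutation $\pi$ of $A$ is applied. Locally, within fewer than $p$ steps, such a structure looks like a sequence algebra, so every small valid quasiequation holds via a local unrolling argument. Globally, however, $\Next^p$ acts as $\pi$, which violates a size-$p$ valid law such as $\bigl(\bigwedge_{k\le p}\First\Next^k \fora = \First \Next^k\forb\bigr) \implies \First\Next^p\fora=\First\Next^p\forb$ in combination with relations forcing heads. The technical core is the local-unrolling lemma, showing that satisfaction of any quasiequation with terms of depth $<p$ transfers from all sequence algebras to the twisted model. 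I expect this lemma to be the main difficulty.
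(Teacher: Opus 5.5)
Your overall strategy is sound, and it differs from the paper's. The paper normalizes a finite axiom set (homogeneous premises over two variables, conditions N1--N5), bounds the periods $|k-l|$ in its premises by some $N$, and then does a case analysis on the last rule of a minimal-height derivation of a critical quasiequation. You instead want a single non-standard model that satisfies all small valid quasiequations but refutes a large one. As written, however, there is a genuine gap: the counter-model and the local-unrolling lemma carry the whole content of the theorem, you supply neither, and the concrete candidates you mention fail.

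\begin{itemize}
\item Cyclic streams embed into sequence algebras, as you yourself note, so they refute nothing.
\item The twisted streams are never defined. Any definition must satisfy small valid laws such as $\Next\First x=\First x$, so $\First$ must land in $\Next$-fixed elements. A counter that $\Next$ increments therefore cannot sit on flat elements.
\item The law you propose to violate, $\bigl(\bigwedge_{k\le p}\First\Next^k a=\First\Next^k b\bigr)\implies\First\Next^p a=\First\Next^p b$, has its conclusion among its premises ($k=p$). It holds in every structure.
\item The size measure cannot be term depth alone. The critical quasiequation \eqref{eq:not-provable} follows, via the substitution $x_j:=\Next^{j-1}a$ plus symmetry and transitivity, from the valid depth-one quasiequation with premises $\Next x_j=x_{j+1}$ ($1\le j<N$), $\Next x_N=x_1$, $\First x_j=\First x_1$ and conclusion $\Next x_1=x_1$. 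The measure must bound the total shift offset over all premises.
\end{itemize}

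Your plan can be completed with a simple model. Let $B_N=\mathbb{Z}/N\cup\{\ast\}$, with $\Next$ the successor on $\mathbb{Z}/N$, $\Next\ast=\ast$, and $\First$ constantly $\ast$. For $N\ge 2$ it refutes \eqref{eq:not-provable} at $a=0$.

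For the lemma, normalize terms to $\Next^k x$ or $\First\Next^k x$; this is sound both in sequence algebras and in $B_N$. Let $D$ be the sum of $|k-l|$ over the premises $\Next^k x=\Next^l y$. Suppose $D<N$ and $\beta$ satisfies the premises in $B_N$ but not the conclusion. Build an assignment over $\{\bot,a_1,a_2,\ldots\}$:
\begin{itemize}
\item send every variable with $\beta(x)=\ast$ to $\overline{\bot}$;
\item send every other variable to $\bot^{\tau(x)}a_1a_2\cdots$, where $\tau(x)\equiv-\beta(x)\pmod N$ is propagated along a spanning forest of the premise graph so that $\tau(y)-\tau(x)=l-k$ on each tree edge.
\end{itemize}
On any non-tree edge the discrepancy is a multiple of $N$ of absolute value at most $D<N$, hence zero. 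Shifting each component by multiples of $N$ makes every $\tau(x)$ exceed the head indices in the premises. Then all $\First$-premises hold; mixed premises $\Next^k x=\First\Next^l y$ already force $\beta(x)=\ast$.

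A failing conclusion stays failing. A $\First$-identity never fails in $B_N$. Otherwise, distinct residues give distinct offsets, and non-flat sequences differ from flat ones. This contradicts validity, so every valid quasiequation with $D<N$ holds in $B_N$.

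Now choose $N$ above the $D$-values of a finite sound $\Ax$. Then $B_N\models\Ax$, so \eqref{eq:not-provable} is not $\Ax$-derivable even though it is valid. This route is much shorter than the paper's, but this lemma, not the compactness framing, is what your proposal is missing.
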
%
Intuitively, the reason is that a finite discrete axiom system cannot derive non-trivial statements about periodic sequences with large periods, since the finitely many finitary axioms necessarily only contain statements about periods of bounded length. Specifically, for each positive integer $N$, consider the discrete quasiequation%
\begin{equation}\label{eq:not-provable} 
	\Nextsymbol^N\, \fora \eeq a \quad{\wedge}\quad \bigwedge_{i=0}^{N-1}~ \First\Nextsymbol^i\, a \eeq \First\Nextsymbol^{i+1}\, a
	\qmorespace{\implies}
	\Next a \eeq a
\end{equation}
stating that if a sequence is periodic with period length $N$ and its first $N$ elements are equal, then the sequence is flat.
While~\eqref{eq:not-provable} is valid for discrete sequence algebras, one can show that for every finite and sound discrete axiom system $\Ax$, this quasiequation is not derivable from~$\Ax$ for sufficiently large $N$. 
Thus $\Ax$ is not complete.
 
Despite the simple underlying idea, the proof of \Cref{thm:incompleteness} is fairly involved; it rests on the introduction of a suitable normal form for quasiequations and a careful analysis of possible derivations from normalized quasiequations. 
For details, see \appref{app:incompleteness}.

\subsection{A Complete Infinitary Axiomatization}
\label{sec:infinitary-ax}

As we have just seen, no finite set of finitary quasiequations can completely capture validity.
For achieving completeness, we need to add possibly \emph{infinitary} quasiequations, i.e.\ \mbox{\((\bigwedge_{i \in I} {^i}s = {^i}t) \implies s = t\)} where \(I\)~may be infinite.
As the size of $I$ determines the branching degree in derivation trees, we can no longer be satisfied with our finite derivation trees (cf.\ \Cref{def:finite provability}).
It turns out that a slightly different kind of finiteness provides just the right notion of derivation for infinitary quasiequational logic.%
\begin{definition}[Well-founded Derivation]
    \label{def:well-founded}
    An inference tree is \emph{well-founded} if every branch of the tree has finite height. 
    Given an axiom system \(\Ax\) and an quasiequation \(\textup{\textsc{q}}\), a \emph{well-founded} derivation of \(\textup{\textsc{q}}\) from $\Ax$ is defined exactly as in \Cref{def:finite provability}, except that the underlying inference tree is well-founded in lieu of finite.
    A quasiequation~$\quasieq$ is \emph{well-foundedly derivable 
    from \(\Ax\)}, denoted \(\Ax\vdash_{\wf} \quasieq\), if $\quasieq$ has a well-founded derivation from $\Ax$.
\end{definition}%
By K\H{o}nig's lemma, a finitely branching well-founded tree is finite.
Thus our previous notion of finite derivability (\Cref{def:finite provability}) coincides with well-founded derivability when all of the quasiequations from \(\Ax\) are finitary.
In particular, since all \(\AIC_0\) axioms are finitary, a quasiequation \(\quasieq\) is derivable from \(\AIC_0\) if and only if \(\AIC_0 \vdash_\wf \quasieq\).

Importantly, we also cannot be satisfied with inference trees of finite height, even with infinitary branching.
Indeed, \(\Ax\) could contain \((\bigwedge_{i \in I} {^i}s = {^i}t) \implies s = t\) with \(I = \omega\) and each identity \({^i}s = {^i}t\) could require a derivation of height \(i\), like so:

\vspace*{-1\baselineskip}%
\smallrules%
\begin{equation*}
	\inferrule{
		\inferrule*{
			\semilightgray{\text{\scriptsize (derivation height 0)}}
		}{
			{^0}s = {^0}t
		}
		\\
		\inferrule*{
			\inferrule*{
				\semilightgray{\text{\scriptsize (derivation height 1)}}
			}{
				\semilightgray{\cdots}
			}
		}{
			{^1}s = {^1}t
		}
		\\
		\inferrule*{
			\inferrule*{
				\inferrule*{
					\semilightgray{\text{\scriptsize (derivation height 2)}}
				}{
					\semilightgray{\cdots}
				}
			}{
				\semilightgray{\cdots}
			}
		}{
			{^2}s = {^2}t
		}
		\\
		\semilightgray{\cdots}
	}{
		s = t
	}
\end{equation*}%
\normalrules%
This is a derivation tree of infinite height, but it also well-founded, because every branch has finite height. 
To summarize, well-founded derivations are sufficient and (in general) necessary for reasoning with infinitary inference rules.
This is the content of the theorem below, which appears to be folklore.%
\footnote{
    For the sake of convenience, a proof can be found in the appendix.
}
We state the theorem for AIC-structures, but it extends to algebras over any finitary signature.%
\begin{restatable}[Soundness and Completeness of Well-founded Derivations]{theorem}{wellfoundedcompleteness}
    \label{thm:quasiequational completeness}
    Let \(\Ax\) be an axiom system and let $\quasieq$ be a quasiequation. 
    Then \(\Ax\vdash_\wf \quasieq\) if and only if for any AIC-structure \(\Alg \models \Ax\), we also have \(\Alg \models \quasieq\).
\end{restatable}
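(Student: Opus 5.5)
The proof splits into soundness and completeness, and I would do soundness first, by induction on the well-founded derivation tree. Since every branch has finite height, well-founded induction on the subtree relation is available. The claim is: for every node labelled $\terma = \termb$ and every model $\Alg \models \Ax$ with an interpretation $\interpret$ satisfying all premises of $\quasieq$, we have $\interpretsem{\terma} = \interpretsem{\termb}$. There are three cases. Leaves labelled with premises hold by assumption. Rules of equational logic (\Cref{fig:equational logic}) are sound because $\interpretsem{-}$ is a homomorphism, so congruence holds. For an instance $\sigma$ of an axiom in $\Ax$, we use the substitution lemma ${^\interpret}\llbracket \terma[\sigma] \rrbracket = {^{\interpret'}}\llbracket \terma \rrbracket$ with $\interpret'(\fora) = \interpretsem{\sigma(\fora)}$; the induction hypothesis gives the premises under $\interpret'$, and $\Alg \models \Ax$ yields the conclusion.

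For completeness, I would build a free model relative to the premises. Fix $\quasieq = (\bigwedge_{i\in I} {^i}\terma = {^i}\termb) \implies \terma = \termb$. Let $\Ax_\quasieq$ denote $\Ax$ together with the premises, treated as variable-free constants (no substitution applied to them). Define a relation $\approx$ on $\Terms$ by setting $u \approx w$ iff there is a well-founded derivation of $u = w$ from $\Ax$ in which leaves may additionally be premises of $\quasieq$. The rules eq-reflex, symm, eq-trans and cong make $\approx$ a congruence. Grafting derivations (a well-founded tree with well-founded subtrees at its leaves is still well-founded) shows $\approx$ is closed under these rules. The quotient $\AlgTerm/{\approx}$ therefore carries an AIC-structure, and the canonical interpretation $\interpret(\fora) = [\fora]$ gives $\interpretsem{u} = [u]$ by induction on terms. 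In particular $\interpret$ satisfies every premise of $\quasieq$.

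The main obstacle is showing $\AlgTerm/{\approx} \models \Ax$. Take an axiom $(\bigwedge_{j\in J} {^j}u = {^j}w) \implies u = w$ and an arbitrary interpretation $\mathfrak{J}$ into the quotient. I would pick representatives $\sigma(\fora) \in \mathfrak{J}(\fora)$; this uses the axiom of choice, which is harmless. Then ${^\mathfrak{J}}\llbracket v \rrbracket = [v[\sigma]]$. If every premise holds, we obtain well-founded derivations $D_j$ of ${^j}u[\sigma] = {^j}w[\sigma]$ for all $j \in J$, possibly infinitely many, chosen again by choice. Putting the axiom instance at the root above all the $D_j$ gives a tree for $u[\sigma] = w[\sigma]$. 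Every branch passes through the root into a single $D_j$, so every branch is finite even though the height is unbounded. This is exactly why well-foundedness, rather than finite height, is the right notion.

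Finally, $\AlgTerm/{\approx}$ is a model of $\Ax$ whose canonical interpretation satisfies the premises, so the semantic hypothesis gives $[\terma] = [\termb]$. Hence $\terma \approx \termb$, which is a well-founded derivation of $\quasieq$ from $\Ax$ in the sense of \Cref{def:well-founded}.
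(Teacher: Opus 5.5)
Your proposal is correct and follows essentially the same route as the paper. Soundness is by well-founded induction over the nodes of the derivation. Completeness quotients the term algebra by well-founded derivability relative to the premises of $\quasieq$, and the key step is that grafting possibly infinitely many sub-derivations of unbounded height under an axiom instance still yields a well-founded tree. You are a bit more explicit than the paper about the substitution lemma and the choice of representatives for interpretations into the quotient, but the argument is the same.
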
%
\begin{figure}
    \centering
    \begin{adjustbox}{max width=1\linewidth}%
    \begingroup
    \def\arraystretch{2.5}
    \small
    \begin{tabular}[row sep=20cmem]{l @{\hspace{1em}} c}
        \toprule
        \textbf{\textsf{\large Lattice}}
        & \(\begin{gathered}
            \bot \ppreceq a 
            \qquad 
            a \ppreceq \top 
            \qquad
            a\lmax b \eeq b\lmax a \qquad a\lmin b \eeq b\lmin a
            \\
            a \lmin (a \lmax b) \eeq a \eeq a \lmax (a \lmin b)  
            \qquad 
            a \ppreceq b \morespace{\Longrightarrow} Fa \ppreceq Fb
            \\
            a \lmin (b \lmin c) \eeq (a \lmin b) \lmin c 
            \qquad
            a \lmax (b \lmax c) \eeq (a \lmax b) \lmax c
            \\
        \end{gathered}\)
        \\[1.5em]
        \midrule
        \textbf{\textsf{\large Heads}}
        & \(\begin{gathered}
            \hspace*{-1.25em}
            \First \bot \eeq \bot 
            \qquad
            \First \top \eeq \top
            \qquad
            \First (a \lmin b) \eeq \First a \lmin \First b 
            \qquad
            \First (a \lmax b) \eeq \First a \lmax \First b 
            \\
            \hspace*{-1.25em}
            \First F a \eeq F \First a 
            \qquad
            \First F^* a \eeq \Next\First a \eeq \Sup\First a
            \eeq \Inf\First a \eeq \First\First a \eeq \First a 
        \end{gathered}\) 
        \\[.5em]
        \midrule
        \textbf{\textsf{\large  Shifts}}
        & \(\begin{gathered}
            \hspace*{-1.25em}
            \Next \bot \eeq \bot 
            \qquad
            \Next \top \eeq \top 
            \qquad
			\Next (a\lmax b) \eeq \Next a \lmax \Next b
			\qquad
			\Next (a\lmin b) \eeq \Next a \lmin \Next b
	    \\
            \hspace*{-1.25em}
            \Next F a \eeq F \Next a 
            \qquad
            \Next F^* a \eeq F F^* \Next a
            \qquad
            \Next \Sup a \eeq \Sup \Next a 
            \qquad
            \Next \Inf a \eeq \Inf \Next a 
        \end{gathered}\)
        \\[.5em]
        \midrule
         \pbox{20cm}{\textbf{\textsf{\large Majora \&}}\\\textbf{\textsf{\large Minora}}}
        & \(\begin{gathered}
            a \ppreceq \Sup a
            \qquad
            \Inf a \ppreceq a
            \qquad 
            \Sup\Sup a \eeq \Sup a
            \qquad 
            \Inf\Inf a \eeq \Inf a
            \\
            \Next a \ppreceq a \morespace{\Longleftrightarrow} \Sup a \ppreceq a
            \qquad
            a \ppreceq \Next a \morespace{\Longleftrightarrow} a \ppreceq \Inf a
        \end{gathered}\)
        \\[.5em]
        \midrule
        \textbf{\textsf{\large Sequence}}
        &
        \(\begin{gathered}
            \bigwedge_{n \in \omega} \First \Next^n a \ppreceq b \morespace{\Longrightarrow} \First \Sup a \ppreceq b
            \qquad
            \bigwedge_{n \in \omega} b \ppreceq \First \Next^n a \morespace{\Longrightarrow} b \ppreceq \First \Inf a
            \\
            \bigwedge_{n \in \omega} \First \Next^n a \eeq \First \Next^n b \morespace{\Longrightarrow} a \eeq b
            \\
        \end{gathered}\)
        \\[2em]
        \bottomrule
    \end{tabular}
    \endgroup
    \end{adjustbox}
    \caption{
        \label{tab:complete AIC}
        The complete infinitary axiom system \({\AIC_\omega}\). Above, ${a,b,c\in \Vars}$ are fixed variables.%
    }
\end{figure}%
\noindent%
An \emph{infinitary} axiom system \(\AIC_\omega\) that is sound and complete for sequence algebras is given in \Cref{tab:complete AIC} (for economy of space, we present the axioms as implications rather than inference rules).
Note that there are exactly three infinitary axioms, namely the sequence axioms. 
To understand these axioms, it helps to recall (cf.\ \Cref{sec:sequence-algebras}) that \mbox{-- in} sequence \mbox{algebras --} $\First\Next^n a$ represents the~$n^{\textnormal{th}}$~element of a sequence $\fora$, $\First\Sup \fora$ the supremum of $\fora$, and $\First\Inf \fora$ its infimum. 
Intuitively, the first sequence axiom states that if every element of $\fora$ is bounded above by the entirety of a sequence~$\forb$ (cf.\ $\bigwedge_{n \in \omega} \First\Next^n a \preceq b$), then the supremum of~$\fora$ is also bounded above by the entirety of~$\forb$ (cf.\ $\First\Sup a \preceq b$). 
The second sequence axiom is the dual statement for infima. 
The third sequence axiom captures that if two sequences agree at every index (cf.\ $\bigwedge_{n \in\omega} \First\Next^n a = \First\Next^n b$), then the two sequences are equal (cf.\ $a = b$). 
This is needed in the embedding theorem, \cref{thm:embedding}.
\begin{theorem}[Soundness and Completeness of $\AIC_\omega$]
    \label{thm:completeness}
For any quasiequation $\quasieq$, \[ \text{$\AIC_\omega \vdash_{\wf} \quasieq$} \qqiff \text{$\quasieq$ is valid for sequence algebras}.\]
\end{theorem}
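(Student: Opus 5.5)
By \Cref{thm:quasiequational completeness}, $\AIC_\omega \vdash_\wf \quasieq$ holds iff $\quasieq$ is satisfied by every AIC-structure $\Alg \models \AIC_\omega$. It therefore suffices to show two things. First, every sequence algebra is a model of $\AIC_\omega$ (soundness). Second, every model of $\AIC_\omega$ is \emph{embedded} into a sequence algebra by an injective homomorphism (completeness). Quasiequations are preserved under taking substructures: if $h\colon \Alg \to \AlgB$ is an injective homomorphism and $\AlgB \models \quasieq$, then $\Alg \models \quasieq$. The reason is that premises $\interpretsem{{}^i s} = \interpretsem{{}^i t}$ in $\Alg$ transfer to $\AlgB$ via $h$, and the conclusion pulls back by injectivity. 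So validity for sequence algebras then implies validity in all models of $\AIC_\omega$.

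Soundness is a routine axiom-by-axiom check against \Cref{tab:semantics}, largely covered by \Cref{thm:soundness}. The head axioms use that $\First$ yields flat sequences and that $(\Fas\varphi)_0 = \varphi_0$. The sequence axioms use three facts: $(\First\Next^n\varphi)_m = \varphi_n$, $(\First\Sup\varphi)_m = \sup_k \varphi_k$, and that sequences are determined by their entries.

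For completeness (the embedding theorem referenced as \cref{thm:embedding}), I would fix a model $\Alg \models \AIC_\omega$ and work with the set $H = \{\First x \mid x \in A\}$ of \enquote{flat} elements. By $\First\First a = \First a$, $H$ is exactly the set of fixed points of $\First$. Using the head axioms, $H$ is closed under $\bot,\top,\lmax,\lmin,F$, so it is a bounded lattice with monotone maps $F|_H$. I would then take $L = H$, and the first and second sequence axioms should show that $H$ is countably complete. For a countable family $(h_n)_n$ in $H$, I need some $a$ with $\First\Next^n a = h_n$; as an intermediate lemma I would try to show that $\First\Sup a$ is the supremum of $\{\First\Next^n a\}_n$ in $H$. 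General countable subsets without a realizing $a$ need care: one may restrict to ones of the form $\{\First\Next^n a\}$, or alternatively let $L$ be a completion (e.g.\ the MacNeille completion) of $H$, with the sequence axioms ensuring that these specific sups and infs are preserved.

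The embedding is $h(x) = (\First\Next^n x)_{n\in\omega} \in L^\omega$, which is injective by the third sequence axiom. I then check preservation operation by operation. For $\lmax,\lmin,\bot,\top,F$ this follows from the head and shift distribution axioms. For $\Next$ it is immediate. For $\First$ it follows from $\First\Next^n\First x = \First x$, which is iterated from $\Next\First a = \First a$. For $\Sup$ it amounts to $\First\Next^n\Sup x = \First\Sup\Next^n x$, i.e.\ the supremum of the tail, which holds by $\Next\Sup = \Sup\Next$ and the lemma above. For $F^*$ I prove $\First\Next^n F^* x = F^n\First\Next^n x$ by induction on $n$, using $\Next F^* a = F F^* \Next a$, $\First F a = F\First a$, and $\First F^* a = \First a$.

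I expect the main obstacle to be the lattice-completeness step: showing that the supremum of $\{\First\Next^n a\}_n$ computed via $\First\Sup a$ is a genuine least upper bound inside $H$, and choosing $L$ so that the $\Sup$/$\Infsymbol$ clauses of \Cref{def:seq-algebras} agree with $h$. The first step I would attempt is to combine \emph{$a \preceq \Sup a$} and shift-commutation to obtain the upper bound, and the first sequence axiom (with $b$ flat in $H$) to obtain leastness.
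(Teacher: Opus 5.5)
\textbf{Overall.} Your architecture is the paper's: reduce via \Cref{thm:quasiequational completeness} to models of $\AIC_\omega$, embed each model into a sequence algebra over the lattice of heads $H$ via $x \mapsto (\First\Next^n x)_n$, get injectivity from the third sequence axiom, and use that $\First\Sup a$ is the least upper bound of $\{\First\Next^k a\}_k$ in $H$. There are, however, two problems: your primary choice $L = H$ does not work, and the completion you offer as a fallback is missing an ingredient.

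\textbf{$L = H$ fails.} The sequence axioms do \emph{not} make $H$ countably complete. They only provide suprema and infima of families of the form $\{\First\Next^n a\}_n$ for some $a$ in the model, and an arbitrary countable family in $H$ need not be realized this way. Concretely, take the sequence algebra over $\overline{\mathbb{R}}$ with every $F$ interpreted as the identity. The eventually constant sequences with values in $\mathbb{Q}\cup\{\pm\infty\}$ form a subalgebra of it, hence a model of $\AIC_\omega$. Its $H$ is isomorphic to $\mathbb{Q}\cup\{\pm\infty\}$, which is not countably complete. Restricting to realizable families does not help either, since a sequence algebra is by definition carried by $L^\omega$ for a countably complete $L$. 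So the completion is mandatory, not an alternative. Take the MacNeille completion $m\colon H\hookrightarrow L$. It is the completion $m$, not the sequence axioms, that preserves all joins and meets existing in $H$: the binary ones, the bounds, and the suprema $\First\Sup a$ produced by your lemma.

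\textbf{Missing extension of $F$.} Once $L \neq H$, your claim that the $F$- and $F^*$-clauses follow from the head and shift axioms no longer suffices. The sequence algebra needs each $F$ interpreted as a monotone map $\bar F$ on $L$, and the homomorphism property requires $\bar F\circ m = m\circ F|_H$. You never supply such an extension. This is exactly the paper's Monotone Lifting lemma, which defines $\bar F(V) = C(F[V])$ on MacNeille cuts. Alternatively, $\bar F(x) = \bigsqcup\{m(Fh) \mid h\in H,\ m(h)\sqsubseteq x\}$ works, by monotonicity of $F$ and order-reflection of $m$. With this extension you get $m(F^n h) = \bar F^n(m(h))$. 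Your induction for $\First\Next^n F^* x = F^n\First\Next^n x$ then yields the $F^*$-clause; note that it also needs the axiom $\Next F a = F\Next a$.

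\textbf{Upper bound in your lemma.} The facts $a\preceq\Sup a$ and $\Next\Sup a = \Sup\Next a$ are not enough. You also need that $\Sup a$ descends, i.e.\ $\Next^k\Sup a\preceq\Sup a$. This follows from $\Sup\Sup a = \Sup a$ together with the axiom $\Next a\preceq a \Leftrightarrow \Sup a\preceq a$, plus monotonicity of $\Next$ and $\First$.

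With these repairs your proof coincides with the paper's.
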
%
The proof of \Cref{thm:completeness} uses two results:
The first one is \Cref{thm:quasiequational completeness}, which tells us that to show that a quasiequation is well-foundedly derivable in \(\AIC_\omega\), it suffices to show that it is true in all models (not just sequence algebras) of \(\AIC_\omega\).
The second result is the \emph{\(\AIC_\omega\) embedding theorem} below, which tells us that every AIC-structure that satisfies \(\AIC_\omega\) embeds into some sequence algebra.%
\begin{restatable}[\(\AIC_\omega\)-Embedding]{theorem}{embeddingtheorem}
    \label{thm:embedding}
    Let \(\Alg_0 \models \AIC_\omega\) be an AIC-structure.
    Then there exists a complete lattice \((L,\, {\sqsubseteq})\) and a sequence algebra \(\Alg\) carried by \(L^\omega\) such that \(\Alg_0\) \emph{embeds} into \(\Alg\), in the sense that there exists an injective algebra homomorphism \(\varepsilon \colon \Alg_0 \hookrightarrow \Alg\).
\end{restatable}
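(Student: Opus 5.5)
The plan is to represent every element of $\Alg_0$ by the sequence of its \enquote{entries} $\First\Next^n x$, and to take as lattice (a completion of) the set of flat elements of $\Alg_0$. Concretely, put $L_0 = \setcomp{x \in A_0}{\First x = x}$. By the \textsf{Heads} axioms $\First\First a = \First a$, so $L_0$ is exactly the image of $\Firstsymbol$. Also $\First\bot=\bot$, $\First\top=\top$, and $\Firstsymbol$ preserves $\lmax,\lmin$. Hence $L_0$ is a bounded sublattice of the lattice reduct of $\Alg_0$, with order $\preceq$ given by the \textsf{Lattice} axioms. Since $\First F a = F\First a$, each $F^{\Alg_0}$ restricts to a map $L_0\to L_0$, and it is monotonic by the monotonicity axiom. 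I then define
\[
\varepsilon(x)_n \eeq \First\Next^n x \in L_0 .
\]
Injectivity of $\varepsilon$ is precisely the third \textsf{Sequence} axiom.

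Next I would check that $\varepsilon$ commutes with all operations except $\Supsymbol,\Infsymbol$, computing in $L_0$ first. For $\bot,\top,\lmax,\lmin$ this follows from $\Nextsymbol$ and $\Firstsymbol$ preserving them. For shift, $\varepsilon(\Next x)_n = \varepsilon(x)_{n+1}$ holds trivially. For head, $\First\Next^n\First x = \First x$ by $\Next\First a = \First a$. For $F$, $\First\Next^n F x = F\First\Next^n x$ by $\Next F = F\Next$ and $\First F = F \First$. For orbits, iterating $\Next F^* a = F F^*\Next a$ gives $\Next^n F^* x = F^n F^* \Next^n x$. Therefore
\[
\First\Next^nF^*x \eeq F^n\First F^*\Next^n x \eeq F^n\First\Next^n x,
\]
using $\First F^* a = \First a$. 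This is the required $F^n(\varepsilon(x)_n)$.

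For $\Supsymbol$ (and dually $\Infsymbol$) I need $\First\Next^n\Sup x$ to be the supremum in $L_0$ of $\setcomp{\varepsilon(x)_k}{k\ge n}$. Using $\Next\Sup = \Sup\Next$, this reduces to showing that $\First\Sup y$ is the least upper bound in $L_0$ of all $\First\Next^k y$, where $y = \Next^n x$. For the upper bound, first note that $\Sup y$ is descending: $\Sup\Sup y \preceq \Sup y$ gives $\Next\Sup y \preceq \Sup y$ via the $\Supsymbol$-induction equivalence. Then, since $y \preceq \Sup y$ and $\Firstsymbol,\Nextsymbol$ are monotone,
\[
\First\Next^k y \ppreceq \First\Next^k\Sup y \ppreceq \First\Sup y .
\]
Leastness among elements $b\in L_0$ is exactly the first \textsf{Sequence} axiom. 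So $\varepsilon$ is a homomorphism into a \enquote{sequence algebra over $L_0$}, as long as the relevant countable sups and infs are taken in $L_0$.

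The main obstacle is that $L_0$ need not be complete, and only the particular countable sups and infs of the form arising above are known to exist. I would pass to the Dedekind--MacNeille completion $L$ of $L_0$, which is a complete lattice into which $L_0$ embeds while preserving all existing joins and meets. Hence the computations of the previous paragraph remain valid in $L^\omega$. Each $F$ is extended to $L$ by $\hat F(\ell) = \sup\setcomp{F(m)}{m\in L_0,\ m\sqsubseteq \ell}$. This extension is monotone, agrees with $F$ on $L_0$ by monotonicity, and hence satisfies $\hat F^n = F^n$ on $L_0$. It then remains to verify that $\varepsilon\colon \Alg_0 \to \Alg$ is still an injective homomorphism, where $\Alg$ is the sequence algebra on $L^\omega$ with $\interpret_0(F) = \hat F$. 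Since all values of $\varepsilon$ lie in $L_0^\omega$ and every operation on such sequences was computed inside $L_0$ with existing joins and meets, this check goes through.
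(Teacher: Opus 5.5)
Your proposal is correct and follows essentially the same route as the paper. Both use the lattice of heads $L_0$ (the image of $\Firstsymbol$), the map $x \mapsto (\First\Next^n x)_n$ with injectivity from the third \textsf{Sequence} axiom, and the same operation-by-operation check, identifying $\First\Sup y$ as the least upper bound in $L_0$ via the first \textsf{Sequence} axiom, before passing to the Dedekind--MacNeille completion. Your extension $\hat F(\ell)=\sup\{F(u) \mid u\in L_0,\ u\sqsubseteq \ell\}$ is the same map as the paper's $\bar F(V)=C(F[V])$, only phrased order-theoretically, which makes $\hat F\circ m = m\circ F$ immediate since the set has maximum $F(v)$.
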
%
\begin{proof}[Proof Sketch for \Cref{thm:embedding}]
    Given an AIC-structure \(\Alg_0 \models \AIC_\omega\), we first need to find a complete lattice \(L\), which later will carry the sequence algebra \(\Alg\).
    We start by defining a (possibly incomplete) lattice \(L_0\) as follows: let \(A_0\) carry \(\Alg_0\), and define \(L_0 = \{\Firstsymbol^{\gray{\Alg_0}} x \mid x \in A_0\}\).
    Note that \(L_0 \subseteq A_0\).
    From the first row of the \textsf{\textbf{\small Heads}} axioms in \Cref{tab:complete AIC}, we see that \(L_0\) is a lattice when equipped with the lattice operations of \(\Alg_0\): given \(x,y \in A_0\), we define 
	\begin{equation*}
		\top^{\gray{L_0}} 
		\eeq 
		\Firstsymbol^{\gray{\Alg_0}}\, \top^{\gray{\Alg_0}}
		\qqand
		\Firstsymbol^{\gray{\Alg_0}}\, x 
		\morespace{\sqcap^{\gray{L_0}}} 
		\Firstsymbol^{\gray{\Alg_0}}\, y
		\eeq
		\Firstsymbol^{\gray{\Alg_0}}\, (x \mathbin{{\lmin}\!{}^{\gray{\Alg_0}}} y)
    \end{equation*}
    and similarly for \(\bot{\!}^{\gray{L_0}}\) and \(\sqcup^{\gray{L_0}}\).
    Furthermore, for each \(F \in \Funcs\), the operation \(F^{\gray{\Alg_0}}\) restricts to a monotone map \(F^{\gray{L_0}} \colon L_0 \to L_0\) defined by \(
        F^{\gray{L_0}}(\First^{\gray{\Alg_0}} x) = \First^{\gray{\Alg_0}} F^{\gray{\Alg_0}}(x)
    \) 
    for each \(x \in A_0\) by the bottom-left equation in the \textsf{\textbf{\small Heads}} axioms of \Cref{tab:complete AIC}.

    The only issue with stopping here and declaring that \(L = L_0\) is that \(L_0\) may be incomplete. 
    This is where we make use of a theorem of Macneille~\cite{MacNeil36}, which states that every lattice~\(L_0\) admits an {order-embedding}
    into a complete lattice \(L\) called the \emph{Dedekind-Macneille completion of \(L_0\)}.
    So, let \(m \colon L_0 \hookrightarrow L\) be the Dedekind-Macneille completion of \(L_0\).
    Every monotone map \(F_0 \colon L_0 \to L_0\) extends to a (not necessarily unique) monotone map \(F \colon L \to L\), in the sense that \(F \circ m = m \circ F_0\); see~\cite[Proposition 3.3]{TheunissenV07}.

    Thus, we arrive at the sequence algebra \(\Alg\) and the injective algebra homomorphism \(\Alg_0 \hookrightarrow \Alg\) that we were looking for.
    The sequence algebra \(\Alg\) is carried by \(L^\omega\), where \(L\) is the Dedekind-Macneille completion of \(L_0\), and for each function symbol \(F \in \Funcs\), the operation~\(F^{\gray{\Alg}}\) is the one obtained from some extension of \(F^{\gray{L_0}} \colon L_0 \to L_0\) to a monotone map \(F^{\gray{L}} \colon L \to L\) (see \Cref{def:seq-algebras}).
    The embedding \(\varepsilon \colon A_0 \to L^\omega\) is then obtained by forming sequences using the head and shift operators: for any \(x \in A_0\) and \(n \in \omega\), we set
    \(
        \varepsilon(x)_n = m\big(\First \Next^n x\big)
    \).
    Since \(m\) is an order-embedding, it follows from the last axiom in the \textsf{\textbf{\small Sequence}} axioms of \Cref{tab:complete AIC} that \(\varepsilon\) is injective.
    Verifying that \(\varepsilon\) is an algebra homomorphism amounts to a long calculation; see \appref{app:infinitary-ax} for details.
\end{proof}

\begin{proof}[Proof of~\Cref{thm:completeness}.]
    Soundness (\(\Longrightarrow\)) holds by \Cref{thm:quasiequational completeness} and the observation that all axioms in \(\AIC_\omega\) are valid for sequence algebras. For completeness (\(\Longleftarrow\)), consider any quasiequation~$\quasieq$ and
    suppose that for any sequence model \(\Alg\), \(\Alg \models \quasieq\).
    We need to show that~\mbox{\(\AIC_\omega \vdash_{\wf} \quasieq\)}.
    By \Cref{thm:quasiequational completeness}, it suffices to argue that \(\Alg_0 \models \quasieq\) for any AIC-structure~\(\Alg_0\) that satisfies \(\AIC_\omega\).
    But given any such \(\Alg_0\), by \Cref{thm:embedding} there is an injective algebra homomorphism \(\varepsilon \colon \Alg_0 \hookrightarrow \Alg\) into some sequence algebra \(\Alg\). 
    Quasiequations are universally quantified, so whenever an algebra satisfies some quasiequation, so do all its subalgebras. Therefore, from our assumption that \(\Alg \models \quasieq\) it follows that \(\Alg_0 \models \quasieq\), as desired.
\end{proof}
One could also ask for an axiomatization of the class of sequence algebras itself: is there a set $\Ax$ of quasiequations such that an AIC-structure satisfies $\Ax$ iff it is a sequence algebra? Such an axiomatization does not exist: by the quasivariety theorem~\cite[Thm.~2.25]{BurrisSankappanavar1981}, any class of algebras with a quasiequational axiomatization is closed under subalgebras. But sequence algebras are not; e.g., every sequence algebra has the AIC-subalgebra given by all eventually constant sequences, which is generally not isomorphic to a sequence algebra.

\section{Limitations and Outlook}
\label{sec:concl}

AIC currently cannot reason about \emph{transfinite} fixed point iterations. 
If it could, this might enable reasoning about fixed points of \emph{arbitrary monotonic} endomaps, e.g.\ by algebraizing the constructive version of the Knaster-Tarski theorem by Cousot \& Cousot~\cite{cousot1979constructive}.

As for a connection to modal logics, linear time temporal logic (\textsf{LTL}) \cite{DBLP:conf/focs/Pnueli77} seems to be closely related to the sequence-part of AIC.
While the purpose of \textsf{LTL} is completely different and it does not have iteration modalities, it would still be interesting to investigate how strong the connection between \textsf{LTL} and AIC really is.
The fact that both AIC and \textsf{LTL} admit expansion laws as well as absorption laws a l\`{a} $\Sup \Inf \Sup \fora = \Inf \Sup \fora$ suggests a strong connection.

In verification, one could investigate whether Olszewski-type fixed points admit a temporal interpretation a l\`{a} \emph{\enquote{\,$\Inf \Sup \Fas \fora$ are the states that always eventually reach $\fora$ by taking $\funca$-transitions}}.
Another endeavor in verification is to study whether Olszewski-type fixed points can help to better understand why techniques like $k$-induction~\cite{DBLP:conf/fmcad/SheeranSS00,kind_cav} work rather well and how generalization in IC3~\cite{DBLP:conf/vmcai/Bradley11} can be understood from a fixed point theoretic point of view in order to transfer this technique to more general settings~\cite{DBLP:conf/cav/KoriUKSH22,DBLP:conf/cav/BatzJKKMS20}.

Another promising direction for future work is to study non-standard (i.e.\ non-sequence) models of AIC. From \Cref{thm:completeness} applied to the axiom system \(\AIC_0\) and the incompleteness result of \Cref{sec:incompleteness}, it follows that many such models exist. We leave it as future work to discover interesting examples of non-standard models. Particularly interesting would be models where the inhabitants can also have a sort of \enquote{orientation} or \enquote{polarity} \`{a} la ascending/descending.
Possible candidates for such structures would be formal languages, trees, directed sets, continuous real-valued functions, or permutations.



\bibliography{literature}

\begin{thebibliography}{10}

\bibitem{abramsky1994domain}
Samson Abramsky and Achim Jung.
\newblock Domain theory.
\newblock In {\em Handbook of Logic In Computer Science}, volume~3. Clarendon
  Press, 1994.

\bibitem{amv07}
Jir{\'{\i}} Ad{\'{a}}mek, Stefan Milius, and Jir{\'{\i}} Velebil.
\newblock What are iteration theories?
\newblock In {\em {MFCS}}, Lecture Notes in Computer Science, pages 240--252.
  Springer, 2007.

\bibitem{DBLP:books/daglib/0020348}
Christel Baier and Joost{-}Pieter Katoen.
\newblock {\em Principles of model checking}.
\newblock {MIT} Press, 2008.

\bibitem{DBLP:journals/lmcs/BaldanEKP23}
Paolo Baldan, Richard Eggert, Barbara K{\"{o}}nig, and Tommaso Padoan.
\newblock Fixpoint theory - upside down.
\newblock {\em Log. Methods Comput. Sci.}, 19(2), 2023.

\bibitem{DBLP:journals/iandc/BaldanKP24}
Paolo Baldan, Barbara K{\"{o}}nig, and Tommaso Padoan.
\newblock Systems of fixpoint equations: Abstraction, games, up-to techniques
  and local algorithms.
\newblock {\em Inf. Comput.}, 301:105233, 2024.

\bibitem{kind_cav}
Kevin Batz, Mingshuai Chen, Benjamin~Lucien Kaminski, Joost-Pieter Katoen,
  Christoph Matheja, and Philipp Schr{\"{o}}er.
\newblock Latticed k-induction with an application to probabilistic programs.
\newblock In {\em {CAV} {(2)}}, volume 12760 of {\em Lecture Notes in Computer
  Science}, pages 524--549. Springer, 2021.

\bibitem{DBLP:conf/cav/BatzJKKMS20}
Kevin Batz, Sebastian Junges, Benjamin~Lucien Kaminski, Joost{-}Pieter Katoen,
  Christoph Matheja, and Philipp Schr{\"{o}}er.
\newblock Pric3: Property directed reachability for mdps.
\newblock In {\em {CAV} {(2)}}, Lecture Notes in Computer Science, pages
  512--538. Springer, 2020.

\bibitem{aic-conference-version}
Kevin Batz, Benjamin~Lucien Kaminski, Lucas Kehrer, Gerwin Klein, Todd Schmid,
  and Henning Urbat.
\newblock The algebra of iterative constructions.
\newblock In {\em {LICS}}, LIPIcs. Schloss Dagstuhl - Leibniz-Zentrum f{\"{u}}r
  Informatik, 2026.
\newblock [to appear].

\bibitem{aic-afp}
Kevin Batz, Benjamin~Lucien Kaminski, Lucas Kehrer, Gerwin Klein, Henning
  Urbat, and Todd Schmid.
\newblock The algebra of iterative constructions.
\newblock {\em Archive of Formal Proofs}, April 2026.
\newblock \url{https://isa-afp.org/entries/Iteration_Algebra.html}, Formal
  proof development.

\bibitem{boosting_k_induction}
Dirk Beyer, Matthias Dangl, and Philipp Wendler.
\newblock Boosting k-induction with continuously-refined invariants.
\newblock In {\em {CAV} {(1)}}, volume 9206 of {\em Lecture Notes in Computer
  Science}, pages 622--640. Springer, 2015.

\bibitem{bloom-esik93}
Stephen~L. Bloom and Zolt{\'{a}}n {\'{E}}sik.
\newblock {\em Iteration Theories - The Equational Logic of Iterative
  Processes}.
\newblock {EATCS} Monographs on Theoretical Computer Science. Springer, 1993.
\newblock \href {https://doi.org/10.1007/978-3-642-78034-9}
  {\path{doi:10.1007/978-3-642-78034-9}}.

\bibitem{DBLP:conf/vmcai/Bradley11}
Aaron~R. Bradley.
\newblock Sat-based model checking without unrolling.
\newblock In {\em {VMCAI}}, Lecture Notes in Computer Science, pages 70--87.
  Springer, 2011.

\bibitem{BurrisSankappanavar1981}
Stanley~N. Burris and H.~P. Sankappanavar.
\newblock {\em A Course in Universal Algebra}, volume~78 of {\em Graduate Texts
  in Mathematics}.
\newblock Springer, New York, 1981.

\bibitem{DBLP:conf/focs/Clarke77}
Edmund~M. Clarke.
\newblock Program invariants as fixed points.
\newblock In {\em {FOCS}}, pages 18--29. {IEEE} Computer Society, 1977.

\bibitem{cousot1979constructive}
Patrick Cousot and Radhia Cousot.
\newblock Constructive versions of tarski’s fixed point theorems.
\newblock {\em Pacific journal of Mathematics}, 82(1):43--57, 1979.

\bibitem{sofware_k_induction}
Alastair~F. Donaldson, Leopold Haller, Daniel Kroening, and Philipp
  R{\"{u}}mmer.
\newblock Software verification using k-induction.
\newblock In {\em {SAS}}, volume 6887 of {\em Lecture Notes in Computer
  Science}, pages 351--368. Springer, 2011.

\bibitem{DBLP:journals/corr/abs-1907-10381}
Frank M.~V. Feys and Helle~Hvid Hansen.
\newblock Arrow's theorem through a fixpoint argument.
\newblock In {\em {TARK}}, volume 297 of {\em {EPTCS}}, pages 175--188, 2019.

\bibitem{gunter1992semantics}
Carl~A Gunter.
\newblock {\em Semantics of programming languages: structures and techniques}.
\newblock MIT press, 1992.

\bibitem{k_induction_without_unrolling}
Arie Gurfinkel and Alexander Ivrii.
\newblock K-induction without unrolling.
\newblock In {\em {FMCAD}}, pages 148--155. {IEEE}, 2017.

\bibitem{Hirschmonomaps}
M.~W. Hirsch and Hal Smith.
\newblock Monotone maps: a review.
\newblock {\em Journal of Difference Equations and Applications},
  11(4-5):379--398, 2005.
\newblock \href {https://doi.org/10.1080/10236190412331335445}
  {\path{doi:10.1080/10236190412331335445}}.

\bibitem{jachymski2000tarski}
Jacek Jachymski, Leslaw Gajek, and Piotr Pokarowski.
\newblock The tarski--kantorovitch prinicple and the theory of iterated
  function systems.
\newblock {\em Bulletin of the Australian Mathematical Society},
  61(2):247--261, 2000.

\bibitem{property_directed_k_induction}
Dejan Jovanovic and Bruno Dutertre.
\newblock Property-directed k-induction.
\newblock In {\em {FMCAD}}, pages 85--92. {IEEE}, 2016.

\bibitem{kiwerski2024arithmeticinterpolationfactorizationamalgams}
Tomasz Kiwerski and Jakub Tomaszewski.
\newblock Arithmetic, interpolation and factorization of amalgams, 2024.
\newblock URL: \url{https://arxiv.org/abs/2401.05526}, \href
  {https://arxiv.org/abs/2401.05526} {\path{arXiv:2401.05526}}.

\bibitem{DBLP:conf/csl/000124}
Barbara K{\"{o}}nig.
\newblock Approximating fixpoints of approximated functions (invited talk).
\newblock In {\em {CSL}}, volume 288 of {\em LIPIcs}, pages 4:1--4:1. Schloss
  Dagstuhl - Leibniz-Zentrum f{\"{u}}r Informatik, 2024.

\bibitem{DBLP:conf/cav/KoriUKSH22}
Mayuko Kori, Natsuki Urabe, Shin{-}ya Katsumata, Kohei Suenaga, and Ichiro
  Hasuo.
\newblock The lattice-theoretic essence of property directed reachability
  analysis.
\newblock In {\em {CAV} {(1)}}, Lecture Notes in Computer Science, pages
  235--256. Springer, 2022.

\bibitem{kozenKA}
D.~Kozen.
\newblock A completeness theorem for kleene algebras and the algebra of regular
  events.
\newblock {\em Information and Computation}, 110(2):366--390, 1994.
\newblock \href {https://doi.org/10.1006/inco.1994.1037}
  {\path{doi:10.1006/inco.1994.1037}}.

\bibitem{kozen82}
Dexter Kozen.
\newblock Results on the propositional {\(\mathrm{\mu}\)}-calculus.
\newblock In Mogens Nielsen and Erik~Meineche Schmidt, editors, {\em Automata,
  Languages and Programming, 9th Colloquium, Aarhus, Denmark, July 12-16, 1982,
  Proceedings}, volume 140 of {\em Lecture Notes in Computer Science}, pages
  348--359. Springer, 1982.
\newblock \href {https://doi.org/10.1007/BFB0012782}
  {\path{doi:10.1007/BFB0012782}}.

\bibitem{DBLP:journals/ipl/LassezNS82}
Jean-Louis Lassez, V.~L. Nguyen, and Liz Sonenberg.
\newblock Fixed point theorems and semantics: {A} folk tale.
\newblock {\em Information Processing Letters}, 14(3):112--116, 1982.

\bibitem{MacNeil36}
H.~M. MacNeille.
\newblock Partially ordered sets.
\newblock 1936.
\newblock \href {https://doi.org/10.1090/S0002-9947-1937-1501929-X}
  {\path{doi:10.1090/S0002-9947-1937-1501929-X}}.

\bibitem{DBLP:journals/ipl/Misra01}
Jayadev Misra.
\newblock A walk over the shortest path: Dijkstra's algorithm viewed as
  fixed-point computation.
\newblock {\em Inf. Process. Lett.}, 77(2-4):197--200, 2001.

\bibitem{doi:10.1073/pnas.36.1.48}
John~F. Nash.
\newblock Equilibrium points in $n$-person games.
\newblock {\em Proceedings of the National Academy of Sciences}, 36(1):48--49,
  1950.
\newblock \href {https://doi.org/10.1073/pnas.36.1.48}
  {\path{doi:10.1073/pnas.36.1.48}}.

\bibitem{olszewski21_order}
Wojciech Olszewski.
\newblock On convergence of sequences in complete lattices.
\newblock {\em Order}, 38:251--255, 2021.

\bibitem{Ol:omegaSequences}
Wojciech Olszewski.
\newblock On sequences of iterations of increasing and continuous mappings on
  complete lattices.
\newblock {\em Games and Economic Behavior}, 126:453--459, 2021.

\bibitem{park}
David Park.
\newblock Fixpoint induction and proofs of program properties.
\newblock {\em Machine Intelligence}, 5, 1969.

\bibitem{DBLP:conf/focs/Pnueli77}
Amir Pnueli.
\newblock The temporal logic of programs.
\newblock In {\em 18th Annual Symposium on Foundations of Computer Science,
  Providence, Rhode Island, USA, 31 October - 1 November 1977}, pages 46--57.
  {IEEE} Computer Society, 1977.
\newblock \href {https://doi.org/10.1109/SFCS.1977.32}
  {\path{doi:10.1109/SFCS.1977.32}}.

\bibitem{DBLP:journals/pacmpl/RosselSKSG26}
Marcus Rossel, Rudi Schneider, Thomas Koehler, Michel Steuwer, and Andr{\'{e}}s
  Goens.
\newblock Towards pen-and-paper-style equational reasoning in interactive
  theorem provers by equality saturation.
\newblock {\em Proc. {ACM} Program. Lang.}, 10({POPL}):718--747, 2026.

\bibitem{SANTACRUZHIDALGO2024110490}
Alejandro {Santacruz Hidalgo} and Gord Sinnamon.
\newblock Core decreasing functions.
\newblock {\em Journal of Functional Analysis}, 287(4):110490, 2024.
\newblock \href {https://doi.org/10.1016/j.jfa.2024.110490}
  {\path{doi:10.1016/j.jfa.2024.110490}}.

\bibitem{DBLP:conf/fmcad/SheeranSS00}
Mary Sheeran, Satnam Singh, and Gunnar St{\aa}lmarck.
\newblock Checking safety properties using induction and a sat-solver.
\newblock In {\em {FMCAD}}, volume 1954 of {\em Lecture Notes in Computer
  Science}, pages 108--125. Springer, 2000.

\bibitem{sinnamon2007monotonicity}
Gord Sinnamon.
\newblock Monotonicity in banach function spaces.
\newblock {\em Nonlinear Analysis, Function Spaces and Applications}, pages
  205--240, 2007.

\bibitem{TheunissenV07}
Mark {Theunissen} and Yde {Venema}.
\newblock Macneille completions of lattice expansions.
\newblock 2007.
\newblock \href {https://doi.org/10.1007/s00012-007-2033-1}
  {\path{doi:10.1007/s00012-007-2033-1}}.

\bibitem{walukiewicz95}
Igor Walukiewicz.
\newblock Completeness of kozen's axiomatisation of the propositional
  mu-calculus.
\newblock In {\em Proceedings, 10th Annual {IEEE} Symposium on Logic in
  Computer Science, San Diego, California, USA, June 26-29, 1995}, pages
  14--24. {IEEE} Computer Society, 1995.
\newblock \href {https://doi.org/10.1109/LICS.1995.523240}
  {\path{doi:10.1109/LICS.1995.523240}}.

\bibitem{DBLP:conf/tacas/WinklerK23}
Tobias Winkler and Joost-Pieter Katoen.
\newblock Certificates for probabilistic pushdown automata via optimistic value
  iteration.
\newblock In {\em {TACAS} {(2)}}, volume 13994 of {\em Lecture Notes in
  Computer Science}, pages 391--409. Springer, 2023.

\bibitem{DBLP:journals/corr/abs-2403-17567}
Tengshun Yang, Hongfei Fu, Jingyu Ke, Naijun Zhan, and Shiyang Wu.
\newblock Piecewise linear expectation analysis via k-induction for
  probabilistic programs.
\newblock {\em CoRR}, abs/2403.17567, 2024.

\end{thebibliography}

\appendix


\newpage
\section{Appendix: Soundness of $\boldsymbol{\AIC_0}$}
\label{sec:app-soundness}
%
%
\normalsize%
In this section, we prove the soundness of the $\AIC_0$ axiom system (see \Cref{fig:axioms-aic0}).

\thmAICZeroSoundness*

\begin{proof}

For every axiom $\quasieq \in \AIC_0$, we have to prove that it is valid under standard interpretations (sequence algebras), i.e.\ valid according to the semantics given in \Cref{tab:semantics}.
That is, if $\quasieq$ is of the form%
%
\begin{align*}
	\nirule{q}{~\leftsuperscript{1}{\terma}{} \eeq \leftsuperscript{1}{\termb}{} \\ {\cdots} \\ \leftsuperscript{k}{\terma}{} \eeq \leftsuperscript{k}{\termb}~}{\terma \eeq \termb}~, 
\end{align*}%
we have to prove that for an arbitrary interpretation $\interpret$, we have that if $\nsemin{\leftsuperscript{i}{\terma}{}} = \nsemin{\leftsuperscript{i}{\termb}{}}$ for all $i \in \{1,\, \ldots,\, k\}$ and all $n \in \omega$, then also $\nsemin{\terma} = \nsemin{\termb}$ must hold.

In the following, we go over every axiom in $\AIC_0$.
Assume that $n$ is arbitrary but fixed.
Furthermore, we assume that the interpretation of the variables and function symbols $\interpret$ is also arbitrary but fixed and we will mostly drop the superscript ${}^\interpret$ for readability.

\bigskip%
\begin{axbox}%
	\vspace*{-1.25\abovedisplayskip}%
	\begin{align*}
		\axbot{\fora} 
		\qquad\quad
		\axtop{\fora} 
		\tag*{\verylightgray{\hyperref[fig:axioms-aic0]{\Cref{fig:axioms-aic0}, \textsf{\textbf{\small Bounded Lattice}}}}}
	\end{align*}%
	\normalsize%
\end{axbox}%
	For the first few axioms, there are no premises. We only have to prove that the conclusion always holds.
	\begin{align*}
		\semn{\bot} &\eeq \bot \tag{by \Cref{tab:semantics}} \\
		&\ssqsubseteq \semn{\fora} \tag{by $\bot$ being the least element of the underlying lattice}
	\end{align*}
	The proof of \TirName{\topname} is entirely dual by $\top$ being the greatest element of the underlying lattice.

\bigskip%
\normalsize%
\begin{axbox}
	\vspace*{-1.25\abovedisplayskip}%
	\begin{gather*}
		\axjoincomm{\fora}{\forb}
					\qquad
					\axmeetcomm{\fora}{\forb}
		\tag*{\verylightgray{\hyperref[fig:axioms-aic0]{\Cref{fig:axioms-aic0}, \textsf{\textbf{\small Bounded Lattice}}}}}
	\end{gather*}
\end{axbox}
	\begin{align*}
		\semn{\fora \lmax \forb} 	
		&\eeq \semn{\fora} \ssqcup \semn{\forb} 	\tag{by \Cref{tab:semantics}}		\\
		&\eeq \semn{\forb} \ssqcup \semn{\fora}	\tag{by commutativity of $\sqcup$}	\\
		&\eeq \semn{\forb \lmax \fora} 			\tag{by \Cref{tab:semantics}}
	\end{align*}
	The proof of \TirName{\meetcommname} is entirely analogous by commutativity of $\sqcap$.

\bigskip%
\begin{axbox}
	\vspace*{-1.25\abovedisplayskip}%
	\begin{gather*}
		\axjoinabsorb{\fora}{\forb}
		\qquad\quad
		\axmeetabsorb{\fora}{\forb}
		\tag*{\verylightgray{\hyperref[fig:axioms-aic0]{\Cref{fig:axioms-aic0}, \textsf{\textbf{\small Bounded Lattice}}}}}
	\end{gather*}
\end{axbox}
	\begin{align*}
		\semn{\fora \lmax (\fora \lmin \forb)}  
		&\eeq \semn{\fora} \ssqcup \bigl(\semn{\fora} \sqcap \semn{\forb} \bigr)	\tag{by \Cref{tab:semantics}}		\\
		&\eeq \semn{\fora} \tag{by absorption laws of complete lattices}
	\end{align*}
	The proof of \TirName{\meetabsorbname} is entirely dual, also by the absorptions laws of complete lattices.

\bigskip%
\begin{axbox}
	\vspace*{-1.25\abovedisplayskip}%
	\begin{gather*}
		\axjoinassoc{\fora}{\forb}{\forc}
		\axmeetassoc{\fora}{\forb}{\forc}
		\tag*{\verylightgray{\hyperref[fig:axioms-aic0]{Fig.\ \ref{fig:axioms-aic0}, \textsf{\textbf{\small Bound.\ Lat.}}}}}
	\end{gather*}
\end{axbox}
	\begin{align*}
		\semn{\fora \lmax (\forb \lmax \forc)}  
		&\eeq \semn{\fora} \ssqcup \bigl(\semn{\forb} \ssqcup \semn{\forc}\bigr)	\tag{by \Cref{tab:semantics}}				\\
		&\eeq \bigl( \semn{\fora} \ssqcup \semn{\forb} \bigr) \ssqcup \semn{\forc}	\tag{by associativity of $\sqcup$}	\\
		&\eeq \semn{(\fora \lmax \forb) \lmax \forc}						\tag{by \Cref{tab:semantics}}
	\end{align*}
	The proof of \TirName{\meetassocname} is entirely analogous by associativity of $\sqcap$.

\bigskip%
\begin{axbox}%
	\vspace*{-1\abovedisplayskip}%
	\begin{align*}
		\nextmonorule{\fora \tss \forb}{\Next \fora \tss \Next \forb}
		\tag*{\verylightgray{\hyperref[fig:axioms-aic0]{\Cref{fig:axioms-aic0}, \textsf{\textbf{\small Shifts}}}}}
	\end{align*}%
\end{axbox}%
This axiom has a premise, namely $\fora \preceq \forb$.
We thus assume that
\begin{align*}
	\sem{\fora}{k} \ssqsubseteq \sem{\forb}{k} \tag*{\asmcolor{(\textasm)}}
\end{align*}%
holds for all $k \in \omega$, in particular for any $n,\, n + 1 \in \omega$.
Then consider the following:%
\belowdisplayskip=0pt%
\begin{align*}
		\semn{\Next\fora}										
		&\eeq \sem{\fora}{n+1}			\tag{by \Cref{tab:semantics}}	\\
		&\ssqsubseteq \sem{\forb}{n+1}	\tag{by \textasm}			\\
		&\eeq \semn{\Next\forb}			\tag{by \Cref{tab:semantics}}
\end{align*}%
\normalsize%

\bigskip%
\begin{axbox}
	\vspace*{-1.25\abovedisplayskip}%
	\begin{gather*}
		\axnextbot{}
		\quad
		\axnexttop{}
		\tag*{\verylightgray{\hyperref[fig:axioms-aic0]{\Cref{fig:axioms-aic0}, \textsf{\textbf{\small Shifts}}}}}
	\end{gather*}
\end{axbox}
	\begin{align*}
		\semn{\Next \bot} 
		\eeq \sem{\bot}{n+1}	
		&\eeq \bot				\tag{by \Cref{tab:semantics}}		\\
		&\ssqsubseteq \bot		\tag{by reflexivity of $\sqsubseteq$}	\\
		&\eeq \semn{\bot} 		\tag{by \Cref{tab:semantics}}
	\end{align*}
	The proof of \TirName{\nexttopname} is entirely analogous.

\bigskip%
\begin{axbox}
	\vspace*{-1.25\abovedisplayskip}%
	\begin{align*}
		\axnextoverjoin{\fora}{\forb}
		\qquad
		\axnextovermeet{\fora}{\forb}
		\tag*{\verylightgray{\hyperref[fig:axioms-aic0]{\Cref{fig:axioms-aic0}, \textsf{\textbf{\small Shifts}}}}}
	\end{align*}%
\end{axbox}%
\belowdisplayskip=0pt%
\begin{align*}
	\semn{\Next(\fora \lmax \forb)} 
	\eeq \sem{\fora \lmax \forb}{n+1}	 
	\eeq \sem{\fora}{n+1} \ssqcup \sem{\forb}{n+1} 	
	\eeq \semn{\Next \fora} \ssqcup \semn{\Next \forb}
	\eeq \semn{\Next \fora \lmax \Next \forb} 
	\tag{all by \Cref{tab:semantics}}
\end{align*}%
\normalsize%
The proof of \TirName{\nextovermeetname} is entirely analogous.

\bigskip%
\begin{axbox}%
	\vspace*{-1.25\abovedisplayskip}%
	\begin{align*}
		\axsupinflate{\fora}
		\qquad
		\axinfdeflate{\fora}
		\tag*{\verylightgray{\hyperref[fig:axioms-aic0]{\Cref{fig:axioms-aic0}, \textsf{\textbf{\small Majora \& Minora}}}}}
	\end{align*}%
\end{axbox}%
\begin{align*}
	\semn{\fora} 	
	&\ssqsubseteq \sup_{n\leq k} \, \sem{\fora}{k}  \\
	&\eeq \semn{\Sup \fora}
	\tag{by \Cref{tab:semantics}}
\end{align*} 
The proof of \TirName{\infdeflatename} is entirely dual.

\begin{axbox}
	\vspace*{-1.25\abovedisplayskip}%
	\begin{align*}
		\axsupidem{\fora}
		\qquad
		\axinfidem{\fora}
		\tag*{\verylightgray{\hyperref[fig:axioms-aic0]{\Cref{fig:axioms-aic0}, \textsf{\textbf{\small Majora \& Minora}}}}}
	\end{align*}
\end{axbox}
\begin{align*}
	\semn{\Sup\Sup \fora} 
	\eeq \sup_{n\leq k}\sem{\Sup \fora}{k}	
	&\eeq \sup_{n\leq k} \, \sup_{k\leq i} \, \sem{\fora}{i}	\tag{by \Cref{tab:semantics}}	\\
	&\eeq \sup_{n\leq i} \, \sem{\fora}{i} \\
	&\eeq \semn{\Sup \fora}					\tag{by \Cref{tab:semantics}}
\end{align*}
The proof of \TirName{\infidemname} is entirely analogous.

\bigskip%
\begin{axbox}%
	\vspace*{-1\abovedisplayskip}%
	\begin{align*}
		\supmonorule{\fora \tss \forb}{\Sup \fora \tss \Sup \forb}
		\qquad 
		\infmonorule{\fora \tss \forb}{\Inf \fora \tss \Inf \forb}
		\tag*{\verylightgray{\hyperref[fig:axioms-aic0]{\Cref{fig:axioms-aic0}, \textsf{\textbf{\small Majora \& Minora}}}}}
	\end{align*}%
\end{axbox}%
By the premise of both rules, we assume that $\fora \preceq \forb$, i.e.\
\begin{align*}
	\sem{\fora}{k} \ssqsubseteq \sem{\forb}{k} \tag*{\asmcolor{(\textasm)}}
\end{align*}%
holds for all $k \in \omega$.
Then, consider the following:
\begin{align*}
		\semn{\Sup \fora} 
		& \eeq \sup_{n\leq k} \, \sem{\fora}{k}			\tag{by \Cref{tab:semantics}}	\\
		&\ssqsubseteq \sup_{n\leq k} \, \sem{\forb}{k} 	\tag{by \textasm}			\\
		&\eeq \semn{\Sup \forb}					\tag{by \Cref{tab:semantics}}
\end{align*}
The proof of \TirName{\infmononame} is entirely analogous.

\bigskip
\begin{axbox}%
	\vspace*{-1.25\abovedisplayskip}%
	\begin{align*}
		\axnextsupcomm{\fora}
		\quad
		\axnextinfcomm{\fora}
		\tag*{\verylightgray{\hyperref[fig:axioms-aic0]{\Cref{fig:axioms-aic0}, \textsf{\textbf{\small Majora \& Minora}}}}}
	\end{align*}%
\end{axbox}%
		\begin{align*}
			\semn{\Next \Sup \fora} 
			\eeq \sem{\Sup \fora}{n + 1}
			&\eeq \!\!\!\sup_{n+1 \leq k}\,\sem{\fora}{k}		\tag{by \Cref{tab:semantics}}	\\
			&\eeq \sup_{n \leq k}\,\sem{\fora}{k+1}								\\
			&\eeq \sup_{n \leq k}\,\sem{\Next \fora}{k}	
			\eeq \semn{\Sup \Next \fora}				\tag{by \Cref{tab:semantics}}
		\end{align*}
The proof of \TirName{\nextinfcommname} is entirely analogous.

\bigskip%
\begin{axbox}%
	\vspace*{-1\abovedisplayskip}
	\begin{align*}
		{
			\mprset{fraction={===}}%
			\nextindrule{\Next \fora \tss \fora}{\Sup \fora \tss \fora}%
			\qquad
			\nextcoindrule{\fora \tss \Next \fora}{\fora \tss \Inf \fora}%
		}
		\tag*{\verylightgray{\hyperref[fig:axioms-aic0]{\Cref{fig:axioms-aic0}, \textsf{\textbf{\small Majora \& Minora}}}}}
	\end{align*}%
\end{axbox}%
$\TirName{\nextindname}$ and $\TirName{\nextcoindname}$ both have double bars, meaning that both rules shall work in both directions.
We thus have four subproofs -- two directions for each of the two rules.

\proofsubparagraph{$\boldsymbol{\Supsymbol}$-\textrm{\scriptsize IND} from top to bottom:}
	By the premise of $\TirName{\nextindname}$ (top), we assume $\Next \fora \preceq \fora$, i.e.%
	\begin{align}
		\label{eq:ind:ass}
		\sem{\fora}{i+1} 
		\stackrel{\textnormal{\tiny \Cref{tab:semantics}}}{\qeq} 
		\sem{\Next a}{i}
		\qmorespace{\sqsubseteq} \sem{\fora}{i}
		\tag*{\asmcolor{(\textasm)}}
	\end{align}%
	for all $i \in \omega$. 
	To prove the conclusion of \TirName{\nextindname} (bottom), first notice that
	\begin{align*}
		\Sup \fora \tss \fora
		&\qqiff
		 \forall{n}\colon~ \sup_{n \leq k}~ \sem{\fora}{k} \llaord \sem{\fora}{n}
		 \tag{by \Cref{tab:semantics}}	\\
		 &\qqiff
		 \orange{\forall{n,j}\colon~  \sem{\fora}{n+j} \llaord \sem{\fora}{n}}
		 \tag{by definition of suprema}
	\end{align*}
	Now let $n$ be arbitrary but fixed. 
	We prove that \orange{the latter statement} holds by induction on~$j$. 
	The base case $j=0$ is trivial by reflexivity of $\sqsubseteq$. 
	For the induction step, we have
	\[
		\sem{\fora}{n+j+1} 
		\stackrel{\smash{\ref{eq:ind:ass}}}{\llaord}
		 \sem{\fora}{n+j}
		 \stackrel{\smash{\textnormal{I.H.}}}{\llaord}
		  \sem{\fora}{n}~.
	\]

\proofsubparagraph{$\boldsymbol{\Supsymbol}$-\textrm{\scriptsize IND} from bottom to top:}
	By the premise of $\TirName{\nextindname}$ (bottom), we assume $\Sup \fora \preceq \fora$, i.e.%
	\begin{align*}
		\sem{\Sup a}{i}
		\ssqsubseteq
		\sem{a}{i}
		\tag*{\asmcolor{(\textasm)}}
	\end{align*}%
	Then, to prove the conclusion of $\TirName{\nextindname}$ (top), consider%
	\begin{align*}
		\semn{\Next a} 
		&\eeq
		\sem{\fora}{n+1}			\tag{by \Cref{tab:semantics}}	\\
		&\ssqsubseteq
		\sup_{n \leq k}~ \sem{\fora}{k}	\tag{by choosing $k = n+1$}	\\
		&\eeq \semn{\Sup a}			\tag{by \Cref{tab:semantics}}	\\
		&\ssqsubseteq \semn{\fora}~. \tag{by \textasm}	
	\end{align*}%

\proofsubparagraph{$\boldsymbol{\Infsymbol}$-\textrm{\scriptsize COIND} both directions:}
The proofs of \TirName{\nextcoindname} are entirely dual for both directions.

\bigskip%
\begin{axbox}%
	\vspace*{-1.25\abovedisplayskip}%
	\begin{align*}
		\Fmonorulet{\fora \tss \forb}{\Fa \fora \tss \Fa \forb}%
		\tag*{\verylightgray{\hyperref[fig:axioms-aic0]{\Cref{fig:axioms-aic0}, \textsf{\textbf{\small Function Applications \& Orbits}}}}}
	\end{align*}%
\end{axbox}%
The rule \TirName{\Fmononame} is valid for sequence algebras \emph{if and only if} the endomap $\funca$ on the complete lattice is monotonic.
For a proof, see \Cref{sec:proof:FMono}.

\bigskip%
\begin{axbox}%
	\vspace*{-1\abovedisplayskip}%
	\begin{align*}
		\Fsmonorulet{\fora \tss \forb}{\Fs \fora \tss \Fs \forb}%
		\tag*{\verylightgray{\hyperref[fig:axioms-aic0]{\Cref{fig:axioms-aic0}, \textsf{\textbf{\small Function Applications \& Orbits}}}}}
	\end{align*}%
\end{axbox}%
By the premise of \TirName{\Fsmononame}, we assume that
\begin{align*}
	\sem{\fora}{n} \ssqsubseteq \sem{\forb}{n} \tag*{\asmcolor{(\textasm)}}
\end{align*}%
holds for all $n \in \omega$.
We then prove by induction on $k$ that for any $n$%
\begin{align*}
	\forall\, k\colon F^k \bigl(\semn{\fora} \bigr) \ssqsubseteq F^k \bigl(\semn{\forb} \bigr) \tag{$\dagger$}
\end{align*}%
holds.
The induction base $k=0$ is trivial by \textasm\ and reflexivity of $\sqsubseteq$.
As induction hypothesis, assume that $F^k \bigl(\semn{\fora} \bigr) \sqsubseteq F^k \bigl(\semn{\forb} \bigr)$ holds for arbitrary but fixed $k$.
Then we have for the induction step%
\begin{align*}
	&F^k \bigl(\semn{\fora} \bigr) \ssqsubseteq F^k \bigl(\semn{\forb} \bigr)	\tag{by I.H.}					\\
	\lqqimplies
	&F\bigl(F^k \semn{\fora} \bigr) \ssqsubseteq F\bigl(F^k \semn{\forb}\bigr)	\tag{by monotonicity of $\funca$}	\\
	\lqqiff
	&F^{k+1} \bigl(\semn{\fora} \bigr) \ssqsubseteq F^{k+1} \bigl(\semn{\forb} \bigr)
\end{align*}
which concludes the proof of ($\dagger$).
We now choose $k=n$ in ($\dagger$) and thus obtain
\begin{align*}
	& F^{n} \bigl(\semn{\fora} \bigr) \ssqsubseteq F^{n} \bigl(\semn{\forb} \bigr)							\\
	\lqqiff
	& \semn{\Fs \fora} \ssqsubseteq \semn{\Fs \forb}	\tag{by \Cref{tab:semantics}}
\end{align*}%
which is the conclusion of \TirName{\Fsmononame}.

\bigskip%
\begin{axbox}
	\vspace*{-1.25\abovedisplayskip}%
	\begin{align*}
		\axFNextcomm{\fora}%
		\tag*{\verylightgray{\hyperref[fig:axioms-aic0]{\Cref{fig:axioms-aic0}, \textsf{\textbf{\small Function Applications \& Orbits}}}}}
	\end{align*}
\end{axbox}%
	\belowdisplayskip=0pt%
	\begin{align*}
		\semn{\Next \funca \fora} 
		\eeq \sem{\funca \fora}{n+1} 
		\eeq \funca \bigl( \sem{\fora}{n+1} \bigr) 
		\eeq \funca \bigl( \semn{\Next \fora} \bigr) 
		\eeq \semn{\funca \Next \fora}
		\tag{all by \Cref{tab:semantics}}
	\end{align*}
	\normalsize%

\bigskip%
\begin{axbox}%
		\vspace*{-1.25\abovedisplayskip}%
		\begin{align*}
			\axFFscomm{\fora}%
		\tag*{\verylightgray{\hyperref[fig:axioms-aic0]{\Cref{fig:axioms-aic0}, \textsf{\textbf{\small Function Applications \& Orbits}}}}}
		\end{align*}%
	\end{axbox}%
	\belowdisplayskip=0pt%
	\begin{align*}
		\semn{\Fa \Fas \fora} 
		\eeq \Fa \bigl( \semn{\Fas \fora} \bigr) 
		&\eeq \Fa \bigl( \Fa^n \semn{\fora} \bigr)	\tag{by \Cref{tab:semantics}}	\\
		&\eeq \Fa^n \bigl( \Fa \semn{\fora} \bigr)							\\
		&\eeq \Fa^n \bigl( \semn{\Fa \fora} \bigr) 
		\eeq \semn{\Fas \Fa \fora} 
		\tag{by \Cref{tab:semantics}}
	\end{align*}%
	\normalsize%

\bigskip%
\begin{axbox}%
	\vspace*{-1.25\abovedisplayskip}%
	\begin{align*}
		\axiter{\fora}%
		\tag*{\verylightgray{\hyperref[fig:axioms-aic0]{\Cref{fig:axioms-aic0}, \textsf{\textbf{\small Function Applications \& Orbits}}}}}
	\end{align*}%
\end{axbox}%
	\begin{align*}
		\semn{\Next \Fas \fora} 
		\eeq \sem{\Fas \fora}{n+1}
		\eeq \Fa^{n+1} \bigl(\sem{\fora}{n+1}	\bigr)
		&\eeq \Fa^{n+1} \bigl(\semn{\Next \fora} \bigr)		\tag{by \Cref{tab:semantics}}	\\
		&\eeq \Fa \Bigl( \Fa^n \bigl(\semn{\Next \fora} \bigr) \Bigr)						\\
		&\eeq \Fa \Bigl(\semn{\Fas \fora} \Bigr)
		\eeq \semn{\Fa \Fas \Next \fora}				\tag{by \Cref{tab:semantics}}	
	\end{align*}

\begin{axbox}%
		\vspace*{-1\abovedisplayskip}%
		\begin{align*}
			\Findrule{\Fa \fora \tss \fora}{\Fas \fora \tss \fora} 
			\qquad
			\Fcoindrule{\fora \tss \Fa \fora}{\fora \tss \Fas \fora}%
		\tag*{\verylightgray{\hyperref[fig:axioms-aic0]{\Cref{fig:axioms-aic0}, \textsf{\textbf{\small Function Applications \& Orbits}}}}}
		\end{align*}%
	\end{axbox}
\proofsubparagraph{Proof of $\boldsymbol{\funca}$-\textrm{\scriptsize IND}:}

By the premise of \TirName{\Findname}, we assume that
\begin{align*}
	\sem{\Fa \fora}{n} \ssqsubseteq \sem{\fora}{n}\\
	\lqqimplies \funca\bigl( \sem{\fora}{n} \bigr) \ssqsubseteq \sem{\fora}{n}
	\tag*{(by \Cref{tab:semantics}) \quad \asmcolor{(\textasm)}}
\end{align*}%
holds for all $n \in \omega$.
We then prove by induction on $k$ that for any $n$%
\begin{align*}
	\forall\, k\colon F^k \bigl(\semn{\fora} \bigr) \ssqsubseteq \sem{\fora}{n} \tag{$\dagger$}
\end{align*}%
holds.
The induction base $k=0$ is trivial by \textasm\ and reflexivity of $\sqsubseteq$.
As induction hypothesis, assume that $F^k \bigl(\semn{\fora} \bigr) \sqsubseteq \semn{\fora}$ holds for arbitrary but fixed $k$.
Then we have for the induction step%
\begin{align*}
	&F^k \bigl(\semn{\fora} \bigr) \ssqsubseteq \semn{\fora}	\tag{by I.H.}																										\\
	\lqqimplies
	&F\bigl(F^k \semn{\fora} \bigr) \stackrel{\textnormal{mono}}{\sqsubseteq} F\bigl(\semn{\fora}\bigr) \stackrel{\hphantom{.}\textasm}{\sqsubseteq} \semn{\fora}	\tag{by monotonicity of $\funca$ and \textasm}	\\
	\lqqimplies
	&F\bigl(F^k \semn{\fora} \bigr) \ssqsubseteq \semn{\fora}																						\\
	\lqqiff
	&F^{k+1} \bigl(\semn{\fora} \bigr) \ssqsubseteq \semn{\fora}
\end{align*}
which concludes the proof of ($\dagger$).
We now choose $k=n$ in ($\dagger$) and thus obtain
\begin{align*}
	& F^{n} \bigl(\semn{\fora} \bigr) \ssqsubseteq \semn{\fora}						\\
	\lqqiff
	& \semn{\Fs \fora} \ssqsubseteq \semn{\fora}	\tag{by \Cref{tab:semantics}}
\end{align*}%
which is the conclusion of \TirName{\Findname}.

\proofsubparagraph{Proof of $\boldsymbol{\funca}$-\textrm{\scriptsize COIND}:}
	The proof of \TirName{\Fcoindname} is entirely dual.
\end{proof}


\section{Appendix: Axioms Pertaining to Properties of Functions}
\label{sec:app-functions}
%
%
\normalsize%
In this section, we provide omitted proofs on axioms pertaining to properties of functions like monotonicity and continuity.

\subsection{Monotonicity}
\label{sec:proof:FMono}

\begin{axbox}%
	\vspace*{-1.25\abovedisplayskip}%
	\begin{align*}
		\Fmonorulet{\fora \tss \forb}{\Fa \fora \tss \Fa \forb}%
		\tag*{\verylightgray{\hyperref[fig:axioms-aic0]{\Cref{fig:axioms-aic0}, \textsf{\textbf{\small Function Applications \& Orbits}}}}}
	\end{align*}
\end{axbox}%

\begin{theorem}[Precise Capturing of Monotonicity]
	The axiom \textnormal{\TirName{\Fmononame}} is valid for sequence algebras \emph{if and only} if the endomap~$\funca$ on the underlying complete lattice is monotonic.
\end{theorem}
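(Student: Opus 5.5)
The plan is to prove the two directions separately, working directly with the pointwise semantics of \Cref{tab:semantics}. Throughout we keep a fixed countably complete lattice $L$, a standard interpretation $\interpret$, and we write $\funca$ for $\interpret(\funca)$.

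For the direction ``$\funca$ monotonic implies \TirName{\Fmononame} valid'' we argue as follows. Assume the premise $\fora \preceq \forb$ holds under $\interpret$. Since $\fora \preceq \forb$ abbreviates $\fora \lmax \forb = \forb$, this means $\sem{\fora}{n} \sqcup \sem{\forb}{n} = \sem{\forb}{n}$, that is, $\sem{\fora}{n} \sqsubseteq \sem{\forb}{n}$ for every $n \in \omega$. Monotonicity of $\funca$ then yields $\funca(\sem{\fora}{n}) \sqsubseteq \funca(\sem{\forb}{n})$ for every $n$. By the semantics of function application, this is exactly $\sem{\Fa \fora}{n} \sqsubseteq \sem{\Fa \forb}{n}$ for all $n$, which is the conclusion $\Fa \fora \preceq \Fa \forb$.

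For the converse, assume \TirName{\Fmononame} is valid in the sequence algebra over $L$ with this interpretation of $\funca$. Take arbitrary $\laelema, \laelemb \in L$ with $\laelema \sqsubseteq \laelemb$. Choose an interpretation with $\interpret(\fora) = \overline{\laelema}$ and $\interpret(\forb) = \overline{\laelemb}$, the flat sequences. The premise $\fora \preceq \forb$ then holds pointwise, so validity gives $\funca(\laelema) = \sem{\Fa \fora}{0} \sqsubseteq \sem{\Fa \forb}{0} = \funca(\laelemb)$. Hence $\funca$ is monotonic. Flat sequences suffice here, and any index could be used in place of $0$.

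No step is a real obstacle. The only point needing care is translating the inequality shorthand $\preceq$, which is an identity involving $\lmax$, into the pointwise order on $L$. This uses the standard lattice fact $\ell \sqsubseteq m$ iff $\ell \sqcup m = m$, together with the pointwise definition of $\lmax$ from \Cref{def:seq-algebras}.

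Note that ``valid for sequence algebras'' must be read here relative to a fixed interpretation of $\funca$. Validity over all sequence algebras holds only if every interpretation of $\funca$ is in $\Mon(L)$, which \Cref{def:seq-algebras} already enforces. The content of the theorem is therefore the per-map equivalence just established.
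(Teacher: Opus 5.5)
Your proof is correct and matches the paper's argument: the forward direction applies monotonicity pointwise to the premise $\fora \preceq \forb$, and the converse instantiates the variables with the flat sequences $\overline{\ell}$ and $\overline{m}$ to read off $F(\ell) \sqsubseteq F(m)$. Your remark that validity must be read relative to a fixed interpretation of $F$ is a fair clarification that the paper leaves implicit, since \Cref{def:seq-algebras} already restricts interpretations to $\Mon(L)$.
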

\begin{proof}
	We prove both directions.
	
	\proofsubparagraph{If $\boldsymbol{\funca}$ is monotonic, then $\boldsymbol{\funca}$-\textrm{\scriptsize MONO} is valid:}
	
	Consider the following:%
	\begin{align}
					&\semn{\fora} \ssqsubseteq \semn{\forb}								\tag{by assuming the premise of \TirName{\Fmononame}}	\\
		\lqqimplies	&\funca\bigl(\semn{\fora} \bigr) \ssqsubseteq \funca \bigl(\semn{\forb} \bigr)	\tag{by assuming monotonicity of $\funca$}			\\
		\lqqiff		&\semn{\Fa \fora} \ssqsubseteq \semn{\Fa \forb}	\tag{by \Cref{tab:semantics}}
	\end{align}

	\proofsubparagraph{If $\boldsymbol{\funca}$-\textrm{\scriptsize MONO} is valid, then $\boldsymbol{\funca}$ is monotonic:}
	Let $\ell$ and $m$ be two arbitrary lattice elements with~\mbox{$\ell \laord m$}.
	From these two, form the two constant sequences $\overline{\ell} = \constseq{\ell}$ and \mbox{$\overline{m} = \constseq{m}$}.
	Then consider the following:%
	\begin{align*}
		&\ell \laord m																												\\
		\qqiff
		&\forall n\colon \semn{\,\overline{\ell}\,} \sqsubseteq \semn{\,\overline{m}\,}					\tag{by definition of $\overline{\ell}$ and $\overline{m}$}		\\
		\qqimplies
		&\forall n\colon \semn{\Fa\overline{\ell}\,} \sqsubseteq \semn{\Fa\overline{m}\,}						\tag{by assuming validity of \TirName{\Fmononame}}	\\
		\qqiff
		&\forall n\colon \funca\bigl(\semn{\,\overline{\ell}\,} \bigr) \ssqsubseteq \funca \bigl(\semn{\,\overline{m}\,} \bigr)	\tag{by \Cref{tab:semantics}}			\\
		\qqiff
		&\funca(\ell) \ssqsubseteq \funca (m)	\tag{by definition of $\overline{\ell}$ and $\overline{m}$}
	\end{align*}
	thus proving that $\funca$ is monotonic.
\end{proof}

\subsection{$\boldsymbol{\omega}$-continuity and $\boldsymbol{\omega}$-cocontinuity}
\label{sec:proof:omegaCont}

\begin{axbox}
	\normalsize%
	\begin{align*}
		\omegacontrule{\fora \tss \Next \fora}{\F \Sup \fora \tss \Sup \F \fora}%
		\qquad
		\omegacocontrule{\Next \fora \tss \fora}{\Inf \F \fora \tss \F \Inf \fora}
		\tag*{\verylightgray{\hyperref[sec:continuity-axioms]{\Cref{sec:aic1}, \textsf{\textbf{\small Continuous Functions}}}}}
	\end{align*}
\end{axbox}%
\begin{theorem}[Precise Capturing of $\omega$-(co)continuity]
$~$
	\begin{romanenumerate}
		\item The axiom \textnormal{\TirName{\omegacontname}} is valid for sequence algebras \emph{if and only} if the endomap~$\funca$ on the underlying complete lattice is $\omega$-continuous.
		\item The axiom \textnormal{\TirName{\omegacocontname}} is valid for sequence algebras \emph{if and only} if the endomap~$\funca$ on the underlying complete lattice is $\omega$-cocontinuous.
	\end{romanenumerate}
\end{theorem}%
\begin{proof}
	We only prove the case for \textnormal{\TirName{\omegacontname}}. The case for \textnormal{\TirName{\omegacocontname}} is entirely dual.
	
	\proofsubparagraph{If $\boldsymbol{\funca}$ is $\boldsymbol{\omega}$-continuous, then $\boldsymbol{\omega}$-\textrm{\scriptsize CONT} is valid:}
	
	By the premise of \TirName{\omegacontname}, we assume that%
	\begin{align*}
		\sem{\fora}{k}
		\ssqsubseteq
		\sem{\Next \fora}{k}
		\morespace{\qmorespace{\stackrel{\mathclap{\text{\tiny by \Cref{tab:semantics}}}}{=}}}
		\sem{\fora}{k+1}
	\end{align*}%
	holds for any $k \in \omega$, which captures precisely that $\bigl( \sem{\fora}{k} \bigr)_{k \in \omega}$ is an ascending chain.
	If $\bigl( \sem{\fora}{k} \bigr)_{k \in \omega}$ is an ascending chain, it is clear that for any $n \in \omega$ also%
	\begin{align*}
		\bigl( \sem{\fora}{k} \bigr)_{n \leq k}
		\qquad
		\textnormal{is an ascending chain.}
		\tag*{\asmcolor{(\textasm)}}
	\end{align*}%
	Then consider the following:%
	\begin{align*}
		\semn{\funca \Sup \fora}
		\eeq \funca \bigl( \semn{\Sup \fora} \bigr)
		&\eeq \funca \biggl( \sup_{n\leq k}~ \sem{\fora}{k} \biggr)	\tag{by \Cref{tab:semantics}}								\\
		&\ssqsubseteq \sup_{n\leq k}~ \funca \bigl( \semn{\fora} \bigr)  		\tag{by \textasm\ and assuming $\omega$-continuity of $\funca$}	\\
		&\eeq \sup_{n\leq k}~ \semn{\funca \fora} 		
		\eeq \semn{\Sup \funca \fora}						\tag{by \Cref{tab:semantics}}	
	\end{align*}%

	\proofsubparagraph{If $\boldsymbol{\omega}$-\textrm{\scriptsize CONT} is valid, then $\boldsymbol{\funca}$ is $\boldsymbol{\omega}$-continuous:}
	Assume that $\fora$ is a variable that stands for an arbitrary ascending chain, i.e.\ $\bigl( \semn{\fora} \bigr)_{n \in \omega}$ is an arbitrary ascending chain.
	Then for any $n \in \omega$, we have%
	\begin{align*}
		\sem{\fora}{n}
		{\qqmorespace{\stackrel{\mathclap{\text{\tiny $\fora$ ascending}}}{\sqsubseteq}}}
		\sem{\fora}{n+1}
		\morespace{\qmorespace{\stackrel{\mathclap{\text{\tiny by \Cref{tab:semantics}}}}{=}}}
		\sem{\Next \fora}{n}
		\tag*{\asmcolor{(\textasm)}}
	\end{align*}%
	thus fulfilling the premise of \TirName{\omegacontname}.
	
	We now have to prove that $\funca \bigl( \sup_{n \in \omega}~ \semn{\fora} \bigr) \sqsubseteq \sup_{n \in \omega}~ \funca \bigl( \semn{\fora} \bigr)$.
	Assuming that \TirName{\omegacontname} is valid, we reason for this as follows:
	\begin{align*}
		\funca \biggl( \sup_{n \in \omega}~ \semn{\fora} \biggr)
		&\eeq \funca \biggl( \sup_{0 \leq n}~ \semn{\fora} \biggr)															\\
		&\eeq \funca \bigl( \sem{\Sup \fora}{0} \bigr)			\tag{by \Cref{tab:semantics}}									\\
		&\eeq \sem{\Fa \Sup \fora}{0} 						\tag{by \Cref{tab:semantics}}									\\
		& \ssqsubseteq \sem{\Sup \Fa \fora}{0}  				\tag{by \textasm\ and assuming validity of \TirName{\omegacontname}}	\\
		& \eeq \sup_{0 \leq n}~ \sem{\Fa \fora}{n} 				\tag{by \Cref{tab:semantics}}										\\
		& \eeq \sup_{0 \leq n}~ \funca\bigl(\semn{\fora}\bigr) 		\tag{by \Cref{tab:semantics}}								\\
		& \eeq \sup_{n \in \omega} \funca\bigl(\semn{\fora}\bigr) 	\tag*{\qedhere}
	\end{align*}
\end{proof}

\subsection{Countable Continuity and Countable Cocontinuity}
\label{sec:proof:alephCont}
\begin{axbox}
	\normalsize%
	\vspace*{-1.25\abovedisplayskip}%
	\begin{align*}
		\axalephcont{\fora}
		\qquad
		\axalephcocont{\fora}
		\tag*{\verylightgray{\hyperref[sec:continuity-axioms]{\Cref{sec:aic1}, \textsf{\textbf{\small Continuous Functions}}}}}
	\end{align*}
\end{axbox}%
\begin{theorem}[Precise Capturing of Countable (Co)continuity]
$~$
	\begin{romanenumerate}
		\item The axiom \textnormal{\TirName{\alephcontname}} is valid for sequence algebras \emph{if and only} if the endomap~$\funca$ on the underlying complete lattice is countably continuous.
		\item The axiom \textnormal{\TirName{\alephcocontname}} is valid for sequence algebras \emph{if and only} if the endomap~$\funca$ on the underlying complete lattice is countably cocontinuous.
	\end{romanenumerate}
\end{theorem}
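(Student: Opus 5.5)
The plan is to prove both directions of (i) directly from the semantics in \Cref{tab:semantics}, following the pattern of the proof for $\omega$-continuity above. Part (ii) then follows by the entirely dual argument, replacing suprema by infima and $\Supsymbol$ by $\Infsymbol$. Throughout, \emph{countably continuous} is read as in \Cref{sec:calc}: $\funca(\sup S) = \sup \funca(S)$ for every \emph{non-empty} countable $S \subseteq \ladomain$. Since $\funca$ is monotonic, $\sup \funca(S) \sqsubseteq \funca(\sup S)$ always holds. So in both directions only the inequality $\funca(\sup S) \sqsubseteq \sup \funca(S)$ matters, and this is exactly the shape of \TirName{\alephcontname}.

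For the direction ``$\funca$ countably continuous implies \TirName{\alephcontname} valid'', fix an arbitrary standard interpretation and an index $n$. The set $S_n = \{\sem{\fora}{k} \mid n \leq k\}$ is non-empty and countable, so countable continuity applies to it. We then compute
\begin{align*}
	\semn{\Fa \Sup \fora} \eeq \funca\Bigl(\sup_{n \leq k} \sem{\fora}{k}\Bigr) \eeq \sup_{n \leq k} \funca\bigl(\sem{\fora}{k}\bigr) \eeq \semn{\Sup \Fa \fora},
\end{align*}
which in particular gives the required $\sqsubseteq$ at every index. No premise is needed, since the axiom has none.

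For the converse, assume \TirName{\alephcontname} is valid and let $S \subseteq \ladomain$ be non-empty and countable. Because $S$ is non-empty and countable, there is a surjection $s \colon \omega \to S$; if $S$ is finite, we repeat some element infinitely often. We interpret $\fora$ as the sequence $s = \sequence{s_0}{s_1}{s_2}{s_3}$. Surjectivity gives $\sup_{0 \leq k} s_k = \sup S$ and $\sup_{0 \leq k} \funca(s_k) = \sup \funca(S)$. Evaluating the valid inequality $\Fa \Sup \fora \preceq \Sup \Fa \fora$ at index $0$ then yields
\begin{align*}
	\funca(\sup S) \eeq \sem{\Fa \Sup \fora}{0} \ssqsubseteq \sem{\Sup \Fa \fora}{0} \eeq \sup \funca(S).
\end{align*}
Combined with the reverse inequality from monotonicity, this gives $\funca(\sup S) = \sup \funca(S)$.

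The only step needing care is this converse: the arbitrary (unordered) countable set must be encoded as a sequence whose index-$0$ majorum is exactly $\sup S$. Choosing a surjective enumeration, with repetitions for finite $S$, handles this. It also explains why the empty set must be excluded: no sequence has an empty range, and indeed the axiom does not force $\funca(\botL) = \botL$. The dual statement (ii) uses the same enumeration together with the index-$0$ minorum.
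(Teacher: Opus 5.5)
Your proof is correct and takes the same route as the paper's. In the forward direction you apply countable continuity to the tail set $\{\sem{\fora}{k} \mid n \leq k\}$ at each index, and in the converse you interpret $\fora$ as an enumeration of $S$ and read off \TirName{\alephcontname} at index $0$. Your explicit handling of finite $S$ (surjective enumeration with repetitions), the restriction to non-empty sets, and the use of monotonicity for the reverse inequality spell out points the paper's proof leaves implicit.
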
%
\begin{proof}
	We only prove the case for \textnormal{\TirName{\alephcontname}}. The case for \textnormal{\TirName{\alephcocontname}} is entirely dual.
	
	\proofsubparagraph{If $\boldsymbol{\funca}$ is countably continuous, then \textrm{\scriptsize C-CONT} is valid:}
	
	For any $\fora$, clearly the set %
	\begin{align*}
		\setcomp{\sem{\fora}{k}}{n \leq k} \qquad \text{is countable}	\tag*{(\asmcolor{\textasm})}
	\end{align*}%
	Now, consider the following:%
	\begin{align*}
		\semn{\funca \Sup \fora}
		\eeq \funca \bigl( \semn{\Sup \fora} \bigr)
		&\eeq \funca \biggl( \sup_{n\leq k}~ \sem{\fora}{k} \biggr)	\tag{by \Cref{tab:semantics}}								\\
		&\ssqsubseteq \sup_{n\leq k}~ \funca \bigl( \semn{\fora} \bigr)  		\tag{by assuming countable continuity of $\funca$}	\\
		&\eeq \sup_{n\leq k}~ \semn{\funca \fora} 		
		\eeq \semn{\Sup \funca \fora}						\tag{by \Cref{tab:semantics}}	
	\end{align*}%

	\proofsubparagraph{If \textrm{\scriptsize C-CONT} is valid, then $\boldsymbol{\funca}$ is countably continuous:}
	Assume that $\fora$ is a variable that enumerates an arbitrary countable set of lattice elements $\setcomp{ \semn{\fora} }{n \in \omega}$.
	For this arbitrary set, we have to prove that $\funca \bigl( \sup_{n \in \omega}~ \semn{\fora} \bigr) \sqsubseteq \sup_{n \in \omega}~ \funca \bigl( \semn{\fora} \bigr)$.
	Assuming that \TirName{\alephcontname} is valid, we reason for this as follows:
	\begin{align*}
		\funca \biggl( \sup_{n \in \omega}~ \semn{\fora} \biggr)
		&\eeq \funca \biggl( \sup_{0 \leq n}~ \semn{\fora} \biggr)															\\
		&\eeq \funca \bigl( \sem{\Sup \fora}{0} \bigr)			\tag{by \Cref{tab:semantics}}									\\
		&\eeq \sem{\Fa \Sup \fora}{0} 						\tag{by \Cref{tab:semantics}}									\\
		& \ssqsubseteq \sem{\Sup \Fa \fora}{0}  				\tag{by \textasm\ and assuming validity of \TirName{\alephcontname}}	\\
		& \eeq \sup_{0 \leq n}~ \sem{\Fa \fora}{n} 				\tag{by \Cref{tab:semantics}}										\\
		& \eeq \sup_{0 \leq n}~ \funca\bigl(\semn{\fora}\bigr) 		\tag{by \Cref{tab:semantics}}								\\
		& \eeq \sup_{n \in \omega} \funca\bigl(\semn{\fora}\bigr) 	\tag*{\qedhere}
	\end{align*}
\end{proof}


\section{Appendix: Soundness of $\boldsymbol{\AIC_1}$}
\label{sec:app-soundness-AIC1}
%
%
\normalsize%
In this section, we prove the soundness of the $\AIC_1$ axiom system (see \Cref{fig:axioms-aic1}).

\thmAICOneSoundness*

\begin{proof}

All axioms in $\AIC_0$ are included in $\AIC_1$ and all axioms in $\AIC_0$ are valid for sequence algebras. 
Every axiom in $\AIC_1 \setminus \AIC_0$ (see \Cref{fig:axioms-aic1}) is finitely derivable from $\AIC_0$.
Thus, in the following we only have to show these derivations for the axioms in \Cref{fig:axioms-aic1}. 

For the sake of being explicit, we will (un)fold the definitions of inequations $\fora \tss \forb $ into equations by the following rules:

\vspace*{-1\baselineskip}%
\smallrules%
\begin{align*}
    \leequivrule{\forb \lmax \fora \hqeq \fora}{\forb \tss \fora}
    \quad
    \leequivrule{\forb \hqeq\fora \lmin \forb}{\forb \tss \fora}
    \qquad
    \equivlerule{\forb \tss \fora}{\forb \lmax \fora \hqeq \fora}
    \quad
    \equivlerule{\forb \tss \fora}{\forb \hqeq \fora \lmin \fora}
\end{align*}%
\normalrules%
Note that these are really just purely presentational rewritings -- not inference rules or quasiequations.
We hence use a dotted inference bar to make that distinction.

We now go over the axioms of \Cref{fig:axioms-aic1} and show derivations from $\AIC_0$ and already-derived axioms in $\AIC_1$.

\bigskip%
\begin{infbox}%
	\vspace*{-1.25\abovedisplayskip}%
	\begin{align*}
		\axreflex{\fora} 
		\tag*{\verylightgray{\hyperref[fig:axioms-aic1]{\Cref{fig:axioms-aic1}, \textsf{\textbf{\small Additional Partial Order \& Lattice Axioms}}}}}
	\end{align*}%
	\normalsize%
\end{infbox}%

\vspace*{-1\baselineskip}%
\scriptsizerules%
\belowdisplayskip=0pt%
\begin{align*}
    \leequivrulet{
        \eqtransrulet{
            \leqcongrulet{
                \axeqreflex{\fora}
            }{
                    \lequivlerule{
                        \axtop{\fora}
                    }{\fora \hqeq\top \lmin \fora}
            }{\fora \lmax \fora \hqeq \fora \lmax (\top \lmin \fora)}
        }{
        \eqtransrulet{
            \leqcongrulet{
                \axeqreflex{\fora}
            }{
                \axjoincomm{\top}{\fora}
            }{\fora \lmax (\top \lmin \fora) \hqeq \fora \lmax (\fora \lmin \top)}
        }
        {\axjoinabsorb{\fora}{\top}}
        {\fora \lmax (\top \lmin \fora) \hqeq \fora}
        }{\fora \lmax \fora \hqeq \fora}
    }
    {\fora \tss \fora}
\end{align*}
\normalrules%
\normalsize%

\bigskip%
\begin{infbox}%
	\vspace*{-1.25\abovedisplayskip}%
	\begin{align*}
		\cutrule{\fora \tss \forb}{\forb \tss \forc}{\fora \tss \forc} 
		\tag*{\verylightgray{\hyperref[fig:axioms-aic1]{\Cref{fig:axioms-aic1}, \textsf{\textbf{\small Additional Partial Order \& Lattice Axioms}}}}}
	\end{align*}%
	\normalsize%
\end{infbox}%

\vspace*{-1\baselineskip}%
\smallrules%
\begin{align*}
    \leequivrule{
        \eqtransrule{
            \fora \lmax \forc \hqeq a \lmax (\forb \lmax \forc)
        }{
            \fora \lmax (\forb \lmax \forc) \hqeq \forc           
        }{\fora \lmax \forc \hqeq \forc}
    }{\fora \tss \forc}
\end{align*}%
\normalrules%
\normalsize%
We provide derivations for each leaf separately. For the left leaf, we have:

\vspace*{-1\baselineskip}%
\smallrules%
\begin{align*}
    \eqcongrule{
                \axeqreflex{\fora}
            }{
                \eqsymmrulet{
                    \equivlerulet{
                        \sassume{\forb}{\forc}
                    }{\forb \lmax \forc \hqeq \forc}
                }{\forc \hqeq \forb \lmax \forc}
            }{\fora \lmax \forc \hqeq a \lmax (\forb \lmax \forc)}
\end{align*}%
\normalrules%
\normalsize%
For the right leaf, we have:%

\smallrules%
\vspace*{-1\baselineskip - 1\abovedisplayskip}%
\belowdisplayskip=0pt%
\begin{align*}
    \leqtransrulet{
            \leqtransrulet{
                \leqsymmrulet{
                    \axjoinassoc{\fora}{\forb}{\forc}
                }{\fora \lmax (\forb \lmax \forc) \hqeq (\fora \lmax \forb) \lmax \forc}
            }{
            \leqcongrule{
                \lequivlerulet{
                    \sassume{\fora}{\forb}
                }{\fora \lmax \forb \hqeq \forb}
            }{
                \axeqreflex{\forc}
            }{(\fora \lmax \forb) \lmax \forc \hqeq \forb \lmax \forc}
            }{\fora \lmax (\forb \lmax \forc) \hqeq \forb \lmax \forc}
            }{
            \equivlerulet{
                \sassume{\forb}{\forc}
            }{\forb \lmax \forc \hqeq \forc}
            }{\fora \lmax (\forb \lmax \forc) \hqeq \forc}
\end{align*}%
\normalrules%
\normalsize%

\pagebreak%
\begin{infbox}%
	\vspace*{-1.25\abovedisplayskip}%
	\begin{align*}
        \inequalequivrulet{
						\fora \tss \forb
						\\
						\forb \tss \fora
					}{
						\fora \hqeq \forb
					}
		\tag*{\verylightgray{\hyperref[fig:axioms-aic1]{\Cref{fig:axioms-aic1}, \textsf{\textbf{\small Additional Partial Order \& Lattice Axioms}}}}}
	\end{align*}%
	\normalsize%
\end{infbox}%

\smallrules%
\vspace*{-1\baselineskip}%
\belowdisplayskip=0pt%
\begin{align*}
    \leqtransrule{
        \lequivlerulet{
            \sassume{\fora}{\forb}
        }{\fora \hqeq \forb \lmin \fora}
    }{
        \eqsymmrulet{
            \eqtransrulet{
                \equivlerule{
                    \sassume{\forb}{\fora}
                }{\forb \hqeq \fora \lmin \forb}
            }{
                \axmeetcomm{\fora}{\forb}
            }{\forb \hqeq \forb \lmin \fora}
        }{\forb \lmin \fora \hqeq \forb}
    }{\fora \hqeq \forb}
\end{align*}%
\normalrules%
\normalsize%

\bigskip%
\begin{infbox}%
	\vspace*{-1.25\abovedisplayskip}%
	\begin{align*}
		\axjoinidem{\fora}
		\qquad\quad
        \axmeetidem{\fora}
		\tag*{\verylightgray{\hyperref[fig:axioms-aic1]{\Cref{fig:axioms-aic1}, \textsf{\textbf{\small Additional Partial Order \& Lattice Axioms}}}}}
	\end{align*}%
	\normalsize%
\end{infbox}%

\smallrules%
\vspace*{-1\baselineskip}%
\belowdisplayskip=0pt%
\begin{align*}
    \equivlerule{
        \axreflex{\fora}
    }{\fora \lmax \fora \hqeq \fora}
    \qquad\qquad
	\eqsymmrulet{
		\equivlerulet{
			\axreflex{\fora}
		}{
			\fora \hqeq \fora \lmin \fora
		}
	}{
		\fora \lmin \fora \hqeq \fora
	}
\end{align*}%
\normalrules%
\normalsize%

\bigskip%
\begin{infbox}%
	\vspace*{-1.25\abovedisplayskip}%
	\begin{align*}
		\equivinequalRrule{
						\fora \hqeq \forb
					}{
						\fora \tss \forb
					}
		\qquad\quad
        \equivinequalLrule{
						\fora \hqeq \forb
					}{
						\forb \tss \fora
					}
		\tag*{\verylightgray{\hyperref[fig:axioms-aic1]{\Cref{fig:axioms-aic1}, \textsf{\textbf{\small Additional Partial Order \& Lattice Axioms}}}}}
	\end{align*}%
	\normalsize%
\end{infbox}%
We only show a derivation for \TirName{\equivinequalRname}; the one for \TirName{\equivinequalLname} is analogous.%

\smallrules%
\vspace*{-1\baselineskip}%
\belowdisplayskip=0pt%
\begin{align*}
    \leequivrule{
        \eqtransrule{
            \eqcongrule{
                \seassume{\fora}{\forb}
            }{
                \axeqreflex{\forb}
            }{\fora \lmax \forb \hqeq \forb \lmax \forb}
        }{
            \axjoinidem{\forb}
        }{\fora \lmax \forb \hqeq \forb }
    }{\fora \tss \forb}
\end{align*}%
\normalrules%
\normalsize%

\bigskip%
\begin{infbox}%
	\vspace*{-1.25\abovedisplayskip}%
	\begin{align*}
		\joinintroRrule{
            \fora \tss \forb
        }{
            \fora \tss \forb \lmax \forc
        }%
		\qquad\quad
        \meetintroLrule{
            \fora \tss \forc
        }{\fora \lmax \forb \tss \forc}
		\tag*{\verylightgray{\hyperref[fig:axioms-aic1]{\Cref{fig:axioms-aic1}, \textsf{\textbf{\small Additional Partial Order \& Lattice Axioms}}}}}
	\end{align*}%
	\normalsize%
\end{infbox}%
We only show a derivation for \TirName{\joinintroRname}; the one for \TirName{\meetintroLname} is analogous.%

\smallrules%
\vspace*{-1\baselineskip}%
\belowdisplayskip=0pt%
\begin{align*}
    \leequivrulet{
        \eqtransrulet{
            \leqsymmrule{
                \axjoinassoc{\fora}{\forb}{\forc}
            }{\fora \lmax (\forb \lmax \forc) \hqeq (\fora \lmax \forb) \lmax \forc}
        }{
            \eqcongrulet{
                \equivlerule{
                    \sassume{\fora}{\forb}
                }{\fora \lmax \forb \hqeq \forb}
            }{
                \axeqreflex{\forc}
            }{(\fora \lmax \forb) \lmax \forc \hqeq \forb \lmax \forc}
        }{\fora \lmax (\forb \lmax \forc) \hqeq \forb \lmax \forc}
    }{\fora \tss \forb \lmax \forc}
\end{align*}%
\normalrules%
\normalsize%

\pagebreak
\begin{infbox}%
	\vspace*{-1.25\abovedisplayskip}%
	\begin{align*}
		\joinintroLrule{
            \fora \tss \forc
        }{
            \forb \tss \forc
        }{
            \fora \lmax \forb \tss \forc
        }%
		\qquad\quad
        \meetintroRrule{
            \fora \tss \forb
        }{
            \fora \tss \forc
        }{
            \fora \tss \forb \lmin \forc
        }
		\tag*{\verylightgray{\hyperref[fig:axioms-aic1]{\Cref{fig:axioms-aic1}, \textsf{\textbf{\small Additional Partial Order \& Lattice Axioms}}}}}
	\end{align*}%
	\normalsize%
\end{infbox}%
We only show a derivation for \TirName{\joinintroLname}; the one for \TirName{\meetintroRname} is analogous.%

\smallrules%
\vspace*{-1\baselineskip}%
\belowdisplayskip=0pt%
\begin{align*}
   \leequivrulet{
        \eqtransrulet{
            \axjoinassoc{\fora}{\forb}{\forc}
        }{
            \eqtransrulet{
                \eqcongrule{
                    \axeqreflex{\fora}
                }{
                    \equivlerulet{
                        \sassume{\forb}{\forc}
                    }{\forb \lmax \forc \hqeq \forc}
                }{\fora \lmax (\forb \lmax \forc) \hqeq \fora \lmax \forc}
            }{
                \equivlerulet{
                    \sassume{\fora}{\forc}
                }{\fora \lmax \forc \hqeq \forc}
            }{\fora \lmax (\forb \lmax \forc) \hqeq \forc}
        }{(\fora \lmax \forb) \lmax \forc \hqeq \forc}
    }{\fora \lmax \forb \tss \forc}
\end{align*}%
\normalrules%
\normalsize%

\bigskip%
\begin{infbox}%
	\vspace*{-1.25\abovedisplayskip}%
	\begin{align*}
		\joinelimrule{
            \fora \lmax \forb \tss \forc
        }{\forb \tss \forc}%
		\qquad\quad
        \meetelimrule{
            \forb \tss \forb \lmin \forc
        }{\fora \tss \forb}
		\tag*{\verylightgray{\hyperref[fig:axioms-aic1]{\Cref{fig:axioms-aic1}, \textsf{\textbf{\small Additional Partial Order \& Lattice Axioms}}}}}
	\end{align*}%
	\normalsize%
\end{infbox}%
We only show a derivation for \TirName{\joinelimname}, the one for \TirName{\meetelimname} is analogous.
First, we note that we can derive a \TirName{\joincommname} rule for $\preceq$ (similar to the one for $=$) as follows:

\smallrules%
\vspace*{-1\baselineskip}%
\begin{align*}
            \leequivrulet{
                \eqtransrulet{
                    \leqcongrulet{
                        \axjoincomm{\forb}{\fora}
                    }{
                        \axeqreflex{\fora \lmax \forb}
                    }{(\forb \lmax \fora) \lmax (\fora \lmax \forb) \hqeq (\fora \lmax \forb) \lmax (\fora \lmax \forb)}
                }{
                    \nequivaxiom{\joinidemname}{(\fora \lmax \forb) \lmax (\fora \lmax \forb)}{\fora \lmax \forb}
                }{(\forb \lmax \fora) \lmax (\fora \lmax \forb) \hqeq \fora \lmax \forb}
            }{\forb \lmax \fora \tss \fora \lmax \forb}
\end{align*}%
\normalrules%
\normalsize%
We then proceed as follows:%

\smallrules%
\vspace*{-1\baselineskip}%
\belowdisplayskip=0pt%
\begin{align*}
    \lcutrule{
        \lcutrulet{
        \ljoinintroRrule{
            \axreflex{\forb}
        }{\forb \tss \forb \lmax \fora}
        }{\nsaxiom{\text{$\lmax$-comm}}{\forb \lmax \fora}{\fora \lmax \forb}
        }{
            \forb \tss \fora \lmax \forb
        }
    }{
        \sassume{\fora \lmax \forb}{\forc}
    }{\forb \tss \forc}
\end{align*}%
\normalrules%
\normalsize%

\bigskip%
\begin{infbox}%
	\vspace*{-1.25\abovedisplayskip}%
	\begin{align*}
        \antisymmrule{
						\terma\subst{\fora}{\termc} \tss \termb\subst{\fora}{\termc}
					}{
						\termc \hqeq \termd
					}{
						\terma\subst{\fora}{\termd} \tss \termb\subst{\fora}{\termd}
					}
		\tag*{\verylightgray{\hyperref[fig:axioms-aic1]{\Cref{fig:axioms-aic1}, \textsf{\textbf{\small Additional Partial Order \& Lattice Axioms}}}}}
	\end{align*}%
	\normalsize%
\end{infbox}%
The following axiom follows by induction on the term structure of $\terma$, using \TirName{\eqcongname} and the basic equational lemmas.%
\begin{align*}
    \nirule{}{\termd \hqeq \termc}{ \terma\subst{\fora}{\termd} \hqeq \terma\subst{\fora}{\termc}}
\end{align*}
We then proceed as follows:

\scriptsizerules%
\vspace*{-1\baselineskip}%
\belowdisplayskip=0pt%
\begin{align*}
    \lleequivrulet{
        \leqtransrulet{
            \leqcongrulet{
                \nirulet{}{\termd \hqeq \termc}{\terma\subst{\fora}{\termd} \hqeq \terma\subst{\fora}{\termc}}
            }{
                \nirulet{}{\termd \hqeq \termc}{\termb\subst{\fora}{\termd} \hqeq \termb\subst{\fora}{\termc}}
            }{\terma\subst{\fora}{\termd} \lmax \termb\subst{\fora}{\termd} \hqeq \terma\subst{\fora}{\termc} \lmax \termb\subst{\fora}{\termc}}
        }{
            \quad\eqtransrulet{
                \lequivlerulet{
                    \terma\subst{\fora}{\termc} \tss \termb\subst{\fora}{\termc}
                }{\terma\subst{\fora}{\termc} \lmax \termb\subst{\fora}{\termc} \hqeq \termb\subst{\fora}{\termc}}
            }{
                \nirulet{}{\termd \hqeq \termc}{\termb\subst{\fora}{\termc} \hqeq \termb\subst{\fora}{\termd}}
            }{\terma\subst{\fora}{\termc} \lmax \termb\subst{\fora}{\termc} \hqeq \termb\subst{\fora}{\termd}}
        }{\terma\subst{\fora}{\termd} \lmax \termb\subst{\fora}{\termd} \hqeq \termb\subst{\fora}{\termd}}
    }{\terma\subst{\fora}{\termd} \tss \termb\subst{\fora}{\termd}}
\end{align*}%
\normalrules%
\normalsize%


\pagebreak%
\begin{infbox}
		\vspace*{-1.25\abovedisplayskip}%
		\begin{align*}
			\supintroRrule{\fora \tss \forb}{\fora \tss \Sup \forb} 
			\qquad 
			\supelimrule{\Sup \fora \tss \forb}{\fora \tss \forb} 
		\end{align*}%
		\normalsize%
		\begin{align*}
			\infintroLrule{\fora \tss \forb}{\Inf \fora  \tss \forb} 
			\qquad 
			\infelimrule{\fora \tss \Inf \forb}{\fora \tss \forb} 
            \tag*{\verylightgray{\hyperref[fig:axioms-aic1]{\Cref{fig:axioms-aic1}, \textsf{\textbf{\small Additional Majora \& Minora Axioms}}}}}
		\end{align*}%
        \normalsize
	\end{infbox}
	\label{sec:proof:SupInf:IntroBasic}
		For \textsc{$\Supsymbol$-intro} and \textsc{$\Supsymbol$-elim}, consider the following two derivations:
		
		\smallrules%
\vspace*{-1\baselineskip}%
		\begin{align*}
			\cutrule{\sassume{\fora}{\forb}}{\axsupinflate{\forb}}{\fora \tss \Sup \forb}
		\qquad\qquad
			\cutrule{\axsupinflate{\fora}}{\sassume{\Sup \fora}{\forb}}{\fora \tss \forb}
		\end{align*}%
\normalrules%
\normalsize%
		For \textsc{$\Infsymbol$-intro} and \textsc{$\Infsymbol$-elim}, consider these two derivations:
		
		\smallrules%
\vspace*{-1\baselineskip}%
\belowdisplayskip=0pt%
		\begin{align*}
			\cutrule{\axinfdeflate{\fora}}{\sassume{\fora}{\forb}}{\Inf \fora \tss \forb}
		\qquad\qquad
			\cutrule{\sassume{\fora}{\Inf \forb}}{\axinfdeflate{\forb}}{\fora \tss \forb}
		\end{align*}%
\normalrules%
\normalsize%

\bigskip%
\begin{infbox}%
		\vspace*{-1.25\abovedisplayskip}%
		\begin{align*}
			\axsupexpand{\fora}%
			\qquad\qquad
			\axinfexpand{\fora} 
            \tag*{\verylightgray{\hyperref[fig:axioms-aic1]{\Cref{fig:axioms-aic1}, \textsf{\textbf{\small Additional Majora \& Minora Axioms}}}}}
		\end{align*}%
        \normalsize
	\end{infbox}%
    We only show a derivation for \textsc{\supexpandname}; the one for \textsc{\supexpandname} is analogous.
    
    \scriptsizerules%
\vspace*{-1\baselineskip}%
\belowdisplayskip=0pt%
    \begin{align*}
		\inequalequivrulet{
		\suptightrule{
			\ljoinintroRrulet{
				\axreflex{\fora}
			}{
				\fora \tss \fora \llmax \Next \Sup \fora
			}%
		}{
			\joincommrulet{
				\joinintroRrulet{
					\nextmonorulet{
						\joinintroLrulet{
							\axsupinflate{\fora}
						}{
							\axsupdesc{\fora}
						}{
							\fora \llmax \Next \Sup \fora
							\tss 
							\Sup \fora
						}
					}{
						\Next (\fora \llmax \Next \Sup \fora)
						\tss 
						\Next \Sup \fora
					}
				}{
					\Next (\fora \llmax \Next \Sup \fora) 
					\tss 
					\Next \Sup \fora \llmax \fora
				}
			}{
				\Next (\fora \llmax \Next \Sup \fora) 
				\tss 
				\colmark{\fora \llmax \Next \Sup \fora}
			}
		}{
			\Sup \fora \tss \fora \llmax \Next \Sup \fora
		}%
		\\
		\joinintroLrulet{
			\axsupinflate{\fora}%
		}{
			\nsaxiom{$\Supsymbol$-desc}{\Next \Sup \fora}{\Sup \fora}
		}{
			\fora \llmax \Next \Sup \fora \tss \Sup \fora
		}%
		}{
			\fora \llmax \Next \Sup \fora \hqeq \Sup \fora
		}
	\end{align*}%
\normalrules%
\normalsize%

\bigskip%
\begin{infbox}
	\vspace*{-1\abovedisplayskip}%
	\begin{align*}
		\suptightrule{\fora \tss \forb}{\Next \forb \tss \forb }{\Sup \fora \tss \forb}%
	\quad
		\inftightrule{\fora \tss \Next \fora }{\fora \tss \forb}{ \fora \tss \Inf \forb}%
        \tag*{\verylightgray{\hyperref[fig:axioms-aic1]{\Cref{fig:axioms-aic1}, \textsf{\textbf{\small Additional Majora \& Minora Axioms}}}}}
	\end{align*}%
    \normalrules%
\end{infbox}

\smallrules%
\vspace*{-1\baselineskip}%
\belowdisplayskip=0pt%
\begin{align*}
	\cutrule{
		\lsupmonorulet{\fora \tss \forb}{\Sup \fora \tss \Sup \forb}
	}{
		\nextindrulet{\Next \forb \tss \forb}{\Sup \forb \tss \forb} 
	}{
		\Sup \fora \tss \forb
	}%
\qquad\qquad\qquad\qquad
	\cutrulet{
		\lnextcoindrulet{\fora \tss \Next \fora}{\fora \tss \Inf \fora}
	}{
		\infmonorulet{\fora \tss \forb}{\Inf \fora \tss \Inf \forb} 
	}{ 
		\fora \tss \Inf \forb
	}%
\end{align*}%
\normalrules%
\normalsize%

\bigskip%
\begin{infbox}%
		\vspace*{-1.25\abovedisplayskip}%
		\begin{align*}
			\axsupdesc{\fora}%
			\qquad\qquad
			\axinfasc{\fora} 
            \tag*{\verylightgray{\hyperref[fig:axioms-aic1]{\Cref{fig:axioms-aic1}, \textsf{\textbf{\small Additional Majora \& Minora Axioms}}}}}
		\end{align*}%
        \normalsize
	\end{infbox}%
    We only show the derivation for \TirName{\supdescname}; the one for \TirName{\infascname} is analogous.
	
	\smallrules%
\vspace*{-1\baselineskip}%
\belowdisplayskip=0pt%
    \begin{align*}
        \nextindrule{
            \axsupidem{\fora}
        }{\Next \Sup \fora \tss \Sup \fora}
\end{align*}%
\normalrules%
\normalsize%
    

\pagebreak
\begin{infbox}
	\vspace*{-1.25\abovedisplayskip}%
	\begin{align*}
		\axsemicont{\fora}
		\qquad
		\axsemicocont{\fora}
        \tag*{\verylightgray{\hyperref[fig:axioms-aic1]{\Cref{fig:axioms-aic1}, \textsf{\textbf{\small Additional Function Application \& Orbit Axioms}}}}}
	\end{align*}
\end{infbox}%
We only show a derivation for \TirName{\semicontname}; the one for \TirName{\semicocontname} is analogous.

\smallrules%
\vspace*{-1\baselineskip}%
\belowdisplayskip=0pt%
\begin{align*}
	\linftightrule{
		\lFNextcommrulet{
			\lFmonorulet{
				\axinfasc{\fora}
			}{ 
				\Fa \Inf \fora \tss \Fa \Next \Inf \fora
			}
		}{ 
			\Fa \Inf \fora \tss \Next \Fa \Inf \fora
		}
	}{
		\Fmonorulet{
			\axinfdeflate{\fora}
		}
		{\Fa \Inf \fora \tss \Fa \fora}
	}{ 
		\Fa \Inf \fora \tss  \Inf \Fa \fora
	}
\end{align*}%
\normalrules%
\normalsize%

\bigskip%
\begin{infbox}
	\vspace*{-1.25\abovedisplayskip}%
	\begin{align*}
		\asciterrule{
				\fora \tss \Next \fora
			}{
				\Fa \Fas \fora 
				\tss  
				\Next \Fas \fora
			}
			\qquad
			\desciterrule{
				\Next \fora \tss \fora
			}{
				\Next \Fas \fora
				\tss  
				\Fa \Fas \fora 
			}
        \tag*{\verylightgray{\hyperref[fig:axioms-aic1]{\Cref{fig:axioms-aic1}, \textsf{\textbf{\small Additional Function Application \& Orbit Axioms}}}}}
	\end{align*}
\end{infbox}%

\smallrules%
\vspace*{-1\baselineskip}%
\belowdisplayskip=0pt%
\begin{align*}
	\iterrule{
		\Fmonorulet{
			\Fsmonorulet{
				\fora 
				\tss  
				\Next \fora
			}{
				\Fas \fora 
				\tss  
				\Fas \Next \fora
			}
		}{
			\Fa \Fas \fora 
			\tss  
			\Fa \Fas \Next \fora
		}
	}{
		\Fa \Fas \fora 
		\tss  
		\colmark{\Next \Fas \fora}
	}
\qquad\qquad
	\iterrulet{
		\Fmonorulet{
			\Fsmonorulet{
				\Next \fora 
				\tss  
				\fora
			}{
				\Fas \Next \fora
				\tss  
				\Fas \fora 
			}
		}{
			\Fa \Fas \Next \fora
			\tss  
			\Fa \Fas \fora 
		}
	}{
		\colmark{\Next \Fas \fora}
		\tss  
		\Fa \Fas \fora 
	}
\end{align*}%
\normalrules%
\normalsize%

\bigskip%
\begin{infbox}
	\vspace*{-1.25\abovedisplayskip}%
	\begin{align*}
		\orbitascrule{
				\fora \tss \Fa \fora \\ \fora \tss \Next \fora
			}{
				\Fas \fora 
				\tss  
				\Next \Fas \fora
			}
			\qquad
			\orbitdescrule{
				\Next \fora \tss \fora \\ \Fa \fora \tss \fora
			}{
				\Next \Fas \fora 
				\tss  
				\Fas \fora
			}
        \tag*{\verylightgray{\hyperref[fig:axioms-aic1]{\Cref{fig:axioms-aic1}, \textsf{\textbf{\small Additional Function Application \& Orbit Axioms}}}}}
	\end{align*}
\end{infbox}%
We only show a derivation for \TirName{\orbitascname}; the one for \TirName{\orbitdescname} is analogous.

\smallrules%
\vspace*{-1\baselineskip}%
\belowdisplayskip=0pt%
\begin{align*}
    \iterrule{
        \FFscommrulet{
            \Fsmonorulet{
                \cutrulet{
                    \sassume{\fora}{\Fa \fora}
                }{
                    \Fmonorulet{
                        \sassume{\fora}{\Next \fora}
                    }{\Fa \fora \tss \Fa \Next \fora}
                }{\fora \tss \Fa \Next \fora}
            }{\Fas \fora \tss \Fas \Fa \Next \fora}
        }{\Fas \fora \tss \colmark{\Fa \Fas \Next \fora}}
    }{\Fas \fora \tss \colmark{\Next \Fas \fora}}
\end{align*}%
\normalrules%
\normalsize%

\bigskip%
\begin{infbox}
	\vspace*{-1.25\abovedisplayskip}%
	\begin{align*}
		\FsintroLrule{\fora \tss \forb}{\Fa \forb \tss \forb }{\Fas \fora \tss \forb}%
			\qquad
			\FsintroRrule{\fora \tss \Fa \fora }{\fora \tss \forb}{ \fora \tss \Fas \forb}%
        \tag*{\verylightgray{\hyperref[fig:axioms-aic1]{\Cref{fig:axioms-aic1}, \textsf{\textbf{\small Additional Function Application \& Orbit Axioms}}}}}
	\end{align*}
\end{infbox}%

\smallrules%
\vspace*{-1\baselineskip}%
\belowdisplayskip=0pt%
\begin{align*}
	\cutrule
	{\lFsmonorulet{\iassume{\fora \tss \forb}}{\Fas \fora \tss \Fas \forb}}
	{\Findrulet{\iassume{\funca \forb \tss \forb}}{\Fas \forb \tss \forb}}
	{\Fas \fora \tss \forb}
\qquad\qquad\qquad
	\cutrule
	{\lFcoindrulet{\iassume{\fora \tss \funca\fora}}{\fora \tss \Fas \fora}}
	{\Fsmonorulet{\iassume{\fora \tss \forb}}{\Fas \fora \tss \Fas \forb}}
	{\fora \tss \Fas \forb}
	\tag*{\qedhere}
\end{align*}%
\normalrules%
\normalsize%
\end{proof}


\section{Appendix: Majora-Minora Alternation Depth Hierarchy Collapse}
\label{sec:app-collapse}

Just like in LTL \cite[Figure 5.7, absorption laws]{DBLP:books/daglib/0020348}, the depth of alternating $\Supsymbol$'s and $\Infsymbol$'s collapses at level 2 in the following sense:
\begin{restatable}[Majora-Minora Alternation Depth Hierarchy Collapse]{theorem}{thmAlternationDepth}
	The majora-minora alternation depth hierarchy in AIC collapses at level 2, i.e.\ 
	\begin{align*}
		\mathsf{M}_1\, \mathsf{M}_2\, \cdots\, \mathsf{M_k}\, \Sup \cdots\, \Sup \Inf \cdots\, \Inf \fora 
		&\qeq
		\Sup \Inf \fora
		\qquad \textnormal{and}\\
		\mathsf{M}_1\, \mathsf{M}_2\, \cdots\, \mathsf{M_k}\, \Inf \cdots\, \Inf \Sup \cdots\, \Sup  \fora 
		&\qeq
		\Inf \Sup \fora
	\end{align*}
	hold, where the leading $\mathsf{M}_1\, \cdots\, \mathsf{M}_k$ are any sequence of $\Supsymbol$'s and $\Infsymbol$'s.
\end{restatable}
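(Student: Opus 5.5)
The proof will be carried out inside $\AIC_1$, which is sound by \Cref{thm:soundness-aic1}. It rests on two absorption laws:
\begin{align*}
	\Sup \Inf \Sup \fora \eeq \Inf \Sup \fora
	\qquad\textnormal{and}\qquad
	\Inf \Sup \Inf \fora \eeq \Sup \Inf \fora .
\end{align*}
From these and idempotence the theorem follows by a straightforward induction.

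\emph{First reduction.} The blocks $\Sup \cdots \Sup$ and $\Inf \cdots \Inf$ contract to a single $\Sup$ and a single $\Inf$ by repeated application of \TirName{\supidemname} and \TirName{\infidemname}. Each rewrite happens inside a context, which is justified by \TirName{\antisymmname} (equivalently by congruence of equational logic). The first claimed identity thus reduces to $\mathsf{M}_1\cdots\mathsf{M}_k \Sup\Inf \fora = \Sup\Inf\fora$, and the second is dual.

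\emph{Induction on $k$.} I process the prefix from the innermost operator $\mathsf{M}_k$ outward. It suffices to show that both $\Sup$ and $\Inf$ applied to $\Sup\Inf\fora$ give $\Sup\Inf\fora$ again. For $\Sup$ this is just \TirName{\supidemname}. For $\Inf$ it is the absorption law $\Inf\Sup\Inf\fora = \Sup\Inf\fora$. The claim then follows by induction on $k$, rewriting the innermost $\mathsf{M}_k \Sup\Inf\fora$ to $\Sup\Inf\fora$ under context.

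\emph{The absorption law.} This is the main step. Write $\forb := \Inf \fora$.

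The direction $\Inf\Sup\forb \preceq \Sup\forb$ is immediate from \TirName{\infdeflatename}.

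For the converse, $\Sup\forb \preceq \Inf\Sup\forb$, I will use \TirName{\nextcoindname} on $\Sup\forb$. This reduces the goal to showing that $\Sup\forb$ is ascending, i.e.\ $\Sup\forb \preceq \Next\Sup\forb$. We already know $\Next\Sup\forb \preceq \Sup\forb$ by \TirName{\supdescname}, so this would make $\Sup\forb$ flat, which is semantically true: the majorum of an ascending sequence is constant.

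Algebraically, expand with \TirName{\supexpandname}: $\Sup\forb = \forb \lmax \Next\Sup\forb$. By \TirName{\joinintroLname} it then suffices to show $\forb \preceq \Next\Sup\forb$. Now $\forb = \Inf\fora$ is ascending by \TirName{\infascname}, so $\forb \preceq \Next\forb$. Also $\Next\forb \preceq \Sup\Next\forb$ by \TirName{\supinflatename}, and $\Sup\Next\forb = \Next\Sup\forb$ by \TirName{\nextsupcommname}. Chaining these with \TirName{\cutname} gives $\forb \preceq \Next\Sup\forb$.

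Combining the two inequalities via \TirName{\inequalequivname} yields $\Inf\Sup\Inf\fora = \Sup\Inf\fora$. The dual law $\Sup\Inf\Sup\fora = \Inf\Sup\fora$ is proved the same way: show $\Next\Inf\Sup\fora \preceq \Inf\Sup\fora$ via \TirName{\infexpandname}, \TirName{\supdescname}, \TirName{\infdeflatename} and \TirName{\nextinfcommname}, then apply \TirName{\nextindname}.

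I expect this flatness argument to be the only nontrivial point. In particular, one must check that the expansion step closes with no continuity assumptions, and I believe it does, since only monotonicity-free axioms are used. The remaining bookkeeping is rewriting under contexts via \TirName{\antisymmname}.
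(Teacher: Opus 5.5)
Your proof is correct and follows the paper's argument. You contract the blocks by idempotence, peel the prefix from the inside out, and prove $\Inf\Sup\Inf\fora = \Sup\Inf\fora$ via $\Infsymbol$-deflate and $\Infsymbol$-coind applied to the fact that $\Sup\Inf\fora$ is ascending, with the dual law handled symmetrically. The only local difference is that the paper gets $\Sup\Inf\fora \preceq \Next\Sup\Inf\fora$ more directly, by applying $\Supsymbol$-mono to $\Inf\fora \preceq \Next\Inf\fora$ (from $\Infsymbol$-asc) and then $\Nextsymbol\Supsymbol$-comm, whereas you detour through $\Supsymbol$-exp and $\Supsymbol$-inflate.
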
%

\begin{proof}
	We only prove the first equation; the other one is analogous.
	Consider $b = \mathsf{M}_1\, \mathsf{M}_2\, \cdots\, \mathsf{M_k}\, \Sup \cdots\, \Sup \Inf \cdots\, \Inf \fora$.
	First, by repeatedly applying \textsc{\supidemname} and \textsc{\infidemname}, we can bring~$\forb$ into the form $\mathsf{M}_1\, \mathsf{M}_2\, \cdots\, \mathsf{M_k}\, \Sup \Inf \fora$.
	
	Consider now the innermost subterm $\mathsf{M_k}\, \Sup \Inf \fora$.
	If \mbox{$\mathsf{M_k} = \Supsymbol$}, then we can again apply \textsc{\supidemname} to obtain the equivalent term $\Sup \Inf \fora$.
	If $\mathsf{M_k} = \Infsymbol$, we can also obtain the equivalent term $\Sup \Inf \fora$ as follows:
	\vspace*{-1\abovedisplayskip}%
	\begin{align*}
		\inequalequivrule{
			\axinfdeflate{\Sup \Inf \fora}
			\\
			\nextcoindrulet{
				\nextsupcommrulet{
					\supmonorulet{
						\axinfasc{\fora}
					}{
						\Sup \Inf \fora \tss \Sup \Next \Inf \fora
					}
				}{
					\Sup \Inf \fora \tss \colmark{\Next \Sup \Inf \fora}
				}
			}{
				\Sup \Inf \fora \tss \Inf \Sup \Inf \fora
			}
		}{
			\Inf \Sup \Inf \fora \hqeq \Sup \Inf \fora
		}
	\end{align*}
	So in any case we could show that $\mathsf{M_k}\, \Sup \Inf \fora = \Sup \Inf \fora$, thus obtaining overall the term $\mathsf{M}_1\, \mathsf{M}_2\, \cdots\, \mathsf{M_{k-1}}\, \Sup \cdots\, \Sup \Inf \cdots\, \Inf \fora$, which is equivalent to $\forb$ but has one fewer majorum or minorum.
	Repeatedly applying this proof procedure to $\mathsf{M_{k-1}}$ through $\mathsf{M_1}$ finally leaves us with the desired equation $\Sup \Inf \fora = \forb$.
\end{proof}%
From this observation, we can immediately conclude that $\Sup \Inf \fora$ is flat (and so is $\Inf \Sup \fora$):%
\begin{align*}
			&\Sup \Sup \Inf \fora \tss \Inf \Sup \Inf \fora	\tag{expressing that $\Sup \Inf \fora$ is flat}	\\
	\lqqiff 	&\Sup \Inf \fora \tss \Inf \Sup \Inf \fora	\tag{by \TirName{\supidemname}}				\\
	\lqqiff 	&\Sup \Inf \fora \tss \Sup \Inf \fora~.	\tag{by alternation depth collapse}				
\end{align*}

\section{Appendix: Further Derivable Axioms}
\label{sec:app-further-axioms}

In this section, we present and derive further useful axioms that are derivable from $\AIC_1$.

\bigskip
\begin{infbox}%
	\vspace*{-1\abovedisplayskip}%
	\begin{align*}
		\mprset{fraction={===}}%
		\shiftpointrule{
			\Sup \fora \tss \Inf \fora
		}{
			\Next \fora \hqequiv \fora
		}
	\end{align*}%
\end{infbox}%
This axiom describes that both $\Sup \fora \preceq \Inf \fora$ as well as $\Next \fora = \fora$ capture equivalently that $\fora$ is flat.
For the direction from top to bottom, consider the following derivation:
	
\smallrules%
\vspace*{-1\baselineskip}%
	\begin{align*}
		\inequalequivrule{
		\nirule{$\Nextsymbol$-ind}{
			\nirulet{$\Infsymbol$-elim}{
				\sassume{\Sup \fora}{\Inf \fora}
			}{
				\Sup \fora \tss \fora
			}
		}{
			\Next \fora \tss \fora
		}
		\\
		\nextcoindrulet{
			\nirulet{$\Supsymbol$-elim}{
				\sassume{\Sup \fora}{\Inf \fora}
			}{
				\fora \tss \Inf \fora
			}
		}{
			\fora \tss  \Next \fora
		}
		}{
			\Next \fora \hqequiv \fora
		}
	\end{align*}%
\normalrules%
\normalsize%
	From bottom to top, consider the following derivation:%
	
	\smallrules%
\vspace*{-1\baselineskip}%
\belowdisplayskip=0pt%
	\begin{align*}
		\nirule{cut}{
			\nlirulet{$\Nextsymbol$-ind}{
				\lequivinequalLrulet{
					\seassume{\Next \fora}{\fora}
				}{
					\Next \fora \tss \fora
				}
			}{
				\Sup \fora \tss \fora
			}
			\\
			\nirulet{$\Nextsymbol$-coind}{
				\equivinequalLrulet{
					\seassume{\fora}{\Next \fora}
				}{
					\fora \tss \Next \fora
				}
			}{
				\fora \tss \Inf \fora
			}
		}{
			\Sup \fora \tss \Inf \fora	
		}
	\end{align*}%
\normalrules%
\normalsize%

\bigskip%
\begin{infbox}%
	\vspace*{-1\abovedisplayskip}%
	\smallrules
	\begin{align*}
		\supascpointrule{\fora \tss \Next \fora}{\Sup \Sup \fora \tss \Inf \Sup \fora}
		\qquad
		\infdescpointrule{\Next \fora \tss \fora}{\Sup \Inf \fora \tss \Inf \Inf \fora}
	\end{align*}%
	\normalrules
\end{infbox}%
This axiom describes that the majorum of an ascending chain is flat; and dually that the minorum of a descending chain is flat.
For derivations, consider the following:

\smallrules%
\vspace*{-1\baselineskip}%
\belowdisplayskip=0pt%
	\begin{align*}
		\supidemrule{
			\nextcoindrulet{
				\nextsupcommrulet{
					\supmonorulet{
						\sassume{\fora}{\Next \fora}
					}{
						\Sup \fora \tss \Sup\Next \fora
					}
				}{
					\Sup \fora \tss \colmark{\Next \Sup \fora}
				}
			}{
				\Sup \fora \tss \Inf \Sup \fora
			}
		}{
			\colmark{\Sup \Sup \fora} \tss \Inf \Sup \fora
		}
		\qquad
		\infidemrulet{
			\nextcoindrulet{
				\nextinfcommrulet{
					\infmonorulet{
						\sassume{\Next \fora}{\fora}
					}{
						\Inf \Next  \fora
						\tss 
						\Inf \fora
					}
				}{
					\colmark{\Next \Inf \fora} 
					\tss 
					\Inf \fora
				}
			}{
				\Sup \Inf \fora \tss \Inf \fora
			}
		}{
			\Sup \Inf \fora \tss \colmark{\Inf \Inf \fora}
		}
		\tag*{\normalsize\qedhere}
	\end{align*}%
	\normalrules%
	\normalsize%

\bigskip%
\begin{infbox}
	\vspace*{-1\abovedisplayskip}%
	\begin{align*}
		\supinfinfsupmonorule{\fora \tss \forb}{\Sup \Inf \fora \tss \Inf \Sup \forb}
	\end{align*}%
\end{infbox}%

\smallrules%
\vspace*{-1\baselineskip}%
\belowdisplayskip=0pt%
\begin{align*}
	\suptightrulet{
		\linfmonorulet{
			\lsupintroRrulet{
				\sassume{\fora}{\forb}
			}{
				\fora \tss \Sup \forb
			}
		}{
			\Inf \fora \tss \Inf \Sup \forb
		}
	}{
		\nextinfcommrulet{
			\infmonorulet{
				\axsupdesc{\forb}
			}{
				\Inf \Next \Sup \forb
				\tss
				\Inf \Sup \forb
			}
		}{
			\colmark{\Next \Inf \Sup \forb}
			\tss
			\Inf \Sup \forb
		}
	}{
		\Sup \Inf \fora \tss \Inf \Sup \forb
	}
	\tag*{\normalsize\qedhere}
\end{align*}%
	\normalrules%
	\normalsize%

\bigskip%
\begin{infbox}
	\vspace*{-1.25\abovedisplayskip}%
	\begin{align*}
		\axsupoverjoin{\fora}{\forb}
		\qquad\qquad
		\axinfovermeet{\fora}{\forb}
	\end{align*}%
\end{infbox}%
We only show a derivation for \TirName{\supoverjoinname}; the one for \TirName{\infovermeetname} is analogous.
We prove $\preceq$ and $\succeq$ separately.

\smallrules%
\vspace*{-1\baselineskip - \abovedisplayskip}%
\begin{align*}
		\suptightrulet
		{
			\ljoinintroLrulet
			{\lsupintroRrulet{\,}{\fora\tss \Sup \fora}}
			{\supintroRrule{\,}{\forb\tss \Sup \forb}}
			{\fora \lmax \forb \tss \Sup \fora \lmax \Sup \forb}
		}
		{
			\nextoverjoinrulet
			{
				\joinintroLrulet
				{\axsupdesc{\fora}}
				{\axsupdesc{\forb}}
				{\Next\Sup \fora \lmax \Next\Sup \forb \tss \Sup \fora \lmax \Sup \forb}
			}
			{\Next(\Sup \fora \lmax \Sup \forb) \tss \Sup \fora \lmax \Sup \forb}
		}
		{\Sup (\fora \lmax \forb) \tss \Sup \fora \lmax \Sup \forb}
	\end{align*}
	\begin{align*}
		\ljoinintroLrule
		{
			\lsupmonorulet{
				\ljoinintroRrulet{\axreflex{\fora}}
				{\fora \tss \fora \lmax \forb}
			}
			{
				\Sup \fora \tss \Sup (\fora \lmax \forb)
			}
		}
		{
			\supmonorulet{
				\joinintroRrulet{\axreflex{\forb}}
				{\forb \tss \fora \lmax \forb}
			}
			{
				\Sup \forb \tss \Sup (\fora \lmax \forb)
			}
		}
		{\Sup \fora \lmax \Sup \forb\tss\Sup (\fora \lmax \forb)}
	\end{align*}%
	\normalrules%
	\normalsize%

\bigskip%
\begin{infbox}
	\vspace*{-1.25\abovedisplayskip}%
	\begin{align*}
		\axsupovermeet{\fora}{\forb}
		\qquad
		\axinfoverjoin{\fora}{\forb}
	\end{align*}%
\end{infbox}%
We only show a derivation for \TirName{\infoverjoinname}; the one for \TirName{\supovermeetname} is analogous.

\smallrules%
\vspace*{-1\baselineskip}%
\belowdisplayskip=0pt%
	\begin{align*}
		\ljoinintroLrule
		{
			\linfmonorulet{
				\ljoinintroRrulet{\axreflex{\fora}}
				{\fora \tss \fora \lmax \forb}
			}
			{
				\Inf \fora \tss \Inf (\fora \lmax \forb)
			}
		}
		{
			\infmonorulet{
				\joincommrulet{
					\joinintroRrulet{\axreflex{\forb}}
						{\forb \tss \forb \lmax \fora}
				}{
					\forb \tss \colmark{\fora \lmax \forb}
				}
			}
			{
				\Inf \forb \tss \Inf (\fora \lmax \forb)
			}
		}
		{\Inf \fora \llmax \Inf \forb\tss\Inf (\fora \lmax \forb)}
	\end{align*}%
	\normalrules%
	\normalsize%
Note that $\Supsymbol$ indeed only \emph{sub}distributes over $\lmin$ and so does $\Inf$ over $\lmax$.
The converse inequalities do \emph{not} hold in general.

\section{Appendix: Case Studies}
\label{sec:app-case-studies}

\subsection{Olszewski Fixed Point Theorems (\Cref{sec:case_studies:ol})}
\label{app:proof:olprose}

\thmOlszewski*

\begin{proof}
Consider the following computation:%
\begin{align*}
	\funca\left(
	\inf_{n \in \omega}~ \sup_{k\geq n}~ \funca^k(\laelem)
	\right) 
	&\eeq \inf_{n \in \omega}~ \funca\left(
	\sup_{k\geq n}~ \funca^k(\laelem)
	\right) 
	\tag{by assuming $\omega$-cocontinuity of $\funca$}
	\\
	&\ssqsupseteq \inf_{n \in \omega} ~ \sup_{k\geq n}~ F\left(F^k(\laelem) \right) 
	\tag{by any monotonic $\funca$ being semi-continuous}
	\\
	&\eeq \inf_{n \in \omega}~ \sup_{k\geq n}~ F^{k+1}(\laelem)
	\\
	&\eeq \inf_n \sup_{k\geq n} F^k(\laelem)
\end{align*}%
%
For the last step, observe that the sequences $(\sup_{k\geq n}F^k(\laelem))_{n\in \omega}$ and $(\sup_{k\geq n}F^{k+1})_{n\in \omega}$ are both non-increasing, and the latter is a subsequence the former, so they have the same infimum.
\end{proof}

\subsection{$\boldsymbol{k}$-Induction (\Cref{sec:case_studies:kinduction})}
\label{sec:app-k-ind}
First recall that in the definition of the $k$-induction operator
\[
\Fkinditer{k} \forb \eeq 
\begin{cases}
	\forb &\text{if $k=0$} \\
	\Fa \Fkinditer{k-1} \forb ~~\lmin~~\forb  &\text{otherwise}~.
\end{cases}
\]
we mean to define \emph{\underline{s}y\underline{ntactic}} equality of terms, so e.g.\ defining that $\Fkinditer{2}\forb$ is a name for the term $\Fa \Fkinditer{1} \forb \lmin\forb$. 
For our proof, we require a few auxiliary lemmas. 

\bigskip%
\begin{infbox}%
	\vspace*{-1.25\abovedisplayskip}%
	\begin{align*}
        		\axkinddesc{\Fkinditer{k+1} \forb}{\Fkinditer{k} \forb}{k}
	\end{align*}%
	\normalsize%
\end{infbox}%
First, we need that applying~$\Fkinditer{k}$ once more to $\forb$ takes us down in the partial order (\TirName{\axkinddescname{k}}), which we prove by induction on $k$. 
For $k=0$, we have:

\vspace*{-1.5\abovedisplayskip}
\smallrules%
\begin{align*}
\lantisymmrule{
	\lantisymmrulet{
		\lmeetintroLrulet{
			\axreflex{\forb}
		}{
			\Fa {\Fkinditer{0} \forb} \llmin \forb \tss \forb
		}
	}{
					\Fkinditer{0} \forb \hqeq \forb
	}{
		\Fa {\Fkinditer{0} \forb} \llmin \forb \tss \colmark{\Fkinditer{0} \forb}
	}
}{
			\Fkinditer{1} \forb \hqeq \Fa \Fkinditer{0} \forb \llmin \forb
}{
	\colmark{\Fkinditer{1} \forb} \tss \Fkinditer{0} \forb
}
\end{align*}%
\normalrules%
\normalsize%
For the induction step, assume validity of \TirName{\axkinddescname{k}} for some arbitrary, but fixed, $k \geq 1$ and consider the following:
Below, we omit for conciseness of the proof tree the equalities of the form $\Fkinditer{k+1} = \Fa \Fkinditer{k} \lmin \forb$ on every \TirName{\antisymmname} rule.

\smallrules%
\vspace{-1\baselineskip}%
\begin{align*}
		\antisymmrulet{
			\lmeetintroRrulet{	
				\lantisymmrulet{
					\lmeetcommrulet
					{	\lmeetintroLrulet
						{	\lFmonorulet
							{\iIH{\Fkinditer{k}\forb \tss \Fkinditer{k-1} \forb}}
							{\Fa \Fkinditer{k} \forb \tss \Fa \Fkinditer{k-1} \forb}
						}{
							\forb \llmin \Fa \Fkinditer{k} \forb \tss \Fa \Fkinditer{k-1} \forb
						}
					}{
						\colmark{\Fa \Fkinditer{k} \forb \llmin \forb} \tss \Fa \Fkinditer{k-1} \forb
					}
				}
				{
				}{
					\colmark{\Fkinditer{k+1} \forb} 
					\tss 
					\Fa \Fkinditer{k-1} \forb
				}
			}{	
				\antisymmrulet
				{	\meetintroLrulet{
						\axreflex{\forb}
					}{
						\Fa \Fkinditer{k} \forb \lmin \forb \tss \forb
					}
				}{
				}{
					\colmark{\Fkinditer{k+1} \forb} \tss \forb
				}
			}{
				\Fkinditer{k+1} \forb 
				\tss 
				\Fa \Fkinditer{k-1} \forb \llmin \forb
			}
		}{
		}
		{\Fkinditer{k+1} \forb \tss \colmark{\Fkinditer{k} \forb}}
	\end{align*}%
	\normalrules%
	\normalsize%

\bigskip%
\begin{infbox}%
	\vspace*{-1.25\abovedisplayskip}%
	\begin{align*}
        		\axkinddescb{\Fkinditer{k} \forb}{\forb}{k}
	\end{align*}%
	\normalsize%
\end{infbox}%
Validity of \TirName{\axkinddescname{k}} (for every $k$) implies that $\Fkinditer{k} \forb \preceq \forb$.
Again, by induction on $k$: For $k=0$, the axiom is trivial. 
For the induction step, we have

\vspace*{-1\baselineskip}%
\smallrules%
\begin{align*}
	\cutrule{ 
		\axkinddesc{\Fkinditer{k+1} \forb}{\Fkinditer{k} \forb}{k}
	}{ 
		\iIH{\Fkinditer{k} \forb \tss \forb} 
	}{ 
		\Fkinditer{k} \forb \tss \forb 
	}
\end{align*}%
\normalrules%
\normalsize%

\bigskip%
\begin{infbox}%
	\vspace*{-1.25\abovedisplayskip}%
	\begin{align*}
        		\kindparkrule{ \Fa \Fkinditer{k} \forb \tss \forb}
				{\Fa \Fkinditer{k} \forb \tss \Fkinditer{k} \forb}
	\end{align*}%
	\normalsize%
\end{infbox}%
As third auxiliary lemma, we need that if $\Fa \Fkinditer{k} \forb$
%
is below $\forb$, then $\Fa \Fkinditer{k} \forb$ is even below $\Fkinditer{k} \forb$  (\TirName{\kindparkrulename}).
We prove this as by induction on $k$. 
The base case $k=0$ is trivial. 
For the induction step with $k\geq 1$, we have:

\vspace{-1\baselineskip}%
\smallrules%
\begin{align*}
			\antisymmrulet
			{	\meetintroRrulet
				{	\lFmonorulet
					{ \axkinddesc{\Fkinditer{k} \forb}{\Fkinditer{k-1} \forb }{k} }
					{\Fa \Fkinditer{k} \forb \tss \Fa\Fkinditer{k-1} \forb }
				}
				{	\iassume{\Fa \Fkinditer{k} \forb \tss \forb}}
				{\Fa \Fkinditer{k} \forb \tss \Fa \Fkinditer{k-1} \forb \lmin \forb}
			}
			{}
			{\Fa \Fkinditer{k} \forb \tss \colmark{\Fkinditer{k} \forb}}
\end{align*}%
\normalrules%
\normalsize%

\bigskip%
\begin{infbox}%
	\vspace*{-1.25\abovedisplayskip}%
	\begin{align*}
        		\kindpreserveascrule
				{\Next \forb \tss \forb}
				{\Next \Fkinditer{k} \forb \tss \Fkinditer{k} \forb}{k}
	\end{align*}%
	\normalsize%
\end{infbox}%
Finally, we require the axiom \TirName{$G_b^k$-asc-pres}, which we also prove by induction on $k$. 
The base case $k=0$ is trivial. 
For the induction step, we have: 

\vspace{-1.5\baselineskip}%
\smallrules%
		\begin{align*}
			\antisymmrulet
			{
				\lnextovermeetrulet
				{
					\lmeetintroRrulet
					{
						\lmeetintroLrulet
						{
							\lFNextcommrulet
							{
								\Fmonorulet
								{\iIH{\Next \Fkinditer{k} \forb  \tss  \Fkinditer{k} \forb}}
								{\Fa \Next \Fkinditer{k} \forb  \tss  \Fa \Fkinditer{k} \forb}
							}
							{\colmark{\Next \Fa \Fkinditer{k} \forb}  \tss  \Fa \Fkinditer{k} \forb}
						}
						{\Next \Fa \Fkinditer{k} \forb \lmin  \Next \forb \tss  \Fa \Fkinditer{k} \forb} 
					}
					{
						\meetintroLrulet
						{\iassume{\Next \forb \tss  \forb}}
						{\Next \Fa \Fkinditer{k} \forb \lmin  \Next \forb \tss  \forb}
					}
					{\Next \Fa \Fkinditer{k} \forb \lmin  \Next \forb \tss  \Fa \Fkinditer{k} \forb \lmin  \forb}
				}
				{\colmark{\Next (\Fa \Fkinditer{k} \forb \lmin \forb)} \tss  \Fa \Fkinditer{k} \forb \lmin  \forb}
			}
			{
				\Fkinditer{k+1} \forb \hqeq \Fa \Fkinditer{k} \forb \lmin \forb
			}
			{\Next \colmark{\Fkinditer{k+1} \forb} \tss \colmark{\Fkinditer{k+1} \forb}}
		\end{align*}%
\normalrules%
\normalsize%
Now, \textsc{\kindname} for $k=0$ is an immediate instance of \Cref{thm:generalized_park}. For $k\geq 1$, we employ the preceding lemmas, resulting in the following proof:

\vspace*{-1\baselineskip}
\smallrules%
\begin{align*}
				\qquad\cutrulet
				{
					\ltkpleastrulet
					{
						\axbot{\Fkinditer{k} \forb} 
						\\
						\kindparkrulet
						{\iassume{\Fa \Fkinditer{k} \forb \tss \forb}}
						{\Fa \Fkinditer{k} \forb \tss \Fkinditer{k} \forb}
						\\\quad
						\\
						\\
						\\
						\kindpreserveascrulet
						{\iassume{\Next \forb \tss \forb}}
						{\Next \Fkinditer{k} \forb \tss \Fkinditer{k} \forb \vphantom{\Bigl(} }{k}
					}
					{
						\Sup \Fas \bot \tss \Fkinditer{k} \forb
					}
				}
				{
						\axkinddescb{\Fkinditer{k} \forb}
						{\forb}
						{k} 
				}
				{
					\Sup \Fas \bot \tss \forb
				}
				\tag*{\qed}
			\end{align*}%



\section{Appendix: Incompleteness of Finitary Axiom Systems (\Cref{sec:incompleteness})}\label{app:incompleteness}
We give a detailed account of Theorem \ref{thm:incompleteness}, which states that the finitary quasiequational theory of the operators $\Firstsymbol$ and $\Nextsymbol$ on sequences has no finite axiomatization.

\subsection{Finitary Quasiequations in General}
We first recall some standard terminology and notation from universal algebra, generalizing the concepts introduced in \Cref{sec:synsem} from sequence algebras to algebras over any finitary signature.  

\begin{notation}
Fix a finitary algebraic signature $\Sigma$, i.e.\ a set of operation symbols with associated finite arities.
Moreover, fix a countably infinite set
\[\Vars \eeq \{ x,y,z,\ldots \} \] of variables, and let $\Sigmas\Vars$ denote the set of $\Sigma$-terms formed over the set $\Vars$ of variables. A $\Sigma$-algebra is a set $A$ equipped with an operation $\mathsf{f}^A\colon A^n\to A$ for every $n$-ary operation symbol in $\Sigma$. Given a $\Sigma$-algebra $A$, a variable interpretation $\interpret\colon \Vars\to A$ and a term $t\in \Sigmas\Vars$, we write $\interpretsem{t}\in A$ for the value of the term $t$ in the algebra $A$ under the given interpretation. Formally, $\interpretsem{t}$ is defined inductively by
\[ \interpretsem{x} \eeq \interpret(x) \qqand \interpretsem{\mathsf{f}(t_1,\ldots,t_n)} \eeq \mathsf{f}^A(\interpretsem{t_1},\ldots,\interpretsem{t_n})~,   \]
where $x\in \V$, $\mathsf{f}$ is an $n$-ary operation symbol from $\Sigma$, and $t_1,\ldots,t_n\in \Sigmas \V$. Finally, given a substitution (a map $\sigma\colon \Vars\to \Sigmas \Vars$) and a term $t\in \Sigmas \Vars$, we write $t[\sigma]\in \Sigmas\Vars$ for the term obtained by applying the substitution $\sigma$ to the variables of $t$.
\lipicsEnd
\end{notation}

\begin{definition}
\begin{enumerate}
\item A \emph{($\Sigma$-)identity} is an expression of the form
\[ s \eeq t~, \qquad \text{where $s,t\in \Sigmas\Vars$}. \]
A \emph{(finitary $\Sigma$-)quasiequation} is an expression of the form
\begin{equation}\label{eq:fin-quasieq} P \qmorespace{\implies} s \eeq  t~, \end{equation}
where $s=t$ is an identity and $P$ is a finite set of identities. The quasiequation \eqref{eq:fin-quasieq} is \emph{satisfied} in a $\Sigma$-algebra $A$ if for every variable interpretation $\interpret\colon \Vars\to A$, 
 \begin{equation}\label{eq:sound}  \interpretsem{u} \eeq \interpretsem{v} \text{ , for all $u=v$ in $P$,} \qimplies \interpretsem{s} \eeq \interpretsem{t}. \end{equation}
\item The \emph{substitution instance} of \eqref{eq:fin-quasieq} for a substitution $\sigma$ is the finitary quasiequation 
\[ P[\sigma] \qmorespace{\implies} s[\sigma] \eeq t[\sigma]~, \qquad \text{ where $P[\sigma] \eeq \{ u[\sigma]=v[\sigma] \mid (u=v)\in P \}$}.  \]
\end{enumerate}
 
\begin{notation}
We commonly write
\[ s_1=t_1 \,\wedge\, \,\ldots,\, \wedge \,s_n=t_n \qmorespace{\implies} s \eeq t \]
or simply
\[ s_1=t_1, \,\ldots,\, s_n=t_n \qmorespace\implies s \eeq t \]
for a quasiequation
\[ \{ s_1=t_1,\,\ldots,\, s_n=t_n \} \qmorespace\implies s \eeq t \]
This notation allows to freely permute or duplicate premises.
\lipicsEnd
\end{notation}

\end{definition}

\begin{example}
We denote by $\Ax_{=}$ the following set of quasiequations, where $\mathsf{f}$ ranges over operations in $\Sigma$ and $n$ is the arity of $\mathsf{f}$. 
\begin{align}
& \qmorespace\implies x=x \label{eq:refl}\\
x=y &\qmorespace\implies y=z \label{eq:sym}\\
x=y,\, y=z &\qmorespace\implies x=z \label{eq:trans}\\
x_1=y_1,\, \ldots\, x_n=y_n & \qmorespace\implies \mathsf{f}(x_1,\ldots,x_n)=\mathsf{f}(y_1,\ldots,y_n) \label{eq:cong1}
\end{align}
These correspond to axioms of standard equational logic and thus hold in every $\Sigma$-algebra.
\lipicsEnd
\end{example}

\begin{remark}\label{rem:subst}
If an algebra $A$ satisfies a quasiequation $\quasieq$, then $A$ satisfies all substitution instances of $\quasieq$ as well.
\lipicsEnd%
\end{remark}

\begin{definition}
Given a set $\Ax$ of finitary quasiequations, a \emph{proof} of a finitary quasiequation 
\[ P \implies s=t \]
from $\Ax$ is a finite tree with the following properties:
\begin{enumerate}
\item Every node is labelled with some identity $u=v$ where $u,v\in \Sigmas\Vars$.
\item The root is labelled with $s=t$.
\item For every inner node labelled $u=v$ with children labelled $u_1=v_1,\ldots,u_n=v_n$, the quasiequation
\[ u_1=v_1,\ldots,u_n=v_n\implies u=v \]
is a substitution instance of some quasiequation in $\Ax\cup \Ax_{=}$.
\item For every leaf node, the label $u=v$ is either an identity in $P$ or a substitution instance of some premise-free quasiequation in $\Ax\cup \Ax_{=}$.
\end{enumerate}
\item A quasiequation $P\implies s=t$ is \emph{$\Ax$-provable}, denoted 
\[ \Ax\vdash P\implies s=t, \] 
if there exists a proof of it from $\Ax$. It is \emph{provable in equational logic} if it is $\emptyset$-provable, denoted
\[ \vdash P\implies s=t. \] 
\end{definition}

\begin{remark}\label{rem:proof-props}
If $\Ax\vdash P\implies s=t$, then every $\Ax\cup \{ P \implies s=t\}$-provable quasiequation is $\Ax$-provable. In other words, one can treat quasiequations provable from $\Ax$ like additional axioms. Indeed, any use of a substitution instance $P[\sigma]\implies s[\sigma]=t[\sigma]$ in a proof from $\Ax\cup \{ P \implies s=t\}$ can be replaced with a proof of  $P[\sigma]\implies s[\sigma]=t[\sigma]$ from $\Ax$.
\lipicsEnd
\end{remark}

\subsection{Discrete Sequence Algebras}

\begin{definition}[Discrete Sequence Algebras]
A \emph{discrete sequence algebra} is given by the set $A^\omega$ of sequences over a set $A$ equipped with the unary operations
\[ \First\colon A^\omega\to A^\omega,\qquad \First(\varphi) = (\varphi(0),\varphi(0),\varphi(0),\ldots), \]
and
\[  
\shift\colon A^\omega \to A^\omega,\qquad \shift(\varphi)=(\varphi(1),\varphi(2),\varphi(3),\ldots). 
\]
Note that discrete sequence algebras are algebras for the algebraic signature $\Delta$ given by the unary operation symbols $\Firstsymbol$ and $\Nextsymbol$:
\[ \Delta = \{ \First:1,\, \shift:1\}. \]
\end{definition}

\begin{definition}
\begin{enumerate}
\item A \emph{$\Nextsymbol$-term} 
term is a term of the form $\shift^k(x)$, where $k\geq 0$ and $x\in \Vars$.
\item A \emph{$\Nextsymbol$-identity} is an identity $s=t$ between $\Nextsymbol$-terms.
\item A \emph{$\First\shift$-term} 
term is a term of the form $\First\shift^k(x)$, where $k\geq 0$ and $x\in \Vars$.
\item A \emph{$\First\shift$-identity} is an identity $s=t$ between $\First\shift$-terms.
\item A \emph{homogeneous identity} is either a $\Nextsymbol$-identity or a $\First\shift$-identity.
\end{enumerate}
\end{definition}

\begin{definition}
A finitary $\Delta$-quasiequation is \emph{valid for discrete sequence algebras} if it satisfied in every discrete sequence algebra.
\end{definition}

In the remainder of \Cref{app:incompleteness}, by a \emph{quasiequation} we always mean a finitary $\Delta$-quasiequation, and by a \emph{valid quasiequation} we mean one that is valid for discrete sequence algebras.
 
\begin{example}
The following quasiequations are valid:
\belowdisplayskip=-0\baselineskip%
\begin{align}
& \implies \First(x) = \First^2(x) \label{eq:theta-1}\\
& \implies \shift\First(x)=\First(x) \label{eq:theta-2}\\
x=\First(y)  & \implies \First(x)=\First(y) \label{eq:theta-3} \\
x=\First(y) & \implies \shift(x)=x \label{eq:theta-4} 
\end{align}%
\normalsize%
\lipicsEnd
\end{example}

\begin{remark}
 If $\Ax$ is a set of valid quasiequations, then every $\Ax$-provable quasiequation is valid. This is immediate from \Cref{rem:subst} by induction on the height of proof trees.
\lipicsEnd
\end{remark}
We start with a few technical results on quasiequations provable in equational logic.

\begin{lemma}\label{lem:hom-proof}
If $\vdash P\implies s=t$ and all identities in $P$ are homogeneous, then
\begin{enumerate}
\item $s$ is a $\Nextsymbol$-term iff $t$ is a $\Nextsymbol$-term.
\item $s$ is a $\First\shift$-term iff $t$ is a $\First\shift$-term.
\end{enumerate}
\end{lemma}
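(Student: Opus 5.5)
We argue by induction on the height of a proof tree witnessing $\vdash P\implies s=t$. Since no axioms beyond $\Ax_{=}$ are used, every node of the tree is either a leaf (an identity from $P$, or an instance $u=u$ of reflexivity \eqref{eq:refl}) or an inner node obtained from an instance of symmetry \eqref{eq:sym}, transitivity \eqref{eq:trans}, or congruence \eqref{eq:cong1} for one of the two unary operations $\Firstsymbol$ and $\Nextsymbol$. It is convenient to prove the slightly stronger invariant that for every node labelled $u=v$, (1) $u$ is a $\Nextsymbol$-term iff $v$ is, and (2) $u$ is a $\First\shift$-term iff $v$ is. The lemma is this invariant at the root.

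The base cases are immediate. For a leaf $u=v\in P$, the identity is homogeneous by assumption, so either both sides are $\Nextsymbol$-terms or both are $\First\shift$-terms. Note that the two classes are disjoint, since a $\First\shift$-term starts with $\Firstsymbol$ and a $\Nextsymbol$-term does not, so both equivalences hold. For a reflexivity leaf $u=u$, both equivalences are trivial.

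For the inductive steps, symmetry preserves the invariant because the two biconditionals are symmetric in $u$ and $v$. Transitivity composes biconditionals: from $u=w$ and $w=v$ we get $u$ is a $\Nextsymbol$-term iff $w$ is iff $v$ is, and likewise for $\First\shift$-terms.

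The only step needing care is congruence. Here the conclusion is $\mathsf f(u)=\mathsf f(v)$ with premise $u=v$ and $\mathsf f\in\{\Firstsymbol,\Nextsymbol\}$, and the invariant is not literally preserved by applying an operation. We handle it by a small syntactic case analysis.

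\emph{Case $\mathsf f=\Nextsymbol$.} $\Next u$ is a $\Nextsymbol$-term iff $u$ is a $\Nextsymbol$-term, since $\Next u=\shift^{k}(x)$ forces $k\ge 1$ and $u=\shift^{k-1}(x)$. Also, $\Next u$ is never a $\First\shift$-term, because its head symbol is $\Nextsymbol$. So (1) follows from the induction hypothesis (1), and (2) holds vacuously, both sides being false.

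\emph{Case $\mathsf f=\Firstsymbol$.} $\First u$ is never a $\Nextsymbol$-term. Moreover, $\First u$ is a $\First\shift$-term iff $u$ is a $\Nextsymbol$-term. So (1) is vacuous, and (2) follows from the induction hypothesis (1).

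This shows why the invariant must track both classes simultaneously: the $\Firstsymbol$-congruence step turns information about $\Nextsymbol$-terms into information about $\First\shift$-terms. This completes the induction and hence the proof of the lemma.
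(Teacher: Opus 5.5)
Your proof is correct and takes the same approach as the paper, which only says the lemma is immediate by induction on the height of a proof tree for $P\implies s=t$. Your case analysis fills in the details the paper leaves implicit, notably carrying both equivalences at once so that the $\Firstsymbol$-congruence step can turn the $\Nextsymbol$-term equivalence into the $\First\shift$-term equivalence.
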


\begin{proof}
Immediate by induction on the height of a proof tree for $P\implies s=t$.
\end{proof}

Proofs of quasiequations in equational logic have a convenient graphical presentation:

\begin{definition}\label{def:graph}
To every set $P$ of homogeneous identities we associate the infinite undirected graph $G(P)$ whose nodes are all  pairs $(x,n)$ where $x\in \Vars$ and $n$ is a non-negative integer, and with two types of edges for $(x,n)\neq (y,k)$:
\begin{itemize}
\item If $\vdash P\implies {\shift}^n(x)=\shift^k(y)$, then there is a \emph{strong} edge between $(x,n)$ and $(y,k)$.
\item If $\First\shift^n(x)=\First\shift^k(y)$ lies in $P$ and the quasiequation $P\implies \shift^n(x)=\shift^k(y)$ is not provable in equational logic, then there is a \emph{weak} edge between $(x,n)$ and $(y,k)$.
\end{itemize}
By a \emph{strong path} in $G(P)$ we mean a path consisting of strong edges.
\end{definition}

\begin{proposition}\label{prop:proof-graph}
Let $P$ be a set of homogeneous identities, $n,k\geq 0$ and $x,y\in \Vars$.
\begin{enumerate}
\item $\vdash P\implies \shift^n(x)=\shift^k(y)$ iff there exists a strong path from $(x,n)$ to $(y,k)$ in $G(P)$. 
\item $\vdash P\implies \First\shift^n(x)=\First\shift^k(y)$ iff there exists a path from $(x,n)$ to $(y,k)$ in $G(P)$.
\end{enumerate}
\end{proposition}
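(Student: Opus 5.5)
For part 1, the direction from left to right is immediate from the definition of strong edges. Each strong edge from $(x_i,n_i)$ to $(x_{i+1},n_{i+1})$ witnesses $\vdash P\implies \shift^{n_i}(x_i)=\shift^{n_{i+1}}(x_{i+1})$. Chaining these with the transitivity axiom \eqref{eq:trans} of $\Ax_{=}$ gives a proof of $P\implies \shift^n(x)=\shift^k(y)$. The trivial path $(x,n)=(y,k)$ is covered by reflexivity \eqref{eq:refl}.

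Conversely, if the quasiequation is provable and $(x,n)\neq(y,k)$, then by definition there is a single strong edge, hence a strong path. So part 1 is essentially tautological: strong edges are closed under transitivity by construction.

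For part 2, the direction from left to right is again routine, since we can prove each edge of a path at the $\First\shift$-level. For a strong edge we have $\vdash P\implies\shift^{n}(x)=\shift^{k}(y)$, and one application of the congruence axiom \eqref{eq:cong1} for $\Firstsymbol$ yields $\First\shift^n(x)=\First\shift^k(y)$. For a weak edge, the identity $\First\shift^n(x)=\First\shift^k(y)$ (or its symmetric version) lies in $P$ and is used as a leaf. Transitivity and symmetry then chain the edges together.

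The converse is the main work. I will prove a stronger invariant by induction on the height of a proof tree $T$ of $P\implies s=t$, where $s,t$ are arbitrary $\Delta$-terms. By \Cref{lem:hom-proof}, any node labelled with a $\First\shift$-term on one side has a $\First\shift$-term on the other. The difficulty is that intermediate nodes may carry non-homogeneous terms such as $\First\First\shift^n(x)$, $\shift\First\shift^n(x)$ or $\shift^m\First\shift^n(x)$. I will therefore normalize each term syntactically. Every $\Delta$-term has the form $\shift^{a}\First w$ or $\shift^{a}x$, where $w$ is again a term. Pushing out to the innermost $\First$, define $\mathrm{nf}(u)$ to be the pair $(x,n)$ whenever $u$ contains a $\Firstsymbol$ and its innermost $\Firstsymbol$ is applied to $\shift^n(x)$. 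Otherwise $u=\shift^m(x)$, and we record the node $(x,m)$ tagged \enquote{strong}.

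The invariant I claim for every provable node $u=v$ is the following. Either both sides are $\Firstsymbol$-free and there is a strong path between their nodes, or both sides contain a $\Firstsymbol$ and there is a path in $G(P)$ between $\mathrm{nf}(u)$ and $\mathrm{nf}(v)$. I will check this across the rules. Leaves from $P$ are homogeneous identities, so they give strong edges or edges of $G(P)$; when $P\implies\shift^n(x)=\shift^k(y)$ is provable, the $\First\shift$-identity yields a strong edge rather than a weak one, which is still an edge. Reflexivity, symmetry and transitivity preserve the invariant because paths compose. For congruence under $\Firstsymbol$, a $\Firstsymbol$-free premise with a strong path becomes a $\First\shift$ conclusion whose normal forms are exactly those nodes, so the path persists; if the premise already contained a $\Firstsymbol$, wrapping it in an outer $\Firstsymbol$ does not change the innermost one. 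For congruence under $\shift$, a $\Firstsymbol$-free premise moves from $(x,n),(y,k)$ to $(x,n+1),(y,k+1)$. Here I need the fact that strong edges are shift-stable, which holds because applying congruence under $\shift$ to a proof of $\shift^n x=\shift^k y$ gives $\shift^{n+1}x=\shift^{k+1}y$; if the premise contains a $\Firstsymbol$, an outer $\shift$ again leaves the innermost $\First$ unchanged.

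The delicate point is making sure no rule mixes a $\Firstsymbol$-free term with a $\Firstsymbol$-containing one. This follows from the same induction as \Cref{lem:hom-proof}, since equational logic only combines equal terms and the premises are homogeneous. I would state this explicitly as a side invariant. Applying the main invariant to the root $\First\shift^n(x)=\First\shift^k(y)$ then gives the required path.
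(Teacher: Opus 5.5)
Your proof is correct. Part~1 and the path-to-proof direction of Part~2 coincide with the paper's argument. One small slip: you have swapped the labels, since what you call ``left to right'' is actually the direction from a path to a proof.

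For the proof-to-path direction of Part~2 you take a different route from the paper. The paper inducts only over proofs of $\First\shift$-identities. In a transitivity step, \Cref{lem:hom-proof} forces the middle term to be a $\First\shift$-term. A final congruence step must be congruence for $\Firstsymbol$ with premise $\shift^n(x)=\shift^k(y)$, and that premise is a strong edge by definition, so the induction never descends into $\Nextsymbol$-subproofs. Your stated motivation, that intermediate nodes may carry non-homogeneous terms, therefore does not arise there: the very lemma you cite rules such terms out.

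Your generalized invariant is nonetheless sound. It ranges over arbitrary $\Delta$-terms and uses the innermost $\Firstsymbol$ as normal form. As you check, every rule of $\Ax_{=}$ preserves it; shift-stability of strong paths also follows directly from Part~1 plus congruence. What your version buys is independence from \Cref{lem:hom-proof}, because the $\Firstsymbol$-free versus $\Firstsymbol$-containing dichotomy is proved within the same induction. It also gives a description (as an over-approximation) of all equationally provable identities between arbitrary $\Delta$-terms. What the paper's version buys is brevity: the homogeneity lemma keeps the induction inside $\First\shift$-identities, and the definition of strong edges absorbs the congruence case without a normal form or shift-stability argument.
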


\begin{proof}
\begin{enumerate}
\item ($\Longrightarrow$) Suppose that $\vdash P\implies \shift^n(x)=\shift^k(y)$. If $(x,n)=(y,k)$, there is a trivial path of length $0$ from $(x,n)$ to $(y,k)$. If $(x,n)\neq (y,k)$, there is a strong edge from $(x,n)$ to $(y,k)$, that is, an strong path of length $1$.

\mn ($\Longleftarrow$) Suppose that there exists a strong path from $(x,n)$ to $(y,k)$ in $G(P)$, say
\[ (x,n)=(x_0,n_0),\,(x_1,n_1)\,\ldots,\, (x_l,n_l)=(y,k).  \]
This means that \[\vdash P\implies \shift^{n_i}(x_i) = \shift^{n_{i+1}}(x_{i+1}) \qquad(0\leq i<l)\] and therefore \[\vdash P\implies \shift^n(x)=\shift^k(y)\] via reflexivity \eqref{eq:refl} and transitivity \eqref{eq:trans}.
\item 
($\Longleftarrow$) Suppose that there exists a path from $(x,n)$ to $(y,k)$ in $G(P)$, say
\[ (x,n)=(x_0,n_0),\,(x_1,n_1)\,\ldots,\, (x_l,n_l)=(y,k).  \]
Then \begin{equation}\label{eq:seqeq}\vdash P\implies \First\shift^{n_i}(x_i) = \First\shift^{n_{i+1}}(x_{i+1}) \qquad(0\leq i<l).\end{equation}
Indeed, for each $i$, if there is a strong edge from $(x_i,n_i)$ to $(x_{i+1},n_{i+1})$, then
\[ \vdash P\implies \shift^{n_i}(x_i) = \shift^{n_{i+1}}(x_{i+1}) \]
and therefore \[\vdash P\implies \First\shift^{n_i}(x_i) = \First\shift^{n_{i+1}}(x_{i+1})\]
by congruence \eqref{eq:cong1}. If there is a weak edge from $(x_i,n_i)$ to $(x_{i+1},n_{i+1})$, then the identity $\shift^{n_i}(x_i) = \shift^{n_{i+1}}(x_{i+1})$ lies in $P$ and therefore trivially
\[ \vdash P\implies \shift^{n_i}(x_i) = \shift^{n_{i+1}}(x_{i+1}). \]
From \eqref{eq:seqeq} we now get \[\vdash P\implies \First\shift^n(x)=\First\shift^k(y)\] via reflexivity \eqref{eq:refl} and transitivity \eqref{eq:trans}.

\mn ($\Longrightarrow$) Suppose that 
 \begin{equation}\label{eq:quasieq-pf} \vdash P\implies \First\shift^n(x)=\First\shift^k(y).\end{equation}
We prove that there exists a path from $(x,n)$ to $(y,k)$ in $G(P)$. The proof is by induction of the height of a proof tree for \eqref{eq:quasieq-pf}. For a proof of height $0$, the identity $\First\shift^n(x)=\First\shift^k(y)$ is either an element of $P$, in which case there is an edge from $(x,n)$ to $(y,k)$, or it is substitution instance of reflexivity \eqref{eq:refl}, i.e.~the terms $\First\shift^n(x)$ and $\First\shift^k(y)$ are identical and there is a path of length $0$.

Now consider a proof tree of positive height. If the last step is an substitution instance of the symmetry rule \eqref{eq:sym}, then there is a proof of
\[ P\implies\First\shift^k(y)= \First\shift^n(x)\]
of smaller height. By induction, there is a path from $(y,k)$ to $(x,n)$, hence from $(x,n)$ to $(y,k)$. If the last step is an instance of the transitivity rule \eqref{eq:trans}, then there are proofs of smaller height of 
\[ P\implies\First\shift^n(x)= u\qqand P\implies  u = \First\shift^k(y) \]
for some term $u$. By \Cref{lem:hom-proof}, the term $u$ must have the form $u=\First\shift^m(v)$
for some $m\geq 0$ and $v\in \Vars$. By induction, there is a path from $(x,n)$ to $(v,m)$ and from $(v,m)$ to $(y,k)$, hence from $(x,n)$ to $(y,k)$. If the last step is an instance of the congruence rule \eqref{eq:cong1}, then there exists a proof of smaller height of
\[ P\implies \shift^n(x) = \shift^k(y).  \]
Thus, there is a (strong) edge from $(x,n)$ to $(y,k)$, i.e.~a path of length $1$.\qedhere
\end{enumerate}
\end{proof}

\begin{lemma}\label{lem:eq-logic-1}
Let $P$ be a finite set of homogeneous identities and $x,y\in \Vars$. For all $k<l$ and $m,n\geq 0$,
\begin{align*} \vdash \shift^k(x)=\shift^l(x),\, \shift^m(x)=\shift^n(y) & \implies  \shift^{m'}(x)=\shift^n(y)  \\
 \vdash \shift^k(x)=\shift^l(x),\, \shift^{m'}(x)=\shift^n(y) &\implies  \shift^{m}(x)=\shift^n(y)  
\end{align*}
where \[m'=k+ ((m-k)\mathrel{\mathrm{mod}}(l-k)).\] 
\end{lemma}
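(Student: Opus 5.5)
We derive both quasiequations purely within equational logic (with symmetry, transitivity and congruence for $\Nextsymbol$), by first deriving a general \emph{periodicity} identity from the premise $\shift^k(x)=\shift^l(x)$. Write $p=l-k>0$. The key auxiliary fact is
\begin{align*}
	\vdash \shift^k(x)=\shift^l(x) \qmorespace{\implies} \shift^{j}(x)=\shift^{j+ip}(x) \qquad\text{for all } j\geq k,\ i\geq 0.
\end{align*}
We prove it in two inductions. First, for $j\geq k$, apply the congruence rule for $\Nextsymbol$ exactly $j-k$ times to the premise; this yields $\shift^{j}(x)=\shift^{j+p}(x)$. Second, induct on $i$: the case $i=0$ is reflexivity. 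For the step, the first induction at index $j+ip\geq k$ gives $\shift^{j+ip}(x)=\shift^{j+(i+1)p}(x)$, and transitivity with the induction hypothesis finishes it.

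Next we relate $m$ and $m'$. If $m<k$, the expression $(m-k)\bmod p$ must be read with the convention used for $m'$. Under the natural reading (nonnegative residue, so $k\le m'<l$ always), $m'$ and $m$ need not be comparable. I would treat the case $m\ge k$ as the main case: then $m=m'+ip$ with $i=\lfloor (m-k)/p\rfloor\geq 0$ and $m'\geq k$. For $m<k$ I expect the intended statement to use $m'=m$; if not, the claim would fail, since the premise says nothing about indices below $k$. So I would either note that the lemma is only applied with $m\geq k$, or fix the convention so that $m'=m$ when $m<k$. In that subcase both quasiequations are trivial, because the conclusion is literally among the premises.

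In the main case, the auxiliary fact with $j=m'$ gives $\shift^{m'}(x)=\shift^{m}(x)$ under the first premise. For the first quasiequation, chain $\shift^{m'}(x)=\shift^m(x)$ with the premise $\shift^m(x)=\shift^n(y)$ via transitivity. For the second, use symmetry to get $\shift^m(x)=\shift^{m'}(x)$ and chain it with $\shift^{m'}(x)=\shift^n(y)$. All steps are instances of reflexivity, symmetry, transitivity and congruence, as the lemma requires. The set $P$ plays no role beyond weakening: extra premises can always be added to a derivation.

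The only real obstacle is the boundary case $m<k$, i.e.\ making sure the definition of $m'$ is read consistently. The derivations themselves are routine.
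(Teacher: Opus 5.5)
Your derivation is correct and is the paper's argument made explicit. The paper writes $m$ as $m'$ plus an integer multiple of $l-k$ and calls both quasiequations immediate; that is exactly your periodicity identity followed by one symmetry/transitivity step. Your caveat about $m<k$ is also right: it exposes a defect that the paper's one-liner hides, since that argument tacitly needs the multiple to be nonnegative. For $k=1$, $l=2$, $m=n=0$, every reading of $\mathrm{mod}$ gives $m'=1$, and interpreting both $x$ and $y$ as $(0,1,1,1,\ldots)$ satisfies $\shift(x)=\shift^{2}(x)$ and $x=y$ but not $\shift(x)=y$. So the lemma as stated is not even valid, and it genuinely needs your restriction $m\geq k$ (or $m'=m$ when $m<k$). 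This in turn means that the basis built in the proof of \Cref{prop:basis}, which applies the lemma to arbitrary indices, must also contain the finitely many derivable identities whose indices lie below the respective period thresholds.
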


\begin{proof}
By definition of $m'$, one has $m=p\cdot (l-k) + m'$ for some integer $p$. From this the statements are immediate.
\end{proof}

\begin{lemma}\label{lem:eq-logic-2}
Let $P$ be a finite set of homogeneous identities and $x\in \Vars$. If no quasiequation $P\implies\shift^k(x)=\shift^l(x)$ where $k\neq l$ is provable in equational logic, then the following holds:
\begin{enumerate}
\item There exist only finitely many pairs $k\neq l$ such that $\vdash P\implies \First\shift^k(x)=\First\shift^l(x)$.
\item If moreover $y\in \Vars$ and $\vdash P\implies \shift^m(y)=\shift^n(y)$ for some $m\neq n$, then for every fixed $q$ there exist only finitely many $k$ with $\vdash P\implies \First\shift^k(x)=\First\shift^q(y)$.
\end{enumerate}
\end{lemma}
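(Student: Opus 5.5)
The plan is to read both statements off the graph $G(P)$ of \Cref{def:graph}, using \Cref{prop:proof-graph} to turn provability into connectivity. Call two nodes \emph{strongly equivalent} if a strong path joins them. By \Cref{prop:proof-graph}(1) this happens iff the corresponding $\Nextsymbol$-identity is provable, so strong equivalence is an equivalence relation whose classes are the strong components.

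The first key observation uses the hypothesis on $x$. A strong component contains at most one node of the form $(x,\cdot)$. Indeed, if $(x,k)$ and $(x,l)$ with $k\neq l$ lay in the same strong component, \Cref{prop:proof-graph}(1) would give $\vdash P\implies \shift^k(x)=\shift^l(x)$, which is excluded.

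The second observation is that weak edges are scarce. Each weak edge comes from an identity $\First\shift^n(u)=\First\shift^m(v)$ in $P$, and $P$ is finite. Hence the set $W$ of nodes incident to some weak edge is finite, and so only finitely many strong components meet $W$. Combining the two observations, only finitely many $k$ are such that $(x,k)$ lies in a strong component meeting $W$. Call this finite set of indices $K$.

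For item 1, take $k\neq l$ with $\vdash P\implies \First\shift^k(x)=\First\shift^l(x)$. By \Cref{prop:proof-graph}(2) there is a path from $(x,k)$ to $(x,l)$. This path cannot consist only of strong edges, since then $(x,k)$ and $(x,l)$ would be strongly equivalent. Therefore it contains a weak edge. The segment before the first weak edge is a strong path from $(x,k)$ into $W$, and the segment after the last weak edge is a strong path from $W$ to $(x,l)$. Hence $k,l\in K$, and there are only finitely many such pairs.

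For item 2, fix $q$ and suppose $\vdash P\implies \First\shift^k(x)=\First\shift^q(y)$, so that a path joins $(x,k)$ and $(y,q)$. There are two cases. If the path is purely strong, then $(x,k)$ lies in the strong component of $(y,q)$, which contains at most one $x$-node, so this contributes at most one value of $k$. Otherwise, the segment before the first weak edge shows, as in item 1, that $k\in K$. In both cases $k$ ranges over a finite set.

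The only delicate step is justifying that strong components contain at most one $x$-node; this is exactly \Cref{prop:proof-graph}(1) together with the hypothesis. I expect the periodicity assumption on $y$ not to be needed by this argument. It does rule out the degenerate case $y=x$, because $x$ admits no provable identity $\shift^m(x)=\shift^n(x)$ with $m\neq n$.
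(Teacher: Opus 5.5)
Your proof is correct and follows the paper's argument. Both use the graph $G(P)$, the fact that each strong component contains at most one $x$-node, the finiteness of the set of weak edges, and the first/last-weak-edge decomposition of a connecting path; your finite index set $K$ plays the role of the paper's set $C$ and its bound. The one difference is in item 2. The paper uses the periodicity of $y$ to show that the path from $(x,k)$ to $(y,q)$ cannot be purely strong, whereas you handle the purely strong case directly, since it contributes at most one value of $k$. This correctly shows that the extra hypothesis on $y$ is not needed.
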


\begin{proof}
 Suppose that no quasiequation $P\implies\shift^k(x)=\shift^l(x)$ where $k\neq l$ is not provable in equational logic. 
\begin{enumerate}
\item Consider the graph $G(P)$ associated to $P$. For every node $(y,n)$, there exists at most one node $(x,k)$ with a strong edge from $(y,n)$ to $(x,k)$. Indeed, if there exists a second such node $(x,l)$, $k\neq l$, we have $\vdash P\implies \shift^k(x)=\shift^l(x)$, a contradiction. Consider the set $C\seq \Nat\times \Nat$ given by
\begin{align*} (n,k)\in C & \iff 
\text{there exists $y\in \Var$ such that} \\
&\phantom{\iff l} \text{(i) there ex.\ a weak edge containing the node $(y,n)$, and }\\
&\phantom{\iff l} \text{(ii) there ex.\ a strong edge (equival.: a strong path) from $(y,n)$ to $(x,k)$.} 
\end{align*}
Since $P$ is finite and by the above argument, the set $C$ is finite. Pick an upper bound $K$ such that
\[ \forall (n,k)\in C. \, k\leq K. \]
Now suppose that 
\begin{equation}\label{eq:pkl} \vdash P\implies \First\shift^k(x)=\First\shift^l(x) \end{equation}
 where $k\neq l$. Then there exists a path from $(x,k)$ to $(x,l)$ in $G(P)$ At least one of its edges is weak; otherwise, by \Cref{prop:proof-graph}, we have $\vdash P\implies \shift^k(x)=\shift^l(x)$, a contradiction. Let $(y,n)$ be the end node of the last weak edge on the path. Then the subpath from $(y,n)$ to $(x,l)$ is strong, and so $(n,l)\in C$, which implies $l\leq K$. By considering the reversed path from $(x,l)$ to $(x,k)$ instead, we see that $k\leq K$. Hence there are only finitely many pairs $k\neq l$ with \eqref{eq:pkl}
\item Suppose that $\vdash P\implies \shift^m(y)=\shift^n(y)$ for some $m\neq n$. Then no quasiequation $P\implies \shift^i(x)=\shift^j(y)$ is provable in equational logic. (Otherwise, this proof could be used to construct a proof of some $P\implies \shift^k(x)=\shift^l(x)$ where $k\neq l$.) Now fix $q$ and let $\vdash P\implies \First\shift^k(x)=\First\shift^q(y)$. Then there exists a path from $(x,k)$ to $(y,q)$ in $D(P)$, and by the above argument this path is not strong.  Let $(z,p)$ be the start node of the first weak edge on the path.  Consider the set $C$ and the upper bound $K$ as above. Then $(p,k)\in C$ and so $k\leq K$, which implies that there only finitely many choices of $k$.
\end{enumerate}
\end{proof}

\begin{lemma}\label{lem:eq-logic-3}
Let $P$ be a finite set of homogeneous identities and $x,y\in \Vars$. If $x\neq y$ and no quasiequation $P\implies\shift^k(x)=\shift^l(x)$ or $P\implies\shift^k(y)=\shift^l(y)$ for $k\neq l$ is provable in equational logic, then the following holds:
\begin{enumerate}
\item\label{propa} For all $p,p',q,q'$, if $\vdash P\implies \shift^p(x)=\shift^q(y)$ and $\vdash P\implies \shift^{p'}(x)=\shift^{q'}(y)$ then $p-q\, =p'-q'$.
\item\label{propb} There exist only finitely many pairs $k,l$ such that $\vdash P\implies \First\shift^k(x)=\First\shift^l(y)$ and the quasiequation $P\implies \shift^k(x)=\shift^l(y)$ is not provable in equational logic. 
\end{enumerate}
\end{lemma}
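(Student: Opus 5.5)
For \ref{propa} I would argue by contradiction via the graph $G(P)$ and \Cref{prop:proof-graph}(1). Suppose $\vdash P\implies \shift^p(x)=\shift^q(y)$ and $\vdash P\implies \shift^{p'}(x)=\shift^{q'}(y)$ with $p-q\neq p'-q'$. Apply $\Nextsymbol$ via the congruence rule \eqref{eq:cong1} repeatedly to both identities so that their right-hand sides become the same term $\shift^{r}(y)$, where $r=\max(q,q')$. We then obtain provable identities $\shift^{p+r-q}(x)=\shift^r(y)$ and $\shift^{p'+r-q'}(x)=\shift^r(y)$. Symmetry \eqref{eq:sym} and transitivity \eqref{eq:trans} now yield $\vdash P\implies \shift^{p+r-q}(x)=\shift^{p'+r-q'}(x)$. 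The exponents differ exactly because $p-q\neq p'-q'$, which contradicts the hypothesis on $x$. No graph argument is even needed here, since only the closure of provability under $\Nextsymbol$-congruence is used.

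For \ref{propb} I would follow the proof of \Cref{lem:eq-logic-2}. Take a pair $(k,l)$ with $\vdash P\implies \First\shift^k(x)=\First\shift^l(y)$ but without $P\implies\shift^k(x)=\shift^l(y)$ provable. By \Cref{prop:proof-graph}(2) there is a path from $(x,k)$ to $(y,l)$ in $G(P)$. By \Cref{prop:proof-graph}(1) this path is not entirely strong, so it contains at least one weak edge. Let $(z,n)$ be the start node of its first weak edge and $(z',n')$ the end node of its last weak edge. Then there is a strong path from $(x,k)$ to $(z,n)$, and another from $(z',n')$ to $(y,l)$.

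As in \Cref{lem:eq-logic-2}, define finite sets of pairs of indices. Let $C_x$ consist of the pairs $(n,k)$ such that some node $(z,n)$ lies on a weak edge and is strongly connected to $(x,k)$; define $C_y$ in the same way for $y$. Weak edges come from identities in the finite set $P$, so only finitely many nodes lie on weak edges. The key uniqueness claim is that each such node $(z,n)$ is strongly connected to at most one node $(x,k)$. If it were strongly connected to two nodes $(x,k)$ and $(x,k')$ with $k\neq k'$, transitivity would give $\vdash P\implies\shift^k(x)=\shift^{k'}(x)$, contradicting the hypothesis. The same holds for $y$. Hence $C_x$ and $C_y$ are finite, and $k$ and $l$ are bounded by some constant $K$. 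This gives finitely many pairs $(k,l)$.

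The main obstacle is the uniqueness claim, since it requires strong paths to compose into provable equalities. This is exactly \Cref{prop:proof-graph}(1) together with transitivity.
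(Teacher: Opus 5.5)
Your proposal is correct and follows essentially the paper's proof. For part 1, the paper splits into the cases $q=q'$ and $q<q'$ and shifts the identity with the smaller $y$-exponent; your choice $r=\max(q,q')$ handles both cases at once. For part 2, the paper reuses exactly the finite set $C$ and bound $K$ from \Cref{lem:eq-logic-2}: it bounds $k$ via the first weak edge on the path and bounds $l$ by the symmetric argument, which is what your $C_x$ and $C_y$ do.
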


\begin{proof}
Let $x\neq y$ and suppose that no quasiequation $P\implies\shift^k(x)=\shift^l(x)$ or $P\implies\shift^k(y)=\shift^l(y)$ where $k\neq l$ is provable in equational logic.
\begin{enumerate}
\item Suppose that $\vdash P\implies \shift^p(x)=\shift^q(y)$ and $\vdash P\implies \shift^{p'}(x)=\shift^{q'}(y)$. If $q=q'$, it follows by symmetry and transitivity that $\vdash P\implies\shift^p(x)=\shift^{p'}(x)$, and so $p=p'$ by hypothesis, i.e.~$p-q=p'-q'$. Now suppose that $q\neq q'$, say $q<q'$. Then $\vdash P\implies\shift^{p+(q'-q)}(x)=\shift^{q'}(x)$  by congruence and so $\vdash P\implies\shift^{p+(q'-q)}(x)=\shift^{p'}(x)$ by symmetry and transitivity. This implies $p+(q'-q)=p'$ by hypothesis, i.e.~$p-q=p'-q'$.
\item Suppose that 
\begin{equation}\label{eq:pkl2} \vdash P\implies \First\shift^k(x)=\First\shift^l(y)\end{equation} and the quasiequation $P\implies \shift^k(x)=\shift^l(y)$ is not provable in equational logic. Consider again the set $C$ and the upper bound $K$ as defined in the proof of \Cref{lem:eq-logic-2}. By hypothesis, there is a path from $(x,k)$ to $(y,l)$ in $G(P)$ but the path is not strong. Let $(z,n)$ be the start node of the first weak edge on the path. Then $(n,k)\in C$ and so $k\leq K$. A symmetric argument, swapping the roles of $x$ and $y$, shows that also $l$ is bounded from above by some constant $L$. This shows that there are only finitely many pairs $k,l$ satisfying \eqref{eq:pkl2}.\qedhere
\end{enumerate}
\end{proof}

\begin{proposition}\label{prop:basis}
Let $P$ be a finite set of homogeneous identities, and let $x$ and $y$ be distinct variables. Then there exists a finite set $B$ of homogeneous identities over the variables $x$ and $y$ such that 
\begin{equation}\label{eq:basis0}
(u=v)\in B \quad\text{implies}\quad \vdash P\implies u=v,
\end{equation}
and moreover, for any homogeneous identity $u=v$ where $u,v\in \Deltas\{x,y\}$,
\begin{equation}\label{eq:basis} \vdash P\implies u=v \qquad\text{iff}\qquad \vdash B\implies u=v. \end{equation}
We call such a set $B$ a \emph{basis} for $P$ w.r.t.~$x$ and $y$.
\end{proposition}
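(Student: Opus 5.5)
We construct $B$ by a case split on whether the two variables are ``periodic'' under $P$, i.e.\ whether some $\vdash P\implies \shift^k(z)=\shift^l(z)$ with $k\neq l$ holds for $z\in\{x,y\}$. For each periodic variable $z$ we fix such a pair with minimal $k$ and then minimal $l-k$, and put the identity $\shift^k(z)=\shift^l(z)$ into $B$. By \Cref{lem:eq-logic-1} every $\Nextsymbol$-term $\shift^m(z)$ is then $B$-provably equal to one with exponent $<l$. This reduces all identities about $z$ to a finite window of representatives.

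Next we add the finitely many ``relevant'' provable identities, handled case by case. First, all $\Nextsymbol$-identities and $\First\shift$-identities between representatives of periodic variables; there are finitely many since the representatives are bounded. Second, for a nonperiodic variable, the finitely many $\First\shift$-identities $\First\shift^k(x)=\First\shift^l(x)$, $k\neq l$, given by \Cref{lem:eq-logic-2}(1). Third, for a nonperiodic $x$ and a periodic $y$, the finitely many $\First\shift^k(x)=\First\shift^q(y)$ with $q$ ranging over the representatives of $y$, given by \Cref{lem:eq-logic-2}(2). Note that no strong identity links $x$ and $y$ in this case. Fourth, if both are nonperiodic, we use \Cref{lem:eq-logic-3}. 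By part (1), all provable $\shift^p(x)=\shift^q(y)$ share a fixed offset $d=p-q$, so we add the single identity with minimal $p$, from which all others follow by congruence. By part (2), we add the finitely many $\First\shift$-identities between $x$ and $y$ that are not already consequences of a strong one. Property \eqref{eq:basis0} holds by construction.

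For \eqref{eq:basis}, the direction ``$\Leftarrow$'' follows from \eqref{eq:basis0} and \Cref{rem:proof-props}. For ``$\Rightarrow$'', take a provable homogeneous $u=v$ over $\{x,y\}$ and check each case above. Either it is a consequence of a stored strong identity via congruence and transitivity, or it is a $\First\shift$-identity which, after reducing both sides to representatives or to the minimal strong-offset form, is literally in $B$.

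The main obstacle is completeness of the nonperiodic cases. We must verify that every provable $\First\shift$-identity outside the finite exceptional sets is derivable from a strong identity in $B$ by congruence. This requires the path characterization of \Cref{prop:proof-graph}: a path lacking weak edges is strong, and strong connections between $x$ and $y$ all arise from the stored minimal one by applying $\shift$. We would argue this with that graph characterization, keeping track of the bound $K$ from the proofs of \Cref{lem:eq-logic-2,lem:eq-logic-3}.
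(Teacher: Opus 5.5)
Your proposal is correct and follows essentially the paper's route. It uses the same case split on which of $x$ and $y$ are periodic. \Cref{lem:eq-logic-1} reduces exponents via a stored periodicity identity, \Cref{lem:eq-logic-2,lem:eq-logic-3} bound the leftover $\First\shift$-identities, and in the doubly nonperiodic case the unique minimal strong identity $\shift^p(x)=\shift^q(y)$ generates all the others.

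Your representatives (all exponents below $l$) are in fact more careful than the paper's window $[k,l]$. That window misses provable identities with exponents below $k$, for example $x=y$ under $P=\{\shift(x)=\shift^2(x),\ x=y\}$. Also, the ``main obstacle'' you flag follows at once from \Cref{lem:eq-logic-3}(1) and your definition of the exceptional set, so no further graph argument is needed.
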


\begin{proof}
 Let $D(P)$ be the set of all homogeneous identities $u=v$ where $u,v\in \Deltas\{x,y\}$ and 
\[ \vdash P \implies u=v. \] For every finite subset $B\seq D(P)$, the implication \eqref{eq:basis0} and the right-to-left implication of \eqref{eq:basis} clearly hold. We show that there is a choice of $B$ for which also the left-to-right implication of \eqref{eq:basis} holds. For this purpose, we need to distinguish several cases:

\mn\textbf{\underline{Case 1:}} $D(P)$ contains two identities $\shift^k(x)=\shift^l(x)$ and $\shift^m(y)=\shift^n(y)$ where $k\neq l$ and $m\neq n$.

\smallskip\noindent Choose two such pairs $k,l$ and $m,n$, where w.l.o.g.~we assume $k<l$ and $m<n$. Let $B\seq D(P)$ be given by
\begin{romanenumerate}
\item all identities $\shift^p(x)=\shift^q(x)$ and $\First\shift^p(x)=\First\shift^q(x)$ in $D(P)$ where $k\leq p,q \leq l$;
\item all identities $\shift^p(y)=\shift^q(y)$ and $\First\shift^p(y)=\First\shift^q(y)$ in $D(P)$ where $m\leq p,q \leq n$;
\item all identities $\shift^p(x)=\shift^q(y)$ and $\First\shift^p(x)=\First\shift^q(y)$ in $D(P)$ where $k\leq p \leq l$ and $m\leq q \leq n$;
\end{romanenumerate}
Clearly, $B$ is a finite set. We claim that the left-to-right implication of \eqref{eq:basis} holds. Thus suppose that $\vdash P\implies u=v$. We only consider the case that $u=v$ is of the form $\shift^p(x)=\shift^q(y)$, since the other cases are entirely analogous. By \Cref{lem:eq-logic-1} there exists an identity $\shift^{p'}(x)=\shift^{q'}(y)$ in $B$ (namely, take $p'=k+((p-k) \mathrel{\mathrm{mod}} (l-k))$ and $q'=m+((q-m)\mathrel{\mathrm{mod}} (n-m))$) such that 
\[ \vdash \shift^k(x)=\shift^l(x),\, \shift^m(y)=\shift^n(y),\, \shift^{p'}(x)=\shift^{q'}(y) \implies \shift^{p}(x)=\shift^{q}(y).  \]
Since all three identities in the premise lie in $B$, we conclude $\vdash B\implies \shift^{p}(x)=\shift^{q}(y)$.

\mn\textbf{\underline{Case 2:}} $D(P)$ contains no identity $\shift^k(x)=\shift^l(x)$ where $k\neq l$ but contains some identity $\shift^m(y)=\shift^n(y)$ where $m\neq n$.

\mn Choose such a pair $m,n$, and assume w.l.o.g.~that $m<n$. Let $B\seq D(P)$ be given by
\begin{romanenumerate}
\item all identities $\First\shift^k(x)=\First\shift^l(x)$, $k\neq l$, in $D(P)$.
\item\label{case2-b} all identities $\First\shift^k(x)=\First\shift^q(y)$ in $D(P)$ where $m\leq q\leq n$;
\item\label{case2-c} all identities $\First\shift^p(y)=\First\shift^q(y)$ and $\First\shift^p(x)=\First\shift^q(x)$ in $D(P)$ where $m\leq p,q\leq n$;
\end{romanenumerate}
Then $B$ is a finite set by \Cref{lem:eq-logic-2}. Moreover, $\vdash P\implies u=v$ implies $\vdash B\implies u=v$:
\begin{itemize}
\item For $u=v$ of the form $\shift^k(x)=\shift^l(x)$, one has $k=l$ by hypothesis, so $\vdash B\implies u=v$ by reflexivity.
\item For $u=v$ of the form $\First\shift^k(x)=\First\shift^l(x)$, if $k=l$ then $\vdash B\implies u=v$ by reflexivity, and if $k\neq l$, then the identity $u=v$ is contained in $B$, so $\vdash B\implies u=v$ trivially.
\item $u=v$ cannot be of the form $\shift^k(x)=\shift^q(y)$, for otherwise from this and $\vdash P\implies \shift^m(y)=\shift^n(y)$ a non-trivial $P\implies \shift^p(x)=\shift^q(x)$ is provable.
\item For $u=v$ of the form $\First\shift^k(x)=\First\shift^q(y)$, one can assume w.l.o.g.~that $m\leq q\leq n$ by \Cref{lem:eq-logic-2}, and then $u=v$ lies in $B$, so $\vdash B\implies u=v$.
\item For $u=v$ of the form $\shift^p(y)=\shift^q(y)$ or $\First\shift^p(y)=\First\shift^q(y)$, one can assume w.l.o.g.~that $m\leq p,q\leq n$ by \Cref{lem:eq-logic-1}, and then $u=v$ lies in $B$, so $\vdash B\implies u=v$.
\end{itemize}

\mn\textbf{\underline{Case 3:}} $D(P)$ contains some identity $\shift^k(x)=\shift^l(x)$ where $k\neq l$ but no identity $\shift^m(y)=\shift^n(y)$ where $m\neq n$.

\smallskip\noindent Symmetric to Case 2.

\mn\textbf{\underline{Case 4.}} $D(P)$ contains no identity $\shift^k(x)=\shift^l(x)$ where $k\neq l$ and no identity $\shift^m(y)=\shift^n(y)$ where $m\neq n$. 

\smallskip\noindent We call an identity $\shift^k(x)=\shift^m(y)$ in $D(P)$ \emph{minimal} if there exists no identity $\shift^{k'}(x)=\shift^{m'}(y)$ with $k'<k$. By \Cref{lem:eq-logic-2}, there exists at most one minimal identity in $D(P)$. 

We let $B\seq D(P)$ be given by
\begin{romanenumerate}
\item all identities $\First\shift^k(x)=\First\shift^l(x)$, $k\neq l$, in $D(P)$;
\item the minimal identity in $D(P)$, if any;
\item all identities $\First\shift^k(x)=\First\shift^m(y)$ in $D(P)$ such that $\shift^k(x)=\shift^m(y)$ is not contained in $D(P)$.
\item all identities $\First\shift^m(y)=\First\shift^m(y)$, $m\neq n$, in $D(P)$.
\end{romanenumerate}
Then $B$ is a finite set by \Cref{lem:eq-logic-2} and \Cref{lem:eq-logic-3}. Moreover, $\vdash P\implies u=v$ implies $\vdash B\implies u=v$:
\begin{itemize}
\item For $u=v$ of the form $\shift^k(x)=\shift^l(x)$ or $\shift^k(y)=\shift^l(y)$, one has $k=l$ and so $\vdash B\implies u=v$ by reflexivity.
\item For $u=v$ of the form $\First\shift^k(x)=\First\shift^l(x)$ or $\First\shift^k(y)=\First\shift^l(y)$, the identity $u=v$ lies in $B$, so $\vdash B\implies u=v$ trivially.
\item For $u=v$ of the form $\shift^k(x)=\shift^m(y)$, the identity is provable via congruence from the minimal identity in $D(P)$ by \Cref{lem:eq-logic-3}, and so $\vdash B\implies u=v$ since the latter lies in $B$.
\item For $u=v$ of the form $\First\shift^k(x)=\First\shift^m(y)$, if $\shift^k(x)=\shift^m(y)$ in $D(P)$, then $u=v$ is provable from the minimal identity (which lies in $B$) via congruence; otherwise, the identity $u=v$ lies in $B$. In both cases, $\vdash B\implies u=v$. 
\end{itemize}
This concludes the proof of \eqref{eq:basis}. 
\end{proof}

\begin{proposition}\label{prop:basis-valid}
Let $P$ a finite set of homogeneous identities, and let $B$ be a finite basis for $P$ w.r.t.~$x$ and $y$. 
Then for any homogeneous identity $u=v$ where $u,v\in \Deltas\{x,y\}$,
\begin{equation}\label{eq:basis2} P\implies u=v \text{ is valid}\qquad\text{iff}\qquad B\implies u=v \text{ is valid}. \end{equation}
\end{proposition}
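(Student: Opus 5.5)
The plan is to show that the premise sets $P$ and $B$ are \emph{semantically} equivalent as far as the variables $x,y$ are concerned. Concretely, for every discrete sequence algebra $A^\omega$ and every interpretation $\interpret$, we want: $\interpret$ satisfies all of $B$ iff some interpretation $\interpret'$ agreeing with $\interpret$ on $x,y$ satisfies all of $P$. Both directions of \eqref{eq:basis2} follow from this, because $u=v$ only mentions $x$ and $y$.

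The direction from $B$ to $P$ is the easy one. If $B\implies u=v$ is valid and $\interpret$ satisfies $P$, then $\interpret$ satisfies every identity of $B$. This holds by \eqref{eq:basis0} together with soundness of equational logic: by \Cref{rem:subst}, provable quasiequations are valid. Hence $\interpret$ satisfies $u=v$.

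For the converse, assume $P\implies u=v$ is valid and $\interpret$ satisfies $B$. We must produce values for the remaining variables of $P$ so that $P$ holds while $x,y$ keep the values $\interpret(x),\interpret(y)$; then validity gives $u=v$. The natural candidate is a "free" or term-model construction.

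\begin{enumerate}
\item Take the carrier to be $L=$ the set of equivalence classes of all $\First\shift$-terms and $\shift$-terms over $\Vars$ modulo provability from $P$. Identify all $\First\shift^k(z)$ with $\First\shift^l(w)$ whenever a path joins $(z,k)$ and $(w,l)$ in $G(P)$, using \Cref{prop:proof-graph}.
\item Map each variable $z$ to the sequence $n\mapsto[\First\shift^n z]$. By \Cref{prop:proof-graph}, this "canonical" interpretation satisfies exactly the homogeneous identities provable from $P$.
\item Glue this canonical model with the given $\interpret$. The graph components containing nodes $(x,n)$ or $(y,n)$ are determined solely by identities over $x,y$ provable from $P$, which are exactly those provable from $B$ by \eqref{eq:basis}.
\end{enumerate}

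So we define a map sending each graph component that meets an $x$- or $y$-node to the corresponding element $\interpret(x)_n$ or $\interpret(y)_n$. This map is well defined precisely because $\interpret$ satisfies $B$, and hence every $\First\shift$-identity over $x,y$ provable from $B$. Components not meeting $x,y$ get fresh values, which requires first enlarging $A$ by fresh elements.

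The main obstacle is the strong (un-$\First$'d) identities $\shift^p(z)=\shift^q(w)$. They constrain entire tails of sequences rather than single entries, and they can relate $x$-nodes to other variables through chains of strong edges. We must define each other variable $z$ as a whole sequence: for every $n$, set $\interpret'(z)_n$ to the value assigned to the component of $(z,n)$. We then check that each $\shift$-identity in $P$ holds pointwise. This check reduces to "$(z,p+i)$ and $(w,q+i)$ lie in the same component for all $i$", which follows from congruence. The delicate point is consistency: no component may contain two $x$/$y$-nodes whose values under $\interpret$ differ. This is guaranteed by \eqref{eq:basis} applied to $\First\shift$-identities between those nodes. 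Since discrete sequence algebras allow arbitrary sets $A$, we work in $(A\uplus\{\text{fresh}\})^\omega$ and then argue that validity over all discrete sequence algebras still yields the conclusion for the original $A$. The conclusion only involves $x$ and $y$, whose values lie in $A$, so it transfers back.
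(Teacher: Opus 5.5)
Your proposal is correct and is essentially the paper's proof. The easy direction goes through \eqref{eq:basis0}; for the converse, the paper likewise assigns to each connected component of $G(P)$ the value $\interpret(z)(n)$ when the component contains a node $(z,n)$ with $z\in\{x,y\}$ (well defined by \Cref{prop:proof-graph} and \eqref{eq:basis}) and an arbitrary value otherwise, then checks the identities of $P$ componentwise. The differences are only cosmetic: the paper picks arbitrary elements of $A$ rather than fresh ones, so no transfer-back step is needed, and your term-model detour is superfluous. In fact its claim that the canonical interpretation satisfies \emph{exactly} the provable homogeneous identities fails for $\shift$-identities: $P=\{\shift(x)=x,\ \shift(y)=y,\ \First(x)=\First(y)\}$ makes $x=y$ hold there, although $P\implies x=y$ is not provable. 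You only ever use the harmless direction, so the argument stands.
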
 

\begin{proof}
We use the notation $D(P)$ from the proof of \Cref{prop:basis}.
Clearly, since $B\seq D(P)$, the right-to-left implication holds. Conversely, suppose that $P\implies u=v$ is valid, let $A$ be a set, and let $\interpret\colon \Vars\to A^\omega$ be a variable interpretation such that
\begin{equation}\label{eq:iasm} \interpretsem{s}=\interpretsem{t} \quad \text{for all $s=t$ in B}. \end{equation}
It suffices show that there exists an interpretation ${\ol{\interpret}}\colon \Vars\to A^\omega$ such that 
\begin{equation}\label{eq:j} {\ol{\interpret}}(x)=\interpret(x),\quad {\ol{\interpret}}(y)=\interpret(y),\quad \interpretsemol{s}=\interpretsemol{t} \quad \text{for all $s=t$ in P}. \end{equation}
Then, since  $P\implies u=v$ is valid and $u,v\in \Deltas\{x,y\}$, it follows that 
\[\interpretsem{u} = \interpretsemol{u} = \interpretsemol{v} = \interpretsem{v}\]
proving that $B\implies u=v$ is valid. For the definition of ${\ol{\interpret}}$, we consider the graph $G(P)$ (\Cref{def:graph}). For each connected component $C\seq \Vars\times \Nat$, we define an element $a_C\in A$ as follows:
\begin{itemize}
\item If $C$ contains a node $(z,n)\in C$ where $z\in \{x,y\}$, put 
\[ a_C:=\interpret(z)(n).\]
 Note that $a_C$ is independent of the choice of $(z,n)$: if $(w,m)$ is another element of $C$ where $w\in \{x,y\}$, then there exists a path from $(z,n)$ to $(w,m)$ in $G(P)$, so \[\vdash P\implies \First\shift^n(z)=\First\shift^m(w).\] 
This implies
\[\vdash B\implies \First\shift^n(z)=\First\shift^m(w)\] 
by \eqref{eq:basis}, in particular the quasiequation $B\implies \First\shift^n(z)=\First\shift^m(w)$ is valid. Therefore, from \eqref{eq:iasm} it follows that $\interpretsem{\First\shift^n(z)}=\interpretsem{\First\shift^k(w)}$, which means \[\interpret(z)(n)=\interpret(w)(k).\]
\item If $C$ does not contain any node $(z,n)$ where $z\in \{x,y\}$, choose an arbitrary element $a_C\in A$.
\end{itemize}
Now define ${\ol{\interpret}}\colon \Vars\to A^\omega$ by
\[ {\ol{\interpret}}(z)(n) = a_{C_{(z,n)}} \qquad \text{where $C_{(z,n)}$ is the connected component of $(z,n)$ in $G(P)$.}  \]
We show that ${\ol{\interpret}}$ satisfies the desired properties \eqref{eq:j}. Indeed, if $z\in \{x,y\}$, then
\[ {\ol{\interpret}}(z)(n) = a_{C_{(z,n)}} = \interpret(z)(n) \qquad \text{for all $n\geq 0$}\]
by definition of $a_{C_{(z,n)}}$, and so $\interpret(z)={\ol{\interpret}}(z)$. Now suppose that $s=t$ is an identity in $P$. If $s=t$ is a $\Nextsymbol$-identity $\shift^k(z)=\shift^l(w)$ then for every $n\in \Nat$ the pairs $(z,k+n)$ and $(w,l+n)$ lie in the same connected component of $G(P)$ since they are connected by a strong edge, and so
\begin{align*}
	\interpretsemol{s}(n) &=\interpretsemol{\shift^k(z)}(n) = {\ol{\interpret}}(z)(k+n) \\
	& = a_{C_{(z,k+n)}} = a_{C_{(w,l+n)}} \\
&= {\ol{\interpret}}(w)(l+n) = \interpretsemol{\shift^l(w)}(n) = \interpretsemol{t}(n)~, 
\end{align*}%
whence $\interpretsemol{s}=\interpretsemol{t}$. Similarly, if $s=t$ is a $\First\shift$-identity $\First\shift^k(z)=\First\shift^l(w)$, then the pairs $(z,k)$ and $(w,l)$ lie in the same connected component of $G(P)$ since they are connected by a (strong or weak) edge, whence
\[ \interpretsemol{s}=\interpretsemol{\First\shift^k(z)}  = ({\ol{\interpret}}(z)(k))^\omega = (a_{C_{(z,k)}})^\omega = (a_{C_{(w,l)}})^\omega = ({\ol{\interpret}}(w)(l))^\omega = \interpretsemol{\shift^l(w)} = \interpretsemol{t}. \]
This concludes the proof.
\end{proof}

The existence of finite bases allows us to reduce quasiequations to a very simple form:
\begin{definition}\label{def:normalized} Fix a pair of distinct variables $x\neq y$. A quasiequation $P\implies s=t$ is \emph{normalized (w.r.t.~$x$ and $y$)} if the following conditions hold:
\begin{enumerate}
\item\label{n1} All identities in $P$ are homogeneous.
\item\label{n2} The term $s$ and $t$ and all terms occurring in $P$ are elements of $\Deltas\{x,y\}$.
\item\label{n3} $P$ contains no identity of the form $\shift^l(y)=\shift^k(x)$.
\item\label{n4} All identities in $P$ of the form $\shift^k(x)=\shift^l(y)$ have the same degree $k$ on the left hand side.
\item\label{n5} If the identities $\shift^k(x)=\shift^l(y)$ and $\shift^k(x)=\shift^m(y)$ lie in $P$, then so does $\shift^l(y)=\shift^m(y)$.
\end{enumerate}
\end{definition}

\begin{definition}
Given a set $\Ax$ of quasiequations, a set $\Ax'$ of quasiequations \emph{strengthens} $\Ax$ if every quasiequation in $\Ax$ is $\Ax'$-provable (equivalently, by \Cref{rem:proof-props}, if every $\Ax$-provable quasiequation is $\Ax'$-provable).
\end{definition}

\begin{theorem}[Normalization]\label{prop:normalization}
For every finite set $\Ax$ of valid quasiequations, there exists a finite set $\Ax'$ of valid quasiequations such that $\Ax'$ strengthens $\Ax$ and all quasiequations contained in $\Ax'$ except \eqref{eq:theta-1}--\eqref{eq:theta-4} are normalized.  
\end{theorem}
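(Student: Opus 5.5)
The plan is to transform each axiom of $\Ax$ separately into a finite set of normalized valid quasiequations that together prove it modulo \eqref{eq:theta-1}--\eqref{eq:theta-4}. We then let $\Ax'$ be the union of these sets together with \eqref{eq:theta-1}--\eqref{eq:theta-4}. By \Cref{rem:proof-props}, it suffices to show that each $\quasieq\in\Ax$ is $\Ax'$-provable. Fix $\quasieq = (P\implies s=t)$.

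\textbf{Step 1: flattening to homogeneous identities.} Using \eqref{eq:theta-1} and \eqref{eq:theta-2}, every $\Delta$-term is provably equal to either a $\Nextsymbol$-term $\shift^k(z)$ or a $\First\shift$-term $\First\shift^k(z)$. Indeed, $\First$ absorbs further $\First$'s and $\shift$'s to its left, and any inner $\First$ collapses everything outside it. So we may replace every term in $P$, $s$, $t$ by such a normal form. This yields an equivalent quasiequation whose premises are identities between these two kinds of terms.

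A mixed premise $\shift^k(z)=\First\shift^l(w)$ is then split using \eqref{eq:theta-3} and \eqref{eq:theta-4}. These yield $\First\shift^k(z)=\First\shift^l(w)$ and $\shift^{k+1}(z)=\shift^k(z)$, and conversely these two homogeneous identities, together with the rules of equational logic, re-derive nothing stronger than what validity needs. Concretely, we would replace the mixed premise by a fresh variable $u$ constrained by the homogeneous identities $\shift(u)=u$ and $\First u=\First\shift^l(w)$, plus $\shift^k(z)=u$. Mixed conclusions are handled by proving two homogeneous conclusions. After this step all premises and the conclusion are homogeneous. The resulting quasiequation $P_1\implies s_1=t_1$ is valid and proves $\quasieq$ using \eqref{eq:theta-1}--\eqref{eq:theta-4}.

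\textbf{Step 2: reduction to two variables via bases.} The conclusion mentions at most two variables $x,y$; rename them if necessary, adding a dummy variable if only one occurs. By \Cref{prop:basis}, there is a finite basis $B$ for $P_1$ with respect to $x,y$. By \Cref{prop:basis-valid}, $B\implies s_1=t_1$ is valid. Since every identity of $B$ is provable from $P_1$ by \eqref{eq:basis0}, the quasiequation $B\implies s_1=t_1$ together with equational logic proves $P_1\implies s_1=t_1$. This gives conditions \ref{n1} and \ref{n2} of \Cref{def:normalized}.

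\textbf{Step 3: the remaining normalization conditions.} Condition \ref{n3} is achieved by symmetry, orienting each mixed $\shift$-identity with $x$ on the left. For conditions \ref{n4} and \ref{n5}, we use the structure of $B$ from the case analysis in \Cref{prop:basis}. In Case 4 there is at most one minimal identity $\shift^k(x)=\shift^m(y)$, so \ref{n4} holds trivially. In Cases 1--3, \Cref{lem:eq-logic-3} and \Cref{lem:eq-logic-1} let us shift all mixed identities to a common left exponent $k$ by applying congruence. For instance, $\shift^p(x)=\shift^q(y)$ with $p<k$ becomes $\shift^k(x)=\shift^{q+k-p}(y)$, and the original identity is recoverable using the periodicity identities already in $B$ when $x$ is periodic. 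We then close under \ref{n5} by adding the equational consequences $\shift^l(y)=\shift^m(y)$, which is harmless since they are derivable from $B$. Each added or replaced identity must remain interderivable with the original set. Validity of the modified quasiequation then follows because the premise sets prove each other.

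\textbf{Main obstacle.} I expect the hardest step to be Step 3. In general, replacing $\shift^p(x)=\shift^q(y)$ by its shifted version $\shift^{p+1}(x)=\shift^{q+1}(y)$ loses information, so the equivalence holds only in the periodic cases. For the aperiodic case I would fall back on the uniqueness of the minimal identity from \Cref{lem:eq-logic-3}\ref{propa}, keeping exactly that one mixed $\shift$-identity. Checking that this choice is compatible across Cases 1--4 is where the detailed work lies.
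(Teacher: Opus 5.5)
Your Steps 1--2 and your treatment of condition~\ref{n3} follow the paper's N1--N3. Your normalization of all terms via \eqref{eq:theta-1}--\eqref{eq:theta-2} even covers premises with $\Firstsymbol$ on both sides, which N1 leaves implicit.

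The gap is conditions~\ref{n4}--\ref{n5}. You insist that the modified premise set be \emph{interderivable} with the old one. That is exactly why shifting $\shift^p(x)=\shift^q(y)$ to a common exponent looks lossy to you, and why your case analysis is left open. The missing idea is that a strengthening only needs (a) the new axiom to be valid and (b) the new premises to be derivable from the old ones. So suppose $\shift^k(x)=u$ and $\shift^l(x)=v$ with $k<l$ both occur among the premises. Replace $\shift^k(x)=u$ by $\shift^l(x)=\shift^{l-k}(u)$ together with the head identities $\First\shift^{k+i}(x)=\First\shift^{i}(u)$ for $0\le i\le l-k$.
\begin{itemize}
\item Validity: the head identities record the finitely many entries the shift forgets, so the new premises are semantically equivalent to the old ones.
\item Strengthening: the new premises follow from $\shift^k(x)=u$ by congruence.
\end{itemize}
This works uniformly, needs no periodicity, and uses only the \emph{statement} of \Cref{prop:basis}. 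Condition~\ref{n5} then just adds $\shift^l(y)=\shift^m(y)$, which is derivable by symmetry and transitivity.

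Your route can be completed, but only by opening up the construction in the proof of \Cref{prop:basis}:
\begin{itemize}
\item In its Cases~2--3 the basis contains no mixed $\shift$-identities.
\item In Case~4 it contains at most one.
\item In Case~1 the original identity is recoverable from the shifted one because \emph{both} $\shift^p(x)$ and $\shift^q(y)$ are purely periodic there; shift by a common multiple of the two periods. Periodicity of $x$ alone, as you write, does not suffice.
\end{itemize}

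Separately, your claim that a mixed conclusion can be replaced by two homogeneous ones is false. The identities $\shift^{k+1}(z)=\shift^k(z)$ and $\First\shift^k(z)=\First\shift^l(w)$ do not give back $\shift^k(z)=\First\shift^l(w)$ via \eqref{eq:theta-1}--\eqref{eq:theta-4}. Take the two-element algebra $\{a,b\}$ with $\First a=\First b=b$, $\shift a=a$, $\shift b=b$. It satisfies \eqref{eq:theta-1}--\eqref{eq:theta-4}, yet violates the valid quasiequation $\shift(x)=x\implies x=\First(x)$ at $x=a$, while both halves of its split version are trivial. The fix is simply not to split. Normalization constrains only the premises (condition~\ref{n1}); the conclusion merely has to lie in $\Deltas\{x,y\}$, and the proof of \Cref{prop:basis-valid} never uses homogeneity of the conclusion.
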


\begin{proof}
We may assume that $\Ax$ contains the quasiequations \eqref{eq:theta-1}--\eqref{eq:theta-4}, which are all valid; otherwise, add them to $\Ax$. We show how to replace, step by step, non-normalized quasiequations in $\Ax$ (i.e.\ quasiequations \emph{not} satisfying one of the requirements \ref{n1}--\ref{n5}) with equivalent normalized ones, until a set of valid quasiequations is reached that strenghtens $\Ax$ and contains, except for \eqref{eq:theta-1}--\eqref{eq:theta-4}, only normalized quasiequations.

\begin{enumerate}[N1.]
\item Suppose that $\Ax$ contains, except from \eqref{eq:theta-1}--\eqref{eq:theta-4}, some quasiequation of the form
\begin{equation}\label{eq:defect-n1} P,\,\shift^k(z) = u \implies s=t  \end{equation}
where $u$ contains $\Firstsymbol$. (The case $u=\shift^k(z)$ is handled symmetrically.) Thus $u=v(\First\shift^m(w))$ for some unique term $v$ and variable $w$. Replace \eqref{eq:defect-n1} with  
\begin{equation}\label{eq:fix-n1} P,\, \shift^{k+1}(z)=\shift^k(z),\, \First \shift^k(z)=\First\shift^m(w) \implies s=t.  \end{equation}
This quasiequation is valid since the identity $\shift^k(z) = v(\First\shift^m(w))$ is semantically equivalent to the conjunction of $\shift^{k+1}(z)=\shift^k(z)$ and $\First \shift^k(z)=\First\shift^m(w)$. Moreover, the quasiequation \eqref{eq:defect-n1} is provable from \eqref{eq:fix-n1} using equational logic and \eqref{eq:theta-1}--\eqref{eq:theta-4}. Therefore, the resulting set of axioms strengthens $\Ax$ (\Cref{rem:proof-props}). Repeating this process yields a strengthening of $\Ax$ where all quasiequations except \eqref{eq:theta-1}--\eqref{eq:theta-4} satisfy \ref{n1}.
\item Suppose that all quasiequations in $\Ax$ except \eqref{eq:theta-1}--\eqref{eq:theta-4} satisfy \ref{n1}. First, by renaming variables if necessary, we may assume that for every $P\implies s=t$ in $\Ax$, one has $s,t\in \Deltas\{x,y\}$.  Replace every quasiequation 
\begin{equation}\label{eq:defect-n2} P\implies s=t \end{equation}
 in $\Ax$ not satisfying \ref{n2}, that is, with variables distinct from $x,y$ in the premise, by the quasiequation 
\begin{equation}\label{eq:fix-n2} B\implies s=t \end{equation}
 where $B$ is a finite basis of $P$ w.r.t.~$x$ and $y$ (\Cref{prop:basis}). Then \eqref{eq:fix-n2} is valid by \Cref{prop:basis-valid}. Moreover, since $\vdash P\implies u=v$ for all $u=v$ in $B$ by \eqref{eq:basis0}, the quasiequation \eqref{eq:defect-n2} is provable from \eqref{eq:fix-n2}. Thus we obtain a strengthening of $\Ax$ where all quasiequations except \eqref{eq:theta-1}--\eqref{eq:theta-4} satisfy \ref{n1} and \ref{n2}.
\item Suppose that all quasiequations in $\Ax$ except \eqref{eq:theta-1}--\eqref{eq:theta-4} satisfy \ref{n1} and \ref{n2}, and that $\Ax$ contains a quasiequation 
\begin{equation}\label{eq:defect-n3} P, \,\shift^l(y)=\shift^k(x) \implies s=t.\end{equation} 
Then replace this quasiequation with
\begin{equation}\label{eq:fix-n3} P, \,\shift^k(x)=\shift^l(y) \implies s=t.\end{equation}
This quasiequation is valid because $\shift^l(y)=\shift^k(x)$ and $\shift^k(x)=\shift^l(y)$ are semantically equivalent. Moreover, \eqref{eq:defect-n3} is provable from \eqref{eq:fix-n3} via symmetry \eqref{eq:sym}, so the resulting set of axioms strengthens $\Ax$. By repeating this process, we end up with a strengthening of $\Ax$ where all quasiequations except \eqref{eq:theta-1}--\eqref{eq:theta-4} satisfy \ref{n1}--\ref{n3}.
\item Suppose that all axioms in $\Ax$ except \eqref{eq:theta-1}--\eqref{eq:theta-4} satisfy \ref{n1}--\ref{n3}, and that $\Ax$ contains a quasiequation
\begin{equation}\label{eq:defect-n4} P, \shift^k(x) = u, \shift^l(x)=v \implies s=t  \end{equation}
where $u$ and $v$ are $\Nextsymbol$-terms over $\{y\}$ and $k\neq l$, say $k<l$. Replace \eqref{eq:defect-n4} with
\begin{equation}\label{eq:fix-n4} P, \shift^l(x) = \shift^{l-k}u,\, \First\shift^{k+i}(x)=\First\shift^i(u)\, (i=0,\ldots,l-k),\,      \shift^l(x)=v \implies s=t.  \end{equation}
This quasiequation is valid because $\shift^k(x) = u$ is semantically equivalent to the conjunction of the identities $\shift^l(x) = \shift^{l-k}u$ and $\First\shift^{k+i}(x)=\First\shift^i(u)$ for $i=0,\ldots,l-k$. Moreover \eqref{eq:defect-n4} is provable from \eqref{eq:fix-n4} using the congruence rule \eqref{eq:cong1} of equational logic. Thus the resulting set of axioms strengthens $\Ax$. By repeating this process, we obtain a strengthening of $\Ax$ where all quasiequations except \eqref{eq:theta-1}--\eqref{eq:theta-4} satisfy \ref{n1}--\ref{n4}.
\item Finally, suppose that all axioms in $\Ax$ except \eqref{eq:theta-1}--\eqref{eq:theta-4} satisfy \ref{n1}--\ref{n4}, and that $\Ax$ contains a quasiequation
\begin{equation}\label{eq:defect-n5} P, \shift^k(x)=\shift^l(y),\, \shift^k(x)=\shift^m(y) \implies s=t  \end{equation}
where $l\neq m$ and $\shift^l(y)=\shift^m(y)$ does not appear in $P$. Replace \eqref{eq:defect-n5} with
\begin{equation}\label{eq:fix-n5} P, \shift^k(x)=\shift^l(y),\, \shift^k(x)=\shift^m(y),\, \shift^l(y)=\shift^m(y) \implies s=t.  \end{equation}
This quasiequation is valid because the conjunction of $\shift^k(x)=\shift^l(y)$ and $\shift^k(x)=\shift^m(y)$ is semantically equivalent to the conjunction of $\shift^k(x)=\shift^l(y)$, $\shift^k(x)=\shift^m(y)$, and $\shift^l(y)=\shift^m(y)$. Moreover, \eqref{eq:defect-n5} is provable from \eqref{eq:fix-n5} using the symmetry \eqref{eq:sym} and transitivity rules \eqref{eq:trans} of equational logic. Thus the resulting set of axioms strengthens $\Ax$. Repeating this process yields a strengthening of $\Ax$ where all quasiequations except \eqref{eq:theta-1}--\eqref{eq:theta-4} satisfy \ref{n1}--\ref{n5}, and thus are normalized. \qedhere
\end{enumerate}
\end{proof}

\subsection{Finite Axiomatizability}

We are now ready to prove the non-existence of finite axiomatizations for the quasiequational theory of discrete sequence algebras. A discrete axiom system \(\Ax\) (i.e.~a set of finitary $\Delta$-quasiequations) is \emph{sound} for discrete sequence algebras if every quasiequation from \(\Ax\) is valid for discrete sequence algebras, or equivalently, if every quasiequation provable from $\Ax$ is valid for discrete sequence algebras. The axiom system \(\Ax\) is \emph{complete} for discrete sequence algebras if every quasiequation that is valid for discrete sequence algebras has a finite derivation from \(\Ax\).

\incompleteness*

\begin{proof}
Suppose towards a contradiction that there exists a finite discrete axiom system $\Ax$ that is sound and complete for discrete sequence algebras. By \Cref{prop:normalization}, we may assume w.l.o.g.~that all quasiequations in $\Ax$ except \eqref{eq:theta-1}--\eqref{eq:theta-4} are normalized w.r.t.~the variables $x$ and $y$. The key to the proof is to exploit that, since $\Ax$ is finite and all quasiequations in $\Ax$ are finitary, for $\Nextsymbol$-identities $\shift^k(z)=\shift^l(z)$, $z\in \{x,y\}$, arising in the premise of any quasiequation in $\Ax$, the difference $|k-l|$ of the degrees of the $\Nextsymbol$-terms is necessarily bounded from above. In other words, we can choose a natural number $N$ such that
\begin{equation}\label{eq:defN} (P\implies s=t)\in \Ax \qand (\shift^k(z)=\shift^l(z))\in P \qquad\text{implies}\qquad |k-l|<N. \end{equation}
Fix a variable $a\in \Vars$ and consider the following quasiequations:
\begin{align}
\shift^N(a)=a,\, \First\shift^k(a)=\First\shift^{k+1}(a)\, (0\leq k<N) & \implies \shift^i(a)=\shift^j(a) \qquad (N\nmid i-j) \label{eq:critical-eq1}\\
\shift^N(a)=a,\, \First\shift^k(a)=\First\shift^{k+1}(a)\, (0\leq k<N) &\implies s=t \label{eq:critical-eq2}   
\end{align}
where $i,j\geq 0$, and $s,t$ ranges over all pairs of terms in $\Deltas\{a\}$ such that one of the two terms is a $\Nextsymbol$-term and the other contains $\Firstsymbol$. Note that all the above quasiequations have the same premise $\ol{P}$. Moreover, they are all valid for discrete sequence algebras since the premise $\ol{P}$ semantically implies that $a$ represents a flat sequence.
Therefore, by completeness of $\Ax$, each of the quasiequations \eqref{eq:critical-eq1} and \eqref{eq:critical-eq2} is $\Ax$-provable. Among them, choose a quasiequation
\begin{equation}\label{eq:minimal-qe} \ol{P} \implies s_{\min} = t_{\min} \end{equation}
with a proof from $\Ax$ of minimum height $h_{\min}$. Fix a proof tree of \eqref{eq:minimal-qe} of height $h_{\min}$, and let
\begin{equation}\label{eq:proof-ax} P \implies s = t \end{equation}
be the axiom from $\Ax\cup \Ax_{=}$ used in the last proof step. We now consider several cases, corresponding to the type of axiom used, and demonstrate that in each case a contradiction arises.

\mn\textbf{\underline{Case 1:}} \eqref{eq:proof-ax} is the reflexivity axiom \eqref{eq:refl} from $\Ax_{=}$.

\mn Then $s_{\min}$ and $t_{\min}$ are identical terms, but this not the case for the conclusion of any of the quasiequations \eqref{eq:critical-eq1} and \eqref{eq:critical-eq2}, a contradiction.

\mn\textbf{\underline{Case 2:}} \eqref{eq:proof-ax} is the symmetry axiom \eqref{eq:sym} from $\Ax_{=}$.

\mn Then there exists a proof of $\ol{P}\implies t_{\min}=s_{\min}$ from $\Ax$ of height $h_{\min}-1$. This is a quasiequation of the form \eqref{eq:critical-eq1} or \eqref{eq:critical-eq2}, since \eqref{eq:minimal-qe} is, contradicting the minimal choice of \eqref{eq:minimal-qe}. 

\mn\textbf{\underline{Case 3:}} \eqref{eq:proof-ax} is the transitivity axiom \eqref{eq:sym} from $\Ax_{=}$. Then there exists a term $u$ such that \begin{equation}\label{eq:proof-trans}\ol{P}\implies s_{\min}=u \qqand \ol{P}\implies u=t_{\min}\end{equation}
have a proof from $\Ax$ of height less than $h_{\min}$. By definition of \eqref{eq:critical-eq1} and \eqref{eq:critical-eq2}, one of the terms $s_{\min}$ or $t_{\min}$ is a $\Nextsymbol$-term, say $s_{\min}=\shift^k(a)$ (the other case is symmetric). Note that $u$ must be a term in the variable $a$: otherwise, since the premise $\ol{P}$ contains only terms in $a$, the $\Ax$-provable quasiequation $\ol{P}\implies s_{\min}=u$ is clearly not valid, contradicting soundness of $\Ax$. We consider two subcases:

\mn\textbf{\underline{Subcase 3.1:}} $u$ contains $\Firstsymbol$.

\mn Then $\ol{P}\implies s_{\min}=u$ is one of the quasiequations \eqref{eq:critical-eq2}, and it has a proof of height less than $h_{\min}$, which is impossible.

\mn\textbf{\underline{Subcase 3.2:}} $u=\shift^l(a)$ for some $l\geq 0$.

\mn Then the $\Ax$-provable quasiequations \eqref{eq:proof-trans} take the form
\[ \ol{P}\implies \shift^k(a)=\shift^l(a) \qqand \ol{P}\implies \shift^l(a)=t_{\min}. \] 
Then $t_{\min}$ cannot contain $\Firstsymbol$, since otherwise the second quasiequation above is of type \eqref{eq:critical-eq2} and has a proof tree of height less than $h_{\min}$, which is impossible. Thus $t_{\min}=\shift^m(a)$ for some $m\geq 0$. Since the above two quasiequations have proof trees of height less than $h_{\min}$, they cannot be of the form \eqref{eq:critical-eq1}, so $N$ divides both $k-l$ and $l-m$. Thus $N$ divides $k-m$. But this is impossible: since \eqref{eq:minimal-qe} is the quasiequation
\[ \ol{P} \implies \shift^k(a)=\shift^m(a) \]
and thus of the form \eqref{eq:critical-eq1}, we have that $N$ does not divide $k-m$.

\mn\textbf{\underline{Case 4:}} \eqref{eq:proof-ax} is the congruence axiom \eqref{eq:cong1} from $\Ax_{=}$.

\mn Then there exists a quasiequation 
\begin{equation}\label{eq:puv1} \ol{P}\implies u=v\end{equation} with a proof of height $h_{\min}-1$ such that $\mathsf{f}(u)=s_{\min}$ and $\mathsf{f}(v)=t_{\min}$ for some $\mathsf{f}\in \{\First, \shift\}$. Since $s_{\min}$ or $t_{\min}$ is a $\Nextsymbol$-term (say $s_{\min}$), we have $\mathsf{f}=\shift$, and moreover $u$ is a $\Nextsymbol$-term. If the term $v$ contains $\Firstsymbol$, then \eqref{eq:puv1} is a quasiequation of type \eqref{eq:critical-eq2}, which is impossible because it has a proof of height less than $h_{\min}$. If $v$ is a $\Nextsymbol$-term, then \eqref{eq:puv1} takes the form
\[\ol{P}\implies \shift^i(a)=\shift^j(a)\]
for some $i,j\geq 0$, and $s_{\min}=\shift^{i+1}(a)$ and $t_{\min}=\shift^{j+1}(a)$.  Since $i-j=(i+1)-(j+1)$, the quasiequation \eqref{eq:puv1} is of type \eqref{eq:critical-eq1}. This is impossible since it has a proof of height less than $h_{\min}$.

\mn\textbf{\underline{Case 5:}} \eqref{eq:proof-ax} is the one of the quasiequations \eqref{eq:theta-1}--\eqref{eq:theta-4}.

\mn We first note that no substitution instance of the conclusion of \eqref{eq:theta-1}--\eqref{eq:theta-3} is a $\Nextsymbol$-identity or an equation where one term is a $\Nextsymbol$-term and the other contains $\Firstsymbol$. Therefore, neither of these quasiequations can be used in the last step of a proof of \eqref{eq:minimal-qe}. It thus only remains to consider the case where \eqref{eq:proof-ax} is \eqref{eq:theta-4}, i.e.\
\[ x=\First(y) \implies \shift(x)=x. \]
Then there exists a substitution $\sigma$ such that
\[
s_{\min} = \shift(\sigma(x)) \qqand t_{\min}=\sigma(x).
\]
Then $\sigma(x)$ is necessarily a $\Nextsymbol$-term in $\Deltas\{a\}$ (otherwise \eqref{eq:minimal-qe} cannot be of the form \eqref{eq:critical-eq1} or \eqref{eq:critical-eq2}). Therefore, we have a proof of 
\begin{equation}\label{eq:shorter} \ol{P} \implies \sigma(x)=\First(\sigma(y)) \end{equation}
from $\Ax$ of height $h_{\min}-1$, where $\sigma(x)$ is a $\Nextsymbol$-term in $\Deltas\{a\}$ and $\First(\sigma(y))$ contains $\Firstsymbol$. Note that $\sigma(y)\in \Deltas\{a\}$, for otherwise the quasiequation \eqref{eq:shorter} is not valid, contradicting soundness of $\Ax$. Therefore, \eqref{eq:shorter} is a quasiequation of type \eqref{eq:critical-eq2}, contradicting the minimal choice of \eqref{eq:minimal-qe}.

\mn\textbf{\underline{Case 6:}} \eqref{eq:proof-ax} is a quasiequation from $\Ax$ and none of \eqref{eq:theta-1}--\eqref{eq:theta-4}.

\mn Then \eqref{eq:proof-ax} is normalized, i.e.\ satisfies \ref{n1}--\ref{n5} of \Cref{def:normalized}. Moreover, there exists a substitution $\sigma$ such that
\begin{equation}\label{eq:smin}
s_{\min} = s[\sigma] \qqand t_{\min}=t[\sigma],
\end{equation}
and 
\begin{equation}\label{eq:usvs}\Ax\vdash \ol{P}\implies u[\sigma]=v[\sigma] \qquad\text{for all $(u=v)\in P$} \end{equation}
with respective proofs from $\Ax$ of height less than $h_{\min}$. We start with a simple observation; in the following, we write $\shift^{*}$ for $\shift^p$ if the specific exponent $p$ does not matter . 

\mn($\star$) \emph{Claim.} If $z\in \{x,y\}$ and $\sigma(z)=\shift^{*}(a)$, then $P$ contains no identity $\shift^k(z)=\shift^l(z)$ where $k\neq l$. 

\mn\emph{Proof.} Suppose that $P$ contains some identity $\shift^k(z)=\shift^l(z)$ where $k,l\geq 0$, and let $\sigma(z)=\shift^j(a)$. Then, by \eqref{eq:usvs} we have that $\Ax\vdash \ol{P}\implies \shift^{k+j}(a) = \shift^{l+j}(a)$ with a proof of height less than $h_{\min}$. This means that this quasiequation is not of type \eqref{eq:critical-eq1}, so $N$ divides $(k+j)-(l+j)=k-l$. On the other hand, we have $|k-l|<N$ by the choice \eqref{eq:defN} of the upper bound $N$, and so necessarily $k-l=0$, i.e.~$k=l$. This proves ($\star$).

\mn Since \eqref{eq:minimal-qe} is a quasiequation of type \eqref{eq:critical-eq1} or \eqref{eq:critical-eq2}, at least one of the terms $s_{\min}$ or $t_{\min}$ is a $\Nextsymbol$-term; by symmetry, we assume that $s_{\min}$ is a $\Nextsymbol$-term. Since $s_{\min}=s[\sigma]$, also $s$ is a $\Nextsymbol$-term: 
\begin{equation}\label{eq:defiz} s=\shift^i(z) \qquad \text{for some $i\geq 0$ and $z\in \{x,y\}$}.\end{equation} Moreover, we have \[\sigma(z)=\shift^j(a) \qquad\text{for some $j\geq 0$}\,\]
so that 
\[s_{\min}=\shift^{i+j}(a).\]

\mn Choose an upper bound to the degrees of $\First\shift$-identities in $P$, that is, a natural number $K$ such that, for all $v,w\in \{x,y\}$, 
\begin{equation}\label{eq:defK} (\First\shift^k(v)=\First\shift^l(w))\in  P \qquad\text{implies}\qquad k,l\leq K.\end{equation}
 We now consider several subcases, corresponding to whether the variable $z\in \{x,y\}$ of \eqref{eq:defiz} is $x$ or $y$, and to which kinds of identities occur in $P$. 

\mn\textbf{\underline{Subcase 6.1:}} $z=x$, and $P$ contains some identity $\shift^{k}(x)=\shift^{m}(y)$ where $k,m\geq 0$.

\mn Let us make the following observations:

\begin{enumerate}[O1.]
\item\label{o1} $P$ contains no non-trivial identity $\shift^{*}(x)=\shift^{*}(x)$. This follows from ($\star$) since $\sigma(x)=\shift^{*}(a)$.
\item\label{o2} We have $\sigma(y)=\shift^{*}(a)$. Indeed, $\sigma(y)$ cannot contain a variable $b\neq a$, since otherwise the quasiequation $\ol{P}\implies \shift^{k}(\sigma(x))=\shift^{m}(\sigma(y))$, which is $\Ax$-provable by \eqref{eq:usvs}, is clearly not valid, contradicting the soundness of $\Ax$. Moreover, it is not possible that $\sigma(y)=u$ for a term $u$ using $\Firstsymbol$, since then we have a proof of $\ol{P}\implies \shift^k(\sigma(x))=\shift^m(\sigma(y))$ of height less than $h_{\min}$ and this quasiequation is of type \eqref{eq:critical-eq2}, a contradiction.  
\item\label{o3} $P$ contains no non-trivial identity $\shift^{*}(y)=\shift^{*}(y)$. This follows from O\ref{o2} and ($\star$).
\item\label{o4} The identity $\shift^{k}(x)=\shift^{m}(y)$ is the unique identity of type $\shift^{*}(x)=\shift^{*}(y)$  in~$P$. Indeed, suppose that there is another such identity $\shift^{l}(x)=\shift^{n}(y)$ where $(k,m)\neq (l,n)$. By the normalization condition \ref{n4} we know that $k=l$. Therefore $m\neq n$, and by \ref{n5} the identity $\shift^m(y)=\shift^n(y)$ lies in $P$, contradicting O\ref{o3}.
\end{enumerate}
In summary, we have shown that $P$ contains only one non-trivial $\Nextsymbol$-identity, namely $\shift^{k}(x)=\shift^{m}(y)$. Fix a countably infinite set 
\begin{equation}\label{eq:defA} A = \{ \bot , a_1, a_2, a_3,\ldots \} \end{equation}
and define the interpretation $\interpret\colon \{x,y\}\to A^\omega$ by
\begin{equation} \interpret(x)=\bot^{K+k}a_1a_2a_3\cdots \qqand \interpret(y)=\bot^{K+m}a_1a_2a_3\cdots  \end{equation}
where $K$ is the upper bound on the degrees of $\First\shift$-identities in $P$ given by \eqref{eq:defK}. Then all identities in $P$ are satisfied under $\interpret$: for the identity $\shift^k(x)=\shift^m(y)$ this is clear, and all $\First\shift$-identities in $P$ are satisfied because they refer to the first $K$ elements of $\interpret(x)$ and $\interpret(y)$, which have value $\bot$.

 We now look at the conclusion of the axiom \eqref{eq:proof-ax}, recalling that $s=\shift^i(x)$:

\mn\textbf{\underline{Subsubcase 6.1.1:}} \eqref{eq:proof-ax} is of the form $P\implies \shift^i(x)=\shift^p(x)$ where $p\geq 0$.

\mn Then \eqref{eq:minimal-qe} is the quasiequation $\ol{P}\implies \shift^i(\sigma(x))=\shift^p(\sigma(x))$, i.e.~$\ol{P}\implies \shift^{i+j}(a)=\shift^{p+j}(a)$ since $\sigma(x)=\shift^{j}(a)$. Thus it is a quasiequation of type \eqref{eq:critical-eq1}, and so $N$ does not divide $(i+j)-(p+j)=i-p$. On the other hand, since all identities in $P$ are satisfied under $\interpret$ and the axiom \eqref{eq:proof-ax} is valid, it follows that $\shift^i(\interpret(x)) = \shift^p(\interpret(x))$. But by definition of $\interpret(x)$, this is only possible if $i=p$. In particular $N$ divides $i-p$, a contradiction.

\mn\textbf{\underline{Subsubcase 6.1.2:}} \eqref{eq:proof-ax} is of the form $P\implies \shift^i(x)=\shift^p(y)$ where $p\geq 0$.

\mn Since all identities in $P$ are satisfied under $\interpret$ and the axiom \eqref{eq:proof-ax} is valid, it follows that $\shift^i(\interpret(x)) = \shift^p(\interpret(y))$. By definition of $\interpret(x)$ and $\interpret(y)$, this is only possible if
\[ k-m = i-p. \]
 Now consider the quasiequation \eqref{eq:minimal-qe}, i.e.~$\ol{P}\implies \shift^i(\sigma(x))=\shift^p(\sigma(y))$. We have $\sigma(x)=\shift^j(a)$ and $\sigma(y)=\shift^q(a)$ for some $q\geq 0$, the latter by O\ref{o2}. Thus \eqref{eq:minimal-qe} is of the form \eqref{eq:critical-eq1}, which means that $N$ does not divide $(i+j)-(p+q)$. Now by \eqref{eq:usvs} the quasiequation $\ol{P}\implies \shift^k(\sigma(x))=\shift^m(\sigma(y))$ has a proof from $\Ax$ of height less than $h_{\min}$. This quasiequation is equal to
\begin{equation}\label{eq:Pijpq} \ol{P}\implies \shift^{k+j}(a) = \shift^{m+q}(a), \end{equation}
and we have
\[ (k+j)-(m+q) = (k-m) + (j-q) = (i-p) + (j-q) = (i+j)-(p+q).  \]
Thus $N$ also does not divide $(k+j)-(m+q)$, which means that \eqref{eq:Pijpq} is a quasiequation of type \eqref{eq:critical-eq1} with a proof of height less than $h_{\min}$, a contradiction.

\mn\textbf{\underline{Subsubcase 6.1.3:}} \eqref{eq:proof-ax} is of the form $P\implies \shift^i(x)=t$ where $t$ contains $\Firstsymbol$.

\mn Since all identities in $P$ are satisfied under $\interpret$ and \eqref{eq:proof-ax} is valid, it follows that $\shift^i(\interpret(x))=\interpretsem{t}$. But this is impossible: the sequence $\interpretsem{t}$ is flat (since $t$ contains $\Firstsymbol$) but the sequence $\shift^i(\interpret(x))$ is not.

\mn\textbf{\underline{Subcase 6.2:}} $z=y$, and $P$ contains some identity $\shift^{k}(x)=\shift^{m}(y)$ where $k,m\geq 0$.

\mn Using arguments symmetric to Subcase 6.1, one shows that%
\begin{enumerate}[O1.]
\item\label{o1p} $P$ contains no non-trivial identity $\shift^{*}(y)=\shift^{*}(y)$.
\item\label{o2p} $\sigma(x)=\shift^{*}(a)$.
\item\label{o3p} $P$ contains no non-trivial identity $\shift^{*}(x)=\shift^{*}(x)$.
\item\label{o4p} The identity $\shift^{k}(x)=\shift^{m}(y)$ is the unique identity of type $\shift^{*}(x)=\shift^{*}(y)$ in $P$. 
\end{enumerate}%
Therefore $P$ contains only one non-trivial $\Nextsymbol$-identity. The rest of the argument is exactly like in Subcase 6.1.

\mn\textbf{\underline{Subcase 6.3:}} $z=x$, and $P$ contains no identity $\shift^{k}(x)=\shift^{m}(y)$ where $k,m\geq 0$.

\mn We first observe that%
\begin{enumerate}[O1.]
\item\label{o1pp} $P$ contains no non-trivial identity $\shift^{*}(x)=\shift^{*}(x)$. This follows from ($\star$) since $\sigma(x)=\shift^j(a)$.
\end{enumerate}
Take the set $A$ given by \eqref{eq:defA} and define the interpretation ${\ol{\interpret}}\colon \{x,y\}\to A^\omega$ by 
\begin{equation}\label{eq:defJ} 
{\ol{\interpret}}(x)=\bot^Ka_1a_2a_3\cdots  \qqand {\ol{\interpret}}(y)=\bot^\omega
\end{equation}
Then all identities in $P$ are satisfied under ${\ol{\interpret}}$: all $\First\shift$-identities are satisfied because they refer only to the first $K$ elements of ${\ol{\interpret}}(x)$ and ${\ol{\interpret}}(y)$ by \eqref{eq:defK}, and all non-trivial $\Nextsymbol$-identities are of the form $\shift^{*}(y)=\shift^{*}(y)$, and they are satisfied because ${\ol{\interpret}}(y)$ is a flat sequence.

We now consider the axiom \eqref{eq:proof-ax} and distinguish three cases:

\mn\textbf{\underline{Subsubcase 6.3.1:}} \eqref{eq:proof-ax} is of the form $P\implies \shift^i(x)=\shift^p(x)$ where $p\geq 0$.

\mn Then \eqref{eq:minimal-qe} is the quasiequation $\ol{P}\implies \shift^i(\sigma(x))=\shift^p(\sigma(x))$, i.e.~$\ol{P}\implies \shift^{i+j}(a)=\shift^{p+j}(a)$ since $\sigma(x)=\shift^{j}(a)$. Thus it is a quasiequation of type \eqref{eq:critical-eq1}, and so $N$ does not divide $(i+j)-(p+j)=i-p$. On the other hand, since \eqref{eq:proof-ax} is valid and all identities in $P$ are satisfied under the interpretation~${\ol{\interpret}}$, it follows that $\shift^i({\ol{\interpret}}(x)) = \shift^p({\ol{\interpret}}(x))$. But by definition of ${\ol{\interpret}}(x)$, this is only possible if $i=p$. In particular $N$ divides $i-p$, a contradiction.

\mn\textbf{\underline{Subsubcase 6.3.2:}} \eqref{eq:proof-ax} is of the form $P\implies \shift^i(x)=\shift^p(y)$ where $p\geq 0$.

\mn  Since \eqref{eq:proof-ax} is valid and all identities in $P$ are satisfied under the interpretation ${\ol{\interpret}}$, it follows that $\shift^i({\ol{\interpret}}(x))=\shift^p({\ol{\interpret}}(y))$. But this is impossible because $\shift^p({\ol{\interpret}}(y))$ is flat and $\shift^i({\ol{\interpret}}(x))$ is not.

\mn\textbf{\underline{Subsubcase 6.3.3:}} \eqref{eq:proof-ax} is of the form $P\implies \shift^i(x)=t$ where $t$ contains $\Firstsymbol$.

\mn Since \eqref{eq:proof-ax} is valid and all identities in $P$ are satisfied under the interpretation ${\ol{\interpret}}$, it follows that $\shift^i({\ol{\interpret}}(x))=\interpretsemol{t}$. But this is impossible: the sequence $\interpretsemol{t}$ is flat (since $t$ contains $\Firstsymbol$) but $\shift^i({\ol{\interpret}}(x))$ is not.

\mn\textbf{\underline{Subcase 6.4:}} $z=y$, and $P$ contains no identity $\shift^{k}(x)=\shift^{m}(y)$ where $k,m\geq 0$.

\mn Analogous to Subcase 6.3.

\mn This concludes the proof of \Cref{thm:incompleteness}.
\end{proof}


\section{Appendix: Infinitary Complete Axiomatization (\Cref{sec:infinitary-ax})}
\label{app:infinitary-ax}
Let us begin our approach towards the completeness proof for \(\AIC_\omega\).

\subsection{Completeness of Infinitary Quasiequational Logic}

As stated, it appears that the completeness of well-founded derivations for infinitary Horn logic is folklore. 
We include a proof here for convenience's sake, although we make no claim for originality.

\wellfoundedcompleteness*

\begin{proof}
    We are only going to cover the case where \(I\) is always a countable set, since this is the case relevant to \(\AIC_\omega\).
    However, our proof extends to the general case by replacing \(\omega_1\) with another regular cardinal that is an upper bound on the cardinalities of the index sets \(I\) that appear in quasiequations in \(\Ax \).

    (\(\Longrightarrow\)) Let us begin with \emph{soundness}.
    Suppose that \(\quasieq = (\bigwedge_{i \in I} \leftsuperscript{i}{s} = \leftsuperscript{i}{t}) \implies s = t\) 
is a quasiequation such that $\Ax\vdash_\wf \quasieq$, and that \(\Alg\) is an AIC-structure with \(\Alg\models \Ax \). We need to prove \(\Alg \models \quasieq\). Let $D$ be a well-founded derivation of $\quasieq$. We show more generally that for every node of $D$ with label $s'=t'$, we have $\Alg\models \quasieq'$ where \(\quasieq' = (\bigwedge_{i \in I} \leftsuperscript{i}{s} = \leftsuperscript{i}{t}) \implies s' = t'\). The proof is by well-founded induction.\footnote{Given a well-founded tree $T$, a property $P(n)$ holds for all nodes $n$ of $T$ whenever the following can be shown: for all nodes $n$, if $P(m)$ holds for all children $m$ of $n$, then $P(n)$ holds.} Thus, fix a node of $D$. If the node is a leaf labelled with one of the premises $\leftsuperscript{j}{s} = \leftsuperscript{j}{t}$ ($j\in I$), then $\quasieq'$ is the quasiequation $(\bigwedge_{i \in I} \leftsuperscript{i}{s} = \leftsuperscript{i}{t}) \implies  \leftsuperscript{j}{s} = \leftsuperscript{j}{t}$, which holds in every AIC-structure, so $\Alg\models \quasieq'$. Otherwise the node is a point of branching
    \begin{equation}\label{eq:branching}
        \inferrule{
                \inferrule{\phantom{D_l}~\vdots~D_l}{\leftsuperscript{l}{u} = \leftsuperscript{l}{v} }
                \quad\stackrel{\raisebox{.5em}{(for \(l,l' \in L\))}}{\cdots}\quad
                \inferrule{\phantom{D_{l'}}~\vdots~D_{l'}}{\leftsuperscript{l'}{u} = \leftsuperscript{l'}{v} }
            }{
                s' = t'
            }
    \end{equation}
and, by induction hypothesis, $\Alg$ satisfies all the quasiequations \((\bigwedge_{i \in I} \leftsuperscript{i}{s} = \leftsuperscript{i}{t}) \implies \leftsuperscript{l}{u} = \leftsuperscript{l}{v}\) where $l\in L$. Since \eqref{eq:branching} is a substitution instance of either some axiom of equational logic or of some axiom in $\Ax$, and $\Alg\models \Ax$, it follows that $\Alg\models \quasieq'$. 
\medskip

    \noindent%
    (\(\Longleftarrow\)) Let \(\quasieq = (\bigwedge_{i \in I} \leftsuperscript{i}{s} = \leftsuperscript{i}{t}) \implies s = t\) and suppose that \(\Alg \models \quasieq\) for every AIC-structure \(\Alg\) satisfying \(\Ax \).
    We construct an AIC-structure \(\Alg\) that satisfies the property that if \(\Alg \models \quasieq\), then \(\Ax  \vdash_\wf \quasieq\): let \({\approx} \subseteq \Terms^2\) be the relation defined such that \(u \approx v\) if and only if \(\Ax  \vdash_\wf (\bigwedge_{i \in I} \leftsuperscript{i}{s} = \leftsuperscript{i}{t}) \implies u = v\).
    Since \(\Ax \) includes equational logic (\cref{fig:equational logic}), \(\approx\) is a congruence. 
    This allows us to form the quotient homomorphism \([-]_\approx \colon \Terms \to \Alg\) where the AIC-structure \(\Alg\) is carried by \(A = \Terms/\approx\).

    Let us check that \(\Alg\models \Ax \).
    Let \(P = (\bigwedge_{j \in J} \leftsuperscript{j}{e} = \leftsuperscript{j}{f}) \implies e = f\) be an axiom in $\Ax$.
    Suppose that \(\sigma \colon \Vars \to A\)(\(= \Terms/{\approx}\)) is an interpretation that satisfies \(\leftsuperscript{\sigma}{\semantics{\leftsuperscript{j}{e}}} = \leftsuperscript{\sigma}{\semantics{\leftsuperscript{j}{f}}}\) for all \(j\).
    By definition, this means that \(\leftsuperscript{j}e[\sigma] \approx \leftsuperscript{j}f[\sigma]\), which is equivalent to there being a well-founded derivation \(D_j\) of \((\bigwedge_{i \in I} \leftsuperscript{i}{s} = \leftsuperscript{i}{t}) \implies \leftsuperscript{j}{e[\sigma]} = \leftsuperscript{j}{f[\sigma]}\), for every \(j \in J\).
    Since \( P \in \Ax \), we can then form the derivation 
    \begin{equation}
        \label{eq:P derivation}
        \inferrule{
            \inferrule{\phantom{~D_j}\vdots~D_j}
            {\leftsuperscript{j}{e[\sigma]} = \leftsuperscript{j}{f[\sigma]}}
            \quad\stackrel{\raisebox{.5em}{(for \(j,j' \in J\))}}{\cdots}\quad
            \inferrule{\phantom{~D_{j'}}\vdots~D_{j'}}
            {\leftsuperscript{{j'}}{e[\sigma]} = \leftsuperscript{{j'}}{f[\sigma]}}
        }{e[\sigma] = f[\sigma]}
    \end{equation}
    in \(\Ax \),
    where by our definition of a derivation of a quasiequation, each \(D_j\) is allowed to have \(\leftsuperscript{i}{s} = \leftsuperscript{i}{t}\) show up as a leaf for any \(i \in I\).
    This derivation is well-founded because \(D_j\) is well-founded for each \(j \in J\).%
    \footnote{
        We feel it is important to note that this is where well-foundedness plays a role. 
        If we instead altered our definition of derivation to require \emph{finite height} instead of well-foundedness, then this step no longer goes through: if the set of derivations \(D_j\) has no upper bound on their heights (say, \(J = \omega\) and the height of \(D_j\) is \(j\) like in \cref{eq:infinite but wf}), then we could no longer form the derivation in~\eqref{eq:P derivation}.
    }
    It follows that \(\Ax  \vdash_\wf (\bigwedge_{i \in I} \leftsuperscript{i}{s} = \leftsuperscript{i}{t}) \implies e[\sigma] = f[\sigma]\), and therefore by definition, \(e[\sigma] \approx f[\sigma]\) and thus \(\leftsuperscript{\sigma}{\semantics{e}} = \leftsuperscript{\sigma}{\semantics{f}}\).
    This proves that \(\Alg \models P\).
    Since \(P\) was an arbitrary axiom of \(\Ax \), we conclude that \(\Alg \models \Ax \).

    Now, by our assumption that \(\Alg \models \Ax \) implies \(\Alg \models \quasieq\), we infer that \(\Alg \models \quasieq\).
    This means that for any given interpretation \(\sigma \colon \Vars \to A\), if \(\leftsuperscript{\sigma}{{\semantics{\leftsuperscript{i}s}}} = \leftsuperscript{\sigma}{{\semantics{\leftsuperscript{i}t}}}\) for all \(i \in I\), then \(\leftsuperscript{\sigma}{\semantics{s}} = \leftsuperscript{\sigma}{\semantics{t}}\).
    We can therefore boldly choose \(
        \sigma \colon \Vars \to A
    \) to be the quotient map restricted to the variables, 
    \(\sigma(x) = [x]_{\approx}\), for which the hypothesis holds trivially: unravelling the definitions, for any \(j \in I\), 
    \(
    \leftsuperscript{\sigma}{\semantics{\leftsuperscript{j}{s}}} 
    = \leftsuperscript{\sigma}{\semantics{\leftsuperscript{j}{t}}}
    \) means that \(
    \leftsuperscript{j}{s} \approx \leftsuperscript{j}{t}
    \), which is the same as saying \(
    \Ax  
    \vdash_\wf (\bigwedge_{i \in I} \leftsuperscript{i}{s} = \leftsuperscript{i}{t}) \implies \leftsuperscript{j}{s} = 
    \leftsuperscript{j}{t}
    \).
    A derivation \(D\) of height $1$ exists for this quasiequation, because \(\leftsuperscript{j}{s} = \leftsuperscript{j}{t}\) is allowed to appear as a leaf---in particular, as both a leaf and the root.
    Hence, \(
    \leftsuperscript{\sigma}{\semantics{\leftsuperscript{i}{s}}} = \leftsuperscript{\sigma}{\semantics{\leftsuperscript{i}{t}}}
    \) for each \(i\), and from this we infer that \(
    \leftsuperscript{\sigma}{\semantics{s}} 
    = \leftsuperscript{\sigma}{\semantics{t}}
    \).
    In other words, \(s \approx t\), which is to say that \(\Ax  \vdash_\wf \quasieq\).
\end{proof}

\subsection{Completeness of \(\boldsymbol{\AIC_\omega}\)}

\newcommand{\latL}{L}
\newcommand{\latN}{N}

We are now ready to give the details behind the proof of \Cref{thm:embedding}.
We begin with the following result, which is stated without a proof in~\cite[Proposition 3.3]{TheunissenV07}.
Admittedly, we found it less ``completely straightforward'' than Theunissen and Venema, so we have included a proof for the sake of completeness.

\begin{lemma}[Monotone Lifting]
    \label{lem:complete extension}
    Let \(\latN\) be any lattice.
    There is a complete lattice \(\latL\) and an order-embedding\footnote{%
            A monotone map between posets \(f \colon X \to Y\) is an \emph{order-embedding} if for any \(x,x' \in X\), \(x \le x'\) if and only if \(f(x) \le f(x')\). Order-embedding maps are necessarily injective, and surjective order-embeddings are order isomorphisms.}  \(m \colon \latN \hookrightarrow \latL\) that preserves all meets and joins that exist in \(\latN\) such that for any monotone map \(F \colon \latN \to \latN\), there is a monotone \(\bar F \colon \latL \to \latL\) such that \(\bar F \circ m = m \circ F\).
\end{lemma}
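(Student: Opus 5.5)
We take \(\latL\) to be the Dedekind–MacNeille completion of \(\latN\). Concretely, \(\latL\) is the set of all subsets \(X \subseteq \latN\) that are closed in the sense that \(X = (X^{u})^{l}\). Here \(X^{u}\) denotes the set of upper bounds of \(X\) and \(Y^{l}\) the set of lower bounds of \(Y\). We order \(\latL\) by inclusion and set \(m(n) = {\downarrow} n\). Two facts are standard. First, \(\latL\) is a complete lattice, with meets given by intersections. Second, \(m\) is an order-embedding that preserves every meet and join existing in \(\latN\); this holds because the principal ideals are exactly the closed sets \(X\) with \(X^{u}\) having a least element. We will also use that \(m[\latN]\) is both join-dense and meet-dense in \(\latL\): every \(X \in \latL\) satisfies
\[ X = \bigvee_{\latL}\{ m(n) \mid n \in X\} = \bigwedge_{\latL}\{ m(u) \mid u \in X^{u}\}. \]

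The point of the proof is the extension of a monotone \(F \colon \latN \to \latN\). We will take the lower extension
\[ \bar F(X) \eeq \bigvee_{\latL} \{ m(F(n)) \mid n \in X \}. \]
This map is monotone in \(X\) because the indexing set grows with \(X\). It remains to check \(\bar F(m(k)) = m(F(k))\). The set \(m(F(k))\) is one of the joinees, since \(k \in {\downarrow}k\), so \(\bar F(m(k)) \supseteq m(F(k))\). For the reverse inclusion, every \(n \le k\) satisfies \(F(n) \le F(k)\) by monotonicity of \(F\). Hence each joinee \(m(F(n))\) lies below \(m(F(k))\), and so does their join. This gives \(\bar F \circ m = m \circ F\).

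We expect the main obstacle to be foundational rather than computational. We must make sure that \(m\) preserves existing infinite joins and meets, and not merely finite ones. This needs a short argument. Suppose \(a = \bigvee_{\latN} S\) exists. Then \({\downarrow} a\) is an upper bound of \(\{m(s) \mid s \in S\}\). Any closed \(X\) containing every \(s \in S\) has \(X^{u} \subseteq S^{u} \subseteq {\uparrow} a\), and therefore \(a \in (X^{u})^{l} = X\). Meets are preserved directly: intersections of principal ideals give the principal ideal of the existing meet. Since the lower extension requires no continuity of \(F\), only monotonicity, the rest of the proof is routine.
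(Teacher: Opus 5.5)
Your proof is correct and is essentially the paper's. Both take the Dedekind--MacNeille completion with $m(n) = {\downarrow} n$, and your extension $\bar F(X) = \bigvee_{L}\{m(F(n)) \mid n \in X\}$ is exactly the paper's $\bar F(X) = C(F[X])$ with $C(Y) = (Y^{u})^{l}$. This is because joins in $L$ are closures of unions and $C({\downarrow}F[X]) = C(F[X])$. Your check of $\bar F \circ m = m \circ F$ is the same two-inclusion argument via monotonicity of $F$, and your argument that $m$ preserves existing (possibly infinite) joins and meets supplies a step the paper only cites as known.
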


\begin{proof}
    The lattice we have in mind is the \emph{Dedekind-MacNeille completion} of \(\latN\)~\cite{MacNeil36}, which can be described as follows.
    For each subset \(V \subseteq \latN\), define
    \[
        V^\Delta = \{x \in \latN \mid \forall v \in V, v \le x\}
        \qquad
        V^\nabla = \{x \in \latN \mid \forall v \in V, x \le v\}
    \]
    Then the operation \(C \colon 2^{|\latN|} \to 2^{|\latN|}\) defined \(CV = (V^\Delta)^\nabla\) is a closure operation. 
    This gives rise to the complete lattice \(\latL = \{V \in 2^{|\latN|} \mid V = CV\}\), ordered by set inclusion, and the embedding \(m \colon \latN \to \latL\) defined \(
        m(v) = C\{v\}
    \).
    It is known that the embedding \(m \colon \latN \to \latL\) preserves all meets and joins that exist in \(\latN\).
    It is worth noting that \((-)^\Delta\) and \((-)^\nabla\) are both antitone, and that this implies that \(C = (-)^\nabla \circ (-)^\Delta\) is monotone.

    Now let \(F \colon \latN \to \latN\) be a monotone map. 
    We extend \(F\) to \(\latL\) by defining \(\bar F \colon \latL \to \latL\) by \(\bar F(V) = C(F[V])\). 
    The operation \(\bar F\) is clearly monotone. 
    We are left with showing that \(\bar F \circ m = m \circ F\). 
    Observe that for any \(v \in \latN\), 
    \begin{gather*}
        m(v) 
        = C\{v\} 
        = (\{v\}^\Delta)^\nabla 
        = \{x \in \latN \mid \forall y \in \{v\}^\Delta, x \le y\}\\
        = \{x \in \latN \mid \forall y \in {\uparrow} v, x \le y\}
        = \{x \in \latN \mid x \le v\}
        = {\downarrow} v
    \end{gather*}
    The second to last equality can be seen by taking \(y = v\). 
    In particular, \(m \circ F(v) = C\{F(v)\} = {\downarrow} F(v)\) for any \(v \in \latN\).
    On the other side, we have \[
        \bar F \circ m(v) = CF[C\{v\}] = CF[{\downarrow} v] = C\{F(x) \mid x \le v\}
    \]
    Therefore, our goal is to show that \(C\{F(x) \mid x \le v\} = {\downarrow} F(v)\). 
    This will happen in two parts, one for each inclusion. 

    Given any \(x \le v\), by monotonicity we see that \(F(x) \le F(v)\). 
    It follows that \(\{F(x) \mid x \le v\} \subseteq {\downarrow} F(v)\). 
    Since \({\downarrow} F(v) \in \latL\), by monotonicity of \(C\) we have 
    \[
        C\{F(x) \mid x \le v\} \subseteq C({\downarrow}F(v)) = {\downarrow}F(v)
    \]
    This establishes one inclusion. 

    For the opposite inclusion, let \(y \le F(v)\).
    We want to show that \(y \in C\{F(x) \mid x \le v\} = (\{F(x) \mid x \le v\}^\Delta)^\nabla\), which is equivalent to the statement that \emph{for any \(u \in \latN\) such that \(F(x) \le u\) for all \(x \le v\), we have \(y \le u\)}.
    So, let \(u \in \latN\) such that \(F(x) \le u\) for all \(x \le v\).
    Then in particular, \(F(v) \le u\).
    But \(y \le F(v)\). 
    This tells us that \(y \le u\), so \(y \in C\{F(x) \mid x \le v\}\) as desired.
%
\end{proof}


\embeddingtheorem*

\begin{proof}
    Start by constructing the \emph{lattice of heads},
    \[
        \latL_0 = \left\{ \First^{\gray{\Alg_0}} a ~\middle|~ a \in A_0\right\}
    \]
    This is a lattice with lattice operations interpreted as 
    \begin{gather*}
        \bot^{\latL_0} = \First^{\gray{\Alg_0}} \bot^{\gray{\Alg_0}}
        \qquad
        \top^{\latL_0} = \First^{\gray{\Alg_0}} \top^{\gray{\Alg_0}}
        \\
        (\First^{\gray{\Alg_0}} a) \sqcap^{\latL_0} (\First^{\gray{\Alg_0}} b) = \First^{\gray{\Alg_0}}(a \lmin^{\gray{\Alg_0}} b)
        \qquad
        (\First^{\gray{\Alg_0}} a) \sqcup^{\latL_0} (\First^{\gray{\Alg_0}} b) = \First^{\gray{\Alg_0}}(a \lmax^{\gray{\Alg_0}} b)
    \end{gather*}
    because \(\First^{\gray{\Alg_0}}\) commutes with the lattice operations (see \Cref{tab:complete AIC}).
    This allows us to further define the lattice homomorphism \(e' \colon A_0 \to \latL_0^\omega\) given by 
    \[
        e'(a)(n) = \First^{\gray{\Alg_0}} (\Next^{\gray{\Alg_0}})^n a
    \]
    where \((\Next^{\gray{\Alg_0}})^n\) is the \(n\)-fold application of the shift operation. That $e'$ is a lattice homomorphisms folllows from the Head and Shift axioms, which ensure the the $\First$ and $\shift$ operations preserve commute with the lattice structure. It follows directly from the first sequence axiom in \Cref{tab:complete AIC} that \(e'\) is injective.  
    Note that the function operations \(F^{\gray{\Alg_0}}\) for \(F \in \Funcs\) restrict to \(\latL_0\), since \(F^{\gray{\Alg_0}}\First a = \First F^{\gray{\Alg_0}} a\).
    
    By \Cref{lem:complete extension}, there is a complete lattice \(\latL\) and an order-embedding \(m \colon \latL_0 \hookrightarrow \latL\) such that every monotone operation \(G \colon \latL_0 \to \latL_0\) extends to a monotone operation \(\bar G \colon \latL \to \latL\).
    Write \(m^\omega \colon \latL_0^\omega \to \latL^\omega\) for the pointwise embedding defined by \(m^\omega(\sigma)(n) = m(\sigma(n))\) for any \(\sigma\in \latL_0^\omega\) and \(n \in \omega\).
    We obtain a sequence algebra \(\Alg\) carried by \(\latL^\omega\) by lifting each operation \(F^{\gray{\Alg_0}} \colon \latL_0 \to \latL_0\) to a monotone operation \(\bar F\) on \(\latL\) and applying it pointwise to sequences to obtain the unary operation \(F^{\gray{\Alg}} \colon \latL^\omega \to \latL^\omega\).

    From here on, most superscripts indicating the type of the operations will be supressed, since they can be inferred from context.
    
    Finally, let \(\varepsilon = m^\omega \circ e'\).
    We obtain the following diagram:
    \[
        \begin{tikzcd}
            \Alg_0 \ar[rr, "\varepsilon"] \ar[rd, "e'"] 
            && \latL^\omega \\
            & \latL_0^\omega \ar[ru, "m^\omega"] &
        \end{tikzcd}
        \qquad 
        \varepsilon(a)(n) = m^\omega \circ e'(a)(n) = m(\First \Next^n a)
    \]
    It follows from the injectivity of \(m^\omega\) and \(e'\) that \(\varepsilon\) is injective. 
    It now suffices to see that \(\varepsilon\) is a homomorphism of AIC-structures. 
    \begin{description}
        \item[Lattice] Since \(m\) is a lattice homomorphism, so is the pointwise application $m^\omega$. Therefore $\varepsilon$ is a lattice homomorphism, being the composite of the lattice homomorphisms $e'$ and $m^\omega$.

\medskip


        \item[Head] Given \(a \in A_0\), we have
        \begin{align*}
            \varepsilon(\First a)(n) 
            &= m(\First \Next^n \First a) \\
            &= m(\First \First a)           \tag{Head} \\
            &= m(\First a)                  \tag{Head} \\
            &= \varepsilon(a)(0)                      \tag{definition of \(\varepsilon\)} \\
            &= (\First \varepsilon(a))(n)               \tag{sequence algebra}
        \end{align*}

        \item[Shift] Given \(a \in A_0\), we have
        \begin{align*}
            \varepsilon(\Next a)(n) 
            &= m(\First \Next^n \Next a) \\
            &= m(\First \Next^{n+1} a)     \\
            &= \varepsilon(a)(n + 1)                  \tag{definition of \(\varepsilon\)} \\
            &= (\Next \varepsilon(a))(n)               \tag{sequence algebra}
        \end{align*}

        \item[Functions] Given \(a \in A_0\) and \(F \in \Funcs\), we have
        \begin{align*}
            \varepsilon(F a)(n) 
            &= m(\First \Next^n F a) \\
            &= m(\First F \Next^{n} a)                 \tag{Shift} \\
            &= m(F \First \Next^{n} a)                 \tag{Head} \\
            &= \bar F ( m(\First \Next^{n} a))         \tag{\(\bar F\) extends \(F\)} \\
            &= \bar F(\varepsilon(a)(n))                          \tag{definition of \(\varepsilon\)} \\
            &= (F^{\gray{\Alg}}\varepsilon(a))(n)                          \tag{sequence algebra} 
        \end{align*}

        \item[Iteration] Given \(a \in A_0\) and \(F \in \Funcs\), we have
        \begin{align*}
            \varepsilon(F^* a)(n) 
            &= m(\First \Next^n F^* a) \\
            &= m(\First F^n F^* \Next^{n} a)           \tag{Shift} \\
            &= m(F^n \First F^* \Next^{n} a)           \tag{Head} \\
            &= m(F^n \First \Next^{n} a)               \tag{Head} \\
            &= \bar F^n \circ m(\First \Next^{n} a)    \tag{\(\bar F\) lifts \(F^{\gray{\Alg_0}}\)} \\
            &= \bar F^n(\varepsilon(a)(n))                        \tag{definition of \(\varepsilon\)}\\            
            &= ((F^*)^{\gray{\Alg}}\varepsilon(a))(n)                        \tag{sequence algebra}    
        \end{align*}

        \item[Majorum/Minorum] We will just do the majorum case (tail-suprema), because the minorum case is symmetric.
        We begin with the following observation: for any \(k \in \omega\) and \(c \in A_0\), by the Maj/Minorum and Shift axioms, 
        \(
            \Next^k \Sup c \sqsubseteq \Sup c
        \).
        We therefore have \(\First \Next^k \Sup c \sqsubseteq \First \Sup c\) for all \(k \in \omega\).
        The Sequence axioms now tell us that \(\First \Sup c\) is the \emph{least} upper bound of \(\{\First \Next^k c\mid k \in \omega\}\), and that this least upper bound exists in \(\latL_0\). 
        Now, if we let \(c = \Next^{n} a\), we see that 
        \begin{align*}
            \First \Sup \Next^{n} a 
            &= \bigsqcup_{k \ge 0} \First \Next^{k} \Next^{n} a \\
            &= \bigsqcup_{k \ge 0} \First \Next^{n + k} a \\
            &= \bigsqcup_{k \ge n} \First \Next^{k} a 
        \end{align*}
        Since \(m\) preserves all suprema that exist in \(\latL_0\), we can derive
        \begin{align*}
            \varepsilon(\Sup a)(n) 
            &= m(\First \Next^n \Sup a)                          \\
            &= m(\First \Sup \Next^n a)                          \tag{Shift}\\
            &= m\Big(\bigsqcup_{k \ge n} \First \Next^{k} a\Big)   \tag{\(\First \Sup a\) a least upper bound}\\
            &= \bigsqcup_{k \ge n} m(\First \Next^{k} a)           \tag{\(m\) preserves joins} \\
            &= \bigsqcup_{k \ge n} \varepsilon(a)(k)                          \tag{definition of \(\varepsilon\)}\\            
            &= (\Sup \varepsilon(a))(k)                                     \tag{sequence algebra}
        \end{align*}
    \end{description}
    Hence, \(\varepsilon\) is an algebra homomorphism, completing the proof of the embedding theorem.
\end{proof}

\end{document}